\renewcommand{\paragraph}{\@startsection
   {paragraph} 
   {4} 
   {0mm} 
   {-\baselineskip} 
   {0.1\baselineskip} 
   {\normalfont\normalsize\bfseries}} 
\renewcommand{\subparagraph}{\@startsection
   {subparagraph} 
   {5} 
   {0mm} 
   {-\baselineskip} 
   {0.1\baselineskip} 
   {\normalfont\normalsize\bfseries}} 
\newcommand{\internerLink}[1]{\hyperref[#1]
{Siehe \ref*{#1}~\nameref{#1} auf S.~\pageref{#1}}}
\newcommand{\ffinternerLink}[1]{\hyperref[#1]
{Siehe S.~\pageref{#1}ff}}
\newcommand{\xinternerLink}[1]{\hyperref[#1]
{\ref*{#1}~\nameref{#1} auf S.~\pageref{#1}}}
\newcounter{cfootnotecounter}
\renewcommand*{\othersectionlevelsformat}[1]{%
\llap{\csname the#1\endcsname\autodot\enskip}}
\newtheorem{thm}{Theorem}[section]
\newtheorem{lemma}[thm]{Lemma}
\newtheorem{prop}[thm]{Proposition}
\newtheorem{definition}[thm]{Definition}
\newtheorem{example}[thm]{Example}
\newtheorem{remark}[thm]{Remark}
\newtheorem{notation}[thm]{Notation}
\newcommand{\ve}[1]{\textit{\textbf{#1}}}
\newcommand{\abs}[1]{\left|{#1}\right|}
\newcommand{\sign}[1]{\textnormal{sign}(#1)}
\newcommand{\HH}[1]{\textnormal{H}(#1)}
\renewcommand{\H}[2]{\textnormal{H}_{#1}(#2)}
\newcommand{\M}[2]{M\left(#1,#2\right)}
\newcommand{\Mp}[2]{M^+\left(#1,#2\right)}
\newcommand{\R}{\mathbb R}
\newcommand{\Q}{\mathbb Q}
\newcommand{\Z}{\mathbb Z}
\newcommand{\N}{\mathbb N}
\newcommand{\ie}{i.e.,\ }
\newcommand{\shuffle}{\, \raisebox{1.2ex}[0mm][0mm]{\rotatebox{270}{$\exists$}} \,}
\newcounter{mmacnt}
\def\restartmma{\setcounter{mmacnt}{0}}
\newenvironment{mma}{
 \par\smallskip
 \catcode`|=\active
 \parskip=0pt\parindent=0pt 
 \small
 \def\In##1\\{%
   \def\linebreak{\hfill\break\null\qquad}%
   \refstepcounter{mmacnt}
   \hangindent=2.5em\hangafter=0
   \leavevmode
   \llap{\tiny\sffamily In[\arabic{mmacnt}]:=\kern.5em}%
   \mathversion{bold}\footnotesize$\displaystyle##1$\normalsize
   \mathversion{normal}\par
 }%
 \def\Print##1\\{%
   \def\linebreak{\hfill\break}%
   \hangindent=2.5em\hangafter=0
   \leavevmode ##1\par}%
 \def\Out##1\\{%
   \def\linebreak{$\hfill\break\null\hfill$}%
   \kern\abovedisplayskip\par
   \hangindent=2.5em\hangafter=0
   \leavevmode
   \llap{\tiny\sffamily Out[\arabic{mmacnt}]=\kern.5em}
   \footnotesize$\displaystyle##1$\normalsize\hfill\null\par
   \kern\belowdisplayskip
 }%
 \def\Warning##1##2\\{%
   \def\linebreak{\hfill\break}%
   \hangindent=2.5em\hangafter=0
   \leavevmode
   {\scriptsize##1 : ##2}\par}%
}{%
 \par\smallskip
}
\newenvironment{fshaded}{%
\MakeFramed {\FrameRestore}
}%
{\endMakeFramed}
\newenvironment{fmma}[1]{\definecolor{shadecolor}{rgb}{1,1,1}%
\definecolor{framecolor}{rgb}{0,0,0}%
\begin{fshaded}\text{\bf HarmonicSums session.}\begin{mma}#1}{\end{mma}\end{fshaded}}
\begin{document}

\pagenumbering{roman}
\includepdf{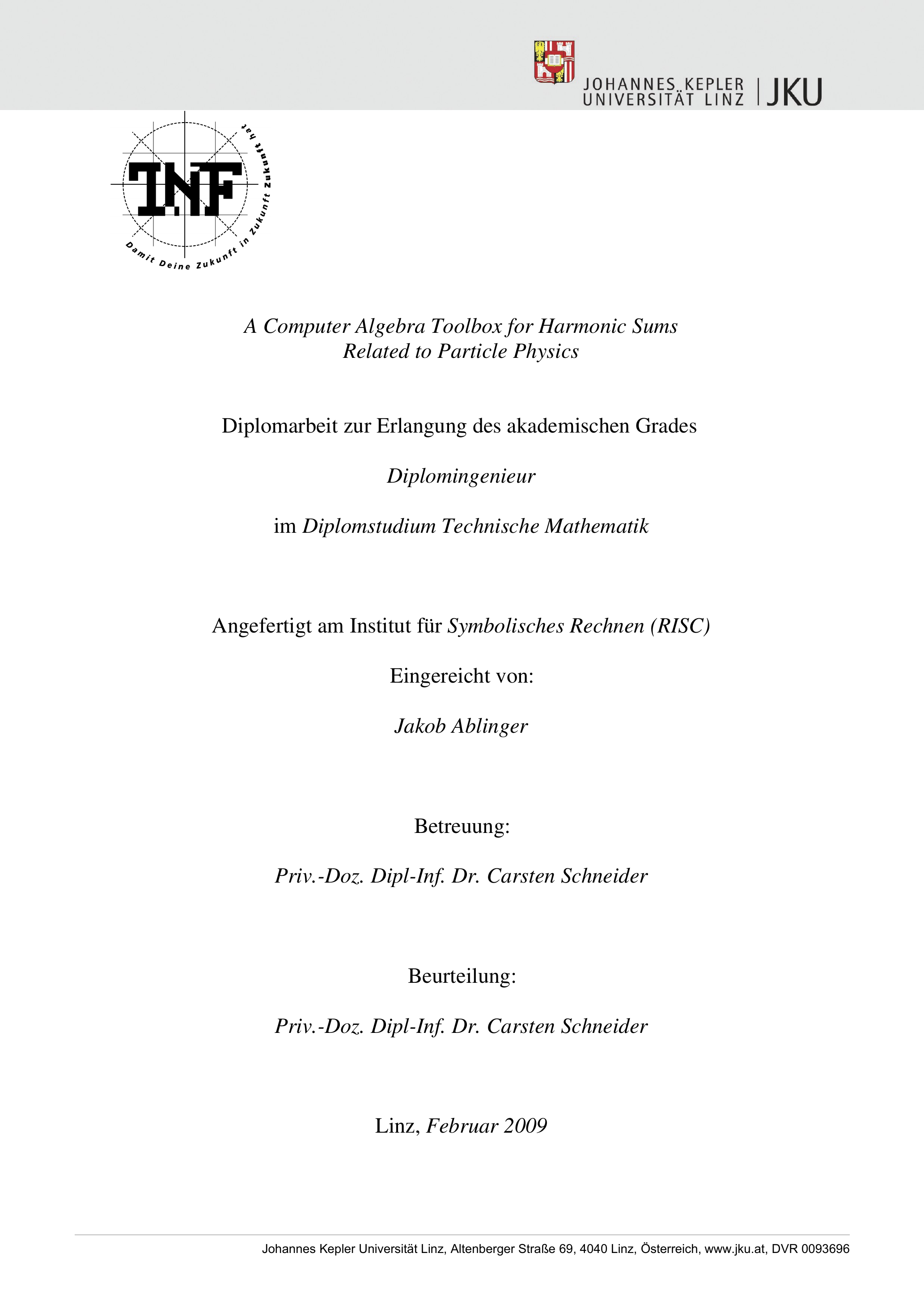}

\pagenumbering{Roman}

\chapter*{Eidesstattliche Erklärung}
\markright{Eidesstattliche Erklärung}

Ich erkläre hiermit an Eides statt, dass ich die vorliegende
Arbeit selbstständig und ohne fremde Hilfe verfasst, andere als
die angegebenen Quellen nicht benützt und die den benutzten
Quellen wörtlich oder inhaltlich entnommenen Stellen als solche
kenntlich gemacht habe.

Linz, am 26. Februar 2009,

\hfill Jakob Ablinger

\chapter*{Kurzfassung}
\markright{Kurzfassung}

In dieser Arbeit präsentieren wir das Computeralgebra Paket \ttfamily HarmonicSums \rmfamily und dessen theoretischen Hintergrund für die Verarbeitung von harmonischen Summen und damit verbundenen Objekten wie zum Beispiel Euler-Zagier Summen und harmonische Polylogarthmen. Harmonische Summen und verallgemeinerte harmonische Summen treten als Spezialfälle der sogenannten d'Alembert Lösungen von Rekurrenzen auf. Wir zeigen, dass die harmonischen Summen eine quasi-shuffle Algebra bilden und präsentieren eine Methode zum Auffinden algebraisch unabhängiger harmonischer Summen. Außerdem definieren wir eine Differentiation auf den harmonischen Summen über eine Erweiterung der Mellin Transformation. Dabei treten neue Relationen zwischen den harmonischen Summen auf. Zusätzlich stellen wir einen Algorithmus vor, der be- stimmte verschachtelte Summen in Ausdrücke bestehend aus harmonischen Summen umschreibt.
Durch nicht triviale Beispiele illustrieren wir, wie diese Algorithmen in Zusammenarbeit mit dem Summationspaket \ttfamily Sigma \rmfamily die Berechnung von Feynmanintegralen unterstützen.

\chapter*{Abstract}
\markright{Abstract}

In this work we present the computer algebra package \ttfamily HarmonicSums \rmfamily and its theoretical background for the manipulation of harmonic sums and some related quantities as for example Euler-Zagier sums and harmonic polylogarithms. Harmonic sums and generalized harmonic sums emerge as special cases of so-called d'Alembertian solutions of recurrence relations. We show that harmonic sums form a quasi-shuffle algebra and describe a method how we can find algebraically independent harmonic sums. In addition, we define a differentiation on harmonic sums via an extended version of the Mellin transform. Along with that, new relations between harmonic sums will arise. Furthermore, we present an algorithm which rewrites certain types of nested sums into expressions in terms of harmonic sums. We illustrate by nontrivial examples how these algorithms in cooperation with the summation package \ttfamily Sigma \rmfamily support the evaluation of Feynman integrals.

\tableofcontents

\chapter*{Notations}
\markright{Notations}

  \begin{tabbing}
    \hspace*{2.5cm}\= \kill
    $\N$ \> $\N=\{1,2,\ldots\},$ natural numbers\\
    $\Z$ \> integers \\
    $\Q$ \> rational numbers\\
    $\R$ \> real numbers\\
    $S_{a_1,a_2,\ldots}(n)$ \> harmonic sum; see page \pageref{defHsum}\\
    $Z_{a_1,a_2,\ldots}(n)$ \> Euler-Zagier  sum; see page \pageref{defEZsum}\\
    $\H{m_1,m_2,\ldots}{x}$ \> harmonic polylogarithm; see page \pageref{abpoly2}\\
    $\mathcal{S}_p(n)$ \>the set of polynomials in the harmonic sums with integer indices; see page \pageref{sp}\\
    $\mathcal{S}(n)$ \>the set of polynomials in the harmonic sums; see page \pageref{s}\\
		$\mathcal{I}_p$ \> an ideal of $\mathcal{S}_p(n)$; see page \pageref{idealp}\\
    $\mathcal{I}$ \> an ideal of $\mathcal{S}(n)$; see page \pageref{ideal}\\
    $\shuffle$ \> the shuffle product; see pages \pageref{abshuffle1} and \pageref{hshuffpro}\\
    $*$ \> the quasi-shuffle product; see page \pageref{quasishuffprodef}\\
    $A^*$ \> the free monoid over $A$; see page \pageref{noncomalg}\\
    $A^+$ \> the set of non-empty words over $A$; see page \pageref{abnonemp}\\
    $R\left\langle A\right\rangle$ \> the set of non-commutative polynomials over $R$; see page \pageref{noncomalg}\\
    Lyndon($A$) \> the \textit{Lyndon} words over $A$; see page \pageref{ablyn}\\
    $\ell()$ \> $\ell(\ve w)$ gives the length of a word $\ve w$\\
    $l_n$ \> the number of \textit{Lyndon} words; see page \pageref{wittform}\\
    $\sign$ \> $\sign{a}$ gives the sign of the number $a$\\
    $\abs{\ }$ \> $\abs{\ve w}$ gives the degree of a word $\ve e$; $\abs{a}$ is the absolute value of the number $a$\\ 
    $\wedge$ \> $a \wedge b = \sign{a}\sign{b}(\abs{a}+\abs{b});$ see page \pageref{abwedge}\\
    $\mu(n)$ \>the Möbius function; see \pageref{abmue}\\
    $\zeta_n$\> $\zeta_n$ is the value of the zeta-function at $n$\\
    $\textnormal{part}(\ve a)$ \> see page \pageref{abdefpart}\\
    $\textnormal{rev}(\ve a)$ \> see page \pageref{abdefpart}\\
    $1_n$ \> a vector of length $n$ where all components are 1\\
    $0_n$ \> a vector of length $n$ where all components are 0\\
    $\M{f(x)}{n}$ \> the Mellin transform of f(x); see page \pageref{abmell} \\
    $\Mp{f(x)}{n}$ \> the extended Mellin transform of f(x); see page \pageref{abmellplus} \\
  \end{tabbing}

\newpage
\pagenumbering{arabic}
\chapter{Introduction}
\label{Introduction}
Multiple harmonic sums, see e.g., \cite{Bluemlein2004,Vermaseren1998} and Nielsen-type integrals, associated to them by a Mellin transform, emerge in perturbative calculations of massless or massive single scale problems in quantum field theory; for more details how these sums emerge we refer, \ie to \cite{Bluemlein2009}. Due to the complexity of higher order calculations the knowledge of as many as possible relations between the finite harmonic sums is of importance to simplify the calculations and to obtain as compact as possible analytic results. The multiple finite harmonic sums $S_{a_1,a_2,\ldots,a_k}(n)$ defined as
\begin{equation}
	S_{a_1,\ldots ,a_k}(n)= \sum_{n\geq i_1 \geq i_2 \geq \cdots \geq i_k \geq 1} \frac{\sign{a_1}^{i_1}}{i_1^{\abs {a_1}}}\cdots
	\frac{\sign{a_k}^{i_k}}{i_k^{\abs {a_k}}}
\label{erstedef}	
\end{equation}
yield a very convenient description for these field theoretic quantities \cite{Bluemlein2008}.\\ 
Harmonic sums form a quasi-shuffle algebra \cite{Hoffman1992,Hoffman1997,Hoffman} and the algebraic relations of the harmonic sums are well known \cite{Bluemlein2004}. The number of the respective basis elements are counted by the Lyndon words. All further relations between the nested harmonic sums are called structural relations. They result from the mathematical structure of these objects beyond that given by their indices \cite{Bluemlein2008,Bluemlein2009,Bluemlein2009a}.\\
In physical applications \cite{Bluemlein1999,Bluemlein2000,Bluemlein2004,Bluemlein2005,Bluemlein2008} an analytic continuation of the nested harmonic sums is required and eventually one has to derive the complex analysis for nested harmonic sums. In passing, $n$ takes values $n\in \Q$ and $n\in \R,$ which leads to new relations \cite{Bluemlein2008}. Considering $n\in\Q$ leads, e.g., to half-integer relations, while considering $n\in\R$ will allow us, e.g., to differentiate with respect to $n$ and so new structural relations will arise.\\
Multiple harmonic sums can be represented in terms of Mellin integrals \cite{Vermaseren1998,Bluemlein1999}, in fact they are the Mellin transforms of harmonic polylogarithms \cite{Remiddi2000}, which belong to the Poincar\'{e} iterated integrals see \cite{Poincare1884,Lappo-Danielevsky1953}. This fact allows us to establish a differentiation on the harmonic sums.\\
The package \ttfamily HarmonicSums \rmfamily tries to combine all these features and generalizes some aspects to Euler-Zagier sums and S-sums (see \cite{Moch2002}). Besides basic operations such as multiplication and evaluation of harmonic sums, harmonic polylogarithms, Euler-Zagier sums and S-sums it provides procedures to compute the Mellin and the inverse Mellin transform, to differentiate harmonic sums and to find algebraic and structural relations. Tables of these relations are provided and procedures to apply these relations are included in the package \ttfamily HarmonicSums.\rmfamily\\
With the package \ttfamily Sigma \rmfamily \cite{Schneider2007} we can simplify nested sums such that the nested depth is optimal and the degree of the denominator is minimal. This is possible due to a refined summation theory \cite{Schneider2007a,Schneider2008a} of $\Pi \Sigma$-fields \cite{Karr1981}. The package \ttfamily HarmonicSums \rmfamily finds sum representations of such simplified nested sums in terms of harmonic sums as much as it is possible.\\
The remainder of the thesis is structured as follows: In Chapter \ref{Algebraic Relations between Multiple Harmonic Sums} harmonic sums and similar structures are introduced. We show that harmonic sums form a quasi-shuffle algebra and we show the connection to Lyndon words. Finally, we outline a way how we can find algebraic relations between harmonic sums and how we can apply these relations.
While Chapter \ref{Algebraic Relations between Multiple Harmonic Sums} only deals with algebraic relations we introduce harmonic polylogarithms in Chapter \ref{Harmonic Polylogarithms}, which leads to differentiation of harmonic sums via Mellin-transform. Along with that new relations, structural relations will pop up.
Chapter \ref{Half-Integer Relations} deals with half-integer relations between harmonic sums and in Chapter \ref{Summation of Multiple Harmonic Sums} we show how a certain kind of nested sums can be rewritten in terms of harmonic sums. In the last chapter we give a nontrivial example to illustrate how these algorithms in cooperation with the summation package \ttfamily Sigma \rmfamily support the evaluation of Feynman integrals.\\

\text{\bf Acknowledgments.} This diploma thesis was supported by the SFB grant F1305 and the grant P20347-N18 of the Austrian FWF.\\
I would like to thank C. Schneider for supervising this work and his support. In addition I would like to thank J. Blümlein for his helpful suggestions and comments. In particular it was an exciting experience to be invited at DESY, Zeuthen.

\chapter{Algebraic Relations between Multiple Harmonic Sums}
\label{Algebraic Relations between Multiple Harmonic Sums}
In this section we will define harmonic sums and relate to them the Euler-Zagier sums. In the following we will rely on the fact that harmonic sums form a quasi-shuffle algebra. Quasi-shuffle algebras are connected to the free polynomial algebra on the \textit{Lyndon} words. It will turn out that the number of \textit{Lyndon} words allows one to determine the number of algebraically independent harmonic sums (in the context of quasi-shuffle algebras) and  gives at least an upper bound for the number of algebraically independent harmonic sums (considered as sequences). In the end of this chapter we will present a way how we can derive these algebraically independent harmonic sums following the ideas from \cite{Bluemlein2004}.  
\section{Definition and Product of Harmonic Sums}

We start by defining multiple harmonic sums; see, e.g.,\cite{Bluemlein2004,Vermaseren1998}.
\begin{definition} For $k \in \N$, $n \in \Z$ and $a_i\in \Z/\{0\}$ with $1\leq i\leq k,$
\begin{eqnarray}
	S(n)&=&\left\{ 
		  	\begin{array}{ll}
						1,\  n > 0 & \\
						0,\  n\leq 0, & 
					\end{array} \right. \nonumber\\
	S_{a_1,\ldots ,a_k}(n)&=& \sum_{i=1}^n \frac{\sign{a_1}^i}{i^{\abs{a_1}}}S_{a_2,\ldots ,a_k}(i); \nonumber
	\label{defHsum}
\end{eqnarray}
\end{definition}
$k$ is called the depth and $w=\abs{a_1}+\cdots+\abs{a_k}$ is called the weight of the harmonic sum $S_{a_1,\ldots ,a_k}(n)$. Note that an equivalent definition is given in (\ref{erstedef}).

\begin{notation}
Sometimes we will write $S_{a_1a_2a_3\cdots a_k}(n)$ instead of $S_{a_1,a_2,a_3,\ldots ,a_k}(n)$.
\end{notation}

As we will see later in Section \ref{shufflequasialg} harmonic sums obey an algebra. In this algebra the following property is crucial: a product of two harmonic sums with the same upper summation limit can be written in terms of single harmonic sums: for $n\in \N,$
\begin{eqnarray}
	S_{a_1,\ldots ,a_k}(n)*S_{b_1,\ldots ,b_l}(n)&=&
	\sum_{i=1}^n \frac{\sign{a_1}^i}{i^{\abs {a_1}}}S_{a_2,\ldots ,a_k}(i)*S_{b_1,\ldots ,b_l}(i) \nonumber\\
	&&+\sum_{i=1}^n \frac{\sign{b_1}^i}{i^{\abs {b_1}}}S_{a_1,\ldots ,a_k}(i)*S_{b_2,\ldots ,b_l}(i) \nonumber\\
	&&-\sum_{i=1}^n \frac{\sign{a_1*b_1}^i}{i^{\abs{a_1}+\abs{b_1}}}S_{a_2,\ldots ,a_k}(i)*S_{b_2,\ldots ,b_l}(i);
\label{hsumproduct}
\end{eqnarray}
here $*$ means the usual multiplication.
The proof of equation (\ref{hsumproduct}) follows immediately from 
$$
\sum_{i=1}^n \sum_{j=1}^n a_{ij}=\sum_{i=1}^n \sum_{j=1}^i a_{ij}+\sum_{j=1}^n \sum_{i=1}^j a_{ij}-\sum_{i=1}^n a_{ii};
$$
for a graphical illustration of the proof see Figure \ref{fig1}. As a consequence, any product of harmonic sums can be written as a linear combination of harmonic sums by iterative application of (\ref{hsumproduct}).
\begin{figure}
\centering
\parbox{.15\textwidth}{\includegraphics[width=2cm]{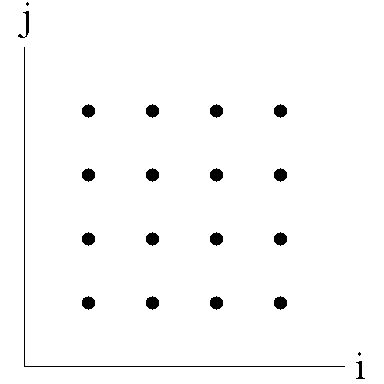}}
\centering
\parbox{.05\textwidth}{$=$}
\centering
\parbox{.15\textwidth}{\includegraphics[width=2cm]{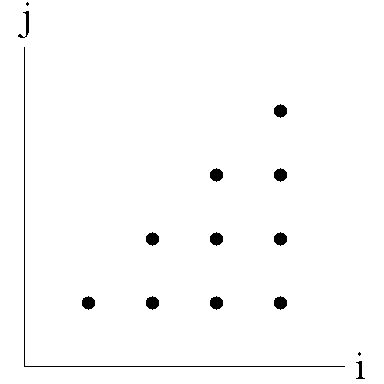}}
\centering
\parbox{.05\textwidth}{$+$}
\centering
\parbox{.15\textwidth}{\includegraphics[width=2cm]{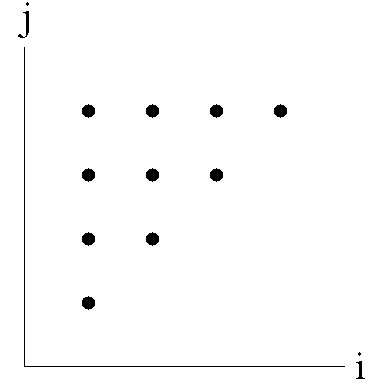}}
\centering
\parbox{.05\textwidth}{$-$}
\centering
\parbox{.15\textwidth}{\includegraphics[width=2cm]{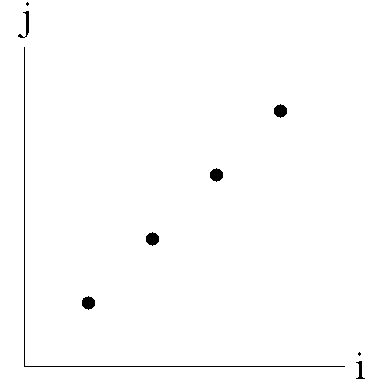}}
\caption{Sketch of the proof of the multiplication of harmonic sums (compare \cite{Moch2002}).}
\label{fig1}
\end{figure}
We give an example for the product of two harmonic sums.
\begin{example} For $n \geq 0,$
\begin{eqnarray}
S_{a_1}(n)*S_{b_1,\ldots ,b_l}(n)&=&S_{a_1,b_1,\ldots ,b_l}(n)+S_{b_1,a_1,b_2,\ldots ,b_l}(n)+\cdots+S_{b_1,b_2,\ldots ,b_l,a_1}(n)\nonumber\\
&&-S_{a_1\wedge b_1,b_2,\ldots,b_l}(n)-\cdots-S_{b_1,b_2,\ldots,a_1\wedge b_m}(n).\\ \nonumber
\label{pruductex}
\end{eqnarray}
Here the symbol $\wedge$ is defined as
\begin{equation}
a \wedge b = \sign{a}\sign{b}(\abs{a}+\abs{b}).\nonumber\\
\label{abwedge}
\end{equation}
\end{example}
Using the package \ttfamily HarmonicSums \rmfamily in \sffamily Mathematica \rmfamily we can do this automatically:
\begin{fmma}
{
\In \text{\bf \text{S}[1,4,n]\,\text{S}[2,-3,n]//\text{LinearExpand}}\\
\Out {\text{S}[3,-7,n]-\text{S}[1,2,-7,n]-\text{S}[1,6,-3,n]-\text{S}[2,-4,4,n]-\text{S}[2,1,-7,n]- \text{S}[3,-3,4,n]-\text{S}[3,4,-3,n]+\text{S}[1,2,-3,4,n]+\text{S}[1,2,4,-3,n]+\text{S}[1,4,2,-3,n]+\text{S}[2,-3,1,4,n]+\text{S}[2,1,-3,4,n]+\text{S}[2,1,4,-3,n]}\\
}
\end{fmma}
 
\begin{notation}
As it turns out later, the following modified notation (also used in \cite{Vermaseren1998}) for multiple harmonic sums will be useful. In this notation we will only allow the indices $1,\ 0,$ and $-1$. An index zero which is followed by an index that is nonzero means that 1 should be added to the absolute value of the nonzero index. So for example:
$$
S_{0,0,1,0,-1,0,0,0,1}(n)=S_{3,-2,4}(n)
$$
We will refer to this modified notation as the expanded version.
\label{not1}
\end{notation}

Using the modified notation it is quite easy to calculate the number $N(w)$ of multiple harmonic sums for a given weight $w$. Since the length of the index set in the expanded version is exactly the weight of the harmonic sum, we have to consider all possible index sets of length $w.$ Each index except the last can take the three values $1,\ 0,$ and $-1.$ The last index can just be 1 or -1. Hence we get the following formula, compare \cite{Vermaseren1998} or \cite{Bluemlein2004}:
$$
N(w)=2\cdot3^{w-1}.
$$

\section{Euler-Zagier Sums}
Now we define in a similar way the so called Euler-Zagier sums, see \cite{Zagier1994,Euler1775} or compare \cite{Moch2002}:
\begin{definition}For $k \in \N$, $n \in \Z$ and $a_i\in \Z/\{0\}$ with $1\leq i\leq k,$
\begin{eqnarray}
	Z(n)&=&\left\{ 
		  	\begin{array}{ll}
						1,\  n \geq 0 & \\
						0,\  n < 0, & 
					\end{array} \right. \nonumber\\
	Z_{a_1,\ldots ,a_k}(n)&=& \sum_{i=1}^n \frac{\sign{a_1}^i}{i^{\abs {a_1}}}Z_{a_2,\ldots ,a_k}(i-1). \nonumber
\end{eqnarray}
\label{defEZsum}
\end{definition}
Again $k$ is called the depth and $w=\abs{a_1}+\cdots+\abs{a_k}$ is called the weight of the Euler-Zagier sum $Z_{a_1,\ldots ,a_k}(n)$. An equivalent definition is given by 
\begin{equation}
	Z_{a_1,\ldots ,a_k}(n)= \sum_{n \geq  i_1 > i_2 > \cdots > i_k \geq 1} \frac{\sign{a_1}^{i_1}}{i_1^{\abs {a_1}}}\cdots
	\frac{\sign{a_k}^{i_k}}{i_k^{\abs {a_k}}}.
\end{equation}
The product for these sums can be expanded with the formula
\begin{eqnarray}
	Z_{a_1,\ldots ,a_k}(n)*Z_{b_1,\ldots ,b_l}(n)&=&
	\sum_{i=1}^n \frac{\sign{a_1}^i}{i^{\abs {a_1}}}Z_{a_2,\ldots ,a_k}(i)*Z_{b_1,\ldots ,b_l}(i) \nonumber\\
	&&+\sum_{i=1}^n \frac{\sign{b_1}^i}{i^{\abs {b_1}}}Z_{a_1,\ldots ,a_k}(i)*Z_{b_2,\ldots ,b_l}(i) \nonumber\\
	&&+\sum_{i=1}^n \frac{\sign{a_1*b_1}^i}{i^{\abs{a_1}+\abs{b_1}}}Z_{a_2,\ldots ,a_k}(i)*Z_{b_2,\ldots ,b_l}(i).
\label{zpro}
\end{eqnarray}
 
Again we can perform this expansion using the package \ttfamily HarmonicSums\rmfamily:
\begin{fmma}
\In \text{\bf \text{Z}[1,4,n]\,\text{Z}[2,-3,n]}\\
\Out {\text{Z}[3,-7,n]+\text{Z}[1,2,-7,n]+\text{Z}[1,6,-3,n]+\text{Z}[2,-4,4,n]+\text{Z}[2,1,-7,n]+ \text{Z}[3,-3,4,n]+\text{Z}[3,4,-3,n]+\text{Z}[1,2,-3,4,n]+\text{Z}[1,2,4,-3,n]+\text{Z}[1,4,2,-3,n]+\text{Z}[2,-3,1,4,n]+\text{Z}[2,1,-3,4,n]+\text{Z}[2,1,4,-3,n]}\\
\end{fmma} 

Multiple harmonic sums and Euler-Zagier sums are closely related. In particular, we can easily convert them from one representation to the other one, compare \cite{Moch2002}. Namely by using 
\begin{eqnarray}
	S_{a_1,\ldots}(n)&=&
	\sum_{i_1=1}^n \frac{\sign{a_1}^{i_1}}{i_1^{\abs {a_1}}}\sum_{i_2=1}^{i_1-1} 																																\frac{\sign{a_2}^{i_2}}{i_2^{\abs{a_2}}}S_{a_3,\ldots}(i_2)\nonumber\\
	&&+S_{a_1\wedge a_2,a_3,\ldots}(n),\nonumber
\end{eqnarray}
we can recursively convert multiple harmonic sums to Euler-Zagier sums. Similarly one can exploit
\begin{eqnarray}
	Z_{a_1,\ldots}(n)&=&
	\sum_{i_1=1}^n \frac{\sign{a_1}^{i_1}}{i_1^{\abs {a_1}}}\sum_{i_2=1}^{i_1} \frac{\sign{a_2}^{i_2}}{i_2^{\abs{a_2}}}Z_{a_3,\ldots, 					}(i_2-1)\nonumber\\
	&&-Z_{a_1 \wedge a_2 ,a_3,\ldots}(n),\nonumber
\end{eqnarray}
in order to convert Euler-Zagier sums to multiple harmonic sums.
\begin{example}For $n \in \N,$
\begin{eqnarray*}
S_{1,3,4}(n)&=&Z_{8}(n)+Z_{1,7}(n)+Z_{4,4}(n)+Z_{1,3,4}(n),\\
Z_{1,3,4}(n)&=&S_{8}(n)-S_{1,7}(n)-S_{4,4}(n)+S_{1,3,4}(n).
\end{eqnarray*}
\end{example}
The package \ttfamily HarmonicSums \rmfamily provides procedures for these conversions:
\begin{fmma}
\begin{mma}
\In \text{\bf ZToS[Z[1, 3, 4, n]]}\\
\Out {\text{S}[8,n]-\text{S}[1,7,n]-\text{S}[4,4,n]+\text{S}[1,3,4,n]}\\
\end{mma}
\begin{mma}
\In \text{\bf SToZ[S[1, 3, 4, n]]}\\
\Out {\text{Z}[8,n]+\text{Z}[1,7,n]+\text{Z}[4,4,n]+\text{Z}[1,3,4,n]}\\
\end{mma}
\end{fmma} 

\section{The Shuffle and the Quasi-shuffle Algebra}
\label{shufflequasialg}
In Section \ref{harasshuf} we present the algebra of multiple harmonic sums, but first we have to give some basic definitions; in particular, we will introduce the shuffle and the quasi-shuffle algebra.
\begin{definition}[Graded algebra]
A graded algebra over a ring\footnote{Throughout this thesis we assume that all rings and fields contain $\Q$. The ground field, denoted by $K$ and the ground ring, denoted by $R$, are always commutative.} $R$ is an $R$-algebra $E$ that has a direct sum decomposition
$$
E=\bigoplus_i E_i = E_0 \oplus E_1 \oplus E_2 \oplus \cdots
$$
such that for any $x\in E_i$, and $y\in E_j$ we have $xy \in E_{i+j},$ $\ie$ $E_iE_j \subseteq  E_{i+j}$ and such that for any $r\in R$ and $x \in E_i$ we have $rx\in E_i,$ $\ie$ $RE_i \subseteq  E_{i}.$
\end{definition}
\begin{definition}[Non-commutative Polynomial Algebra]
\label{noncomalg}
Let $A$ be a totally ordered, graded set. The degree of $a\in A$ is denoted by $\abs{a}.$
Let $A^*$ denote the free monoid over $A$, i.e.,
		$$A^*=\left\{a_1 \cdots a_n | a_i \in A, n\geq 1\right\}\cup\left\{\epsilon \right\}.$$
We extend the degree function to $A^*$ by $\abs{a_1 a_2 \cdots a_n}=\abs{a_1}+\abs{a_2}+\cdots+\abs{a_n}$ for $a_i \in A$ and $\abs{\epsilon}=0.$
Let $R\supseteq \mathbb{Q}$ be a commutative ring. The set of non-commutative polynomials over $R$ is defined as
		$$
			R\left\langle A\right\rangle:=\left\{\left.\sum_{\ve w\in A^*} r_\ve w \ve w\right|r_\ve w \in R, r_\ve w=0 \textnormal{ for almost all } \ve w \right\}.
		$$
Addition in $R\left\langle A\right\rangle$ is defined component wise and multiplication is defined by
		$$
			\sum_\ve w a_\ve w \ve w \cdot \sum_\ve w b_\ve w \ve w :=  \sum_\ve w (\sum_{uv=\ve w} a_u b_v) \ve w. 
		$$
\end{definition}
\begin{remark}
We refer to elements of $A$ as \textit{letters} and to elements of $A^*$ as \textit{words}. We call $\epsilon$ the \textit{empty} word. The length of a word $\ve w=a_1a_2\ldots a_k$ is denoted by $\ell(\ve w)=k.$ The set of non-empty words is denoted by $A^+.$
\label{abnonemp}
\end{remark}
\begin{definition}[Shuffle algebra]
On $A^*$, the free monoid over $A,$ we define the shuffle product as follows. If 
\begin{equation}
\ve u=a_1 \underbrace{a_2 \ldots a_k}_{\ve{u}_1} \textnormal{ and } \ve v=b_1 \underbrace{b_2 \ldots b_l}_{\ve{v}_1} \nonumber
\end{equation}
then
\begin{eqnarray}
\ve u\shuffle \epsilon&=&u, \nonumber \\ 
\ve u\shuffle \ve v&=&a_1\cdot(\ve{u}_1\shuffle \ve v)+b_1\cdot(\ve u\shuffle \ve v_1).\nonumber \\
\nonumber
\label{abshuffle1}
\end{eqnarray}
This product is extended to $R\left\langle A\right\rangle$ by linearity.
\label{shuffdef}
\end{definition}

\begin{remark}
It is an easy exercise to verify that $(R\left\langle A\right\rangle, \shuffle)$ is an associative, commutative and graded $K$-algebra, which is in the following called the \textit{shuffle algebra}.
\end{remark}
\subsection{Lydon words}
Suppose now that the set $A$ of letters is totally ordered by $<$.
We extend this order to words with the following lexicographic order:
$$
\left\{ 
		  	\begin{array}{ll}
						\ve u <\ve uv \ \textnormal{for}  \ v \in A^+\ \textnormal{and } \ve u \in A^*,& \\
						\ve w_1a_1\ve w_2<\ve w_1a_2\ve w_3, \ \textnormal{if} \, a_1<a_2 \ \textnormal{for}\,  a_1,a_2 \in A \textnormal{ and } w_1,w_2,w_3 \in A^*. & 
					\end{array} \right.
$$
\begin{definition}
For a word $\ve w=pxs$ with $p,x,s \in A^*,$ $p$ is called a \textit{prefix} of $w$ if $p\neq \epsilon$ and any $s$ is called a \textit{suffix} of $\ve w$ if $s\neq \epsilon$.
A word $\ve w$ is called a \textit{Lyndon} word if it is smaller than any of its suffixes. The set of all \textit{Lyndon} words on $A$ is denoted by Lyndon($A$).
\label{ablyn}
\end{definition}
Note that any word $\ve w \in A^*$ can be factorized uniquely in a lexicographically decreasing product of \textit{Lyndon} words \cite{Bigotte2002,Lothaire1983,Chen1958}. Therefore it can be written as $\ve w=\ve l_1^{i_1}\cdots \ve l_k^{i_k}$, where $\ve l_j \in$ Lyndon($A$) and $\ve l_i>\ve l_j$ if $i<j$.
The following theorem was first obtained by Radford \cite{Radford1979} a proof can be found in \cite{Reutenauer1969}:
\begin{thm}
The shuffle algebra $(R\left\langle A\right\rangle, \shuffle)$ is the free polynomial algebra on the \textit{Lyndon} words, i.e., $(R\left\langle A\right\rangle, \shuffle) \simeq (R[\textnormal{Lyndon}(A)], \shuffle).$
\label{shuffalgpolyalg}
\end{thm}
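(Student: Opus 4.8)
The plan is to produce an explicit $R$-module isomorphism compatible with the two products, by identifying a basis of $(R\langle A\rangle,\shuffle)$ indexed by the monomials of $R[\mathrm{Lyndon}(A)]$. Since $(R\langle A\rangle,\shuffle)$ is graded by the degree $\abs{\cdot}$, it suffices to argue weight by weight: fix a weight $m$ and let $R\langle A\rangle_m$ be the free $R$-module on the words of weight $m$. By the unique factorization of a word into a lexicographically decreasing product of Lyndon words quoted above, these words are in bijection with the collections $\ve l_1^{i_1}\cdots\ve l_k^{i_k}$ (with $\ve l_1>\cdots>\ve l_k$ in $\mathrm{Lyndon}(A)$ and $\sum_j i_j\abs{\ve l_j}=m$), which are exactly the degree-$m$ monomials of the polynomial algebra $R[\mathrm{Lyndon}(A)]$ realized via the shuffle product. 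Thus the theorem reduces to showing that the shuffle products $\ve l_1^{\shuffle i_1}\shuffle\cdots\shuffle\ve l_k^{\shuffle i_k}$ form an $R$-basis of $R\langle A\rangle_m$.

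The heart of the argument is a triangularity lemma with respect to the lexicographic order on words of weight $m$. For a word $\ve u=\ve l_1^{i_1}\cdots\ve l_k^{i_k}$ written in its decreasing Lyndon factorization, I would show
\[
\ve l_1^{\shuffle i_1}\shuffle\cdots\shuffle\ve l_k^{\shuffle i_k}
= i_1!\cdots i_k!\,\ve u+\sum_{\ve v<\ve u}c_{\ve v}\,\ve v,
\]
where every $\ve v$ is a word of weight $m$ strictly smaller than $\ve u$; equivalently, $\ve u$ is the lexicographically largest word occurring in the shuffle, and it occurs with coefficient $i_1!\cdots i_k!$. Granting this, the matrix expressing the shuffle products in the word basis of $R\langle A\rangle_m$ is triangular (with respect to the lexicographic order) with diagonal entries $i_1!\cdots i_k!$. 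Because $R\supseteq\Q$ these factorials are units in $R$, so the matrix is invertible over $R$; hence the shuffle products form a basis, and letting $m$ vary yields the stated isomorphism $(R\langle A\rangle,\shuffle)\simeq(R[\mathrm{Lyndon}(A)],\shuffle)$.

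To prove the triangularity lemma I would proceed in two stages. First, for a single Lyndon word $\ve l$ I would identify the maximal term of the iterated shuffle $\ve l^{\shuffle i}$ as $i!\,\ve l^i$: the factor $i!$ arises because the $i$ identical copies can be matched to the $i$ blocks of the concatenation in any order, and the defining property that a Lyndon word is strictly smaller than each of its proper suffixes forces the aligned concatenation $\ve l^i$ to dominate every other interleaving lexicographically. Second, I would treat the general decreasing product by induction on the number $k$ of distinct Lyndon factors together with word length, using the compatibility of the lexicographic order with concatenation and the inequalities $\ve l_1>\cdots>\ve l_k$ to show that shuffling the blocks together keeps $\ve u$ as the leading term while all cross-terms are strictly smaller.

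The main obstacle is exactly this combinatorial control of the leading term: one must verify that among all interleavings produced by the shuffle, the concatenation prescribed by the decreasing Lyndon factorization is lexicographically maximal, that its coefficient is precisely $\prod_j i_j!$, and that no coincidence or cancellation among the strictly smaller words disturbs the triangular shape. This step rests entirely on the order-theoretic characterization of Lyndon words and a delicate induction on length; once it is in place, the remainder is the linear-algebra argument above, with the hypothesis $R\supseteq\Q$ entering solely to invert the factorials on the diagonal.
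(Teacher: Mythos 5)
Your plan is correct, but it is worth saying how it sits relative to the paper: the paper does not actually prove this theorem at all. It attributes the result to Radford and points to Reutenauer for a proof, and then, immediately afterwards, states \emph{without proof} exactly the triangularity fact your argument hinges on, namely the Lemma asserting that $Q_{\ve w}=\tfrac{1}{i_1!\cdots i_k!}\,\ve l_1^{\shuffle i_1}\shuffle\cdots\shuffle \ve l_k^{\shuffle i_k}=\ve w+R_{\ve w}$, where $R_{\ve w}$ is supported on words of the same weight that are lexicographically smaller than $\ve w$. So what you have done is reconstruct the classical proof that the paper outsources: the unique factorization of words into decreasing products of Lyndon words identifies the words of weight $m$ with the degree-$m$ monomials in the Lyndon words, the triangularity lemma makes the change of basis from shuffled monomials to words triangular with diagonal entries $i_1!\cdots i_k!$, and the standing hypothesis $\Q\subseteq R$ makes those diagonal entries invertible. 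Granting the paper's quoted Lemma, your weight-by-weight linear algebra correctly finishes the argument; one small point to make explicit is that for the ``triangular matrix'' argument you need each weight component to contain only finitely many words (true here, since every letter has positive degree and each degree contains finitely many letters), or else you must phrase the spanning step as an induction along the lexicographic order restricted to words of a fixed weight.

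The only genuinely incomplete part of your write-up is the proof of the triangularity lemma itself. Your two-stage sketch (leading word $\ve l^{i}$ with coefficient $i!$ for a single Lyndon word, then induction over the distinct decreasing factors) has the right structure, but the two claims you assert there are exactly where all the work lies, and both fail without the Lyndon hypothesis: for $\ve u=aa$ the coefficient of $aaaa$ in $\ve u\shuffle \ve u$ is $6$, not $2!$, and for $\ve u=aba$ the word $ababaa$ occurs in $\ve u\shuffle\ve u$ and is lexicographically larger than $\ve u^2=abaaba$, so the aligned concatenation is not the leading term. Hence the justification ``the copies can be matched to the blocks in any order'' is not by itself a proof; ruling out non-aligned matchings that either tie with or beat the concatenation requires threading the property that a Lyndon word is smaller than all its proper suffixes through a careful induction. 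Since that is precisely the content of the reference the paper cites (and of the Lemma the paper states without proof), your proposal is best read as a correct reduction of the theorem to that single combinatorial lemma, together with a complete and correct derivation of the theorem from it.
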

In other words, any non-commutative polynomial in $R\left\langle A\right\rangle$, can be expressed uniquely as a commutative polynomial on the \textit{Lyndon} words.
\begin{notation}
For $w\in A^*$ and $i\in N$ we use the abbreviation
$$
w^{\shuffle i}=\underbrace{w \shuffle w \shuffle \ldots \shuffle w}_{i\ \textnormal{times}}.
$$
\end{notation}
We can use the following lemma to rewrite each word as a commutative polynomial of \textit{Lyndon} words, see \cite{Radford1979} or \cite{Bigotte2002}.
\begin{lemma}Let $\ve w=\ve l_1^{i_1}\cdots \ve l_k^{i_k}$ be a word, where $\ve l_1,\ldots,\ve l_k \in$ Lyndon($A$) with $\ve l_i>\ve l_j$ if $i<j$ and $i_1,\ldots,i_k \in \N$. The polynomial $Q_\ve w=1/(i_1!\cdots i_k!)\ve l_1^{\shuffle i_1}\shuffle \cdots \shuffle \ve l_k^{\shuffle i_k}$ can be written as $\ve w+R_\ve w$, where $R_\ve w$ contains only words whose weights are equal to the weight of $\ve w$ and which are smaller than $\ve w$ (with respect to our lexicographic order).
\end{lemma}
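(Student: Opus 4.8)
The weight statement is the routine part. Since $(R\langle A\rangle,\shuffle)$ is a graded algebra, every word occurring in $\ve l_1^{\shuffle i_1}\shuffle\cdots\shuffle\ve l_k^{\shuffle i_k}$ has weight $\abs{\ve l_1^{i_1}\cdots\ve l_k^{i_k}}=\abs{\ve w}$; and since the shuffle of a word of length $p$ with a word of length $q$ produces only words of length $p+q$, all occurring words also share the common length $\ell(\ve w)$. Working with words of a fixed length is advantageous, because on such words the lexicographic order is total and compatible with concatenation: if $\ell(u)=\ell(u')$ and $u>u'$, then $ux>u'y$ for all equal-length $x,y$. Hence it suffices to prove that, after division by $i_1!\cdots i_k!$, the concatenation $\ve w=\ve l_1^{i_1}\cdots\ve l_k^{i_k}$ is the unique lexicographically greatest word occurring and that its coefficient is $1$; the remaining words then automatically have weight $\abs{\ve w}$ and are strictly smaller than $\ve w$, and they constitute $R_\ve w$.

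The heart of the matter is the claim that the lexicographically greatest word occurring in $\ve l_1^{\shuffle i_1}\shuffle\cdots\shuffle\ve l_k^{\shuffle i_k}$ is $\ve w$ itself. I would prove this by induction on the total number of factors, listing them with multiplicity as $\ve m_1\ge\ve m_2\ge\cdots\ge\ve m_N$ (the first $i_1$ equal to $\ve l_1$, the next $i_2$ equal to $\ve l_2$, and so on). Because the shuffle of words has only nonnegative coefficients, there is no cancellation, so the leading word of $\ve m_1\shuffle(\ve m_2\shuffle\cdots\shuffle\ve m_N)$ is the lexicographic maximum, over the words $x$ occurring in the inner shuffle, of the leading words of $\ve m_1\shuffle x$. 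By the inductive hypothesis the inner shuffle has leading word $c=\ve m_2\cdots\ve m_N$, a decreasing product of Lyndon words each $\le\ve m_1$. Thus the induction reduces to a two-factor statement: for a Lyndon word $\ve l$ and such a word $c$, the leading word of $\ve l\shuffle c$ is the concatenation $\ve l c$, while any $x$ occurring in the inner shuffle that is strictly smaller than $c$ yields a strictly smaller leading word, so the maximum is attained only at $x=c$. This two-factor statement is exactly where the defining property of a Lyndon word enters: a Lyndon word is strictly smaller than each of its proper suffixes, and this is what prevents any advantageous interleaving of the letters of the smaller factor into the interior of the larger Lyndon block. That the hypothesis is essential is visible already from $\ve{ba}\shuffle\ve{ab}$, whose leading word $\ve{baba}$ is not a concatenation of the two (non-Lyndon) factors. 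I would establish the two-factor statement by induction on length via a greedy, letter-by-letter exchange argument, the delicate case being when the two factors share a common prefix.

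It then remains to count the multiplicity of $\ve w$. Here the claim is that the only merges of $\ve m_1,\ldots,\ve m_N$ that produce $\ve w$ are the block-respecting ones, in which each block of $\ve w$ equal to $\ve l_j$ is filled entirely by one of the $i_j$ copies of $\ve l_j$; any stray letter crossing from one factor into the block of another would, at its first occurrence, force a lexicographically smaller word, so it cannot occur in the maximal word. Granting this, the $i_j$ identical copies of $\ve l_j$ can be matched with the $i_j$ corresponding blocks in exactly $i_j!$ ways, so the multiplicity of $\ve w$ is $i_1!\cdots i_k!$. Dividing by $i_1!\cdots i_k!$ gives $\ve w$ the coefficient $1$ in $Q_\ve w$, which completes the proof.

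The main obstacle is the two-factor leading-term statement together with its multiplicity count; gradedness, the passage from two to many factors, and the final normalization are routine. The genuinely delicate point is the behaviour of $\ve l\shuffle c$ when $\ve l$ and $c$ share a common prefix, since then the greedy choice of the next output letter is not immediate and must be resolved by invoking that $\ve l$ is strictly smaller than all of its proper suffixes. I would isolate this as a separate lemma; the underlying combinatorics is classical and is treated in \cite{Radford1979,Reutenauer1969,Bigotte2002}.
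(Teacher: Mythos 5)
First, a point of comparison: the paper does not actually prove this lemma --- it states it and refers to \cite{Radford1979} and \cite{Bigotte2002}. So your proposal is not competing with an in-paper argument; it is an attempt to reconstruct the classical proof contained in those references, and as a skeleton it is faithful to it. The routine reductions are all correct: gradedness and length-preservation of $\shuffle$; totality and concatenation-compatibility of the lexicographic order on words of a fixed length; the no-cancellation remark (shuffle coefficients are nonnegative), which legitimizes computing the leading word factor by factor; and the monotonicity you use tacitly --- if $x<c$ with $\ell(x)=\ell(c)$, then every word of $\ve m_1\shuffle x$ lies strictly below the greatest word of $\ve m_1\shuffle c$ --- which is true and easy (take the interleaving pattern realizing a word of $\ve m_1\shuffle x$ and feed in the letters of $c$ instead of those of $x$; the first changed position strictly increases the word). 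The block-respecting count giving multiplicity $i_1!\cdots i_k!$, hence coefficient $1$ after normalization, is also the right bookkeeping, and your example $\ve{ba}\shuffle\ve{ab}$ correctly isolates why the Lyndon hypothesis is indispensable.

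The gap is that, after these reductions, everything rests on the two-factor statement --- the leading word of $\ve l\shuffle c$ is $\ve l c$, attained only by block-respecting interleavings --- and this you do not prove: you defer it to a ``greedy, letter-by-letter exchange'' and to the literature. That statement \emph{is} Radford's lemma; it is where all the work lies, so as a standalone proof yours is incomplete at exactly the crucial point (the ``stray letter forces a smaller word'' sentence is intuition, not an argument). Moreover, the induction you sketch does not close on the hypotheses you allow, because stripping the first letter of $\ve l$ destroys Lyndonness: take $\ve l=c=\ve{abb}$ (a Lyndon word); after the common first letter $a$ is emitted you must control $\ve{bb}\shuffle\ve{abb}$, and $\ve{bb}$ is not Lyndon, so your two-factor hypothesis does not cover that pair. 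The classical repair \cite{Radford1979,Reutenauer1969} is to strengthen the statement until it is stable under such steps: for any two decreasing products $u,v$ of Lyndon words, the greatest word of $u\shuffle v$ is the decreasing merge of the two multisets of Lyndon factors, with multiplicity $\prod_{l}\binom{a_l+b_l}{a_l}$ over the distinct factors $l$ (one checks that deleting the first letter of a Lyndon word leaves a decreasing product of strictly larger Lyndon words, so this form survives the induction); your lemma is the iterated special case. If you add that strengthened two-factor lemma with its proof, your plan becomes a complete and correct argument --- which would then be more than the paper itself supplies.
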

\begin{example}
Let $\ve w=babab$, where $a<b$. Obviously, $\ve w$ is not a \textit{Lyndon} word. Using the lemma above recursively, we get
$$
\ve w = \frac{1}{2} b \shuffle ab^{\shuffle 2}+12 aabbb+4 ababb-2 abb \shuffle ab-2 b\shuffle aabb.
$$
\end{example}
\subsection{The Quasi-Shuffle Algebra}
We define a new multiplication $*$ on $R\left\langle A\right\rangle$ which is a generalisation of the \textit{shuffle product}, by requiring that $*$ distributes with the addition. We will see that this product can be used to describe properties of harmonic sums. In \cite{Hoffman} (see also \cite{Hoffman1992,Hoffman1997}) there is a similar construction which for example gives rise to properties of Euler-Zagier sums.  
\begin{definition}[Quasi-shuffle product]
\label{quasishuffprodef}
$*: R\left\langle A\right\rangle \times R\left\langle A\right\rangle \longrightarrow R\left\langle A\right\rangle$ is called Quasi-shuffle product, if it distributes with the addition and 
\begin{eqnarray}
\epsilon *\ve w &=& \ve w* \epsilon = \ve w, \ \textnormal{for all}\ w \in A^*,\nonumber\\
a\ve u*b\ve v &=& a(\ve u*\ve v)+b(\ve u*\ve v)-[a,b](\ve u*\ve v), \ \textnormal{for all}\ a,b \in A;\ve u,\ve v \in A^*,\label{eq:quasipro}
\end{eqnarray}
where $[\cdot,\cdot]:\overline{A}\times \overline{A}\rightarrow \overline{A}$, $(\overline{A} = A \cup \left\{0\right\})$ is a function satisfying
\begin{eqnarray}
&\textnormal{S}0.& [a,0]=0 \ \textnormal{for all } a \in \overline{A}; \nonumber\\
&\textnormal{S}1.& [a,b]=[b,a] \ \textnormal{for all } a,b \in \overline{A}; \nonumber\\
&\textnormal{S}2.& [[a,b],c]=[a,[b,c]] \ \textnormal{for all } a, b, c \in \overline{A}; \textnormal{ and} \nonumber\\
&\textnormal{S}3.& \textnormal{Either } [a,b]=0 \textnormal{ or } \abs{[a,b]}=\abs a +\abs b \ \textnormal{for all } a, b\in \overline{A}.\nonumber
\end{eqnarray}
\end{definition}

Note that the quasi-shuffle product gives rise to various applications by defining explicit functions $[\cdot,\cdot]$ accordingly. In Section \ref{harasshuf} we exploit this construction to handle harmonic sums properties.

\begin{thm}(compare \cite{Hoffman})
$(R\left\langle A\right\rangle,*)$ is a commutative graded algebra. 
\end{thm}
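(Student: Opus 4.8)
Since $*$ is declared to distribute over addition and is already pinned down on the monomial basis $A^*$, it suffices to establish each algebra property on words and then extend by bilinearity. The first thing to check is that the recursion actually defines a product: writing $\ve u=a\ve u_1$ and $\ve v=b\ve v_1$ with $a,b\in A$, the defining relation expresses $\ve u*\ve v$ as $a(\ve u_1*\ve v)+b(\ve u*\ve v_1)-[a,b](\ve u_1*\ve v_1)$, and every product on the right-hand side involves a strictly smaller total length $\ell(\ve u)+\ell(\ve v)$. Hence an induction on $\ell(\ve u)+\ell(\ve v)$, with base case $\ve u*\epsilon=\epsilon*\ve u=\ve u$, shows that $\ve u*\ve v\in R\langle A\rangle$ is well defined; bilinearity then fixes $*$ uniquely on all of $R\langle A\rangle$.

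Gradedness and commutativity are both straightforward inductions on total length. For the grading I take $E_i$ to be the span of the words of degree $i$ and show $E_i*E_j\subseteq E_{i+j}$: if $\abs{\ve u}=i$ and $\abs{\ve v}=j$, then in the recursion above the first two terms have degree $i+j$ by the induction hypothesis (the head letter contributes its own degree), while the last term $[a,b](\ve u_1*\ve v_1)$ is either $0$, which is harmless, or has degree $\abs{[a,b]}+(i-\abs a)+(j-\abs b)=i+j$ by axiom S3. For commutativity, swapping $\ve u$ and $\ve v$ interchanges the first two terms of the recursion and leaves the third invariant because $[a,b]=[b,a]$ by S1; the induction hypothesis then finishes the step.

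The real work is associativity, $(\ve u_1*\ve u_2)*\ve u_3=\ve u_1*(\ve u_2*\ve u_3)$, which I would prove by induction on $\ell(\ve u_1)+\ell(\ve u_2)+\ell(\ve u_3)$. If any of the three words is empty, both sides collapse to the $*$-product of the remaining two, so assume $\ve u_1=a\ve x$, $\ve u_2=b\ve y$, $\ve u_3=c\ve z$ with $a,b,c\in A$. I would expand each side completely by the recursion until every occurrence of $*$ acts on a triple of strictly shorter words, to which the induction hypothesis applies, and then collect terms according to their leading symbol. The terms whose head is one of the letters $a,b,c$ match on the two sides immediately, and the genuinely mixed contributions are those carrying a bracket. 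Here the axioms on $[\cdot,\cdot]$ do exactly the required matching: S0 guarantees that a vanishing bracket kills the corresponding term consistently, S1 lets the bracketed letters be reordered, and, most importantly, the doubly bracketed term obtained by merging all three heads appears as $[[a,b],c]$ on the left and as $[a,[b,c]]$ on the right, so S2 is precisely what forces the two expansions to agree.

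I expect the bracket bookkeeping in the associativity step to be the only real obstacle: one must organise the expansion so that every term is classified unambiguously by which of $a$, $b$, $c$ have been merged, and verify that the coefficient patterns, including the minus signs, coincide term by term. Once that combinatorial matching is set up, axioms S0--S2 close the argument, and together with the previous paragraphs this yields that $(R\langle A\rangle,*)$ is a commutative graded algebra.
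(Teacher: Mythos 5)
Your proposal follows essentially the same route as the paper's own proof: induction on total word length for each property, with (S1) killing the bracket difference for commutativity, (S2) reconciling the terms $[[a,b],c]$ and $[a,[b,c]]$ in the associativity expansion, and (S3) controlling degrees for gradedness. The only difference is that the paper writes out the associativity expansion explicitly while you leave it as organized bookkeeping (and you add a well-definedness check the paper leaves implicit), but every idea you identify is exactly the one the paper uses.
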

\begin{proof}
It is enough to show that the operation $*$ is commutative, associative and adds degrees. For commutativity we have to show $\ve w_1*\ve w_2=\ve w_2*\ve w_1$ for any words $\ve w_1$ and $\ve w_2.$ We proceed by induction on $\ell(\ve w_1)+\ell(\ve w_2)$. Since there is nothing to prove if either $\ve w_1$ or $\ve w_2$ is empty, we can assume that there are letters $a,b$ so that $\ve w_1=a\ve u$ and $\ve w_2=b\ve v$. Then (\ref{eq:quasipro}) together with the induction hypothesis gives
\begin{equation}
\ve w_1*\ve w_2-\ve w_2*\ve w_1=-[a,b](\ve u*\ve v)+[b,a](\ve v*\ve u),\nonumber
\end{equation}
and the right-hand side is zero by the induction hypothesis and (S1). For associativity we have to prove $(\ve w_1*\ve w_2)*\ve w_3=\ve w_1*(\ve w_2*\ve w_3)$ for any words $\ve w_1, \ve w_2$ and $\ve w_3.$ We proceed by induction on $\ell(\ve w_1)+\ell(\ve w_2)+\ell(\ve w_3).$ Since there is nothing to prove if $\ell(\ve w_1)+\ell(\ve w_2)+\ell(\ve w_3)<3$ or either $\ve w_1$, $\ve w_2$ or $\ve w_3$ is empty, we can assume that there are letters $a,b,c$ so that $\ve w_1=a\ve v_1$, $\ve w_2=b\ve v_2$ and $\ve w_3=b\ve v_3$. Then the following holds by using (S2), distributivity, the induction hypothesis and (\ref{eq:quasipro}):
\footnotesize
\begin{eqnarray}
			(\ve w_1*\ve w_2)*\ve w_3&=&(a(\ve v_1*b\ve v_2)+b(a\ve v_1*\ve v_2)-[a,b](\ve v_1*\ve v_2))*c\ve v_3\nonumber \\  
						 &=&a((\ve v_1*b\ve v_2)*c\ve v_3)+c(a(\ve v_1*b\ve v_2)*\ve v_3)-[a,c]((\ve v_1*b\ve v_2)*\ve v_3)\nonumber\\
						 &&+b((a\ve v_1*\ve v_2)*c\ve v_3)+c(b(a\ve v_1*\ve v_2)*\ve v_3)-[b,c]((a\ve v_1*\ve v_2)*\ve v_3)\nonumber\\
						 &&-[a,b]((\ve v_1*\ve v_2)*c\ve v_3)-c([a,b](\ve v_1*\ve v_2)*\ve v_3)+[[a,b],c]((\ve v_1*\ve v_2)*\ve v_3)\nonumber\\
						 &=&a(\ve v_1*(b\ve v_2*c\ve v_3))+c((a\ve v_1*b\ve v_2)*\ve v_3)-[a,c](\ve v_1*(b\ve v_2*\ve v_3))\nonumber\\
						 &&+b(a\ve v_1*(\ve v_2*c\ve v_3))-[b,c](a\ve v_1*(\ve v_2*\ve v_3)) -[a,b](\ve v_1*(\ve v_2*c\ve v_3))\nonumber\\
						 &&+[a,[b,c]](\ve v_1*(\ve v_2*\ve v_3))\nonumber\\ 
						 &=&a(\ve v_1*(b(\ve v_2*c\ve v_3)))+a(\ve v_1*c(b\ve v_2*\ve v_3))-a(\ve v_1*[b,c](\ve v_2*\ve v_3))\nonumber\\ 
						 &&+c(a\ve v_1*(b\ve v_2*\ve v_3))-[a,c](\ve v_1*(b\ve v_2*\ve v_3))+b(a\ve v_1*(\ve v_2*c\ve v_3))\nonumber\\ 
						 &&-[b,c](a\ve v_1*(\ve v_2*\ve v_3)) -[a,b](\ve v_1*(\ve v_2*c\ve v_3))+[a,[b,c]](\ve v_1*(\ve v_2*\ve v_3))\nonumber\\
						 &=&a\ve v_1*(b(\ve v_2*c\ve v_3)+c(b\ve v_2*\ve v_3)-[b,c](\ve v_2*\ve v_3))=\ve w_1*(\ve w_2*\ve w_3).	\nonumber				 
	\end{eqnarray}
\normalsize
To show that $*$ adds degrees it suffices to show $\abs{\ve w_1*\ve w_2}=\abs{\ve w_1}+\abs{\ve w_2}.$ We apply again induction on $\ell(\ve w_1)+\ell(\ve w_2).$ Since there is nothing to prove if either $\ve w_1$ or $\ve w_2$ is empty, we can assume there are letters $a,b$ so that $\ve w_1=a\ve u$ and $\ve w_2=b\ve v$. Recall that $\abs{a\ve u}=\abs{a}+\abs{\ve u},$ see Definition \ref{noncomalg}. By (\ref{eq:quasipro}) we get 
\begin{equation}
\abs{\ve w_1*\ve w_2}=\abs{a(\ve u*b\ve v)+b(a\ve u*\ve v)-[a,b](\ve u*\ve v)}.\nonumber
\end{equation}
By (S3) and the induction hypothesis the right-hand side is smaller or equal to $\abs{\ve w_1}+\abs{\ve w_2}$.
\end{proof}
Similar to \cite{Hoffman} we proceed by defining an isomorphism $\varphi(\ve w): R\left\langle A\right\rangle \rightarrow R\left\langle A\right\rangle$, see Theorem \ref{shuffalgquasishuffalg}. In order to accomplish this task, we will use the following notation which extends the operation $[\cdot,\cdot]$ as follows. Define inductively $[S] \in \overline{A}$ for any finite sequence $S$ of elements of $A$ by setting $[a]=a$ for $a \in A$ and setting $[a,S]=[a,[S]]$ for any $a \in A$ and sequence $S$ of elements of $A$. We get the following proposition \cite{Hoffman}:
\begin{prop}
\begin{itemize}
	\item[\textit{(i)}] If $[S]=0$, $[S']=0$ whenever $S$ is a subsequence of $S'$.
	\item[\textit{(ii)}] $[S]$ does not depend on the order of the elements of $S$.
	\item[\textit{(iii)}] For any sequences $S_1$ and $S_2$, $[S_1 \sqcup S_2]=[[S_1],[S_2]]$ where $S_1 \sqcup S_2$ denotes the concatenation of sequences $S_1$ and $S_2$.
	\item[\textit{(iv)}] If $[S]\neq 0$, then the degree of $S$ is the sum of the degrees of the elements of $S$.
\end{itemize}
\label{prop1}
\end{prop}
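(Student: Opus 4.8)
The plan is to treat the four assertions not in the order stated but in the order in which they most naturally bootstrap from one another, taking \textit{(iii)} as the fundamental decomposition lemma. The guiding observation is that the axioms (S0)--(S3) say precisely that $[\cdot,\cdot]$ makes $\overline A$ into a commutative, associative operation having $0$ as an absorbing element and respecting the grading; each of the four statements is then just an unwinding of iterated applications of this single operation.

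First I would prove \textit{(iii)} by induction on $\ell(S_1)$. For $S_1=(a)$ a single letter, both sides reduce to $[a,[S_2]]$ directly from the defining relation $[a,S]=[a,[S]]$. For the inductive step write $S_1=(a,S_1')$; then $[S_1\sqcup S_2]=[a,[S_1'\sqcup S_2]]=[a,[[S_1'],[S_2]]]$ by the definition and the induction hypothesis, and a single application of associativity (S2) rewrites this as $[[a,[S_1']],[S_2]]=[[S_1],[S_2]]$, using $[a,[S_1']]=[a,S_1']=[S_1]$. This is the whole content of \textit{(iii)}; the only care needed is to keep sequences non-empty so that $[\,\cdot\,]$ is defined throughout.

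With \textit{(iii)} in hand, \textit{(ii)} follows by reducing an arbitrary permutation to a composition of adjacent transpositions. I would first show $[a,b,T]=[b,a,T]$ for any (possibly empty) tail $T$: for non-empty $T$ this is the chain $[a,[b,[T]]]=[[a,b],[T]]=[[b,a],[T]]=[b,[a,[T]]]$ via (S2), (S1), (S2), and for empty $T$ it is just (S1). I then splice this into a longer word by factoring off the unaffected prefix with \textit{(iii)}. Assertion \textit{(i)} now drops out: since $S$ is a subsequence of $S'$, \textit{(ii)} lets me reorder $S'$ so that the letters of $S$ form an initial block, write $S'=S\sqcup R$, and apply \textit{(iii)} to obtain $[S']=[[S],[R]]=[0,[R]]$, which is $0$ by (S0) together with (S1). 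Finally \textit{(iv)} is a separate induction on $\ell(S)$: writing $S=(a,S')$, the hypothesis $[S]=[a,[S']]\neq0$ forces $[S']\neq0$ via (S0), so the induction hypothesis yields the degree of $S'$, and (S3) supplies $\abs{[a,[S']]}=\abs a+\abs{[S']}$, which is exactly the asserted additivity.

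I do not expect a genuine obstacle, since each step is a short induction; the only real work is organisational---recognising that \textit{(iii)} must be proved first and that \textit{(i)} and \textit{(ii)} are consequences of it rather than of each other---and being disciplined about the empty-sequence and absorbing-element edge cases, so that no application of $[\,\cdot\,]$ or of \textit{(iii)} is ever made on an undefined argument.
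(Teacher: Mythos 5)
Your proof is correct. The paper's own proof is a single line---``(i), (ii), (iii), (iv) follow from (S0), (S1), (S2), (S3) respectively''---so you have supplied what it omits, and in doing so your argument departs from the one-axiom-per-item correspondence the paper suggests. You take (iii) as the cornerstone, proved by induction on $\ell(S_1)$ using (S2), and then derive (ii) from (iii) together with \emph{both} (S1) and (S2) (adjacent transpositions need associativity to be spliced into the interior of a word), and (i) from (ii), (iii), (S0) and (S1); only (iv) is proved independently, from (S0) and (S3). This dependency structure is the honest one: read literally, the paper's attribution would have (ii) follow from (S1) alone and (i) from (S0) alone, which is not tenable, since reordering requires moving an element past others (hence associativity), and killing the complement of a subsequence requires both the reordering and the concatenation formula. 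Your attention to the empty-sequence edge cases (when the subsequence in (i) is all of $S'$, or when the prefix in the transposition argument is empty) is also warranted, since the paper's inductive definition of $[S]$ only covers non-empty sequences. What the paper's one-liner buys is brevity and a mnemonic for which axiom is essential to which item; what your write-up buys is an actual proof.
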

\begin{proof}
\textit{(i)}, \textit{(ii)}, \textit{(iii)}, \textit{(iv)} follow from (S0), (S1), (S2), (S3) respectively.
\end{proof}
A composition of a positive integer $n$ is a sequence $I=(i_1,i_2,\ldots,i_k)$ of positive integers such that $i_1+i_2+\ldots+i_k=n$. We call $k=\ell(I)$ the length of $I$ and write $C(n)$ for the set of compositions of $n$. As in \cite{Hoffman}, compositions act on words via $[\cdot,\cdot]$ as follows. For any word $w=a_1a_2\cdots a_n$ and composition $I=(i_1,i_2,\ldots,i_k)\in C(n)$, set
$$
I[w]=[a_1,\ldots,a_{i_1}][a_{i_1+1},\ldots,a_{i_1+i_2}]\cdots[a_{i_1+\cdots+i_{k-1}+1},\ldots,a_n].
$$
Now let $\varphi(w): R\left\langle A\right\rangle \rightarrow R\left\langle A\right\rangle$ be the linear map with $\varphi(\epsilon)=\epsilon$ and
\begin{equation}
\varphi(\ve w)=\sum_{(i_1,\ldots,i_k)\in C(\ell(\ve w))}{\frac{(-1)^{\ell(\ve w)-k}}{i_1! \ldots i_k!}}(i_1 \ldots i_k)[\ve w]
\label{phi1}
\end{equation}
for any nonempty word $\ve w$. There is an inverse $\psi$ of $\varphi$ given by
\begin{equation}
\psi(\ve w)=\sum_{(i_1,\ldots,i_k)\in C(\ell(\ve w))}{\frac{1}{i_1 \ldots i_k}}(i_1 \ldots i_k)[\ve w]
\label{psi1}
\end{equation}
for any word $\ve w$, and extended to $R\left\langle A\right\rangle$ by linearity; this follows by taking $f(t)=\frac{e^t-1}{e^t}=1-\frac{1}{e^t}=1-\sum_{n=0}^\infty{\frac{(-t)^n}{n!}}=\sum_{n=1}^\infty{\frac{-(-t)^n}{n!}}$ in the following Lemma. ($f^{-1}(t)=\log(\frac{1}{1-t})=\sum_{n=1}^\infty{\frac{t^n}{n}}$)
\begin{lemma}(see \cite{Hoffman})
Let $f(t)=a_1t+a_2t^2+a_3t^3\ldots$ be a function analytic at the origin, with $a_1\neq0$ and $a_i \in \R$ for all $i$, and let $f^{-1}(t)=b_1t+b_2t^2+b_3t^3\ldots$ be the inverse of $f$. Then the map $\Psi_f: R\left\langle A\right\rangle\rightarrow R\left\langle A\right\rangle$ given by
\begin{equation}
\Psi_f(\ve w)=\sum_{I=(i_1,\ldots,i_k)\in C(\ell(\ve w))}{a_{i_1}a_{i_2} \cdots a_{i_k}}I[\ve w]
\end{equation}
for words $\ve w$, and extended by linearity, has the inverse $\Psi_f^{-1}=\Psi_{f^{-1}}$ given by
\begin{equation}
\Psi_{f^{-1}}(\ve w)=\sum_{I=(i_1,\ldots,i_k)\in C(\ell(\ve w))}{b_{i_1}b_{i_2} \cdots b_{i_k}}I[\ve w].
\end{equation}
\label{isolem}
\end{lemma}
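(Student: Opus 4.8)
The plan is to verify directly that $\Psi_{f^{-1}} \circ \Psi_f$ is the identity on $R\langle A\rangle$; the reverse composition $\Psi_f \circ \Psi_{f^{-1}}$ then follows by exchanging the roles of $f$ and $f^{-1}$ and using $f(f^{-1}(t)) = t$ in place of $f^{-1}(f(t)) = t$. Since both maps are linear, it suffices to check the identity on a single word $\ve w = a_1 \cdots a_n$. Applying $\Psi_f$ produces a sum over compositions $I \in C(n)$, and each term $I[\ve w]$ is itself a word of length $\ell(I)$ whose letters are the bracketed blocks $[a_1,\ldots,a_{i_1}], [a_{i_1+1},\ldots], \ldots$. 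Applying $\Psi_{f^{-1}}$ to such a word introduces a second composition $J \in C(\ell(I))$, so that $\Psi_{f^{-1}}(\Psi_f(\ve w))$ becomes a double sum over pairs $(I,J)$.

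The first key step is to identify $J[I[\ve w]]$. Here $J$ groups the bracketed letters of $I[\ve w]$ into consecutive runs, i.e.\ it forms brackets of brackets. By Proposition \ref{prop1}\textit{(iii)}, a bracket of brackets collapses to a single bracket of the concatenated underlying letters, so $J[I[\ve w]] = K[\ve w]$, where $K \in C(n)$ is the coarsening of $I$ obtained by merging the parts of $I$ according to the blocks of $J$. Thus the double sum reorganizes into a sum over the resulting composition $K$: for fixed $K = (K_1,\ldots,K_m)$, the pairs $(I,J)$ producing it correspond to independently choosing, for each part $K_r$, a composition $\gamma^{(r)} \in C(K_r)$ describing how $I$ subdivides that part (the $r$-th part of $J$ being $\ell(\gamma^{(r)})$).

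Because this choice is independent across the parts of $K$, the coefficient of $K[\ve w]$ factorizes as $\prod_{r=1}^m \bigl(\sum_{\gamma \in C(K_r)} b_{\ell(\gamma)} \prod_{p} a_p\bigr)$, where the inner product runs over the parts $p$ of $\gamma$. The second key step is to recognize each inner factor: expanding $f^{-1}(f(t)) = \sum_m b_m\, f(t)^m$ and reading off the coefficient of $t^{K_r}$ shows that $\sum_{\gamma \in C(K_r)} b_{\ell(\gamma)} \prod_p a_p$ is exactly the $t^{K_r}$-coefficient of $f^{-1}(f(t)) = t$. Hence each factor equals $1$ when $K_r = 1$ and $0$ otherwise, so the only surviving term is $K = (1,\ldots,1)$. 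For this $K$ one has $K[\ve w] = [a_1]\cdots[a_n] = a_1 \cdots a_n = \ve w$, giving $\Psi_{f^{-1}}(\Psi_f(\ve w)) = \ve w$.

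The main obstacle I expect is the bookkeeping of the two-level composition structure, namely establishing cleanly that the pairs $(I,J)$ biject with a composition $K$ together with a refinement of each of its parts, and that the coefficients $a$ and $b$ attach correctly under this bijection. Once that reorganization is in place and the factorization over the parts of $K$ is justified, the generating-function identity $f^{-1}(f(t)) = t$ closes the argument at once. This is precisely the abstract inversion that yields $\psi = \varphi^{-1}$ for the maps in (\ref{phi1})--(\ref{psi1}).
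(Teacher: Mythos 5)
Your argument is correct, but note that there is nothing in the paper to compare it against: the paper does not prove Lemma \ref{isolem} at all. It states the lemma with the citation ``(see \cite{Hoffman})'' and uses it as a black box to conclude that $\psi$ in (\ref{psi1}) inverts $\varphi$ in (\ref{phi1}). The relevant comparison is therefore with Hoffman's original proof, and your argument is essentially that proof: write $\Psi_{f^{-1}}(\Psi_f(\ve w))$ as a double sum over pairs $(I,J)$, collapse $J[I[\ve w]]$ to $(J\circ I)[\ve w]$ via Proposition \ref{prop1}\textit{(iii)}, group terms by the coarsened composition $K$, factor the resulting coefficient over the parts of $K$, and identify each factor as a coefficient of $f^{-1}(f(t))=t$, so that only $K=(1,\ldots,1)$ survives. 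Two points are worth making explicit in a full write-up: first, Proposition \ref{prop1}\textit{(iii)} is stated for two sequences, so collapsing a bracket of several brackets needs a short induction using the defining relation $[a,S]=[a,[S]]$ together with \textit{(ii)}; second, if some block of $I$ has vanishing bracket then $I[\ve w]$ is zero, and the identity $J[I[\ve w]]=(J\circ I)[\ve w]$ still holds because Proposition \ref{prop1}\textit{(i)} forces the corresponding bracket of $J\circ I$ to vanish as well. Finally, your symmetry argument for the reverse composition $\Psi_f\circ\Psi_{f^{-1}}=\mathrm{id}$ is legitimate, but it silently uses that $f^{-1}$ again satisfies the hypotheses of the lemma, i.e.\ $b_1=1/a_1\neq 0$; that is immediate, but should be said.
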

\begin{thm}
$\varphi$ is an isomorphism of $(R\left\langle A\right\rangle, \shuffle)$ onto $(R\left\langle A\right\rangle,*)$ (as graded k-algebras). 
\label{shuffalgquasishuffalg}
\end{thm}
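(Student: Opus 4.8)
The plan is to verify the three properties that an isomorphism of graded algebras requires: that $\varphi$ is a linear bijection, that it respects the grading, and that it is multiplicative in the sense $\varphi(\ve u \shuffle \ve v)=\varphi(\ve u)*\varphi(\ve v)$. The first two I would dispatch immediately from the machinery already assembled. Comparing (\ref{phi1}) and (\ref{psi1}) with Lemma \ref{isolem} shows that $\varphi=\Psi_f$ and $\psi=\Psi_{f^{-1}}$ for $f(t)=1-e^{-t}$, whose coefficients are $a_n=(-1)^{n-1}/n!$ and $b_n=1/n$; hence the weight attached to a composition $(i_1,\ldots,i_k)$ is $(-1)^{\ell(\ve w)-k}/(i_1!\cdots i_k!)$ for $\varphi$ and $1/(i_1\cdots i_k)$ for $\psi$, exactly as written, since $\prod_j (-1)^{i_j-1}/i_j!=(-1)^{\ell(\ve w)-k}/\prod_j i_j!$. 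Lemma \ref{isolem} then yields $\psi\circ\varphi=\varphi\circ\psi=\mathrm{id}$, so $\varphi$ is a linear bijection with inverse $\psi$. For the grading, Proposition \ref{prop1}(iv) shows that each word $I[\ve w]$ occurring in $\varphi(\ve w)$ either vanishes (when some bracket $[S]$ equals $0$) or has degree $\abs{\ve w}$, since the degree of a concatenation of nonzero brackets is the sum of the degrees of all the letters; thus $\varphi(E_i)\subseteq E_i$, and likewise for $\psi$, so $\varphi$ is a graded linear isomorphism.

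The real content is therefore multiplicativity, which I would prove by induction on $\ell(\ve u)+\ell(\ve v)$. The base cases, where one word is empty, hold because $\varphi(\epsilon)=\epsilon$ is the two-sided unit for both $\shuffle$ and $*$. For the inductive step I would write $\ve u=a\ve u'$ and $\ve v=b\ve v'$, expand the left-hand side with the shuffle recursion $\ve u\shuffle\ve v=a(\ve u'\shuffle\ve v)+b(\ve u\shuffle\ve v')$ and the right-hand side with the quasi-shuffle recursion (\ref{eq:quasipro}). To connect the two I would first establish a ``prepend a letter'' recursion for $\varphi$: splitting each composition of $\ell(c\ve w)$ according to its first part $j$ and using $[c,a_1,\ldots,a_{j-1}]=[c,[a_1,\ldots,a_{j-1}]]$ gives
\[
\varphi(c\,a_1\cdots a_n)=c\,\varphi(a_1\cdots a_n)+\sum_{j=2}^{n+1}\frac{(-1)^{j-1}}{j!}\,[c,a_1,\ldots,a_{j-1}]\,\varphi(a_j\cdots a_n).
\]
Feeding this into both expansions and comparing terms, while invoking (S1) and (S2) (equivalently Proposition \ref{prop1}(ii),(iii)) to reorganise the nested brackets, should reduce the identity to the induction hypothesis applied to strictly shorter pairs of words. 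As a sanity check the identity already holds in the smallest case: $\varphi(a\shuffle b)=\varphi(ab)+\varphi(ba)=ab+ba-[a,b]=a*b=\varphi(a)*\varphi(b)$.

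The step I expect to be the main obstacle is precisely this matching. The two product recursions peel off only the single leading letter of each factor and introduce a bracket of just those two letters, whereas the recursion for $\varphi$ above merges the prepended letter with an \emph{entire} prefix through iterated brackets. Reconciling this ``deep'' merging with the ``shallow'' letter-by-letter products forces one to track the factorial-and-sign coefficients carefully, so that the cancellations encoded in $f(t)=1-e^{-t}$ reappear term by term; this is where the bookkeeping is heaviest. A cleaner variant, which I would fall back on if the direct computation becomes unwieldy, is to prove the equivalent homomorphism property for the inverse map, $\psi(\ve u*\ve v)=\psi(\ve u)\shuffle\psi(\ve v)$, or to run the argument in the graded completion of $R\left\langle A\right\rangle$, where the exponential and logarithmic nature of $f$ and $f^{-1}$ underlying Lemma \ref{isolem} can be exploited directly. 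Once multiplicativity is secured, combining it with the bijectivity and grading above gives the claimed isomorphism of graded $K$-algebras.
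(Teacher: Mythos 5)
Your treatment of the routine parts is correct and matches the paper: invertibility of $\varphi$ follows from Lemma \ref{isolem} once one checks $\varphi=\Psi_f$ and $\psi=\Psi_{f^{-1}}$ for $f(t)=1-e^{-t}$ (your coefficient computation $\prod_j(-1)^{i_j-1}/i_j!=(-1)^{\ell(\ve w)-k}/\prod_j i_j!$ is right), and degree preservation follows from Proposition \ref{prop1}\textit{(iv)}; your ``prepend a letter'' recursion for $\varphi$ is also correct, including the sign $(-1)^{j-1}/j!$. But the entire content of the theorem is the multiplicativity $\varphi(\ve u\shuffle\ve v)=\varphi(\ve u)*\varphi(\ve v)$, and there your proposal stops short of a proof: you set up an induction on $\ell(\ve u)+\ell(\ve v)$, note that the matching of terms is the main obstacle, and then assert the identity ``should'' reduce to the induction hypothesis, with fallbacks you do not carry out. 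That matching \emph{is} the theorem, and it is moreover doubtful that the induction closes as formulated. The induction hypothesis gives you $\varphi(\ve u'\shuffle\ve v)=\varphi(\ve u')*\varphi(\ve v)$ for shorter words, but the left-hand side of your step requires $\varphi\bigl(a\cdot(\ve u'\shuffle\ve v)\bigr)$, and the correction terms produced by prepending $a$ are brackets $[a,w_1,\ldots,w_{j-1}]$ in which the prefix $w_1\cdots w_{j-1}$ interleaves letters of $\ve u'$ and $\ve v$; such expressions are not of the form ``$\varphi$ of a shuffle of two shorter words,'' so the induction hypothesis cannot be applied to them, and no strictly shorter instance of the statement controls them. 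This is exactly the ``deep merging versus shallow recursion'' mismatch you identify, and it is a genuine gap, not mere bookkeeping.

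The paper (following \cite{Hoffman}) avoids induction altogether by a global counting argument, and this is the idea your write-up is missing. One observes that both $\varphi(\ve v\shuffle\ve w)$ and $\varphi(\ve v)*\varphi(\ve w)$ are linear combinations of terms $[S_1\sqcup T_1][S_2\sqcup T_2]\cdots[S_l\sqcup T_l]$, where the $S_i$ and $T_i$ are subsequences preserving the orders of $\ve v$ and $\ve w$ and at most one of $S_i,T_i$ is empty for each $i$; one then checks that each such term carries the same coefficient on both sides. In $\varphi(\ve v)*\varphi(\ve w)$ it arises exactly once, with coefficient $(-1)^{m+n-l}\bigl/\prod_i(\textnormal{length } S_i)!\,(\textnormal{length } T_i)!$, while in $\varphi(\ve v\shuffle\ve w)$ it arises from $\prod_i\binom{\textnormal{length } S_i\sqcup T_i}{\textnormal{length } S_i}$ distinct shuffles, each acquiring coefficient $(-1)^{m+n-l}\bigl/\prod_i(\textnormal{length } S_i\sqcup T_i)!$ after applying $\varphi$; the two products agree. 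Replacing your inductive step by this classification and coefficient count would complete the proof; alternatively, your fallback of proving $\psi(\ve u*\ve v)=\psi(\ve u)\shuffle\psi(\ve v)$ faces the same structural difficulty and would need an analogous counting argument, not just the recursion you derived.
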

\begin{proof} We follow the proof of \cite[Thm. 2.5]{Hoffman}. From the Lemma \ref{isolem} $\varphi$ is invertible. From Proposition \ref{prop1}\textit{(iv)} and (\ref{phi1}) it follows that $\abs{\varphi(\ve w)}\leq\abs{\ve w}$. Similar, for the inverse we have $\abs{\psi(\ve w)}\leq\abs{\ve w}$ and thus $\abs{\varphi(\ve w)}=\abs{\ve w}=\abs{\psi(\ve w)}$. To show that $\varphi$ is a homomorphism it suffices to verify that $\varphi(\ve v\shuffle \ve w)=\varphi(\ve v)*\varphi(\ve w)$ for any words $\ve w,\ve v$. Let $\ve v=a_1 \cdots a_m$ and $\ve w=b_1 \cdots b_n$. Evidently both, $\varphi(\ve v\shuffle \ve w)$ and $\varphi(\ve v)*\varphi(\ve w)$ are sums of rational multiples of terms generated by
\begin{equation}
[S_1 \sqcup T_1][S_2 \sqcup T_2]\cdots[S_l \sqcup T_l]
\label{eq:term}
\end{equation}
where the $S_i$ and $T_i$ are sequences of $a_1,\ldots,a_m$ and $b_1,\ldots,b_n,$ respectively, such that
\begin{enumerate}
	\item for each $i,$ at most one of the $S_i$ and $T_i$ is empty; and
	\item the concatenation $S_1 S_2 \ldots S_l$ is the sequence $a_1,\ldots,a_m$ and $T_1 T_2 \ldots T_l$ is the sequence $b_1,\ldots,b_n$.
\end{enumerate}
Now the term generated by (\ref{eq:term}) arises in $\varphi(\ve v)*\varphi(\ve w)$ in only one way. The absolute value of the coefficient of the term given by (\ref{eq:term}) in $\varphi(\ve v)*\varphi(\ve w)$ is
$$
\frac{1}{(\textnormal{length } S_1)!(\textnormal{length } S_2)!\ldots(\textnormal{length } S_l)!(\textnormal{length } T_1)!(\textnormal{length } T_2)!\ldots(\textnormal{length } T_l)!}.
$$
To determine the sign of the coefficient, let $\ve v_1$ be the word in $\varphi(\ve v)$ and let $\ve w_1$ be the word in $\varphi(\ve w)$ such that (\ref{eq:term}) is a word in $\ve v_1*\ve w_1$. Let $x$, $y$ be the lengths of $\ve w_1$ and $\ve v_1$ respectively. The sign is given by the following ingredients:
\begin{itemize}
	\item the sign of $\ve v_1$ in $\varphi(\ve v)$ which is $(-1)^{m-x};$
	\item the sign of $\ve w_1$ in $\varphi(\ve w)$ which is $(-1)^{n-y};$
	\item the sign of (\ref{eq:term}) in $\ve v_1*\ve w_1$ which is $(-1)^{(x+y)-l}$.
\end{itemize}
Hence the coefficient of the term produced by all possibilities of (\ref{eq:term}) in $\varphi(\ve v)*\varphi(\ve w)$ is:
$$
\frac{(-1)^{m+n-l}}{(\textnormal{length } S_1)!(\textnormal{length } S_2)!\ldots(\textnormal{length } S_l)!(\textnormal{length } T_1)!(\textnormal{length } T_2)!\ldots(\textnormal{length } T_l)!}.
$$
On the other hand, the generation of a term as in (\ref{eq:term}) arises in $\varphi(\ve v \shuffle \ve w)$ from
\begin{eqnarray}
{\textnormal{length } S_1 \sqcup T_1 \choose \textnormal{length } S_1}{\textnormal{length } S_2 \sqcup T_2 \choose \textnormal{length } S_2} \cdots {\textnormal{length } S_l \sqcup T_l \choose \textnormal{length } S_l }= \nonumber\\
\frac{(\textnormal{length } S_1 \sqcup T_1)! \cdots (\textnormal{length } S_l \sqcup T_l)!}{(\textnormal{length } S_1)!\ldots(\textnormal{length } S_l)!(\textnormal{length } T_1)!\ldots(\textnormal{length } T_l)!}\nonumber
\end{eqnarray}
distinct terms of the shuffle product $\ve v \shuffle \ve w$, and after application of $\varphi$ each such term acquires a coefficient of
$$
\frac{(-1)^{m+n-l}}{(\textnormal{length } S_1 \sqcup T_1)! \cdots (\textnormal{length } S_l \sqcup T_l)!}.
$$
This proves that $\varphi(\ve v\shuffle \ve w)=\varphi(\ve v)*\varphi(\ve w)$.
\end{proof}
It follows from Theorems \ref{shuffalgpolyalg} and \ref{shuffalgquasishuffalg} that $(R\left\langle A\right\rangle,*)$ is the free polynomial algebra on the elements $\{\varphi(w)| w \textnormal{ is a } Lyndon \textnormal{ word}\}$. In fact the following is true:
\begin{thm}
\label{haupt}
The quasi-shuffle algebra $(R\left\langle A\right\rangle,*)$ is the free polynomial algebra on the \textit{Lyndon} words, i.e., $(R\left\langle A\right\rangle, *) \simeq (R[\textnormal{Lyndon}(A)], \shuffle).$
\end{thm}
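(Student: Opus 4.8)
The plan is to combine the two structural results already at hand with a leading-term (triangularity) argument. As remarked just before the statement, Radford's Theorem \ref{shuffalgpolyalg} says that the \textit{Lyndon} words are free polynomial generators of $(R\left\langle A\right\rangle,\shuffle)$, and since $\varphi$ is a graded algebra isomorphism $(R\left\langle A\right\rangle,\shuffle)\to(R\left\langle A\right\rangle,*)$ by Theorem \ref{shuffalgquasishuffalg}, the images $\{\varphi(\ve l):\ve l\in\textnormal{Lyndon}(A)\}$ are free polynomial generators of $(R\left\langle A\right\rangle,*)$. What remains — the content of the ``in fact'' — is to replace these images by the \textit{Lyndon} words themselves. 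I would do this by showing that the two generating systems differ only by strictly lower-order terms in a suitable filtration, so that the change of generators is invertible. Since the underlying $R$-module is the same for both products, it suffices to prove that the $*$-products of \textit{Lyndon} words form an $R$-module basis of $R\left\langle A\right\rangle$, which is exactly the assertion that the canonical map $R[\textnormal{Lyndon}(A)]\to(R\left\langle A\right\rangle,*)$ is an isomorphism.

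The filtration I would use is by word length $\ell(\cdot)$ inside each fixed degree (weight). Two elementary observations drive the argument. First, $*$ agrees with $\shuffle$ up to shorter words: unwinding the recursion (\ref{eq:quasipro}), the contraction term $[a,b](\ve u*\ve v)$ is always strictly shorter than the two uncontracted terms, so by induction the top-length part of $\ve u*\ve v$ equals $\ve u\shuffle \ve v$. Second, from (\ref{phi1}) the composition $(1,\ldots,1)$ contributes the word $\ve w$ itself while every other composition produces at least one nontrivial bracket and hence a strictly shorter word; thus $\varphi(\ve w)=\ve w+(\text{strictly shorter words of the same degree})$.

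Putting these together, I fix a degree $d$ and index the relevant monomials by multisets $\alpha$ of \textit{Lyndon} words of total degree $d$. The shuffle monomial $\ve m_\alpha^{\shuffle}=\ve l_1^{\shuffle e_1}\shuffle \cdots \shuffle \ve l_k^{\shuffle e_k}$ is homogeneous of length $L(\alpha)=\sum_i e_i\,\ell(\ve l_i)$, and by Radford these form an $R$-basis of the degree-$d$ part of $R\left\langle A\right\rangle$. Using the two observations, both the quasi-shuffle monomial $\ve m_\alpha^{*}=\ve l_1^{*e_1}*\cdots*\ve l_k^{*e_k}$ and the isomorphic image $\varphi(\ve m_\alpha^{\shuffle})=\varphi(\ve l_1)^{*e_1}*\cdots*\varphi(\ve l_k)^{*e_k}$ have the very same top-length part, namely $\ve m_\alpha^{\shuffle}$, all remaining terms being strictly shorter. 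Since each word of length $L'<L(\alpha)$ lies in the span of the Radford basis elements $\ve m_\beta^{\shuffle}$ with $L(\beta)=L'$, this means $\ve m_\alpha^{*}=\ve m_\alpha^{\shuffle}+(\text{an }R\text{-combination of }\ve m_\beta^{\shuffle}\text{ with }L(\beta)<L(\alpha))$.

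I expect the triangularity bookkeeping of the previous paragraph to be the only real obstacle: once it is set up, the matrix expressing $\{\ve m_\alpha^{*}\}$ in the Radford basis $\{\ve m_\alpha^{\shuffle}\}$ is, grouping the indices $\alpha$ by the value of $L(\alpha)$, block-unitriangular with identity diagonal blocks, hence invertible in each degree. Therefore $\{\ve m_\alpha^{*}\}$ is again an $R$-module basis of $R\left\langle A\right\rangle$. This says precisely that the quasi-shuffle products of \textit{Lyndon} words are linearly independent and span, i.e. that the \textit{Lyndon} words are free polynomial generators of $(R\left\langle A\right\rangle,*)$, which is the claimed isomorphism $(R\left\langle A\right\rangle,*)\simeq(R[\textnormal{Lyndon}(A)],\shuffle)$.
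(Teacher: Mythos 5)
Your proof is correct, but note that the paper never actually proves Theorem \ref{haupt}: the text preceding it only records the consequence of Theorems \ref{shuffalgpolyalg} and \ref{shuffalgquasishuffalg} that $(R\left\langle A\right\rangle,*)$ is the free polynomial algebra on the elements $\varphi(\ve l)$, $\ve l \in \textnormal{Lyndon}(A)$, and then asserts the stronger statement with ``in fact the following is true,'' implicitly deferring to Hoffman \cite{Hoffman}. Your length-filtration (triangularity) argument supplies exactly the step the paper leaves out, and it is essentially Hoffman's own proof of this theorem, so what it buys is that the exposition becomes self-contained where the paper is not. Two small points. First, your second observation, $\varphi(\ve w)=\ve w+(\textnormal{strictly shorter words of the same degree})$, is never used in your final step: the block-unitriangularity rests only on the facts that $\ve u * \ve v = \ve u \shuffle \ve v + (\textnormal{shorter words of the same degree})$, which you correctly derive from (\ref{eq:quasipro}) and (S3), and that Radford's basis consists of length- and degree-homogeneous elements; so the map $\varphi$ can be dropped from the argument entirely (it would be needed only for the alternative route of comparing the two generating sets $\{\varphi(\ve l)\}$ and $\{\ve l\}$ directly). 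Second, to pass from ``block-unitriangular with identity diagonal blocks'' to ``invertible'' you implicitly use that within a fixed degree $d$ there are only finitely many length blocks and finitely many multisets $\alpha$; this is fine here because every letter of the alphabets considered has degree at least $1$ (so words of degree $d$ have length at most $d$) and each degree contains finitely many letters, but it deserves a sentence—over a general graded alphabet one would instead invert the system by induction on the length $L(\alpha)$, which your setup also supports.
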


\section{Harmonic Sums as Quasi-Shuffle Algebra and Unique Representations}
\label{harasshuf}
Subsequently, we specialize the Quasi-shuffle algebra from Definition \ref{quasishuffprodef} in order to model the harmonic sums accordingly. We consider the alphabet 
$$A_p=\{1,2,3,4,\ldots\},$$ 
and define the degree of a letter $n$ by $\abs{n}=n$. We order the letters by $n \prec m$ for all $n, m \in \mathbb N,$ with the usual order $n < m$ and extend this order lexicographically. Now we define $[a,b]= a+b$ for all $a, b \in A$ and $[a,0]=0$ for all $a \in A_p$. This function obviously fulfills (S0)-(S3) and therefore $(R\left\langle A_p\right\rangle,*)$ with (\ref{eq:quasipro}) is the free polynomial algebra on the \textit{Lyndon} words by Theorem \ref{haupt}.
The harmonic sums defined in (\ref{defHsum}), where we only consider positive integers in the index set, are completely specified by a word $\ve w \in A_p^*$ and the upper summation limit.
E.g., the formula for the multiplication of harmonic sums given in (\ref{hsumproduct}) is identical to $*$ with $[a,b]= a + b$:
$$
S_{a\ve v}(n)* S_{b\ve w}(n)=S_{a(\ve v * b\ve w)}(n)+S_{b(a\ve v * \ve w)}(n)-S_{(a+b)(\ve v*\ve w)}(n). 
$$
Here we need the following definition.
\begin{definition}Let $\ve{a}, \ve{b} \in A^*$. If
$$
\ve{a}*\ve{b}=c_1\ve{d}_1+\cdots+c_m\ve{d}_m
$$
for $\ve{d}_i\in A^*, c_i\in\R$ then we define
$$S_{\ve{a}*\ve{b}}(n):=c_1 S_{\ve{d}_1}{x}+\cdots+c_m S_{\ve{d}_m}(n).$$
\end{definition}
Summarizing, the harmonic sums with indices in the natural numbers form a quasi shuffle algebra.\\
We are now considering the following set:
\begin{eqnarray}
\mathcal{S}_p(n)=\left\{q(S_{\ve a_1},\ldots,S_{\ve a_r})\left| \textnormal{ for all } r\in \N; \right. \ve a_1,\ldots,\ve a_r \in A^*;\ q\in \R[x_1,\ldots,x_r]\right\}.
\label{sp}
\end{eqnarray}
$\mathcal{S}_p(n)$ forms a commutative ring with infinitely many variables.
In addition, we define the ideal $\mathcal{I}_p$ on $\mathcal{S}_p(n):$
\begin{equation}
\mathcal{I}_p:=\left\{ S_{a\ve v}(n)S_{b\ve w}(n)-S_{a(\ve v * b\ve w)}(n)+S_{b(a\ve v * \ve w)}(n)-S_{(a\wedge b)(\ve v*\ve w)}(n)\left|a,b\in \N; \ve v,\ve w \in \N^* \right.\right\}.
\label{idealp}
\end{equation}
Note that $$\mathcal{S}_p(n)/\mathcal{I}_p \cong R\left\langle A\right\rangle$$ by construction. In particular, the linear expansion of $a\in \mathcal{S}_p(n)$ into harmonic sums gives a unique representation of $a+\mathcal{I}_p$. We remark that it has been shown in \cite{Minh2000} that $$R\left\langle A\right\rangle \cong \mathcal{S}_p(n)/\sim $$ where $a(n)\sim b(n)\Leftrightarrow \forall k\in \N a(k)=b(k),\ie$ the quasi-shuffle algebra is equivalent to the harmonic sums considered as sequences.

We now consider the alphabet 
$$A=\{-1,1,-2,2,-3,3,-4,4,\ldots\}.$$
We define the degree of a letter $\abs{\pm n}=n$ for all $n \in \N$, so in each degree there are two letters. We order the letters by $-n\prec n$ and $n \prec m$ for all $n, m \in \mathbb N$ with $\ n < m$, and extend this order lexicographically.
We define $[a,b]= \sign{a}\sign{b}(\abs{a}+\abs{b}) = a \wedge b$ for all $a, b \in A$ and $[a,0]=0$ for all $a \in A$. This function obviously fulfills (S0)-(S3) and therefore $(R\left\langle A\right\rangle,*)$ with (\ref{eq:quasipro}) is the free polynomial algebra on the \textit{Lyndon} words.
The harmonic sums defined in (\ref{defHsum}) (now we allow negative indices) are completely specified by a word $\ve w \in A^*$ and the upper summation limit.

E.g., we observe that the formula for the multiplication of harmonic sums given in (\ref{hsumproduct}) is identical to $*$ with $[a,b]= a \wedge b$: 
$$
S_{a\ve v}(n)* S_{b\ve w}(n)=S_{a(\ve v * b\ve w)}(n)+S_{b(a\ve v * \ve w)}(n)-S_{(a\wedge b)(\ve v*\ve w)}(n); 
$$
subsequently, we will write $S_{\ve a}(n)S_{\ve b}(n)$ instead of$S_{\ve a}(n)*S_{\ve b}(n)$ for any index sets $\ve a, \ve b.$
Summarizing, the harmonic sums with indices in $\Z/\{0\}$ form a quasi shuffle algebra.

We are now considering the following set:
\begin{eqnarray}
\mathcal{S}(n)=\left\{q(S_{\ve a_1},\ldots,S_{\ve a_r})\left| \textnormal{ for all } r\in \N; \right. \ve a_1,\ldots,\ve a_r \in A^*;\ q\in \R[x_1,\ldots,x_r]\right\}.
\label{s}
\end{eqnarray}
Moreover, we define the ideal $\mathcal{I}$ of $\mathcal{S}(n)$ by
\begin{equation}
\mathcal{I}:=\left\{ S_{a\ve v}(n)S_{b\ve w}(n)-S_{a(\ve v * b\ve w)}(n)+S_{b(a\ve v * \ve w)}(n)-S_{(a\wedge b)(\ve v*\ve w)}(n)\left|a,b\in \Z; \ve v,\ve w \in (\Z/\{0\})^* \right.\right\}.
\label{ideal}
\end{equation}
Note that $$\mathcal{S}(n)/\mathcal{I} \cong R\left\langle A\right\rangle $$ by construction. In particular, the linear expansion of $a\in \mathcal{S}(n)$ into harmonic sums gives a unique representation of $a+\mathcal{I}$. To our knowledge it has not been shown so far that $$R\left\langle A\right\rangle \cong \mathcal{S}(n)/\sim $$ where $a(n)\sim b(n)\Leftrightarrow \forall k\in \N a(k)=b(k)$. Nevertheless, we strongly believe in this fact; see also Remark \ref{X}.
\begin{example}
Let $A=S_{1, 2}(n)S_{2}(n)$ and $B=S_{2}(n)^2S_{1}(n) + S_{2}(n) S_{3}(n) - S_{2}(n) S_{2, 1}(n)$. In $\mathcal{S}(n)/\mathcal{I}$ we can test equality as follows. Using the product formula we get
\begin{eqnarray*}
A=-S_{1, 4}(n) - S_{3, 2}(n) + 2 S_{1, 2, 2}(n) + S_{2, 1, 2}(n)\\
B=-S_{1, 4}(n) - S_{3, 2}(n) + 2 S_{1, 2, 2}(n) + S_{2, 1, 2}(n),
\end{eqnarray*}
thus $A$ is equal to $B$.
\end{example}
If we expand all products, the number of sums which are contained in the expression will grow relatively fast. In the previous example the expanded version of the expressions contains 4 sums, while there are just 2 in $A$. This tends to be even worse if we deal with sums of higher weight or depth. Subsequently, we will introduce another way to handle the problem of unique representations such that the canonical representation of the expression is as small as possible (small with respect to a certain order, which has to be defined). We will attack this problem in Section \ref{deriverel}.

\section{The Number of \textit{Lyndon} Words or Basic Sums}
Subsequently, we will count the algebraic independent sums in $\mathcal{S}_p(n)/\mathcal{I}$, respectively $\mathcal{S}(n)/\mathcal{I}$ which we also call basic sums. In the light of Theorem \ref{haupt} this is connected to the number of \textit{Lyndon} words. The number of \textit{Lyndon} words of length $n$ over an alphabet of length $q$ is given by the first Witt formula \cite{Witt1937,Witt1956,Reutenauer1969}:
\begin{equation}
l_n(q)=\frac{1}{n}\sum_{d|n}{\mu(d)q^{n/d}},
\end{equation} 
where 
\begin{equation}
	\mu(n)=\left\{ 
		  	\begin{array}{ll}
						1\  & \textnormal{if } n = 1  \\
						0\  & \textnormal{if } n \textnormal{ is divided by the square of a prime}  \\
						(-1)^s\  & \textnormal{if } n \textnormal{ is the product of } s \textnormal{ different primes.} 
					\end{array} \right. \nonumber\\
\label{abmue}
\end{equation}
is the Möbius function.\\
As we would like to count the number of basic sums for all sums of a given index set individually, this relation cannot be used: we have to count the number of \textit{Lyndon} words belonging only to this index set. The respective number has been given in the same paper as the second Witt formula,
\begin{equation}
l_n(n_1,\ldots,n_q)=\frac{1}{n}\sum_{d|n}{\mu(d)\frac{(\frac{n}{d})!}{(\frac{n_1}{d})!\cdots(\frac{n_q}{d})!}}, \ \ n=\sum_{k=1}^q{n_k}
\label{wittform}
\end{equation} 
here $n_i$ denotes the multiplicity of the indices that appear in the index set. For more details on these aspects we refer to \cite{Reutenauer1969}. In other words, by Theorem \ref{haupt} and with $$\mathcal{S}_p(n)/\mathcal{I}_p \cong R\left\langle \N \right\rangle \cong \mathcal{S}_p(n)/\sim$$ the number of algebraic independent sums in $\mathcal{S}_p(n)/\mathcal{I}_p$ or in $\mathcal{S}_p(n)/\sim$ (considered as sequences) for a given index set is
\begin{equation}
l_n(n_1,\ldots,n_q)=\frac{1}{n}\sum_{d|n}{\mu(d)\frac{(\frac{n}{d})!}{(\frac{n_1}{d})!\cdots(\frac{n_q}{d})!}}, \ \ n=\sum_{k=1}^q{n_k}
\end{equation} 
again $n_i$ denotes the multiplicity of the indices that appear in the index set.
Moreover, the number of algebraic independent sums in $$\mathcal{S}(n)/\mathcal{I} \cong R\left\langle \Z/\{0\}\right\rangle$$ for a given index set is
\begin{equation}
l_n(n_1,\ldots,n_q)=\frac{1}{n}\sum_{d|n}{\mu(d)\frac{(\frac{n}{d})!}{(\frac{n_1}{d})!\cdots(\frac{n_q}{d})!}}, \ \ n=\sum_{k=1}^q{n_k}.
\end{equation} 
As pointed out, e.g., in \cite{Bluemlein2004,Bluemlein2008} the number of algebraic independent sums can be summarized in Table \ref{algebraictab}.
\begin{remark}
\label{X}
In the difference field setting of $\Pi \Sigma$-fields one can verify algebraic independence of sums algorithmically for a particular given finite set of sums; see \cite{Schneider2008}. We could easily verify up to weight 7 that the figures in Table \ref{algebraictab} are correct, interpreting the objects in $\mathcal{S}(n)/\sim$ ,$\ie$ as sequences. Nevertheless, unless we do not have a rigorous proof for $\mathcal{S}(n)/\mathcal{I} \cong \mathcal{S}(n)/\sim$, we can only assume that the figures in Table \ref{algebraictab} give an upper bound.
\end{remark}

\begin{table}
\begin{tabular}{|| r || r | r || r | r || }
\hline	
&  \multicolumn{4}{|c||}{Number of} \\
\cline{2-5}
Weight& Sums& a-basic sums& Sums $\neg \left\{-1\right\}$ &a-basic sums\\
\hline	
  1 &    2 &   2 &   1 &   1 \\
  2 &    6 &   3 &   3 &   2 \\
  3 &   18 &   8 &   7 &   4 \\
  4 &   54 &  18 &  17 &   7 \\
  5 &  162 &  48 &  41 &  16 \\ 
  6 &  486 & 116 &  99 &  30 \\
  7 & 1458 & 312 & 239 &  68 \\  
  8 & 4374 & 810 & 577 & 140 \\
\hline
\end{tabular}
\caption{Number of harmonic sums and number of sums which do not contain the index $\left\{-1\right\}$ in dependence on their weight; respectively, the numbers of basic sums (a-basic sums) by which all sums can be expressed using the algebraic relations; compare also \cite{Bluemlein2004,Bluemlein2009}.}
\label{algebraictab}
\end{table}

\section{Deriving of Relations between Harmonic Sums}
\label{deriverel}
In order to express the harmonic sums of a given depth in terms of a minimal set of harmonic sums, we will determine all algebraic relations of the harmonic sums up to weight 7 following the ideas from \cite{Bluemlein2004}.
We define the shuffle product on harmonic sums in the obvious way:
\begin{equation}
S_{a\ve v}(n)\shuffle S_{b\ve w}(n)=S_{a(\ve v \shuffle b\ve w)}(n)+S_{b(a\ve v \shuffle \ve w)}(n),
\label{hshuffpro} 
\end{equation}
for all $a,b \in A$ and $\ve v,\ve w \in A^*.$\\
As an example we state
\begin{eqnarray}
S_{a_1,a_2}(n)\shuffle S_{a_3,a_4}(n)&=&S_{a_1,a_2,a_3,a_4}(n)+S_{a_1,a_3,a_2,a_4}(n)+S_{a_1,a_3,a_4,a_2}(n)+ \nonumber\\
&&S_{a_3,a_4,a_1,a_2}(n)+S_{a_3,a_1,a_4,a_2}(n)+S_{a_3,a_1,a_2,a_4}(n).\nonumber
\end{eqnarray}
We get the following relations listed up to depth 4, for further relations up to depth 6 see \cite{Bluemlein2004}:
\begin{description}
\item[Depth 2]
\footnotesize
	\begin{eqnarray}
	0=S_{a_1}(n) \shuffle S_{a_2}(n) - S_{a_1 \wedge a_2}(n) - S_{a_1}(n) S_{a_2}(n)\nonumber
	\end{eqnarray}
\normalsize
\item[Depth 3]
\footnotesize
	\begin{eqnarray}
	0=S_{a_1}(n) \shuffle S_{a_2, a_3}(n) - S_{a_1 \wedge a_2, a_3}(n) - S_{a_2, a_1 \wedge a_3}(n) - S_{a_1}(n) S_{a_2, a_3}(n)\nonumber
	\end{eqnarray}
\normalsize
\item[Depth 4]	
\footnotesize
	\begin{eqnarray*}
	0&=&S_{a_1}(n) \shuffle S_{a_2, a_3, a_4}(n) - S_{a_1 \wedge a_2, a_3, a_4}(n) - S_{a_2, a_1 \wedge a_3, a_4}(n)
			-S_{a_2, a_3, a_1 \wedge a_4}(n) \\ 
	&&- S_{a_1}(n) S_{a_2, a_3, a_4}(n)  \\
 0&=&S_{a_1,a_2}(n)\shuffle S_{a_3,a_4}(n)+S_{a_1\wedge a_3,a_2\wedge a_4}(n)-S_{a_1,a_2}(n)S_{a_3,a_4}(n)-S_{a_1\wedge a_3,a_2,a_4}(n)\\ 
   &&-S_{a_1 \wedge a_3, a_4, a_2}(n)-S_{a_1, a_2 \wedge a_3, a_4}(n)-S_{a_1, a_3, a_2 \wedge a_4}(n)-S_{a_3, a_1 \wedge a_4, a_2}(n) \\ 
   &&- S_{a_3, a_1, a_2 \wedge a_4}(n)  
	\end{eqnarray*}
\normalsize		
\end{description}

As observed in \cite{Bluemlein1999} and worked out in detail in \cite{Bluemlein2004} these relations in combination with the Witt formula enable us to hunt systematically for all algebraic relations in $\mathcal{S}(n)/\mathcal{I}$.

\begin{example}
We now consider the case of harmonic sums of depth 3 as an example. We start with 3 different indices. There exist 6 different sums of this type:
$$
S_{a_1, a_2, a_3}(n), S_{a_1, a_3, a_2}(n), S_{a_2, a_1, a_3}(n), S_{a_2, a_3, a_1}(n), S_{a_3, a_1, a_2}(n), S_{a_3, a_2, a_1}(n).
$$
Now we use the relation we obtained for sums of depth 3 and apply it for all 6 permutations of the 3 indices. This leads to the following system of equations:
\footnotesize
\begin{eqnarray}
  S_{a_1, a_2, a_3}(n) + S_{a_2, a_1, a_3}(n) + S_{a_2, a_3, a_1}(n) &=& S_{a1}(n) S_{a2, a3}(n) + S_{a2, a1 \wedge a3}(n) + S_{a1 \wedge a2, a3}(n)\nonumber\\
  S_{a_1, a_3, a_2}(n) + S_{a_3, a_1, a_2}(n) + S_{a_3, a_2, a_1}(n) &=& S_{a1}(n) S_{a3, a2}(n) + S_{a3, a1 \wedge a2}(n) + S_{a1 \wedge a3, a2}(n)\nonumber\\
  S_{a_1, a_2, a_3}(n) + S_{a_1, a_3, a_2}(n) + S_{a_2, a_1, a_3}(n) &=& S_{a2}(n) S_{a1, a3}(n) + S_{a1, a2 \wedge a3}(n) + S_{a1 \wedge a2, a3}(n)\nonumber\\
  S_{a_2, a_3, a_1}(n) + S_{a_3, a_1, a_2}(n) + S_{a_3, a_2, a_1}(n) &=& S_{a2}(n) S_{a3, a1}(n) + S_{a3, a1 \wedge a2}(n) + S_{a2 \wedge a3, a1}(n)\nonumber\\
  S_{a_1, a_2, a_3}(n) + S_{a_1, a_3, a_2}(n) + S_{a_3, a_1, a_2}(n) &=& S_{a3}(n) S_{a1, a2}(n) + S_{a1, a2 \wedge a3}(n) + S_{a1 \wedge a3, a2}(n)\nonumber\\
  S_{a_2, a_1, a_3}(n) + S_{a_2, a_3, a_1}(n) + S_{a_3, a_2, a_1}(n) &=& S_{a3}(n) S_{a2, a1}(n) + S_{a2, a1 \wedge a3}(n) + S_{a2 \wedge a3, a1}(n)\nonumber
\end{eqnarray}
\normalsize
which is equivalent to:
\footnotesize
\begin{eqnarray}
\left(
	\begin{array}[pos]{cccccc}
		1&0&1&1&0&0 \\
		0&1&0&0&1&1 \\
		1&1&1&0&0&0 \\
		0&0&0&1&1&1 \\
		1&1&0&0&1&0 \\
		0&0&1&1&0&1 	
	\end{array}
\right)\cdot
\left(
	\begin{array}[pos]{c}
		S_{a_1, a_2, a_3}(n)\\
		S_{a_1, a_3, a_2}(n)\\
		S_{a_2, a_1, a_3}(n)\\
		S_{a_2, a_3, a_1}(n)\\
		S_{a_3, a_1, a_2}(n)\\
		S_{a_3, a_2, a_1}(n)	
	\end{array}
\right)=
\left(
	\begin{array}[pos]{c}
		S_{a_1}(n) S_{a_2, a_3}(n) + S_{a_2, a_1 \wedge a_3}(n) + S_{a_1 \wedge a_2, a_3}(n)\\
		S_{a_1}(n) S_{a_3, a_2}(n) + S_{a_3, a_1 \wedge a_2}(n) + S_{a_1 \wedge a_3, a_2}(n)\\
		S_{a_2}(n) S_{a_1, a_3}(n) + S_{a_1, a_2 \wedge a_3}(n) + S_{a_1 \wedge a_2, a_3}(n)\\
		S_{a_2}(n) S_{a_3, a_1}(n) + S_{a_3, a_1 \wedge a_2}(n) + S_{a_2 \wedge a_3, a_1}(n)\\
		S_{a_3}(n) S_{a_1, a_2}(n) + S_{a_1, a_2 \wedge a_3}(n) + S_{a_1 \wedge a_3, a_2}(n)\\
		S_{a_3}(n) S_{a_2, a_1}(n) + S_{a_2, a_1 \wedge a_3}(n) + S_{a_2 \wedge a_3, a_1}(n)\nonumber
	\end{array}
\right)
\end{eqnarray}
\normalsize
Since the rank of this linear system is 4, all sums with 3 different indices can be expressed by two chosen sums of depth 3 and sums of lower depth. To be more precise, solving this system leads to 
\footnotesize
\begin{eqnarray}
S_{a_1, a_2, a_3}(n) &=&  S_{a_1, a_2 \wedge a_3}(n) + S_{a_3}(n) S_{a_1, a_2}(n) - S_{a_3, a_1 \wedge a_2}(n) - S_{a_1}(n) S_{a_3, a_2}(n) + S_{a_3, a_2, a_1}(n), \nonumber\\ 
S_{a_1, a_3, a_2}(n) &=&  S_{a_1 \wedge a_3, a_2}(n) + S_{a_3, a_1 \wedge a_2}(n) + S_{a_1}(n) S_{a_3, a_2}(n) - S_{a_3, a_1, a_2}(n) - S_{a_3, a_2, a_1}(n), \nonumber\\
S_{a_2, a_1, a_3}(n) &=&  S_{a_1 \wedge a_2, a_3}(n) - S_{a_1 \wedge a_3, a_2}(n) - S_{a_3}(n) S_{a_1, a_2}(n) + S_{a_2}(n) S_{a_1, a_3}(n) + S_{a_3, a_1, a_2}(n), \nonumber\\
S_{a_2, a_3, a_1}(n) &=&  S_{a_1 \wedge a_3, a_2}(n) - S_{a_1, a_2 \wedge a_3}(n) - S_{a_2}(n) S_{a_1, a_3}(n) + S_{a_2, a_1 \wedge a_3}(n) + S_{a_1}(n) S_{a_2, a_3}(n) \nonumber\\
&& + S_{a_3, a_1 \wedge a_2}(n) + S_{a_1}(n) S_{a_3, a_2}(n) - S_{a_3, a_1, a_2}(n) - S_{a_3, a_2, a_1}(n)\nonumber
\end{eqnarray}
\normalsize
For two different indices we get the following system of equations:
\footnotesize
\begin{eqnarray}
\left(
	\begin{array}[pos]{ccc}
		2&1&0 \\
		0&1&2 \\
		1&1&1	
	\end{array}
\right)\cdot
\left(
	\begin{array}[pos]{c}
		S_{a_1, a_1, a_2}(n)\\
		S_{a_1, a_2, a_1}(n)\\
		S_{a_2, a_1, a_1}(n)
	\end{array}
\right)=
\left(
	\begin{array}[pos]{c}
		S_{a_1}(n) S_{a_1, a_2}(n) + S_{a_1, a_1 \wedge a_2}(n) + S_{a_1 \wedge a_1, a_2}(n)\\
		S_{a_1}(n) S_{a_2, a_1}(n) + S_{a_2, a_1 \wedge a_1}(n) + S_{a_1 \wedge a_2, a_1}(n)\\
		S_{a_2}(n) S_{a_1, a_1}(n) + S_{a_1, a_1 \wedge a_2}(n) + S_{a_1 \wedge a_2, a_1}(n).	\nonumber
	\end{array}
\right)
\end{eqnarray}
\normalsize
Since the rank of this linear system is 2, all sums with 2 different indices of depth 3 can be expressed by one chosen sum of depth 3 and sums of lower depth:
\footnotesize
\begin{eqnarray}
S_{a_1, a_1, a_2}(n) &=&  \frac{1}{2}[S_{a_1 \wedge a_1, a_2}(n) -  S_{a_1 \wedge a_2, a_1}(n) +  S_{a_1, a_1 \wedge a_2}(n) +  S_{a_1}(n) S_{a_1, a_2}(n) \nonumber \\ &&- S_{a_2, a_1 \wedge a_1}(n) -  S_{a_1}(n) S_{a_2, a_1}(n) + S_{a_2, a_1, a_1}(n)] + S_{a_2, a_1, a_1}(n), \nonumber \\
S_{a_1, a_2, a_1}(n) &=&  S_{a_1 \wedge a_2, a_1}(n) + S_{a_2, a_1 \wedge a_1}(n) + S_{a_1}(n) S_{a_2, a_1}(n) - 2 S_{a_2, a_1, a_1}(n). \label{depht32ind}
\end{eqnarray}
\normalsize
For a sum of depth three with 3 equal indices we obtain:
\footnotesize
\begin{eqnarray}
S_{a_1, a_1, a_1}(n) &=&  \frac{1}{3}[S_{a_1 \wedge a_1, a_1}(n) +  S_{a_1, a_1 \wedge a_1}(n) + S_{a_1}(n) S_{a_1, a_1}(n)]. \nonumber 
\end{eqnarray}
\normalsize
\begin{remark}
A different method which can be used for equal indices and is described in \cite{Bluemlein2004} leads to:
\footnotesize
\begin{eqnarray}
S_{a_1, a_1, a_1}(n) &=&  \frac{1}{6}[S_{a_1}^3(n) +  3 S_{a_1}*S_{a_1 \wedge a_1}(n) + 2*S_{a_1 \wedge a_1 \wedge a_1}(n)]. \nonumber 
\end{eqnarray}
\normalsize
\end{remark}
Applying the second Witt formula (\ref{wittform}), \ie
\begin{eqnarray}
l_n(1,1,1)=2\nonumber \\
l_n(2,1)=1\nonumber \\
l_n(3)=0\nonumber
\end{eqnarray}
we can see that we found the maximal number of algebraic independent sums. In other words, this strategy solved the problem of weight 3 completely.
\label{exgetrel}
\end{example}
Using \ttfamily HarmonicSums, \rmfamily these relations can be computed as follows:
\begin{fmma}
\begin{mma}
\In {\text{\bf GetDependentSSums[\{1, 1, 1\}]}\\
		\text{sums: \ 6}\\
		\text{dependent sums: \ 4}}\\
\Out {\{\text{S}[a1\_, a2\_, a3\_, n\_] \rightarrow \text{S}[a3, n] \text{S}[a1, a2, n] + \text{S}[a1, \text{SP}[a2, a3], n] - \text{S}[a1, n] \text{S}[a3, a2, n] - \text{S}[a3, \text{SP}[a1, a2], n] + \text{S}[a3, a2, a1, n] /; 
     \text{MyMemberQ}[{{a1, a2, a3}}, {a1, a2, a3}] \&\& a1 \neq a2 \neq a3, \\ 
   \text{S}[a1\_, a3\_, a2\_, n\_] \rightarrow \text{S}[a1, n] \text{S}[a3, a2, n] + \text{S}[a3, \text{SP}[a1, a2], n] + \text{S}[\text{SP}[a1, a3], a2, n] - \text{S}[a3, a1, a2, n] - \text{S}[a3, a2, a1, n] /; 
     \text{MyMemberQ}[{{a1, a2, a3}}, {a1, a2, a3}] \&\& a1 \neq a2 \neq a3, \\
   \text{S}[a2\_, a1\_, a3\_, n\_] \rightarrow -\text{S}[a3, n] \text{S}[a1, a2, n] + \text{S}[a2, n] \text{S}[a1, a3, n] + \text{S}[\text{SP}[a1, a2], a3, n] - \text{S}[\text{SP}[a1, a3], a2, n] + 
      \text{S}[a3, a1, a2, n] /; \text{MyMemberQ}[{{a1, a2, a3}}, {a1, a2, a3}] \&\& a1 \neq a2 \neq a3,\\ 
   \text{S}[a2\_, a3\_, a1\_, n\_] \rightarrow -\text{S}[a2, n] \text{S}[a1, a3, n] - \text{S}[a1, \text{SP}[a2, a3], n] + \text{S}[a1, n] \text{S}[a2, a3, n] + \text{S}[a2, \text{SP}[a1, a3], n] + 
      \text{S}[a1, n] \text{S}[a3, a2, n] + \text{S}[a3, \text{SP}[a1, a2], n] + \text{S}[\text{SP}[a1, a3], a2, n] - \text{S}[a3, a1, a2, n] - \text{S}[a3, a2, a1, n] /; 
     \text{MyMemberQ}[{{a1, a2, a3}}, {a1, a2, a3}] \&\& a1 \neq a2 \neq a3\}}\\
\end{mma}
\begin{mma}     
\In {\text{\bf GetDependentSSums[\{2, 1\}]}\\
		\text{sums: \ 3}\\
		\text{dependent sums: \ 2}}\\
\Out {\{\text{S}[a1\_, a1\_, a2\_, n\_] \rightarrow 
    \frac{1}{2} \text{S}[a1, n] \text{S}[a1, a2, n] + \frac{1}{2} \text{S}[a1, \text{SP}[a1, a2], n] - 
        \frac{1}{2} \text{S}[a1, n] \text{S}[a2, a1, n] - \frac{1}{2} \text{S}[a2, \text{SP}[a1, a1], n] + 
        \frac{1}{2} \text{S}[\text{SP}[a1, a1], a2, n] - \frac{1}{2} \text{S}[\text{SP}[a1, a2], a1, n] + 
        \text{S}[a2, a1, a1, n] /; a1 \neq a2,\\ 
  \text{S}[a1\_, a2\_, a1\_, n\_] \rightarrow 
    \text{S}[a1, n] \text{S}[a2, a1, n] + \text{S}[a2, \text{SP}[a1, a1], n] + \text{S}[\text{SP}[a1, a2], a1, n] - 
        2 \text{S}[a2, a1, a1, n] /; a1 \neq a2\}}\\
\end{mma} 
\end{fmma}

\section{Application of the Relations and Algebraic Simplification}
\label{Application of the Relations}
There are now basically three ways to use the strategy presented in the last section.

\subsection{First Strategy: Fixed Tables}

We can fix basic sums and compute tables up to a certain weight. Then we can use these tables to reduce expressions in harmonic sums 		to expressions where only the chosen basic sums appear. Tables up to weight 6 are produced in \cite{Bluemlein2004} whose basic sums are of particular interest in particle physics. For the package \ttfamily HarmonicSums \rmfamily these relations were recomputed and can now be used in the following way.

\begin{fmma}
\noindent Relations up to weight 6 are included in the package.
\begin{mma}
\In {\text{\bf expr = \text{S}[1, 2, -2, n] + \text{S}[-1, 2, 1, n] + \text{S}[-1, 2, -2, n] + 
      \text{S}[-2, 2, 1, n] + \text{S}[-2, 1, 2, n]}\\
     \text{\bf + \text{S}[1, -2, -1, n] + \text{S}[-1, 2, 1, n] +
       \text{S}[2, -1, 2, n] + \text{S}[-1, 2, 2, n];}}\\
\end{mma}          
\begin{mma}          
\In {\text{\bf ReduceToBasis[expr]}}\\       
\Out {\text{S}[-5, n] + \frac{1}{2} \text{S}[-2, n]^2 + \text{S}[-4, n] \text{S}[-1, n] + \text{S}[-4, n] \text{S}[1, n] + 
  \text{S}[-2, n] \text{S}[1, n] \text{S}[2, n] + \text{S}[-2, n] \text{S}[3, n] + \frac{1}{2} \text{S}[4, n] + 
  \text{S}[-1, n] [\text{S}[2, n]^2 + \text{S}[4, n]] - \text{S}[-4, -1, n] - \text{S}[-4, 1, n] + \text{S}[-3, -2, n] +
   \text{S}[-3, -1, n] + 2 \text{S}[-3, 2, n] + \text{S}[1, n] \text{S}[-2, -1, n] + \text{S}[2, n] \text{S}[-2, 1, n] -
   \text{S}[-1, n] [\text{S}[-4, n] + \text{S}[-2, n] \text{S}[2, n] - \text{S}[2, -2, n]] - 
  \text{S}[1, n] [\text{S}[-4, n] + \text{S}[-2, n] \text{S}[2, n] - \text{S}[2, -2, n]] + 
  \text{S}[-2, n] [\text{S}[-1, n] \text{S}[2, n] - \text{S}[2, -1, n]] - \text{S}[2, n] \text{S}[2, -1, n] - 
  2 \text{S}[-2, n] \text{S}[2, 1, n] - \text{S}[4, -1, n] - \text{S}[-2, -1, 1, n] - \text{S}[-2, 1, -1, n] + 
  \text{S}[-2, 2, -1, n] + 2 \text{S}[-2, 2, 1, n] - \text{S}[-1, 2, 2, n] + \text{S}[2, 1, -2, n] + 
  2 [\text{S}[-3, 1, n] + \text{S}[2, -2, n] + \text{S}[-1, n] \text{S}[2, 1, n] - \text{S}[2, -1, 1, n] - 
        \text{S}[2, 1, -1, n]]}\\
\end{mma}
\end{fmma}     
The disadvantage is however, that for certain applications it might be more reasonable to chose other basic sums. In this case we cannot use this strategy.	

\subsection{Second Strategy: Partly Fixed Tables}
We can compute tables for all possible index sets up to a certain depth, which is done in \cite{Bluemlein2004} up to depth 6. These tables have been extended up to depth 7 during the work on this thesis. The computation of these tables took about one and a half week CPU time on a 2.3 GHz machine, and ended in a file with about 300 Mbytes. We can use these tables to reduce expressions in harmonic sums to expressions where for each depth and index set at most that many sums appear as is given by the second Witt formula. 
\begin{fmma}
\noindent First we have to load the table where the relations are stored:
\begin{mma}
\In {\text{\bf DefineRelationTable["RelTab.m"]}}\\
\end{mma}
\begin{mma}
\In {\text{\bf expr = \text{S}[1, 2, -2, n] + \text{S}[-1, 2, 1, n] + \text{S}[-1, 2, -2, n] + 
      \text{S}[-2, 2, 1, n] + \text{S}[-2, 1, 2, n]}\\
     \text{\bf + \text{S}[1, -2, -1, n] + \text{S}[-1, 2, 1, n] +
       \text{S}[2, -1, 2, n] + \text{S}[-1, 2, 2, n];}}\\
\end{mma} 
\begin{mma}        
\In {\text{\bf ReduceToBasis[expr, Dynamic $\rightarrow$ Partly]}}\\       
\Out {\text{S}[-5, n] + 
  \text{S}[-3, n] \text{S}[2, n] + \frac{1}{2} \text{S}[-1, n] [\text{S}[2, n]^2 + \text{S}[4, n]] - 
  \text{S}[-4, 1, n] + \text{S}[1, n] \text{S}[2, -2, n] - \text{S}[-2, n] \text{S}[2, 1, n] + 
  \text{S}[3, -2, n] + \text{S}[-2, 1, 2, n] + 2 \text{S}[-2, 2, 1, n] + \text{S}[-1, 2, -2, n] + 
  2 \text{S}[-1, 2, 1, n] + \text{S}[1, -2, -1, n] - \text{S}[2, 2, -1, n]}\\
\end{mma}
\end{fmma}
However there is the disadvantage that although for each depth and index set the relations can be calculated quite generally, we have to fix in advance certain sums which have to be in the basis. We can easily see this in the following example:\\
We again consider sums of depth 3 with 2 different indices, so we look at the sums $S_{a_1, a_1, a_2}(n)$, $S_{a_1, a_2, 								a_1}(n)$, $S_{a_2, a_1, a_1}(n).$ According to the Witt formula there should be just one of the sums in the basis. In 										(\ref{depht32ind}) we took $S_{a_2, a_1, a_1}(n)$ to express the other sums. Therefore for example the sum $S_{1,1,2}(n)$ will 					always be replaced by $S_{2,1,1}(n)$ and sums of lower depth. Sometimes however it could be more convenient 				 to take $S_{2,1,1}(n)$ as a basis element.
				
\subsection{Third Strategy: Dynamic Reduction}
If we look at an expression consisting out of several different harmonic sums, we could first look at all the different index 					sets of harmonic sums that appear in the expression. For each index set $J$ we can use the Witt formula to get the number $n_J$ 				of basis elements, \ie the basic sums. Then for each index set we can use our method to set up a system of equations where we choose exactly $n$ of the 		permutations (and therefore $n$ harmonic sums) to be in the basis. Here we will try, as far as possible to choose sums which are 						already in the expression, since we do not want to introduce new sums. Sometimes however we will have to choose also sums which 					are not in the expression. By solving the equation system the rest of the sums can be expressed by harmonic sums out of the 						basis and sums of lower depth. So we end up with an expression, where for each index set $J$ at most $n_J$ different multiple 					harmonic sums appear.    
Again we can illustrate this by an example. 
\begin{example}
Consider an expression where the sums $S_{a_1, a_1, a_2}(n)$ and $S_{a_1, a_2, a_1}(n)$ appear. We will use one of the sums to express the other. Since the Witt formula gives $1,$ this is possible. In the reduced expression there will be either the first, the second sum or no sum with this index set. Using, e.g., the second strategy we would use (as defined in Example \ref{exgetrel}) $S_{a_2,a_1,a_1}(n)$, that is not in the original expression.
\end{example}

\begin{fmma}
\noindent Of course, now we do not have to load any tables, since they are computed online.
\begin{mma}
\In {\text{\bf expr = \text{S}[1, 2, -2, n] + \text{S}[-1, 2, 1, n] + \text{S}[-1, 2, -2, n] + 
      \text{S}[-2, 2, 1, n] + \text{S}[-2, 1, 2, n]}\\
     \text{\bf + \text{S}[1, -2, -1, n] + \text{S}[-1, 2, 1, n] +
       \text{S}[2, -1, 2, n] + \text{S}[-1, 2, 2, n];}}\\
\end{mma} 
\begin{mma}  
\In {\text{\bf ReduceToBasis[expr, n, Dynamic $\rightarrow$ True]}}\\       
\Out {\text{S}[-5, n] + 
  \text{S}[-3, n] \text{S}[2, n] + \frac{1}{2} \text{S}[-1, n] [\text{S}[2, n]^2 + \text{S}[4, n]] + 
  \text{S}[1, n] \text{S}[-2, 2, n] + \text{S}[-2, 3, n] - \text{S}[1, -4, n] - 
  \text{S}[-2, n] \text{S}[1, 2, n] - \frac{1}{2} \text{S}[2, -3, n] - \frac{1}{2} \text{S}[2, n] \text{S}[
      2, -1, n] - \frac{1}{2} \text{S}[4, -1, n] + \text{S}[-2, 1, 2, n] + 
  \text{S}[-1, 2, -2, n] + 2 \text{S}[-1, 2, 1, n] + \text{S}[1, -2, -1, n] + 
  2 \text{S}[1, 2, -2, n] + \frac{1}{2} \text{S}[2, -1, 2, n]}\\
\end{mma} 
\end{fmma}
This strategy works quite fine (which means fast) for index sets where the number of permutations is not too high. For example 					there are 60 permutations of an index set of the type $\left\{a_1,a_1,a_2,a_3,a_4\right\}$ in such a situation the strategy works really fast. 				But if we consider the case of an index set of the type $\left\{a_1,a_1,a_2,a_3,a_4,a_5,a_6\right\},$ there are 2520 										permutations. If we have to set up and solve the system of equations in this case, this will take hours.\\

\subsection{Algebraic Simplification}
We use the notions of \cite{Buchberger1983} concerning \itshape canonical algebraic simplification. \upshape Especially the first strategy can be used to obtain \itshape equivalent but simpler objects \upshape and to compute \itshape unique representations for equivalent objects \upshape in $\mathcal{S}(n)/\mathcal{I};$ here $I$ is again the ideal in (\ref{ideal}).
For $s,t \in \mathcal{S}(n),$ we say $s\leq t$ if the number of multiple harmonic sums in $s$ which are not basic sums is smaller or equal than the number of non basic sums in $t$. Now consider the following problem:\\

\bfseries Given: \normalfont $s\in \mathcal{S}(n)$.\\
\bfseries Find: \normalfont $t\in \mathcal{S}(n)$ such that $s+I=t+I$ and such that for all $a\in \mathcal{S}(n)$ with $a+I=t+I$: $t\leq a$.\\

Now let $T$ be the procedure which uses the first strategy. $T$ maps $\mathcal{S}(n)$ to $\mathcal{S}(n).$ For all $s, t \in\mathcal{S}(n)$ with $s+\mathcal{I}=t+\mathcal{I}$ we have the equality $T(s)=T(t)$ in $\mathcal{S}(n)$ by construction; $T(s)\neq T(t)$ would mean that algebraic relations are missing. Hence we can check equivalence easily.\\
Additionally, the number of non basic harmonic sums in $T(s)$ is zero; hence the above problem of simplification is solved completely.

\section{Relations between Similar Structures}
In the same way one can find relations for Euler-Zagier sums, here the product (\ref{zpro}) is the quasi shuffle product discussed in \cite{Hoffman}: 
\begin{eqnarray}
\epsilon *\ve w &=& \ve w* \epsilon = \ve w, \ \textnormal{for all}\ \ve w \in A^* \nonumber \\
a\ve u*b\ve v &=& a(\ve u*\ve v)+b(\ve u*\ve v)+[a,b](\ve u*\ve v), \ \textnormal{for all}\ a,b \in A;\ve u,\ve v \in A^*\nonumber
\end{eqnarray}
with $[a,b]=a\wedge b$ for all $a,b \in \Z$.\\
Again the number of \textit{Lyndon} words gives the number of algebraically independent Euler-Zagier sums viewed as elements in a ring similarly as $\mathcal{S}(n)/\mathcal{I}$ for harmonic sums and it gives at least an upper bound for the number of algebraic independent Euler-Zagier sums viewed as sequences. In the case of just positive indices it was proven in \cite{Minh2000} that the number of \textit{Lyndon} words is not only an upper bound for the number of algebraically independent Euler-Zagier sums viewed as sequences, it is in fact the number of algebraically independent Euler-Zagier sums viewed as sequences. It is likely that the same holds for index sets where we also allow negative indices.
Like for harmonic sums the package \ttfamily HarmonicSums \rmfamily provides a procedure to calculate the relations automatically:
\begin{fmma}
\In {\text{\bf GetDependentZSums[{1, 1, 1}]}\\
		\text{sums: \ 6}\\
		\text{dependent sums: \ 4}}\\
\Out {\{\text{Z}[a1\_, a2\_, a3\_, n\_] \rightarrow \text{Z}[a3, n] \text{Z}[a1, a2, n] - \text{Z}[a1, \text{SP}[a2, a3], n] - \text{Z}[a1, n] \text{Z}[a3, a2, n] + \text{Z}[a3, \text{SP}[a1, a2], n] + \text{Z}[a3, a2, a1, n] /; 
     \text{MyMemberQ}[{{a1, a2, a3}}, {a1, a2, a3}] \&\& a1 \neq a2 \neq a3, \\
   \text{Z}[a1\_, a3\_, a2\_, n\_] \rightarrow \text{Z}[a1, n] \text{Z}[a3, a2, n] - \text{Z}[a3, \text{SP}[a1, a2], n] - \text{Z}[\text{SP}[a1, a3], a2, n] - \text{Z}[a3, a1, a2, n] - \text{Z}[a3, a2, a1, n] /; 
     \text{MyMemberQ}[{{a1, a2, a3}}, {a1, a2, a3}] \&\& a1 \neq a2 \neq a3,\\ 
   \text{Z}[a2\_, a1\_, a3\_, n\_] \rightarrow -\text{Z}[a3, n] \text{Z}[a1, a2, n] + \text{Z}[a2, n] \text{Z}[a1, a3, n] - \text{Z}[\text{SP}[a1, a2], a3, n] + \text{Z}[\text{SP}[a1, a3], a2, n] + \text{Z}[a3, a1, a2, n] /; 
     \text{MyMemberQ}[{{a1, a2, a3}}, {a1, a2, a3}] \&\& a1 \neq a2 \neq a3, \\
   \text{Z}[a2\_, a3\_, a1\_, n\_] \rightarrow -\text{Z}[a2, n] \text{Z}[a1, a3, n] + \text{Z}[a1, \text{SP}[a2, a3], n] + \text{Z}[a1, n] \text{Z}[a2, a3, n] - \text{Z}[a2, \text{SP}[a1, a3], n] + \text{Z}[a1, n] \text{Z}[a3, a2, n] - 
      \text{Z}[a3, \text{SP}[a1, a2], n] - \text{Z}[\text{SP}[a1, a3], a2, n] - \text{Z}[a3, a1, a2, n] - \text{Z}[a3, a2, a1, n] /; 
     \text{MyMemberQ}[{{a1, a2, a3}}, {a1, a2, a3}] \&\& a1 \neq a2 \neq a3\}}\\
\end{fmma}     
We can generalize the harmonic sums to S-sums \cite{Moch2002}. For $a_i\in \Z$ and $x_i\in \R$ we define
\begin{equation}
	S_{a_1,\ldots ,a_k;x_1,\ldots ,x_k}(n)= \sum_{n\geq i_1 \geq i_2 \geq \cdots \geq i_k \geq 1} \frac{x_1^{i_1}}{i_1^{a_1}}\cdots
	\frac{x_k^{i_k}}{i_k^{a_k}}.
\end{equation}
The product of two S-sums yields:
\begin{eqnarray}
	S_{a_1,\ldots ,a_k;x_1,\ldots ,x_k}(n)*S_{b_1,\ldots ,b_l;y_1,\ldots ,y_l}(n)&=&
	\sum_{i=1}^n \frac{a_1^i}{i^{a_1}}S_{a_2,\ldots ,a_k;x_2,\ldots ,x_k}(i)*S_{b_1,\ldots ,b_l;y_1,\ldots ,y_l}(i) \nonumber\\
	&+&\sum_{i=1}^n \frac{b_1^i}{i^{\abs {b_1}}}S_{a_1,\ldots ,a_k;x_1,\ldots ,x_k}(i)*S_{b_2,\ldots ,b_l;y_2,\ldots ,y_l}(i) \nonumber\\
	&-&\sum_{i=1}^n \frac{(a_1*b_1)^i}{i^{a_1+b_1}}S_{a_2,\ldots ,a_k;x_2,\ldots ,x_k}(i)*S_{b_2,\ldots ,b_l;y_2,\ldots ,y_l}(i).\nonumber\\
	\label{ssumproduct}
\end{eqnarray}
Again recursive application of (\ref{ssumproduct}) leads to a linear combination of single S-sums.
As showed in \cite{Moch2002}, ideas from the previous sections can be carried over as follows.
We consider an alphabet $A$, where pairs $(m,x)$ form the letters. Moreover, we define the addition and the multiplication of words by
$$[(m_1,x_1),(m_2,x_2)]=(m_1+m_2,x_1*x_2)$$
and by
\begin{eqnarray}
(m_1,x_1)w_1*(m_2,x_2)w_2 &=& (m_1,x_1)(w_1*(m_2,x_2)w_2)+(m_2,x_2)(w_1*(m_2,x_2)w_2)\nonumber\\
&&-[(m_1,x_1),(m_2,x_2)](w_1*w_2).
\nonumber
\end{eqnarray}
Note that this multiplication is identical to the multiplication of S-sums (\ref{ssumproduct}). The degree of a letter is $\abs{(m,x)}=m$. Since (S0)-(S3) hold for $[\cdot,\cdot],$ the S-sums form a quasi-shuffle algebra.
Finally, we emphasize that we implemented algorithms to find algebraic relations between S-sums following the ideas presented in Section \ref{deriverel}.
\begin{fmma}
\In {\text{\bf GetDependentGeneralizedSSums[{2, 1}]}\\
		\text{sums: \ 3}\\
		\text{dependent sums: \ 2}}\\
\Out {\{\text{S}[a1\_, a1\_, a2\_, {x1\_, x1\_, x2\_}, 
      n\_] \rightarrow \frac{1}{2}\text{S}[a1, {x1}, n] \text{S}[a1, a2, {x1, x2}, 
            n] + \frac{1}{2} \text{S}[a1, a1 + a2, {x1, x1 x2}, n] + \frac{1}{2} \text{S}[
            2 a1, a2, {x1^2, x2}, n] - \frac{1}{2}\text{S}[a1, {x1}, n] \text{S}[a2, 
            a1, {x2, x1}, n] - \frac{1}{2} \text{S}[a2, 2 a1, {x2, x1^2}, 
            n] - \frac{1}{2} \text{S}[a1 + a2, a1, {x1 x2, x1}, n] + 
        \text{S}[a2, a1, a1, {x2, x1, x1}, n],\\ 
  \text{S}[a1\_, a2\_, a1\_, {x1\_, x2\_, x1\_}, n\_] \rightarrow 
    \text{S}[a1, {x1}, n] \text{S}[a2, a1, {x2, x1}, n] + \text{S}[a2, 2 a1, {x2, x1^2}, n] + 
      \text{S}[a1 + a2, a1, {x1 x2, x1}, n] - 2 \text{S}[a2, a1, a1, {x2, x1, x1}, n]\}}\\
\end{fmma} 
In a similar way we can generalize Euler-Zagier sums to Z-sums \cite{Moch2002}:
\begin{equation}
	Z_{a_1,\ldots ,a_k;x_1,\ldots ,x_k}(n)= \sum_{n\geq i_1 > i_2 > \cdots > i_k > 0} \frac{x_1^{i_1}}{i_1^{a_1}}\cdots
	\frac{x_k^{i_k}}{i_k^{a_k}}.
\end{equation}
The product of two Z-sums yields:
\begin{eqnarray}
	Z_{a_1,\ldots ,a_k;x_1,\ldots ,x_k}(n)*Z_{b_1,\ldots ,b_l;y_1,\ldots ,y_l}(n)&=&
	\sum_{i=1}^n \frac{a_1^i}{i^{a_1}}Z_{a_2,\ldots ,a_k;x_2,\ldots ,x_k}(i)*Z_{b_1,\ldots ,b_l;y_1,\ldots ,y_l}(i-1) \nonumber\\
	&+&\sum_{i=1}^n \frac{b_1^i}{i^{\abs {b_1}}}Z_{a_1,\ldots ,a_k;x_1,\ldots ,x_k}(i)*Z_{b_2,\ldots ,b_l;y_2,\ldots ,y_l}(i-1) \nonumber\\
	&+&\sum_{i=1}^n \frac{(a_1*b_1)^i}{i^{a_1+b_1}}Z_{a_2,\ldots ,a_k;x_2,\ldots ,x_k}(i)*Z_{b_2,\ldots ,b_l;y_2,\ldots ,y_l}(i-1).\nonumber
\end{eqnarray}
Similar as for harmonic sums and Euler-Zagier sums we can convert between S-sums and Z-sums \cite{Moch2002}.
\begin{fmma}
\begin{mma}
\In \text{\bf ZToS[Z[1, 3, 4, \{2, 1, 1\}, n]]}\\
\Out {\text{S}[8, \{2\}, n] - \text{S}[1, 7, \{2, 1\}, n] - \text{S}[4, 4, \{2, 1\}, n] + 
  \text{S}[1, 3, 4, \{2, 1, 1\}, n]}\\
\In \text{\bf SToZ[S[1, 3, 4, \{2, 1, 1\}, n]]}\\
\end{mma}
\begin{mma}
\Out {\text{Z}[8, \{2\}, n] + \text{Z}[1, 7, \{2, 1\}, n] + \text{Z}[4, 4, \{2, 1\}, n] + 
  \text{Z}[1, 3, 4, \{2, 1, 1\}, n]}\\
\end{mma}
\end{fmma}
\chapter{Harmonic Polylogarithms and the Differentiation of Harmonic Sums }
\label{Harmonic Polylogarithms}

The bigger part of this chapter will deal with \itshape harmonic polylogarithms. \upshape Harmonic polylogarithms were first introduced in \cite{Remiddi2000} and are covered by Poincar\'{e} iterated integrals \cite{Poincare1884,Lappo-Danielevsky1953}. For the definition and basic properties of the harmonic polylogarithms we will follow more or less \cite{Remiddi2000}, however sometimes we will go more into detail (for example we will provide detailed proofs), and we will skip aspects which are not important for our considerations.\\
We discuss harmonic polylogarithms since it will turn out that they are connected to multiple harmonic sums via a special extension of the Mellin transform, which is frequently used in particle physics. The Mellin transforms of harmonic polylogarithms are expressions in multiple harmonic sums, while the inverse Mellin transforms of multiple harmonic sums are expressions in harmonic polylogarithms. Hence we can use the Mellin transform to convert between multiple harmonic sums and harmonic polylogarithms. As a direct consequence we can use the inverse Mellin transform to construct analytic continuations of multiple harmonic sums. An algorithm to compute the Mellin transform of harmonic polylogarithms is given in \cite{Remiddi2000}; we will analyze this algorithm in detail and will verify its correctness. Similarly, an algorithm for the computation of the inverse Mellin transform of multiple harmonic sums is given in \cite{Remiddi2000}; again we will work out the correctness of this algorithm and give additional insight from a more computer algebra point of view. In order to execute these algorithms, we will also need to consider harmonic sums at infinity and harmonic polylogarithms at one. Both do not have to be finite, but it will turn out that they are connected via power series expansions.\\ 
These algorithms are originally implemented in Vermaseren's package \ttfamily harmpol \rmfamily in the computer algebra system form \cite{Remiddi2000}. Harmpol also provides procedures to manipulate with harmonic polylogarithms. In addition, there is the \sffamily Mathematica\rmfamily-package \ttfamily hpl \rmfamily \cite{Maitre2006}, in which basic procedures for manipulating harmonic polylogarithms are available.\\ 
Since we can use the inverse Mellin transform to compute analytic continuations of multiple harmonic sums, we can define a differentiation on multiple harmonic sums as done in \cite{Bluemlein2008,Bluemlein2009,Bluemlein2009a}.\\
In the last part of this chapter we give an algorithm to do this differentiation automatically. In the package \ttfamily HarmonicSums \rmfamily the Mellin-transform and the inverse Mellin-transform together with procedures for harmonic polylogarithms which are also needed are implemented. In addition, the differentiation of multiple harmonic sums is implemented. Along with this differentiation new relations between multiple harmonic sums are going to emerge, and these relations will reduce the algebraic basis computed in the previous chapter.
\begin{figure}
\centering
\includegraphics[width=0.9\textwidth]{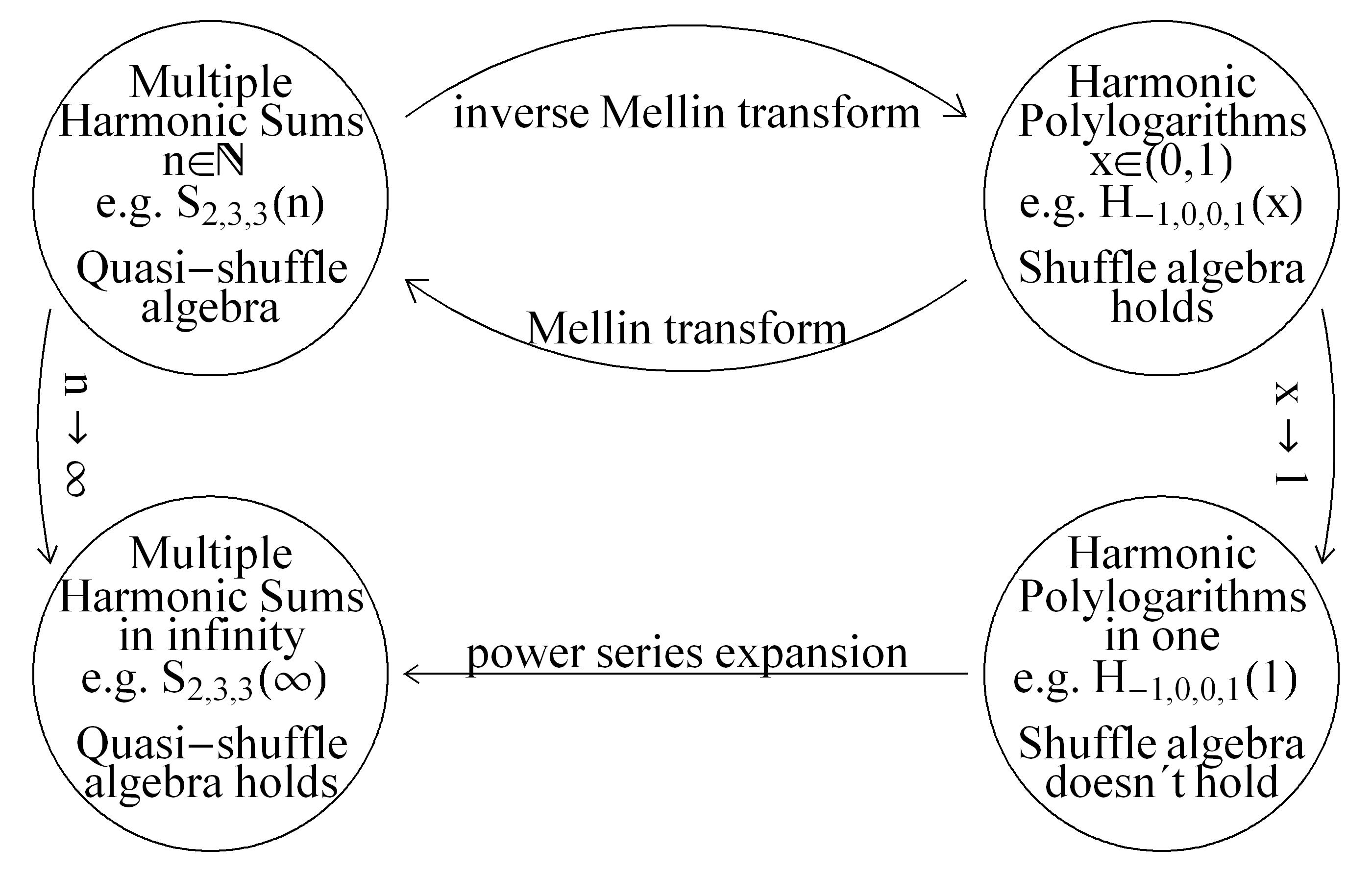}
\caption[height=5cm]{Connection between harmonic sums and harmonic polylogarithms.}
\label{connection}
\end{figure}

\section{Definition of Harmonic Polylogarithms}
In order to define harmonic polylogarithms as introduced in \cite{Remiddi2000}, we first define the function $f:\left\{0,1,-1\right\}\times(0,1)\mapsto \R$ by
\begin{eqnarray} 
f(0,x)&=&\frac{1}{x},\nonumber\\
f(1,x)&=&\frac{1}{1-x},\nonumber\\
f(-1,x)&=&\frac{1}{1+x}.\nonumber
\end{eqnarray}

Harmonic polylogarithms are functions in $C^\infty((0,1))$ and are defined inductively:
Let $\ve m=(m_w,m_{w-1},\ldots , m_1)$, $m_i \in \left\{0,1,-1\right\}$ be a vector of length $w \in \N$ and let $\ve 0_w$ be the vector of length $w$ whose $w$ components are all equal $0$. For $x\in (0,1)$ we define the harmonic polylogarithm $\HH{\ve m,x}:$
\begin{eqnarray}
\HH{(),x}&=&1,\nonumber\\
\HH{\ve 0_w,x} &=& \frac{1}{w!}(\log{x})^w, \nonumber\\
\HH{\ve m,x} &=& \int_0^x{f(m_w,y) \HH{(m_{w-1},m_{w-2},\ldots,m_1),y}dy} \ \textnormal{if} \ \ve m \neq \ve 0_w. \nonumber
\label{abpoly1}
\end{eqnarray}
The length of the vector $\ve m$ is called the weight of the polylogarithm $\HH{\ve m,x}.$\\
Examples are
\begin{eqnarray}
\HH{(1),x} &=& \int_0^x{\frac{dy}{1-y}}=-\log{(1-x)}, \nonumber\\
\HH{(-1),x}&=& \int_0^x{\frac{dy}{1+y}}=\log{(1+x)}, \nonumber\\
\HH{(-1,1),x}&=& \int_0^x{\frac{\H{1}{y}}{1+y}dy}= \int_0^x{\frac{\log{(1-y)}}{1+y}dy}. \nonumber
\end{eqnarray}

\begin{remark}(compare  \cite{Remiddi2000})
It follows from the definition that if $\ve m\neq \ve 0_w$, $\HH{\ve m,0}=0.$
If $m_w\neq 1$ or if $m_w=1$ and $m_v=0$ for all $v$ with $v<w$ then $\HH{\ve m,1}$ is finite.
In the remaining, cases, i.e., $m_w=1$ and $m_v\neq0$ for some $v$ with $v<w$, $\lim_{x\rightarrow 1^-} \HH{\ve m,x}$ behaves as a combination of powers of $\log(1-x)$ as we will see later in more detail.
We define $\HH{\ve m,0}:=\lim_{x\rightarrow 0^+} \HH{\ve m,x}$ and $\HH{\ve m,1}:=\lim_{x\rightarrow 1^-} \HH{\ve m,x}$ if the limits exist.
\label{finitness}
\end{remark}

\begin{notation}
From now on we will write $\H{m_w,m_{w-1},\ldots , m_1}{x}$ for $\HH{(m_w,m_{w-1},\ldots , m_1),x}.$
\label{abpoly2}
\end{notation}

\begin{remark}
For the derivatives we have for all $x\in (0,1)$ that $$ \frac{d}{d x} \H{\ve m}{x}=f(m_w,x)\H{m_{w-1},m_{w-2},\ldots,m_1}{x}. $$ 
\end{remark}

Subsequently, we present some identities between harmonic polylogarithms of the same argument. These identities will lead to a formula, which we can use to rewrite the product of harmonic polylogarithms to a sum of single harmonic polylogarithms.
\begin{lemma}
For $x\in (0,1)$, $q\in \N$ and $m_i \in \left\{0,1,-1\right\},$
\begin{eqnarray}
\H{m_1,\ldots,m_q}{x}&=&\H{m_1}{x}\H{m_2,\ldots,m_q}{x}\nonumber\\
&-&\H{m_2,m_1}{x}\H{m_3,\ldots,m_q}{x} \nonumber\\ 
&+&\H{m_3,m_2,m_1}{x}\H{m_4,\ldots,m_q}{x}\nonumber\\
&-&\cdots -(-1)^q\H{m_q,\ldots,m_1}{x}.
\label{intbyparts}
\end{eqnarray}
\end{lemma}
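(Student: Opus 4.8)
The plan is to prove the equivalent statement that the alternating sum
$$\Phi(x) := \sum_{k=0}^{q}(-1)^k\,\H{m_k,\ldots,m_1}{x}\,\H{m_{k+1},\ldots,m_q}{x}$$
vanishes identically on $(0,1)$. Here the $k=0$ term is $\H{m_1,\ldots,m_q}{x}$ (the left-hand side of the Lemma, since $\H{()}{x}=1$) and the $k=q$ term is $\H{m_q,\ldots,m_1}{x}$, so $\Phi\equiv 0$ is exactly (\ref{intbyparts}) after collecting all terms on one side. This is really the differential shadow of repeated integration by parts, which explains the label: each step peels the outermost letter off the second factor and reattaches it, reversed, to the first.

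First I would differentiate $\Phi$ term by term. Using the product rule together with the derivative rule $\tfrac{d}{dx}\H{m_j,\ldots,m_1}{x}=f(m_j,x)\H{m_{j-1},\ldots,m_1}{x}$ from the preceding Remark, each summand $T_k:=\H{m_k,\ldots,m_1}{x}\H{m_{k+1},\ldots,m_q}{x}$ contributes two pieces, and every piece has the common shape
$$P_j:=f(m_j,x)\,\H{m_{j-1},\ldots,m_1}{x}\,\H{m_{j+1},\ldots,m_q}{x}.$$
With the conventions $\H{()}{x}=1$ and out-of-range factors omitted (which handles the endpoints $k=0,q$), the first piece of $T_k'$ is $P_k$ and the second is $P_{k+1}$. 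Reindexing the shifted sum then yields $\Phi'(x)=\sum_{k=1}^q(-1)^kP_k-\sum_{j=1}^q(-1)^jP_j=0$, so $\Phi$ is constant on $(0,1)$. This cancellation is purely formal and uses no property of the letters $m_i$.

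It remains to pin the constant to zero, and this is where the only genuine care is needed, because of the logarithmic behaviour at the origin. If the word $(m_1,\ldots,m_q)$ is \emph{not} identically zero, then each factor in $\Phi$ has limit $0$, $1$, or $\infty$ as $x\to 0^+$, and a divergent factor $\H{\ve{0}_r}{x}=(\log x)^r/r!$ can occur only when its companion factor is a non-all-zero word, hence vanishes like a positive power of $x$ (times a power of $\log x$); since powers of $x$ dominate powers of $\log x$, every $T_k(x)\to 0$, forcing the constant to be $0$. The degenerate case $m_1=\cdots=m_q=0$ must be treated separately: there $T_k=(\log x)^q/(k!\,(q-k)!)$, so $\Phi(x)=\tfrac{(\log x)^q}{q!}\sum_{k=0}^q(-1)^k\binom{q}{k}=\tfrac{(\log x)^q}{q!}(1-1)^q=0$ by the binomial theorem. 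The main obstacle is therefore not the derivative cancellation but the bookkeeping of these boundary asymptotics and the clean separation of the all-zero word from the generic case.
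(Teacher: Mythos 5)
Your proof is correct, and it reorganizes the argument in a way that is genuinely different from the paper's, though the two are dual to each other. The paper argues constructively: after disposing of the all-zero word $(m_1,\ldots,m_q)=\ve{0}_q$ by the same binomial computation you use, it repeatedly integrates by parts, at each step recognizing $f(m_j,y)\H{m_{j-1},\ldots,m_1}{y}$ as the derivative of $\H{m_j,\ldots,m_1}{y}$ and discarding the boundary term at $y=0$, so that the alternating reversed-prefix structure of the right-hand side emerges one term at a time. Your argument runs the same mechanism backwards: the telescoping $\Phi'=\sum_k(-1)^k(P_k+P_{k+1})=0$ is exactly the cancellation that the successive integrations by parts produce, and your limits $T_k(x)\to 0$ as $x\to 0^+$ are exactly the boundary terms at the origin that the paper silently drops. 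The trade-off is this: the paper's construction shows where the formula comes from, while your verification (one derivative computation plus one boundary value) is sharper on the one point the printed proof glosses over — dropping boundary terms at $y=0$ is not automatic, since a factor such as $\H{\ve 0_r}{y}=\log^r(y)/r!$ diverges there, and legitimacy rests on precisely your observation that a nonempty, non-all-zero word of weight $w$ is $O\bigl(x\abs{\log x}^{w-1}\bigr)$ near the origin, which dominates any divergent all-zero companion. That decay estimate is the only ingredient you assert without proof; it follows by a short induction on the weight (bounding $f(\pm 1,y)$ near $0$ and integrating the log powers), and it is genuinely needed, since the paper's remark on finiteness only gives the weaker statement $\HH{\ve m,0}=0$ for $\ve m\neq\ve 0_w$. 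In a final write-up you should include that induction; with it, your proof is complete and in fact more careful about the origin than the paper's own.
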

\begin{proof}
Case 1, $(m_1,\ldots,m_q)=\ve{0}_q$:\\
From the definition we get
$$\H{m_1,\ldots,m_q}{x}=\frac{1}{q!}\log{x}^q$$ and the right hand side of (\ref{intbyparts}) yields
\begin{eqnarray}
&&\log{x}\frac{1}{(q-1)!}\log^{q-1}{x}-\frac{1}{2!}\log^{2}{x}\frac{1}{(q-2)!}\log^{q-2}{x}\nonumber\\
&&+\frac{1}{3!}\log^{3}{x}\frac{1}{(q-3)!}\log^{q-3}{x}-\cdots-(-1)^q\frac{1}{q!}\log^{q}{x}\nonumber\\
&&=\log^{q}{x}\sum_{i=1}^{q}{\frac{(-1)^i}{i!(q-i)!}}=\log{x}^q\frac{1}{q!}.\nonumber
\end{eqnarray}
Case 2, $(m_1,\ldots,m_q) \neq \ve{0}_q$:\\
Using the definition and partial integration we get
\begin{eqnarray}
\H{m_1,\ldots,m_q}{x}&=&\int_0^x f(m_1,y)\H{m_2,\ldots,m_q}{y}dy\nonumber\\
&=&\H{m_1}{x}\H{m_2,\ldots,m_q}{x}-\int_0^x f(m_2,y)\H{m_3,\ldots,m_q}{y}dy.\nonumber
\end{eqnarray}
Continuing by partial integration we finally get the right hand side.
\end{proof}
We can use this lemma to get the following identity.
\begin{lemma}
For $x\in (0,1),$ $a \in \left\{0,1,-1\right\}$, $q\in \N$ and $m_i \in \left\{0,1,-1\right\}$,
\begin{eqnarray}
\H{a}{x}\H{m_q,\ldots,m_1}{x}&=&\H{a,m_q,m_{q-1},\ldots,m_1}{x}\nonumber\\
&+&\H{m_q,a,m_{q-1},\ldots,m_1}{x}\nonumber\\
&+&\H{m_q,m_{q-1},a,m_{q-2},\ldots,m_1}{x}\nonumber\\
&+&\cdots \nonumber\\
&+&\H{m_q,m_{q-1},\ldots,m_1,a}{x}.
\label{hpro1}
\end{eqnarray}
\end{lemma}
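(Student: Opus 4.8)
The plan is to prove (\ref{hpro1}) by induction on $q$, reading it as the weight-$1$-against-weight-$q$ instance of the shuffle identity for the iterated integrals defining the harmonic polylogarithms: the $q+1$ words on the right-hand side are precisely those obtained by inserting the letter $a$ into each of the $q+1$ slots of the word $(m_q,\ldots,m_1)$. Since both sides lie in $C^\infty((0,1))$, it suffices to show that their derivatives coincide on $(0,1)$ and that the resulting constant of integration is zero. The base case $q=0$ is immediate: there the right-hand side is the single word $(a)$ and $\H{()}{x}=1$, so both sides equal $\H{a}{x}$.

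For the inductive step I would differentiate both sides and use the derivative rule $\frac{d}{dx}\H{\ve n}{x}=f(n_w,x)\,\H{n_{w-1},\ldots,n_1}{x}$, which peels off the leftmost letter. Writing $W_0,\ldots,W_q$ for the insertion words, the word $W_0=(a,m_q,\ldots,m_1)$ has leading letter $a$, while every $W_j$ with $j\ge 1$ has leading letter $m_q$; hence
\[
\frac{d}{dx}\sum_{j=0}^{q}\H{W_j}{x}=f(a,x)\,\H{m_q,\ldots,m_1}{x}+f(m_q,x)\sum_{i=0}^{q-1}\H{V_i}{x},
\]
where $V_0,\ldots,V_{q-1}$ are the words $W_1,\ldots,W_q$ with their common leading $m_q$ deleted, i.e.\ exactly the insertions of $a$ into $(m_{q-1},\ldots,m_1)$. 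The induction hypothesis gives $\sum_i\H{V_i}{x}=\H{a}{x}\,\H{m_{q-1},\ldots,m_1}{x}$, and applying the product rule together with the same derivative formula to $\H{a}{x}\,\H{m_q,\ldots,m_1}{x}$ yields the identical expression. Thus the difference of the two sides has vanishing derivative and is therefore a constant $c$.

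It remains to show $c=0$, and this boundary analysis is the step I expect to be the main obstacle, since zero indices produce logarithmic divergences that must be controlled. If not all of $a,m_q,\ldots,m_1$ vanish, then every insertion word differs from $\ve{0}_{q+1}$, so by Remark \ref{finitness} each term on the right-hand side tends to $0$ as $x\to 0^+$; on the left one checks that $\H{a}{x}\,\H{m_q,\ldots,m_1}{x}\to 0$ as well, using that a nonzero index forces one factor to vanish like a positive power of $x$, which dominates any $(\log x)^k$ arising from zero indices in the other factor, so $c=0$. In the remaining all-zero case $a=0$, $\ve m=\ve{0}_q$, both sides are computed directly from $\H{\ve{0}_w}{x}=\frac{1}{w!}(\log x)^w$: the left side equals $\frac{1}{q!}(\log x)^{q+1}$ and the right side equals $(q+1)\,\H{\ve{0}_{q+1}}{x}=\frac{1}{q!}(\log x)^{q+1}$, again giving $c=0$. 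The only delicate point is making the rate estimates in the mixed case precise; alternatively one can bypass them entirely by invoking the already-established Lemma~(\ref{intbyparts}) or the general shuffle product of Chen iterated integrals, which produces the same $q+1$ terms directly.
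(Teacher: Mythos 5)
Your proof is correct, and it takes a genuinely different route from the paper's. The paper runs the induction \emph{upwards by integration}: assuming the identity at level $q$, it multiplies both sides by $f(m_{q+1},y)$, integrates over $(0,x)$, turns each right-hand term into the corresponding word with $m_{q+1}$ prepended (by the defining integral of the harmonic polylogarithms), and applies a single integration by parts on the left, which yields $\H{a}{x}\H{m_{q+1},\ldots,m_1}{x}$ minus the one missing insertion word $\H{a,m_{q+1},\ldots,m_1}{x}$; the level-$(q+1)$ identity then falls out with no constant of integration to determine. You instead run the induction \emph{downwards by differentiation}: you peel the leading letters off the insertion words, invoke the level-$(q-1)$ identity, match against the product rule, and must then pin the constant via the $x\to0^{+}$ limit. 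The two arguments are essentially transposes of one another, and each buries the same analytic subtlety in a different place. Your version makes the boundary behaviour explicit and therefore owes the rate estimates you flag, namely $\H{a}{x}=O(x)$ for $a\neq0$ and $\H{\ve m}{x}=O(x\log^{r}x)$ for $\ve m\neq\ve 0_q$ (both follow from the trailing-zero extraction and the power-series expansions developed later in the paper, or by a short induction on the weight), plus the separate all-zero computation you carry out. The paper's version hides exactly the same facts inside its integration by parts: the boundary term $\lim_{y\to0^{+}}\H{a}{y}\H{m_{q+1},\ldots,m_1}{y}$ and, for $a=0$ (where $f(a,y)=1/y$), the convergence of $\int_0^x f(a,y)\H{m_{q+1},\ldots,m_1}{y}\,dy$ are never checked, and the all-zero word, for which the integral representation is not even valid, is silently excluded. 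So your write-up is, if anything, more candid about where the work lies, and once the two rate estimates are inserted it is complete. Two small remarks: your base case $q=0$ (using $\H{()}{x}=1$) is cleaner than the paper's appeal to (\ref{intbyparts}) at $q=1$, which as printed even drops the term $\H{m_1,a}{x}$; and of your proposed shortcuts for the constant, quoting the general shuffle identity for Chen iterated integrals would invert the paper's logical order, since the paper uses precisely this lemma as the induction base for its shuffle-product theorem (\ref{hpro}), so the self-contained boundary argument is the preferable repair.
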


\begin{proof}
Let $q=1$: By (\ref{intbyparts}) we get $\H{a}{x}\H{m_1}{x}=\H{a,m_1}{x}.$
Assume (\ref{hpro1}) holds for $q:$
\begin{eqnarray}
\H{a}{x}\H{m_q,\ldots,m_1}{x}&=&\H{a,m_q,m_{q-1},\ldots,m_1}{x}\nonumber\\
&+&\H{m_q,a,m_{q-1},\ldots,m_1}{x}\nonumber\\
&+&\H{m_q,m_{q-1},a,m_{q-2},\ldots,m_1}{x}\nonumber\\
&+&\cdots \nonumber\\
&+&\H{m_q,m_{q-1},\ldots,m_1,a}{x}.\nonumber
\end{eqnarray}
Multiply both sides by $f(m_{q+1},y)$ and integrate: 
\begin{eqnarray}
&&\int_0^x{f(m_{q+1},y)\H{a}{y}\H{m_q,\ldots,m_1}{y}dy}\nonumber\\
&\ \ &=\int_0^x{f(m_{q+1},y)\H{a,m_q,m_{q-1},\ldots,m_1}{y}dy}\nonumber\\
&\ \ &+\int_0^x{f(m_{q+1},y)\H{m_q,a,m_{q-1},\ldots,m_1}{y}dy}\nonumber\\
&\ \ &+\int_0^x{f(m_{q+1},y)\H{m_q,m_{q-1},a,m_{q-2},\ldots,m_1}{y}dy}\nonumber\\
&\ \ &+\cdots + \int_0^x{f(m_{q+1},y)\H{m_q,m_{q-1},\ldots,m_1,a}{y}dy}.\nonumber
\end{eqnarray}
By partial integration on the left hand side and the definition of harmonic polylogarithms we get:
\begin{eqnarray}
&&\H{a}{x}\H{m_q,\ldots,m_1}{x}-\int_0^x{f(a,y)\H{m_{q+1},\ldots m_1}{y}}dy\nonumber\\
& \ \ &=\H{m_{q+1},a,m_q,\ldots,m_1}{x}+\H{m_{q+1},m_q,a,m_{q-1},\ldots,m_1}{x}\nonumber\\
& \ \ &+\H{m_{q+1},m_q,m_{q-1},a,m_{q-2},\ldots,m_1}{x}+\cdots+\H{m_{q+1},\ldots,m_1,a}{x}.\nonumber
\end{eqnarray}
Hence (\ref{hpro1}) holds for $q+1.$
\end{proof}

Similar to multiple harmonic sums, the shuffle product on words is extended to harmonic polylogarithms.
\begin{definition}Let $\ve{a}, \ve{b} \in A^*$. If
$$
\ve{a}\shuffle \ve{b}=c_1\ve{d}_1+\cdots+c_n\ve{d}_n
$$
for $\ve{d}_i\in A^*, c_i\in\R$ then we define
$$
\H{\ve{a}\shuffle\ve{b}}{x}:=c_1\H{\ve{d}_1}{x}+\cdots+c_n\H{\ve{d}_n}{x}.
$$
\label{polyshuff}
\end{definition}

\begin{definition}
Let $x \in (0,1)$, $a=(a_1, a_2,\ldots,a_p)$ and $b=(b_1, b_2,\ldots,b_q)$, where $a_i, b_i \in \left\{0,1,-1\right\}$, $p,q \in \N$. Let us interpret $a$ and $b$ as words, so $a=a_1 a_2\cdots a_p$ and $b=b_1 b_2 \cdots b_q$. We define the shuffle product of two harmonic polylogarithms by
$$
\H{a}{x}\shuffle \H{b}{x}:=\H{a_1(a_2 \cdots a_p \shuffle b)}{x}+\H{b_1(a \shuffle b_2\cdots b_q)}{x}
$$
where $\H{c_1 c_2 \cdots c_n}{x} := \H{c_1, c_2, \ldots, c_n}{x}$ for $c_i \in \left\{0,1,-1\right\}$ and where $\shuffle$ is defined as in Definition \ref{shuffdef}.  
\end{definition}

\begin{example} For $x \in (0,1)$,
\begin{eqnarray}
\H{1,0,-1}{x} \shuffle \H{0,1}{x}&=&\H{0, 1, 0, -1, 1}{x} + \H{0, 1, 0, 1, -1}{x} \nonumber\\
&+& 2 \H{0, 1, 1, 0, -1}{x} + \H{1, 0, -1, 0, 1}{x} \nonumber\\
&+& 2 \H{1, 0, 0, -1, 1}{x} + 2 \H{1, 0, 0, 1, -1, x}{x} \nonumber\\
&+& \H{1, 0, 1, 0, -1}{x}.\nonumber
\end{eqnarray}
\end{example}

\ttfamily HarmonicSums \rmfamily provides a procedure to carry out this product:
\begin{fmma}
\In \text{\bf ProductH[H[1,0,-1,x],H[0,1,x]]}\\
\Out {\text{H[0, 1, 0, -1, 1, x] + H[0, 1, 0, 1, -1, x] + 2 H[0, 1, 1, 0, -1, x] + H[1, 0, -1, 0, 1, x]}\\
			 \text{+ 2 H[1, 0, 0, -1, 1, x] + 2 H[1, 0, 0, 1, -1, x] + H[1, 0, 1, 0, -1, x]}}\\
\end{fmma}

\begin{thm}
Let $x \in (0,1)$, $a=(a_p, a_{p-1},\ldots,a_1)$ and $b=(b_q, b_{q-1},\ldots,b_1)$, where $a_i, b_i \in \left\{0,1,-1\right\}$, $p,q \in \N.$ For the product of two harmonic polylogarithms we get
\begin{equation}
\H{a}{x}\H{b}{x}=\H{a}{x}\shuffle \H{b}{x}. \label{hpro}
\end{equation}
\end{thm}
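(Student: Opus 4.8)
The plan is to prove (\ref{hpro}) by induction on the total weight $p+q$, comparing the two sides through their derivatives. The base case collects the trivial and already-known situations. If $\ve a$ or $\ve b$ is the empty word the claim is immediate, since $\H{()}{x}=1$ and $()\shuffle\ve b=\ve b$. If one of the two arguments is a single letter, say $\ve a=\alpha$, then by Definition \ref{shuffdef} the product $\H{\ve a}{x}\shuffle\H{\ve b}{x}$ is exactly the sum of all words obtained by inserting $\alpha$ into $\ve b$ in every possible position; this is precisely the right-hand side of Lemma (\ref{hpro1}), so (\ref{hpro}) holds. (Note this already covers multiplication by $\log x=\H{0}{x}$.) By commutativity of both products the symmetric case $q=1$ is settled as well.

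For the inductive step assume both $\ve a$ and $\ve b$ are nonempty, write $\ve a=\alpha\ve a'$ and $\ve b=\beta\ve b'$ with $\alpha,\beta$ the leftmost (outermost) letters, and suppose (\ref{hpro}) holds for every pair of words of smaller total weight. Differentiating the ordinary product and using $\tfrac{d}{dx}\H{\gamma\ve w}{x}=f(\gamma,x)\H{\ve w}{x}$ gives
\[
\tfrac{d}{dx}\bigl(\H{\ve a}{x}\H{\ve b}{x}\bigr)=f(\alpha,x)\H{\ve a'}{x}\H{\ve b}{x}+f(\beta,x)\H{\ve a}{x}\H{\ve b'}{x}.
\]
On the other side, the defining recursion of the shuffle product of harmonic polylogarithms reads $\H{\ve a}{x}\shuffle\H{\ve b}{x}=\H{\alpha(\ve a'\shuffle\ve b)}{x}+\H{\beta(\ve a\shuffle\ve b')}{x}$, so the same derivative rule (extended linearly) yields
\[
\tfrac{d}{dx}\bigl(\H{\ve a}{x}\shuffle\H{\ve b}{x}\bigr)=f(\alpha,x)\H{\ve a'\shuffle\ve b}{x}+f(\beta,x)\H{\ve a\shuffle\ve b'}{x}.
\]
By the induction hypothesis $\H{\ve a'\shuffle\ve b}{x}=\H{\ve a'}{x}\H{\ve b}{x}$ and $\H{\ve a\shuffle\ve b'}{x}=\H{\ve a}{x}\H{\ve b'}{x}$, so the two derivatives coincide on the connected interval $(0,1)$ and the two sides of (\ref{hpro}) differ by a constant.

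The main obstacle is fixing this constant, which is exactly where the singular letter $0$ (with $f(0,x)=1/x$) must be handled with care, as in Case~1 of the proof of Lemma (\ref{intbyparts}). If at least one of $\ve a,\ve b$ contains a nonzero letter, then every word occurring in $\ve a\shuffle\ve b$ likewise contains that nonzero letter and is therefore different from $\ve 0_{p+q}$; by Remark \ref{finitness} each such harmonic polylogarithm tends to $0$ as $x\to0^+$, so the shuffle side vanishes there. The product $\H{\ve a}{x}\H{\ve b}{x}$ vanishes too: the non-pure-zero factor vanishes at least like $x$ up to powers of $\log x$ (as can be read off from the iterated-integral definition), which overwhelms the at most logarithmic growth of a possible pure-zero factor. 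Hence the constant is $0$. In the remaining case $\ve a=\ve 0_p$, $\ve b=\ve 0_q$ one checks (\ref{hpro}) directly: the left-hand side equals $\tfrac{1}{p!\,q!}(\log x)^{p+q}$, while $\ve 0_p\shuffle\ve 0_q=\binom{p+q}{p}\ve 0_{p+q}$ gives the same value on the right by a binomial identity. This completes the induction and proves (\ref{hpro}).
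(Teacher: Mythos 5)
Your proof is correct, but it takes a genuinely different route from the paper's. The paper argues by integration: starting from the weight-one case, i.e.\ the insertion formula (\ref{hpro1}), it multiplies the induction hypothesis by $f(a_{p+1},x)$, integrates, performs a partial integration, and then identifies the two resulting shuffle expressions as exactly those words in $\H{\bar{\ve a}}{x}\shuffle\H{\ve b}{x}$ beginning with $a_{p+1}$ respectively $b_q$; note that this step silently invokes the hypothesis for the pair of lengths $(p+1,q-1)$, so the paper's induction is in substance also an induction on total weight, even though it is announced as an induction on $p$ alone. You instead differentiate: the rule $\frac{d}{dx}\H{\gamma\ve w}{x}=f(\gamma,x)\H{\ve w}{x}$ combined with the shuffle recursion makes the derivatives of the two sides coincide by the induction hypothesis, so they differ by a constant, which you pin down at the endpoint $x\to0^+$. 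This is the classical argument for shuffle relations of iterated integrals; it buys a cleaner induction and no partial-integration bookkeeping, and it makes explicit an analytic point the paper glosses over (its boundary terms at $x=0$ rely on the same vanishing facts). The price is that the singular endpoint must be treated honestly, and that is where your two extra ingredients enter: the direct verification for $\ve a=\ve 0_p$, $\ve b=\ve 0_q$ via the binomial identity (correct), and the quantitative claim that a harmonic polylogarithm containing a nonzero letter vanishes like $x$ times powers of $\log x$ as $x\to0^+$, needed to kill a possible pure-zero factor $\frac{1}{q!}\log^q x$ on the product side. That last estimate is asserted rather than proved; it is true, but Remark \ref{finitness}, which you cite, only gives qualitative vanishing of each shuffle term, not the rate. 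You should record the one-line induction that supplies it: if the leftmost letter is nonzero the integrand $f(m_w,y)\H{m_{w-1},\ldots,m_1}{y}$ is $O(\log^k y)$ near $0$ since $f$ is bounded there, and if the leftmost letter is $0$ the inner word still contains a nonzero letter, so by induction the integrand $\frac{1}{y}\cdot O(y\log^k y)$ is again $O(\log^k y)$; integrating gives $O(x\log^{k+1}x)$ in either case, which completes your argument.
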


\begin{proof}
We already showed the case for $p=1$.
Now assume (\ref{hpro}) holds for $p>1:$
\begin{eqnarray}
\H{a_p, a_{p-1},\ldots,a_1}{x}\H{b_q, b_{q-1},\ldots,b_1}{x}=\H{a_p, a_{p-1},\ldots,a_1}{x}\shuffle \H{b_q, b_{q-1},\ldots,b_1}{x}.\nonumber
\end{eqnarray}
Multiply both sides by $f(a_{p+1},x)$ and integrate:
\begin{eqnarray}
&&\int_0^x f(a_{p+1},y) \H{a_p, a_{p-1},\ldots,a_1}{y}\H{b_q, b_{q-1},\ldots,b_1}{y}dy \nonumber \\
& \ \ & =\int_0^x f(a_{p+1},y)\left(\H{a_p, a_{p-1},\ldots,a_1}{y}\shuffle \H{b_q, b_{q-1},\ldots,b_1}{y}\right)dy.\nonumber
\end{eqnarray}
Partial integration of the left hand side gives
\begin{eqnarray}
&&\H{a_{p+1}, a_p,\ldots,a_1}{x}\H{b_q, b_{q-1},\ldots,b_1}{x}\nonumber \\
&&-\int_0^x{f(b_q,y)\left(\H{a_{p+1}, a_p,\ldots,a_1}{y}\shuffle \H{b_{q-1}, b_{q-2},\ldots,b_1}{y}\right)dy}.\nonumber
\end{eqnarray}
Let us abbreviate $\bar{a}=a_{p+1}a_p\cdots a_1,$ hence we may write 
$$\H{\bar{a}}{x} \shuffle \H{b}{x}=\H{a_{p+1}, a_p,\ldots,a_1}{x} \shuffle \H{b_q, b_{q-1},\ldots,b_1}{x}.$$
Expanding the first shuffle product, using the definition of the harmonic polylogarithms and moving it to the right hand side gives a sum of all those harmonic polylogarithms which appear in $\H{\bar{a}}{x} \shuffle \H{b}{x}$ and start with $b_{q}$, while expanding the second shuffle product gives a sum of all those harmonic polylogarithms which appear in $\H{\bar{a}}{x} \shuffle \H{b}{x}$ and start with $a_{p+1}.$ Since all harmonic polylogarithms in $\H{\bar{a}}{x} \shuffle \H{b}{x}$ either start with $b_q$ or $a_{p+1}$ we completed the proof. 
\end{proof}
The following remarks are in place.
\begin{remark}
For harmonic polylogarithms which are finite at $x=0$ and $x=1,$ (\ref{hpro}) can be extended to $x\in [0,1]$; however if one of the harmonic polylogarithms is not finite, (\ref{hpro}) does not hold (compare \cite{Remiddi2000}).
\label{hproexp}
\end{remark}
\begin{remark}
The product of two harmonic polylogarithms of weights $w_1$ and $w_2$ can be expressed as a linear combination of $(w_1+w_2)!/(w_1!w_2!)$ polylogarithms of weight $w=w_1+w_2$.
\end{remark}
\begin{remark}
We say that a harmonic polylogarithm $\H{a_1, a_2,\ldots,a_p}{x}$ has trailing zeros if there is an $i \in \left\{1,2,\ldots,p\right\}$ such that for all $j \in \N$ with $i\leq j \leq p$,  $a_j=0.$ \\
Likewise, we say that a harmonic polylogarithm $\H{a_1, a_2,\ldots,a_p}{x}$ has leading ones if there is an $i \in \left\{1,2,\ldots,p\right\}$ such that for all $j \in \N$ with $1 \leq j \leq i$,  $a_j=1.$  
\end{remark}
We can use (\ref{hpro1}) now to single out terms of $\log{x}$ from harmonic polylogarithms whose indices have trailing zeros. Let $a=0$ in (\ref{hpro1}); then after using the definition of harmonic polylogarithms we get:
\begin{eqnarray}
&&\log(x)\H{m_q,\ldots,m_1}{x}=\H{0,m_q,\ldots,m_1}{x} + \H{m_q,0,m_{q-1},\ldots,m_1}{x}\nonumber\\
&&+\H{m_q,m_{q-1},0,m_{q-2},\ldots,m_1}{x} + \cdots + \H{m_q,\ldots,m_1,0}{x},\nonumber
\end{eqnarray} 
or equivalently:
\begin{eqnarray}
&&\H{m_q,\ldots,m_1,0}{x} = \log(x)*\H{m_q,\ldots,m_1}{x} - \H{0,m_q,\ldots,m_1}{x}\nonumber\\ 
&&-\H{m_q,0,m_{q-1},\ldots,m_1}{x} -\cdots- \H{m_q,\ldots,0,m_1}{x}.\nonumber
\end{eqnarray} 
If $m_1$ is $0$ as well, we can move the last term to the left and can divide by two. This leads to
\begin{eqnarray}
&&\H{m_q,\ldots,0,0}{x} = \frac{1}{2}\left(\log(x)*\H{m_q,\ldots,m_2,0}{x} - \H{0,m_q,\ldots,m_2,0}{x}\right.\nonumber\\ 
&&\left.-\H{m_q,0,m_{q-1},\ldots,m_2,0}{x} -\cdots- \H{m_q,\ldots,0,m_2}{x}\right).\nonumber
\end{eqnarray} 
Now we can use (\ref{hpro1}) for all the other terms, and we get an identity which extracts the logarithmic singularities due to two trailing zeros. We can repeat this strategy as often as needed in order to extract all the powers of $\log(x)$ or equivalently $\H{0}{x}$ from a harmonic polylogarithm.
\begin{example} For $x \in (0,1)$,
\begin{eqnarray}
\H{1, -1, 0, 0}{x}&=&\frac{1}{2} \H{0}{x}^2 \H{1, -1}{x} - \H{0}{x} \H{0, 1, -1}{x} \nonumber\\
										&-& \H{0}{x} \H{1, 0, -1}{x} +\H{0, 0, 1, -1}{x} + \H{0, 1, 0, -1}{x}\nonumber\\
										&+& \H{1, 0, 0, -1}{x}\nonumber\\
										&=&\frac{1}{2} \log^2(x) \H{1, -1}{x} - \log(x) \H{0, 1, -1}{x} \nonumber\\
										&-& \log(x) \H{1, 0, -1}{x} +\H{0, 0, 1, -1}{x} + \H{0, 1, 0, -1}{x}\nonumber\\
										&+& \H{1, 0, 0, -1}{x}.\nonumber
\end{eqnarray}
\end{example}
This can be done with \ttfamily HarmonicSums \rmfamily as follows.
\begin{fmma}
\In \text{\bf RemoveTrailing0[H[1, -1, 0, 0, x]]}\\
\Out {\frac{{\text{H}[0,x]}^2\,\text{H}[1,-1,x]}{2} - \text{H}[0,x]\,\text{H}[0,1,-1,x] - \text{H}[0,x]\,\text{H}[1,0,-1,x] + \text{H}[0,0,1,-1,x] + \text{H}[0,1,0,-1,x] + \text{H}[1,0,0,-1,x]}\\
\end{fmma}

\begin{remark}
In our implementation we prefer to use \textnormal{H[0,x]} instead of \textnormal{Log[x]}.
\end{remark}

\begin{remark}
We can decompose a harmonic polylogarithm $\H{m_q, m_{q-1},\ldots,m_1}{x}$ in a univariate polynomial in $\H{0}{x}$ with coefficients in the harmonic polylogarithms without trailing zeros. If the harmonic polylogarithm has exactly $r$ trailing zeros, the highest power of $\H{0}{x},$ which will appear, is $r.$
\label{remarktrailing}
\end{remark}

Similarly, we can use (\ref{hpro1}) to extract powers of $\log(1-x)$, or equivalently $\H{1}{x}$ from harmonic polylogarithms whose indices have leading ones and hence are singular around $x=1$. Let $a=1$ in (\ref{hpro1}); then after using the definition of harmonic polylogarithms we get:
\begin{eqnarray}
&&\log(1-x)*\H{m_q,\ldots,m_1}{x}=\H{1,m_q,\ldots,m_1}{x} + \H{m_q,1,m_{q-1},\ldots,m_1}{x}\nonumber\\
&&+\H{m_q,m_{q-1},1,m_{q-2},\ldots,m_1}{x} + \cdots + \H{m_q,\ldots,m_1,1}{x},\nonumber
\end{eqnarray} 
or equivalently:
\begin{eqnarray}
&&\H{1,m_q,\ldots,m_1}{x} = \log{(1-x)}*\H{m_q,\ldots,m_1}{x} - \H{m_q,1,m_{q-1},\ldots,m_1}{x}\nonumber\\ 
&&-\H{m_q,m_{q-1},1,\ldots,m_1}{x} -\cdots- \H{m_q,\ldots,m_1,1}{x}.
\label{leadingone}
\end{eqnarray}

If $m_q$ is $1$ as well, we can move the second term to the left, and divide by two. This leads to
\begin{eqnarray}
&&\H{1,1,\ldots,m_1}{x} = \frac{1}{2}\left(\log{(1-x)}*\H{1,m_{q-1},\ldots,m_1}{x} - \H{1,m_{q-1},1,\ldots,m_1}{x}\right.\nonumber\\ 
&&\left.-\cdots- \H{1,m_{q-1},\ldots,m_1,1}{x}\right).\nonumber
\end{eqnarray} 
Now we can use (\ref{hpro1}) for all the other terms and we get an identity which extracts the singularities due to two leading ones. We can repeat this strategy as often as needed in order to extract all the powers of $\log(x-1)$ or equivalently $\H{1}{x}$ from a harmonic polylogarithm.

\begin{example} For $x \in (0,1)$,
\begin{eqnarray}
\H{1, 1, 0, -1}{x}&=&-\H{0, -1}{x} \H{1, 1}{x} - \H{1}{x} \H{0, -1, 1}{x}\nonumber\\
										&-& \H{1}{x} \H{0, 1, -1}{x} + \H{0, -1, 1, 1}{x} \nonumber\\
										&+& \H{0, 1, -1, 1}{x} + \H{0, 1, 1, -1}{x} \nonumber\\
										&=&-\frac{1}{2}*\H{0, -1}{x} (-\log(x-1))^2 + \log(x-1) \H{0, -1, 1}{x}\nonumber\\
										&+& \log(x-1) \H{0, 1, -1}{x} + \H{0, -1, 1, 1}{x} \nonumber\\
										&+& \H{0, 1, -1, 1}{x} + \H{0, 1, 1, -1}{x}. \nonumber
\end{eqnarray}
\end{example}
\begin{fmma}
\In \text{\bf RemoveLeading1[H[1, 1, 0, -1, x]]}\\
\Out {\text{H}[0, -1, x] \, \text{H}[1, 1, x] - \text{H}[1, x] \, \text{H}[0, -1, 1, x] - \text{H}[1, x]\, \text{H}[0, 1, -1, x] + 
  \text{H}[0, -1, 1, 1, x] + \text{H}[0, 1, -1, 1, x] + \text{H}[0, 1, 1, -1, x]}\\
\end{fmma} 

\begin{remark}
As before we prefer to use \textnormal{H[1,x]} instead of \textnormal{Log[1-x]} in our implementation.
\end{remark}

\begin{remark}
We can decompose a harmonic polylogarithm $\H{m_q, m_{q-1},\ldots,m_1}{x}$ in a univariate polynomial in $\H{1}{x}$ with coefficients in the harmonic polylogarithms without leading ones. If the harmonic polylogarithm has exactly $r$ leading ones, the highest power of $\H{1}{x},$ which will appear, is $r.$ So all divergences of harmonic polylogarithms for $x\rightarrow 1$ can be traced back to the basic divergence of $\H{1}{x}=-\log(1-x)$ for $x\rightarrow 1$.
\end{remark}

\begin{remark}
We can combine these two strategies to extract both, leading ones and trialling zeros. Hence, it is always possible to express a harmonic polylogarithm in a bivariate polynomial in $\H{0}{x}=\log(x)$ and $\H{1}{x}=\log(1-x)$ with coefficients in the harmonic polylogarithms without leading ones or trailing zeros, which are continuous on $[0,1]$ and finite at $x=0$ and $x=1.$ 
\end{remark}

\begin{example} For $x \in (0,1)$,
\begin{eqnarray}
\H{1, -1, 0}{x}&=&-\H{0}{x} \H{-1, 1}{x} + \H{1}{x} (\H{-1}{x} \H{0}{x} \nonumber\\
								 &-& \H{0, -1}{x}) + \H{0, -1, 1}{x} \nonumber\\
								 &=&-\log(x) \H{-1, 1}{x} + \log(1-x) (\H{-1}{x} \log(x) \nonumber\\
								 &-& \H{0, -1}{x}) + \H{0, -1, 1}{x} \nonumber
\end{eqnarray}
\end{example}
With \ttfamily HarmonicSums \rmfamily this can be handled as follows.
\begin{fmma}
\In \text{\bf RemoveLeading1[RemoveTrailing0[H[1, -1, 0, x]]]}\\
\Out {\text{H}[0, x] (\text{H}[-1, x]\,\text{H}[1, x] - \text{H}[-1, 1, x]) - \text{H}[1, x]\,\text{H}[0, -1, x] + 
  \text{H}[0, -1, 1, x]}\\
\end{fmma}


\section{Multiple Harmonic Sums at Infinity}
Later we will use the fact that the values of harmonic polylogarithms at $x=1$ are related to the values of multiple harmonic sums at infinity. In order to make this more precise, we first take a closer look at harmonic sums at infinity.\\
Of course, not all multiple harmonic sums are finite at infinity, since for example $\lim_{n\rightarrow \infty} S_1(n)$ does not exist.
In fact, we have the following lemma, compare \cite{Minh2000}:
\begin{lemma}
Let $a_1, a_2, \ldots a_p \in \Z/\{0\}$ for $p \in \N.$
The multiple harmonic sum $S_{a_1,a_2,\ldots,a_p}(n)$ is convergent, when $n\rightarrow \infty$, if and only if $a_1 \neq 1.$
\label{consumlem}
\end{lemma}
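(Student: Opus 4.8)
The plan is to work from the recursive definition $S_{a_1,\ldots,a_p}(n)=\sum_{i=1}^n \frac{\sign{a_1}^i}{i^{\abs{a_1}}}S_{a_2,\ldots,a_p}(i)$ and to reduce the question of convergence as $n\to\infty$ entirely to the behaviour of the outermost factor $\frac{\sign{a_1}^i}{i^{\abs{a_1}}}$. The decisive preliminary is a crude growth bound on the inner sum: by induction on the depth one shows that there is a constant $C$ with $\abs{S_{c_1,\ldots,c_m}(i)}\le C(1+\log i)^{m}$ for all $i\ge 1$. The inductive step is immediate, since $\abs{S_{c_1,\ldots,c_m}(i)}\le\sum_{j=1}^i \frac{1}{j}\abs{S_{c_2,\ldots,c_m}(j)}\le C'\sum_{j=1}^i \frac{(1+\log j)^{m-1}}{j}=O((\log i)^{m})$, using $\frac{1}{j^{\abs{c_1}}}\le\frac1j$. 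Thus every inner sum grows at most polylogarithmically, which is the engine for everything below.

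For the ``if'' direction I would split according to $a_1$. If $\abs{a_1}\ge 2$, each summand is bounded in absolute value by $C\frac{(1+\log i)^{p-1}}{i^{2}}$, and since $\sum_i \frac{(1+\log i)^{p-1}}{i^{2}}<\infty$ the series converges absolutely. If $a_1=-1$, the series is only conditionally convergent and I would invoke Dirichlet's test (Abel summation): the partial sums $\sum_{k=1}^i(-1)^k$ lie in $\{-1,0\}$ and are bounded, while $u_i:=\frac{S_{a_2,\ldots,a_p}(i)}{i}$ tends to $0$ and is of bounded variation. The latter is the only real computation: using $S_{a_2,\ldots,a_p}(i+1)-S_{a_2,\ldots,a_p}(i)=\frac{\sign{a_2}^{i+1}}{(i+1)^{\abs{a_2}}}S_{a_3,\ldots,a_p}(i+1)$ one gets $u_{i+1}-u_i=O\!\left(\frac{(1+\log i)^{p-1}}{i^{2}}\right)$, which is summable, so $\sum_i\abs{u_{i+1}-u_i}<\infty$ and Abel summation yields convergence.

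For the ``only if'' direction assume $a_1=1$, so that $S_{1,a_2,\ldots,a_p}(n)=\sum_{i=1}^n\frac{1}{i}S_{a_2,\ldots,a_p}(i)$. If $p=1$ this is the harmonic number $H_n\to\infty$. If $p\ge2$, the idea is to extract the leading logarithmic behaviour: when $a_2\neq1$ the inner sum converges by the ``if'' direction to $L:=S_{a_2,\ldots,a_p}(\infty)$, and writing $S_{a_2,\ldots,a_p}(i)=L+r_i$ with $\sum_i \abs{r_i}/i<\infty$ (the tail estimate is the same bounded-variation computation as above) gives $S_{1,a_2,\ldots,a_p}(n)=L\,H_n+O(1)$, which diverges precisely when $L\neq0$; the case $a_2=1$ is folded in by applying the same reduction to the leading block of ones, producing growth like a positive power of $\log n$.

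The main obstacle is exactly this last point: establishing that the limiting constant $L$ (equivalently, the coefficient of the top power of $\log n$ in the asymptotics of the inner sum) does not vanish. When all indices are positive, $L$ is a sum of strictly positive terms and nonvanishing is immediate; the delicate part is the alternating case, where $L$ is an Euler-type sum of mixed sign and one must argue that no exact cancellation occurs. Here I would track a leading-coefficient recursion of the form $\lambda_{1\vec b}=\lambda_{\vec b}$, which peels off leading ones and reduces every instance to the nonvanishing of a genuinely convergent limit. By contrast, the convergence half of the statement is entirely routine Abel summation, so essentially all the content sits in the divergence direction and this nonvanishing claim.
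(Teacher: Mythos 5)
The first thing to note is that the paper contains no proof of this lemma at all: it is stated with the pointer ``compare \cite{Minh2000}'', and that reference is invoked elsewhere in the paper only for statements about \emph{positive} indices. So your attempt must stand on its own. Its convergence half does stand: the polylogarithmic bound $\abs{S_{c_1,\ldots,c_m}(i)}\le C(1+\log i)^m$, absolute convergence for $\abs{a_1}\ge 2$, and Abel summation for $a_1=-1$ (bounded partial sums of $(-1)^i$, together with $u_i\to 0$ and $\sum_i\abs{u_{i+1}-u_i}<\infty$) form a correct and essentially complete argument.

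The divergence half, however, has a genuine gap --- one you name yourself but do not close, and it cannot be dismissed as a technicality, because it carries the entire content of the lemma once negative indices are allowed. Your own computation shows that for convergent $\ve b$ (i.e.\ $b_1\neq 1$) one has $S_{1,\ve b}(n)=L\,S_1(n)+\sum_{i\le n}r_i/i$ with $L=S_{\ve b}(\infty)$ and $\sum_i\abs{r_i}/i<\infty$; since the second term converges, $S_{1,\ve b}(n)$ diverges \emph{if and only if} $L\neq 0$. Thus proving divergence is literally equivalent to proving the nonvanishing of $L$, and the recursion $\lambda_{1\ve b}=\lambda_{\ve b}$ you propose only peels leading ones and reduces everything to that same nonvanishing claim --- it contains no mechanism for excluding cancellation. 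For all-positive indices positivity settles it, as you say; but when negative indices occur, $S_{\ve b}(\infty)$ is a sum of Euler--Zagier values of \emph{mixed} signs via the conversion $S_{a,b}=Z_{a,b}+Z_{a\wedge b}$ and its higher-depth analogues: for instance $S_{-2,-3}(\infty)=Z_{-2,-3}(\infty)+\zeta_5$ with $Z_{-2,-3}(\infty)<0<\zeta_5$. So ``no exact cancellation occurs'' is a nontrivial arithmetic statement about rational linear combinations of alternating Euler sums, not something the recursion hands you. A plausible route to finish would be: (i) show that each convergent Euler--Zagier value $Z_{\ve c}(\infty)$ equals, up to an explicit sign, an iterated integral over the simplex of the positive forms $dt/t$, $dt/(1-t)$, $dt/(1+t)$, hence is nonzero; and then (ii) control the signs of \emph{all} terms in the S-to-Z expansion of $S_{\ve b}(\infty)$ so their sum cannot vanish. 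Step (ii) is exactly what is missing; until it (or some substitute) is supplied, your argument proves the lemma only for index vectors all of whose entries are positive.
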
 
  

Similar as for harmonic polylogarithms we say that a harmonic sum $S_{a_1, a_2,\ldots,a_p}(n)$ has leading ones if there is an $i \in \left\{1,2,\ldots,p\right\}$ such that for all $j \in \N$ with $1 \leq j \leq i$, $a_j=1.$ In a similar way as for harmonic polylogarithms we can use (\ref{hsumproduct}) to extract leading ones; for the next example compare \cite[Equ. 3.23]{Bluemlein2004}.

\begin{example} For $n\in\N,$
\begin{eqnarray}
S_{1,1,2}(n)&=&S_{1, 1}(n)S_{2}(n) + \frac 1 2 S_{4}(n) + S_{1}(n) (S_{3}(n) - S_{2, 1}(n)) - S_{3, 1}(n) \nonumber \\
&&+ S_{2, 1, 1}(n) + \frac 1 2 S_{4}(n) - S_{2, 2}(n)\nonumber
\end{eqnarray}  
\end{example}
Again this can be handled with \ttfamily HarmonicSums\rmfamily:
\begin{fmma}
\In \text{\bf SRemoveLeading1[S[1, 1, 2, n]]}\\
\Out {\frac{{\text{S}[1,n]}^2\,\text{S}[2,n]}{2} + \text{S}[1,n]\,\text{S}[3,n] + 
  \frac{\text{S}[4,n]}{2} - \text{S}[1,n]\,\text{S}[2,1,n] - \text{S}[3,1,n] + 
  \text{S}[2,1,1,n]}\\
\end{fmma} 

We will always be able to express a multiple harmonic sum as an expression consisting only of sums without leading ones and sums of the type $S_{\ve 1_w}(n)$ with $w\in \N.$ Since sums without leading ones are convergent, they do not produce any problems. For sums of the type $S_{\ve 1_w}(n)$ we have the following proposition which is a direct consequence of Corollary 3 of \cite{Dilcher1995} and Proposition 2.1 of \cite{Kirschenhofer1996}.

\begin{prop} For $w\geq 1$ and $n\in\N$, we have
$$S_{\ve 1_w}(n)=\sum_{i=1}^n(-1)^{i-1}{n\choose i}\frac{1}{i^w}=-\frac{1}{w}Y_w(\ldots,(j-1)!S_j(n),\ldots),$$
where $Y_w(\ldots,x_i,\ldots)$ are the Bell polynomials.
\end{prop}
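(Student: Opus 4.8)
The plan is to reduce both asserted equalities to a single factorized generating function and then read each one off by a different manipulation.

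First I would observe that, directly from the definition in (\ref{defHsum}), $S_{\ve 1_w}(n)=\sum_{n\ge i_1\ge\cdots\ge i_w\ge1}\frac{1}{i_1\cdots i_w}$ is exactly the complete homogeneous symmetric function $h_w$ evaluated at the $n$ arguments $1,\tfrac12,\ldots,\tfrac1n$. Hence its ordinary generating function factorizes,
$$\sum_{w\ge0}S_{\ve 1_w}(n)\,z^w=\prod_{j=1}^n\frac{1}{1-z/j}=\prod_{j=1}^n\frac{j}{j-z}.$$
This rational function is the central object: everything else follows by manipulating it.

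For the first equality I would decompose the right-hand side into partial fractions. A residue computation at $z=i$ gives $\prod_{j\ne i}\frac{j}{j-i}=(-1)^{i-1}\binom ni$, so that $\prod_{j=1}^n\frac{j}{j-z}=\sum_{i=1}^n(-1)^{i-1}\binom ni\frac{1}{1-z/i}$. Expanding each $\frac{1}{1-z/i}=\sum_{w\ge0}z^w/i^w$ and comparing the coefficient of $z^w$ against the product side yields $S_{\ve 1_w}(n)=\sum_{i=1}^n(-1)^{i-1}\binom ni\frac1{i^w}$. (Alternatively this binomial identity can be shown by induction on $n$ via $\binom ni=\binom{n-1}i+\binom{n-1}{i-1}$ together with the recursion $S_{\ve 1_w}(n)=S_{\ve 1_w}(n-1)+\frac1n S_{\ve 1_{w-1}}(n)$, but the partial-fraction route is cleaner and reuses the same generating function.)

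For the Bell-polynomial form I would take the logarithm of the product. Since $-\log(1-z/j)=\sum_{k\ge1}\frac1k(z/j)^k$, summing over $j$ converts the power sums into single harmonic sums $S_k(n)=\sum_{j=1}^n j^{-k}$, giving
$$\sum_{w\ge0}S_{\ve 1_w}(n)\,z^w=\exp\Big(\sum_{k\ge1}\frac{S_k(n)}{k}\,z^k\Big).$$
Writing $\frac{S_k(n)}{k}z^k=\big((k-1)!\,S_k(n)\big)\frac{z^k}{k!}$ puts the exponent in the standard shape $\sum_k x_k\frac{z^k}{k!}$ with $x_k=(k-1)!\,S_k(n)$, and extracting $[z^w]$ by the exponential formula expresses $S_{\ve 1_w}(n)$ through the Bell polynomials evaluated at the arguments $(j-1)!\,S_j(n)$, which is the asserted closed form. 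I expect the only genuine obstacle to be bookkeeping: matching the normalization and sign of the stated $-\tfrac1w Y_w(\cdots)$ to the constant produced by the coefficient extraction, i.e.\ pinning down exactly which Bell-polynomial convention is in force. The analytic content—the factorized generating function and its logarithm—is routine, and the two cited results (Corollary 3 of \cite{Dilcher1995} and Proposition 2.1 of \cite{Kirschenhofer1996}) supply precisely the two halves, the alternating binomial sum and the Bell-polynomial evaluation, that the identity glues together.
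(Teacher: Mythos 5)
Your argument is correct, but note that it does not run parallel to anything in the paper: the paper gives no proof of this proposition at all, stating only that it is a direct consequence of Corollary 3 of \cite{Dilcher1995} and Proposition 2.1 of \cite{Kirschenhofer1996}. Your generating-function derivation is therefore a genuinely different, self-contained route. Its virtue is economy: the single factorization
$$\sum_{w\ge 0}S_{\ve 1_w}(n)\,z^w=\prod_{j=1}^n\frac{j}{j-z},$$
valid because $S_{\ve 1_w}(n)$ is the complete homogeneous symmetric function of $1,\tfrac{1}{2},\ldots,\tfrac{1}{n}$, delivers both halves of the statement at once: the alternating binomial sum falls out by partial fractions (your residue computation $\prod_{j\ne i}\frac{j}{j-i}=(-1)^{i-1}\binom{n}{i}$ is correct, and the decomposition is legitimate since the rational function is proper with simple poles), while the Bell-polynomial form falls out by taking logarithms. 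What the paper's route buys is brevity and a pointer to the literature; what yours buys is that the two cited results become visibly two readings of one rational function.

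The point you left open --- the normalization --- deserves to be settled, because it concerns the statement rather than your proof. With the standard convention $\exp\bigl(\sum_{k\ge 1}x_k z^k/k!\bigr)=\sum_{w\ge 0}Y_w(x_1,\ldots,x_w)\,z^w/w!$, your computation gives
$$S_{\ve 1_w}(n)=\frac{1}{w!}\,Y_w\bigl(0!\,S_1(n),\,1!\,S_2(n),\,\ldots,\,(w-1)!\,S_w(n)\bigr),$$
i.e.\ prefactor $+\tfrac{1}{w!}$ rather than the stated $-\tfrac{1}{w}$. For $w=3$ this reads $S_{1,1,1}(n)=\tfrac16\bigl(S_1(n)^3+3\,S_1(n)S_2(n)+2\,S_3(n)\bigr)$, in agreement with the formula recorded in the remark inside Example \ref{exgetrel}; by contrast, the stated prefactor already fails at $w=1$, where it would force $S_1(n)=-S_1(n)$. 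So, read with the standard definition of $Y_w$, the constant printed in the proposition cannot be correct (it reflects the convention of the cited source, or a transcription slip), and your derivation is precisely what pins down the right constant. This is a discrepancy to report, not a gap in your argument.
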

Note that the explicit formulas were found empirically in \cite{Bluemlein1999}; for a determinant evaluation formula we refer to \cite{Bluemlein1999,Bruno1881,Berndt1985}.
Using this proposition we can express each sum of the type $S_{\ve 1_w}(n)$ by sums of the form $S_{i}(n),$ where $i\in\Z$. Hence we can decompose each multiple harmonic sum $S_{a_1, a_2,\ldots,a_p}(n)$ in a univariate polynomial in $S_1(n)$ with coefficients in the convergent harmonic sums. So all divergences can be traced back to the basic divergence of $S_1(n)$ when $n\rightarrow\infty$.
\begin{notation}
We will define the symbol $S_1(\infty):=\lim_{n\rightarrow\infty}S_1(n),$ and every time it is present, we are in fact dealing with limit processes. For $k\in N$ with $k>1$ the harmonic sums $S_k(\infty)$ turn into zeta-values and we will sometimes write $\zeta_k$ for $S_k(\infty).$
\end{notation} 
  
\begin{example}
\begin{eqnarray}
\lim_{n\rightarrow\infty}{S_{1,1,2}(n)}&=&\lim_{n\rightarrow\infty}\left\{\frac 1 2 S_{1}(n)^2 S_{2}(n) + S_{1}(n) (S_{3}(n)-S_{2, 1}(n))\right.\nonumber\\
					& &- \left. S_{3, 1}(n) + \frac 1 2 S_{4}(n) + S_{2, 1, 1}(n)\right\}\nonumber \\
					&=& S_1(\infty)^2\frac{\zeta_2}2-S_1(\infty)\zeta_3+\frac{9\zeta_2^2}{10}\nonumber
\end{eqnarray}
\end{example} 
For the computation of the actual values of the harmonic sums at infinity see \cite{Vermaseren1998}.

\section{Values at One and Power Series Expansion of Harmonic Polylogarithms}
Due to trailing zeros in the index, harmonic polylogarithms in general do not have regular Taylor series expansions. To be more precise, the expansion is separated into two parts, one in $x$ and one in $\log(x)$. E.g., it can be seen easily that trailing zeros are responsible for powers of $\log(x)$ in the expansion since \cite{Remiddi2000}:
$$
\H{\ve{0}_k}{x}=\frac{1}{k!}\int_0^x{x^n\log^k(x)}=x^{n+1}\sum_{i=0}^k{\frac{(-1)^{k-i}}{i!}\frac{\log^i(x)}{(m+1)^{k-i+1}}}.
$$
Subsequently, we only consider the case without trailing zeros in more detail. For weight one we get following the well known expansions.
\begin{lemma} For $x\in \left[0\right.,\left.1\right),$
\begin{eqnarray}
\H{1}{x}&=&-\log(1-x)=\sum_{i=1}^\infty{\frac{x^i}{i}},\nonumber\\
\H{-1}{x}&=&\log(1+x)=\sum_{i=1}^\infty{-\frac{(-x)^i}{i}}.\nonumber
\end{eqnarray}
\label{powexp1}
\end{lemma}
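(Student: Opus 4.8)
The plan is to observe that the first equality in each line --- the closed form as a logarithm --- has already been established among the examples following the definition of harmonic polylogarithms, where we computed $\H{1}{x}=\int_0^x \frac{dy}{1-y}=-\log(1-x)$ and $\H{-1}{x}=\int_0^x \frac{dy}{1+y}=\log(1+x)$. Hence only the power series representations remain to be verified, and these reduce to the classical Maclaurin expansions of $-\log(1-x)$ and $\log(1+x)$, obtained here directly from the integral definition.

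First I would expand the integrands as geometric series. For $y$ ranging over $[0,x]$ with $x\in[0,1)$ we have $\frac{1}{1-y}=\sum_{i=0}^\infty y^i$ and $\frac{1}{1+y}=\sum_{i=0}^\infty(-y)^i$. By the Weierstrass $M$-test with majorant $M_i=x^i$ --- noting $\sum_{i=0}^\infty x^i<\infty$ because $x<1$ --- both series converge uniformly on the compact interval $[0,x]$. This uniform convergence on the interval of integration is exactly what justifies interchanging summation and integration in the next step.

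Then I would integrate term by term. For the first series,
\[
\H{1}{x}=\int_0^x\sum_{i=0}^\infty y^i\,dy=\sum_{i=0}^\infty\int_0^x y^i\,dy=\sum_{i=0}^\infty\frac{x^{i+1}}{i+1}=\sum_{i=1}^\infty\frac{x^i}{i},
\]
and for the second,
\[
\H{-1}{x}=\int_0^x\sum_{i=0}^\infty(-1)^i y^i\,dy=\sum_{i=0}^\infty\frac{(-1)^i x^{i+1}}{i+1}=\sum_{i=1}^\infty\frac{(-1)^{i-1}x^i}{i}=\sum_{i=1}^\infty-\frac{(-x)^i}{i},
\]
which are precisely the two claimed expansions.

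There is no genuine obstacle in this argument; the single point requiring care is the justification of term-by-term integration, and this is supplied by the uniform convergence of the geometric series on each compact subinterval $[0,x]\subset[0,1)$. I would stress that the reasoning holds for every fixed $x\in[0,1)$ but breaks down at the endpoint $x=1$, where the geometric series no longer converges uniformly on $[0,1]$; this is consistent with the domain $x\in[0,1)$ stated in the lemma.
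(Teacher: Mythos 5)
Your proof is correct. Note that the paper itself offers no proof of this lemma: it is stated as ``the well known expansions,'' with the logarithmic closed forms already computed in the examples right after the definition of harmonic polylogarithms. Your argument --- expand $1/(1-y)$ and $1/(1+y)$ as geometric series, establish uniform convergence on $[0,x]$ via the Weierstrass $M$-test with majorant $x^i$, and integrate term by term --- is the standard justification and supplies exactly what the paper leaves implicit. It is also consistent in spirit with how the paper handles the higher-weight cases: in Theorem \ref{powexp} the exchange of sum and integral is justified by dominated convergence together with Lemma \ref{ratiotest1} (ratio test plus the criterion of Weierstrass), which is the inductive analogue of the uniform-convergence step you carry out here at weight one. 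Your closing remark about the endpoint is also accurate as far as your method is concerned: the geometric-series argument genuinely fails at $x=1$ (for $\H{-1}{x}$ the series itself still converges there, but establishing that requires Abel's theorem, which lies outside the lemma's stated domain).
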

For higher weights we proceed inductively. If we assume that $\ve m$ has no trailing zeros and that for $x\in[0,1]$ we have
$$
\H{\ve m}{x}=\sum_{i=1}^\infty{\frac{\sigma^i x^i}{i^a}S_{\ve n}(i)},
$$
then for $x\in \left[0\right.,\left.1\right)$ the following theorem holds.
\begin{thm} For $x\in \left[0\right.,\left.1\right),$
\begin{eqnarray}
\H{0,\ve m}{x}&=&\sum_{i=1}^\infty{\frac{\sigma^ix^i}{i^{a+1}}S_{\ve n}(i)},\nonumber\\
\H{1,\ve m}{x}&=&\sum_{i=1}^\infty{\frac{x^i}{i}S_{\sigma a ,\ve n}(i-1)}=\sum_{i=1}^\infty{\frac{x^i}{i}S_{\sigma a ,\ve n}(i)}-
							\sum_{i=1}^\infty{\frac{\sigma^ix^i}{i^{a+1}}S_\ve n(i)},\nonumber\\
\H{-1,\ve m}{x}&=&-\sum_{i=1}^\infty{\frac{(-1)^ix^i}{i}S_{-\sigma a ,\ve n}(i-1)}=-\sum_{i=1}^\infty{\frac{(-1)^ix^i}{i}S_{-\sigma a ,\ve n}(i)}+
							\sum_{i=1}^\infty{\frac{\sigma^ix^i}{i^{a+1}}S_\ve n(i)}.\nonumber
\end{eqnarray}
\label{powexp}
\end{thm}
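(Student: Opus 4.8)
The plan is to peel off the newly prepended index using the integral definition of harmonic polylogarithms and then expand the resulting integrand as a power series, reading off the coefficients as harmonic sums. Since $\ve m \neq \ve 0_w$, the defining recursion gives $\H{c,\ve m}{x}=\int_0^x f(c,y)\,\H{\ve m}{y}\,dy$ for $c\in\{0,1,-1\}$, so I would substitute the inductive hypothesis $\H{\ve m}{y}=\sum_{i=1}^\infty \frac{\sigma^i y^i}{i^a}S_{\ve n}(i)$ into each of the three integrals and treat the three prepended letters separately.

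For $c=0$ the factor $f(0,y)=1/y$ simply lowers each power, and termwise integration yields
\begin{equation}
\H{0,\ve m}{x}=\sum_{i=1}^\infty\frac{\sigma^i}{i^a}S_{\ve n}(i)\int_0^x y^{i-1}\,dy=\sum_{i=1}^\infty\frac{\sigma^i x^i}{i^{a+1}}S_{\ve n}(i),\nonumber
\end{equation}
which is the first claimed identity. For $c=1$ I would write $f(1,y)=\sum_{j\geq 0} y^j$ and form the Cauchy product with the series for $\H{\ve m}{y}$; the coefficient of $y^k$ is $\sum_{i=1}^k \frac{\sigma^i}{i^a}S_{\ve n}(i)$, which by the defining recursion is exactly $S_{\sigma a,\ve n}(k)$, the leading letter $\sigma a$ having sign $\sigma$ and absolute value $a$. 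Integrating termwise and shifting the index via $i=k+1$ gives $\sum_{i=1}^\infty \frac{x^i}{i}S_{\sigma a,\ve n}(i-1)$, where the $i=1$ term is absorbed because $S_{\sigma a,\ve n}(0)=0$. The second displayed form then follows from the one-step recursion $S_{\sigma a,\ve n}(i-1)=S_{\sigma a,\ve n}(i)-\frac{\sigma^i}{i^a}S_{\ve n}(i)$, which splits the sum into the two advertised pieces.

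The case $c=-1$ runs in parallel. Expanding $f(-1,y)=\sum_{j\geq 0}(-1)^j y^j$, the Cauchy coefficient of $y^k$ is $(-1)^k\sum_{i=1}^k\frac{(-\sigma)^i}{i^a}S_{\ve n}(i)=(-1)^k S_{-\sigma a,\ve n}(k)$, since $(-1)^{k-i}\sigma^i=(-1)^k(-\sigma)^i$. The same integration, index shift, and the vanishing of $S_{-\sigma a,\ve n}(0)$ give the first form, and the recursion for $S_{-\sigma a,\ve n}$ together with the identity $(-1)^i(-\sigma)^i=\sigma^i$ produces the second form, recovering the $\sum_{i=1}^\infty \frac{\sigma^i x^i}{i^{a+1}}S_{\ve n}(i)$ summand displayed in the theorem.

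The only genuinely delicate point is justifying the interchange of summation with integration and the formation of the Cauchy product. Fixing $x\in[0,1)$, the series for $\H{\ve m}{y}$ converges absolutely on $[0,x]$ by hypothesis, while the geometric series for $f(\pm1,y)$ converges absolutely and uniformly on $[0,x]$ because $x<1$; hence their product equals the Cauchy product and may be integrated term by term. For $c=0$ one needs only the mild observation that $\H{\ve m}{y}/y$ extends continuously to $y=0$, as the series for $\H{\ve m}{y}$ starts at $i=1$, so the factor $1/y$ introduces no singularity. With these convergence facts secured, every step above is mechanical, and I expect the bookkeeping of signs in the $c=-1$ case to be the most error-prone rather than conceptually hard.
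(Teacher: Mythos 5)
Your proposal is correct and follows essentially the same route as the paper: substitute the inductive series into the integral definition, expand $f(\pm 1,y)$ as a geometric series, recognize the Cauchy coefficient as $S_{\sigma a,\ve n}(k)$ (resp.\ $(-1)^k S_{-\sigma a,\ve n}(k)$), integrate termwise, and shift the index using $S_{\cdot,\ve n}(0)=0$. The only cosmetic difference is the justification of the interchange of sum and integral (you invoke absolute convergence and the Cauchy product directly, the paper cites dominated convergence backed by its uniform-convergence lemma), which does not change the substance of the argument.
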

In order to prove this theorem, we need the following lemma:
\begin{lemma}
Let $a \in \N,\sigma \in \{-1,1\}$ and $\ve n$ be a vector with integer entries. Then
$$
\lim_{i\rightarrow \infty}{\left|\frac{S_{\ve n}(i+1)}{S_{\ve n}(i)}\right|}=1,
$$
and the power series
$$
\sum_{i=1}^\infty x^i \frac{\sigma^i S_{\ve n}(i)}{i^a}
$$
is uniformly convergent for $x \in (0,1).$
\label{ratiotest1}
\end{lemma}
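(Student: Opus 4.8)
The plan is to treat the two assertions separately, since the uniform convergence follows from a crude size estimate alone, while the ratio statement is the more delicate point. First I would record that estimate. Writing $\ve n=(n_1,\ldots,n_k)$ and using $\abs{n_j}\ge 1$ for every entry, I replace each factor $\sign{n_j}^{i_j}/i_j^{\abs{n_j}}$ by $1/i_j$ and drop the signs; since the weakly decreasing tuples form a subset of all ordered tuples and every term is positive, this yields
\[
\abs{S_{\ve n}(i)}\le \sum_{i\ge i_1\ge\cdots\ge i_k\ge 1}\frac{1}{i_1\cdots i_k}\le\Big(\sum_{j=1}^i\tfrac1j\Big)^k\le(1+\log i)^k,
\]
so that $\abs{S_{\ve n}(i)}=O((\log i)^k)$.

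For the convergence I would then invoke the Weierstrass $M$-test. On any compact interval $[0,b]\subset[0,1)$ the general term of the series is dominated by $M_i:=b^i(1+\log i)^k/i^a$, and $\sum_i M_i<\infty$ because the geometric factor $b^i$ outweighs the polylogarithmic growth. Hence the series converges absolutely and uniformly on every compact subinterval of $(0,1)$, equivalently its radius of convergence is at least $1$; this is the sense in which it is ``uniformly convergent for $x\in(0,1)$'' (strict uniformity on the full open interval already fails for $\sum x^i/i$, so local uniformity is the correct reading).

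For the ratio statement I would use the defining recursion to isolate the last summand. With $\ve n'=(n_2,\ldots,n_k)$ (and $S_{()}\equiv 1$ when $k=1$),
\[
\frac{S_{\ve n}(i+1)}{S_{\ve n}(i)}=1+\frac{\sign{n_1}^{\,i+1}}{(i+1)^{\abs{n_1}}}\,\frac{S_{\ve n'}(i+1)}{S_{\ve n}(i)},
\]
so it suffices to show the correction term tends to $0$. Its numerator satisfies $\abs{S_{\ve n'}(i+1)}/(i+1)^{\abs{n_1}}=O((\log i)^{k-1}/i)\to 0$ by the crude bound and $\abs{n_1}\ge 1$. It then remains to bound $\abs{S_{\ve n}(i)}$ away from $0$ for large $i$: if $n_1=1$ then by Lemma \ref{consumlem} the sum diverges, and since the outer weights $1/j$ are positive one has $\abs{S_{\ve n}(i)}\to\infty$; if $n_1\ne 1$ then by the same lemma the sum converges, and I would use that its limit is nonzero. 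In both cases the correction term vanishes and the quotient tends to $1$.

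The hard part is exactly this lower bound in the convergent case, where one needs the limiting value, an alternating Euler/multiple-zeta value, to be nonzero. When all indices are positive it is immediate, since then every summand is positive and $S_{\ve n}(i)$ increases to a strictly positive limit. When negative indices occur I would either appeal to the known nonvanishing of these constants or, by induction on the depth, track a nonzero leading coefficient in the asymptotics of $S_{\ve n}(i)$ (noting that Lemma \ref{consumlem} already rules out cancellation to a finite limit whenever $n_1=1$); this nonvanishing is the only genuinely nontrivial input to the argument.
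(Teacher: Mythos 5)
Your argument is sound in structure and takes a genuinely different route from the paper's. The paper proves the ratio limit first, by cases: if $\ve n$ has no leading one it invokes Lemma \ref{consumlem} to say $\lim_{i\rightarrow\infty}S_{\ve n}(i)$ exists and declares the ratio statement clear; if $\ve n$ has leading ones it extracts them and reduces to the single computation for $\ve n=(1)$. It then obtains the convergence claim from the pointwise ratio test combined with an appeal to the Weierstrass criterion. You decouple the two claims: convergence follows from your crude bound $\abs{S_{\ve n}(i)}=O((\log i)^k)$ together with the $M$-test on compact subintervals, with no reference to the ratio limit, and the ratio limit follows from the recursion $S_{\ve n}(i+1)=S_{\ve n}(i)+\sign{n_1}^{i+1}(i+1)^{-\abs{n_1}}S_{\ve n'}(i+1)$, which reduces it to bounding $\abs{S_{\ve n}(i)}$ away from zero. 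Your route buys two genuine improvements. First, your convergence proof is more careful than the paper's: the ratio test is a pointwise statement, and no $x$-independent dominating sequence exists on all of $(0,1)$ (as you note, already $\sum_i x^i/i$ fails to converge uniformly there), so the paper's one-line Weierstrass appeal only makes sense with the local reading you spell out. Second, your reduction isolates exactly the point on which both proofs ultimately rest and which the paper's ``so it is clear'' hides: in the convergent case one needs $\lim_{i\rightarrow\infty}S_{\ve n}(i)\neq 0$, and in the divergent case one needs the inner sums to be eventually of one sign, which is the same nonvanishing question one level down. You name this as the one nontrivial input but do not close it; the paper does not close it either, so you are not behind the paper in rigor. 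Be aware, though, that the ``known nonvanishing'' of alternating Euler sums is not a theorem you can cite in general, so of your two proposed remedies only the inductive tracking of a nonzero leading asymptotic coefficient is a viable completion --- and that is also what is needed to make your claim $\abs{S_{\ve n}(i)}\rightarrow\infty$ in the $n_1=1$ case fully rigorous, since divergence alone (Lemma \ref{consumlem}) does not by itself exclude bounded oscillation.
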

\begin{proof}
If $\ve n$ does not have a leading one then $lim_{i\rightarrow \infty}{S_{\ve n}(i)}$ exists, so it is clear that the first statement holds. On the other hand if $\ve n$ has leading ones, we can first extract them, and we are only left to show the statement for $\ve n=(1):$
$$
\lim_{i\rightarrow \infty}{\left|\frac{S_{1}(i+1)}{S_{1}(i)}\right|}=\lim_{i\rightarrow \infty}{\frac{\frac{1}{i+1}+S_1(i)}{S_1(i)}}=
\lim_{i\rightarrow \infty}{\frac{1}{(1+i)S_1(i)}+1}=1.
$$
In order to prove the second statement we apply the ratio test \cite[p. 205]{Heuser2003}:
\begin{eqnarray}
\left|\frac{\frac{x^{i+1}\sigma^{i+1} S_\ve n(i+1)}{(i+1)^a}}{\frac{x^i\sigma^i S_\ve n(i)}{i^a}}\right|= 
\left|\frac{xS_\ve n(i+1)i^a}{(i+1)^aS_\ve n(i)}\right|=x\underbrace{\frac{i^a}{(i+1)^a}}_{\stackrel{i\rightarrow \infty}{\longrightarrow}1}
\underbrace{\left|\frac{S_\ve n(i+1)}{S_\ve n(i)}\right|}_{\stackrel{i\rightarrow\infty}{\longrightarrow}1}\stackrel{i\rightarrow\infty}{\longrightarrow}x.\nonumber
\end{eqnarray} 
Hence using the criterion of Weierstrass \cite[p. 555]{Heuser2003} the power series converges uniformly for $x \in (0,1)$.
\end{proof}

Now we can give the proof of Theorem \ref{powexp}.
\begin{proof}
From the definition of harmonic polylogarithms and the \textit{theorem of dominated convergence of Lebesgue} (see e.g. \cite{Temme1996}), which we can apply due to Lemma \ref{ratiotest1} and which will allow us the exchanges of sum and integral, we get:
\begin{eqnarray}
\H{0,\ve m}{x}&=&\int_0^x{\frac{1}{y}\sum_{i=1}^\infty{\frac{\sigma^i y^i}{i^a}S_\ve n(i)}dy}\nonumber\\
							&=&	\sum_{i=1}^\infty{\frac{\sigma^i}{i^a}S_\ve n(i)\int_0^x{y^{i-1}dy}}\nonumber\\
							&=&\sum_{i=1}^\infty{\frac{\sigma^ix^i}{i^{a+1}}S_\ve n(i)},\nonumber
\end{eqnarray}
\begin{eqnarray}					
\H{1,\ve m}{x}&=&\int_0^x{\frac{1}{1-y}\sum_{i=1}^\infty{\frac{\sigma^i y^i}{i^a}S_\ve n(i)}dy}=																											\int_0^x{\left(\sum_{k=0}^\infty{y^k}\right)\left(\sum_{i=1}^\infty{\frac{\sigma^i y^i}{i^a}S_\ve n(i)}\right) dy}\nonumber\\
					&=&\int_0^x{\sum_{i=0}^\infty{y^{i+1}\sum_{k=0}^i{\frac{\sigma^{k+1}}{(k+1)^a}S_\ve n(k+1)}}dy}\nonumber\\	
					&=&\int_0^x{\sum_{i=0}^\infty{y^{i+1}S_{\sigma a,\ve n}(i+1)}dy}=\int_0^x{\sum_{i=1}^\infty{y^{i}S_{\sigma a,\ve n}(i)}dy}\nonumber\\					  &=&\sum_{i=1}^\infty{\frac{x^{i+1}}{i+1}S_{\sigma a,\ve n}(i)}=\sum_{i=0}^\infty{\frac{x^{i+1}}{i+1}S_{\sigma a,\ve n}(i)}\nonumber\\
					&=&\sum_{i=1}^\infty{\frac{x^{i}}{i}S_{\sigma a,\ve n}(i-1)}.\nonumber
\end{eqnarray}
The third part follows analogously to the second part.				
\end{proof}

\begin{example} For $x\in \left[0,1\right],$
$$
\H{-1,1}x=\sum_{i = 1}^{\infty}\frac{x^i}{i^2} - \sum_{i = 1}^{\infty}\frac{{\left( -x \right) }^i\,\text{S}_{-1}(i)}{i}.
$$
\end{example}
With  \ttfamily HarmonicSums \rmfamily this job can be done as follows.
\begin{fmma}
\begin{mma}
\In \text{\bf HToS[H[-1, 1, x]]}\\
\Out {\H{-1,1}x=\sum_{i = 1}^{\infty}\frac{x^i}{i^2} - \sum_{i = 1}^{\infty}\frac{{\left( -x \right) }^i\,\text{S}[-1,i]}{i}}\\
\end{mma}
\begin{mma}
\In \text{\bf HToS[H[-1, 1, 0, 1, x]]}\\
\Out {-\left( \sum_{i = 1}^{\infty}\frac{x^i}{i^4} \right)  + \sum_{i = 1}^{\infty}\frac{{\left( -x \right) }^i\,\text{S}[-3,i]}{i} + 
  \sum_{i = 1}^{\infty}\frac{x^i\,\text{S}[2,i]}{i^2} - \sum_{i = 1}^{\infty}\frac{{\left( -x \right) }^i\,\text{S}[-1,2,i]}{i}}\\
\end{mma}
\end{fmma} 
If we want to analyse the values of harmonic polylogarithms at $x=1,$ we can first extract trailing zeros. According to Remark \ref{remarktrailing} we get a polynomial in $\log(x)$ with coefficients in the harmonic polylogarithms without trailing zeros. If we send $x$ to one and if the coefficients are continuous for $x\in [0,1]$ then only the constant term (the term without a power of $\log(x)$) will remain, since $\log(1)=0$. Note that he coefficients may not be continuous for $x \in [0,1]$ since they can contain harmonic polylogarithms with leading ones. However, according to the following lemma even the terms with coefficients which contain harmonic polylogarithms with leading ones will vanish if they contain a power of $\log(x)$ and if we send $x$ to one. Hence we only have to look at those functions without trailing zeros.
\begin{lemma}
For $p, q \in \N$,
\begin{equation}
\lim_{x\rightarrow1}{\log^p(x)\log^q(1-x)}=0\nonumber
\end{equation}
\end{lemma}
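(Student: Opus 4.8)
The plan is to move the singularity to the origin by the substitution $t=1-x$ and then invoke the standard fact that a power beats a logarithm. Writing $t=1-x$, as $x\to 1^-$ we have $t\to 0^+$, and the product in question becomes $\log^p(1-t)\,\log^q(t)$. Thus it suffices to compute $\lim_{t\to 0^+}\log^p(1-t)\,\log^q(t)$.

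First I would isolate the two competing factors. The factor $\log^q(t)$ diverges (to $\pm\infty$), while $\log^p(1-t)$ tends to $0$. To make the vanishing quantitative, I use the elementary limit $\lim_{t\to 0^+}\log(1-t)/t=-1$, so that $\log(1-t)=-t(1+o(1))$ and hence $\log^p(1-t)=(-1)^p\,t^p(1+o(1))$ as $t\to 0^+$. Substituting this, the problem reduces to computing $(-1)^p\lim_{t\to 0^+} t^p\log^q(t)$, up to the harmless factor $(1+o(1))$.

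The core step is then to establish $\lim_{t\to 0^+} t^p\log^q(t)=0$. I would prove this by the further substitution $t=e^{-s}$ with $s\to+\infty$, which turns the expression into $(-1)^q\,s^q e^{-ps}$; since $e^{ps}$ grows faster than any power $s^q$, this tends to $0$. Equivalently, one may apply l'Hôpital's rule $q$ times to the quotient $\log^q(t)/t^{-p}$. Combining this with the previous paragraph yields the desired conclusion $\lim_{x\to1}\log^p(x)\log^q(1-x)=0$.

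There is essentially no serious obstacle here: the only subtlety is the $0\cdot\infty$ indeterminate form, which the reduction to $t^p\log^q(t)$ together with the exponential-versus-polynomial comparison resolves cleanly. One small point worth noting is that the statement is for $p,q\in\N=\{1,2,\ldots\}$, so in particular $p\geq 1$; this guarantees that the power $t^p$ genuinely dominates the logarithmic factor, and no degenerate case arises.
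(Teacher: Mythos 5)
Your proof is correct. The paper's own proof is a single sentence: it invokes de l'Hospital's rule together with induction, presumably applied repeatedly to the indeterminate quotient $\log^q(1-x)/\log^{-p}(x)$ to lower the exponents step by step. Your route is genuinely different in its mechanics: you substitute $t=1-x$, replace $\log(1-t)$ by its asymptotic equivalent $-t\,(1+o(1))$, and thereby reduce the claim to the single standard fact $\lim_{t\to 0^+}t^p\log^q(t)=0$, which you settle by the substitution $t=e^{-s}$ and the exponential-versus-polynomial comparison (mentioning l'H\^opital only as an alternative for this last step). What your approach buys is the elimination of any induction and of repeated differentiation of powers of logarithms: the linearization $\log(1-t)\sim -t$ does all the work at once, and the remaining limit is a textbook fact. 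What the paper's approach buys is brevity of statement, at the cost of leaving the inductive bookkeeping to the reader. Two small points in your write-up are worth keeping exactly as they are: the observation that $(1+o(1))^p$ is bounded, so it cannot disturb the limit, and the remark that $p\geq 1$ (from $p\in\N$) is precisely where positivity of the dominating power enters; both are the places where a sloppier argument could go wrong, and you handle them correctly.
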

\begin{proof}
The proof follows by de l'Hospital's rule (see e.g. \cite[p. 287]{Heuser2003}) and induction.
\end{proof}
As worked out earlier, the expansion of the harmonic polylogarithms without trailing zeros is a combination of sums of the form:
$$
\sum_{i=1}^\infty x^i \frac{\sigma^i S_\ve n(i)}{i^a},\ \sigma \in \{-1,1\}, \ a \in \N.
$$
For $x\rightarrow1$ these sums turn into harmonic sums at infinity if $\sigma a\neq1$:
$$
\sum_{i=1}^\infty x^i \frac{\sigma^i S_\ve n(i)}{i^a}\rightarrow S_{\sigma a,\ve n}(\infty).
$$
If $\sigma a=1,$ these sums turn into
$$
\sum_{i=1}^\infty x^i \frac{S_\ve n(i)}{i}.
$$
Sending $x$ to one gives:
$$
\lim_{x\rightarrow1}\sum_{i=1}^\infty x^i \frac{S_\ve n(i)}{i}=\infty
$$
We see that these limits do not exist: this corresponds to the infiniteness of the harmonic sums with leading ones: $\lim_{k\rightarrow \infty}S_{1,\ve n}(k)=\infty;$ see Lemma \ref{consumlem}.
Hence the values of harmonic polylogarithms at one are related to the values of the multiple harmonic sums at infinity.
\begin{example}
$$
\H{-1,1,0}{1}=-2S_{3}(\infty)+S_{-1,-2}(\infty)+S_{-2,-1}(\infty)
$$
\end{example}
Using \ttfamily HarmonicSums \rmfamily we can carry out this evaluation at 1 automatically.
\begin{fmma}
\begin{mma}
\In \text{\bf RemoveTrailing0[H[-1, 1, 0, x]]}\\
\Out {\text{H}[0, x]\, \text{H}[-1, 1, x] - \text{H}[-1, 0, 1, x] - \text{H}[0, -1, 1, x]}\\
\end{mma} 
\begin{mma}
\In \text{\bf $\%$ /. H[0, x]$\rightarrow$ 0}\\
\Out {-\text{H}[-1, 0, 1, x] - \text{H}[0, -1, 1, x]}\\
\end{mma} 
\begin{mma}
\In \text{\bf $\%$ // HToS}\\
\Out {-2\,\left( \sum_{i = 1}^{\infty}\frac{x^i}{i^3} \right)  + \sum_{i = 1}^{\infty}\frac{{\left( -x \right) }^i\,\text{S}[-2,i]}{i} + 
  \sum_{i = 1}^{\infty}\frac{{\left( -x \right) }^i\,\text{S}[-1,i]}{i^2}}\\
\end{mma}
\noindent We send x to 1. Afterwards we rewrite the output in terms of harmonic sums, this is done by using the function TransformToSSums.
\begin{mma}
\In \text{\bf $\%$ /. x $\rightarrow$ 1 // TransformToSSums}\\
\Out {-2\,\text{S}[3,\infty] + \text{S}[-2,-1,\infty] + \text{S}[-1,-2,\infty]}\\
\end{mma}
\noindent We give a slightly bigger example: 
\begin{mma}
\In \text{\bf HToS[H[-1, 0, 1, 0, 0, 1, x]] /. x $\rightarrow$ 1 // TransformToSSums}\\
\Out {-\text{S}[6,\infty] + \text{S}[-1,-5,\infty] + \text{S}[3,3,\infty] - \text{S}[-1,-2,3,\infty]}\\
\end{mma}
\end{fmma}

\begin{remark}
Details on the procedure \normalfont{TransformToSSums} can be found in Chapter~\ref{Summation of Multiple Harmonic Sums}.
\end{remark}
 
There is one additional aspect, which causes difficulties when values at $x=1$ are considered. As mentioned earlier (see Remark \ref{hproexp}) the product formula for harmonic polylogarithms (\ref{hpro}) does not hold if one of the objects is divergent. So we have to be careful when we work with polylogarithms at $x=1$ which are divergent. We state the same example as in \cite{Remiddi2000} to illustrate this problem (the subleading terms will be wrong):\\
From Lemma \ref{powexp1} we have
\begin{eqnarray}
\H1x&=&\sum_{i=1}^\infty\frac{x^i}i,\nonumber
\end{eqnarray}
using (\ref{hpro}) (here we require that $x<1$) and Theorem \ref{powexp} we get
\begin{eqnarray}
(\H1x)^2&=&2\H{1,1}x=2\sum_{i=1}^\infty x^i \left(\frac{S_1(i)}i-\frac{1}{i^2}\right).\nonumber
\end{eqnarray}
Sending $x$ to one we get
\begin{eqnarray}
\H11&=&\lim_{x\rightarrow 1}{\sum_{i=1}^\infty\frac{x^i}i}=S_1(\infty),\nonumber \\
2\H{1,1}1&=&\lim_{x\rightarrow 1}2\sum_{i=1}^\infty x^i \left(\frac{S_1(i)}i-\frac{1}{i^2}\right)=2(S_{1,1}(\infty)-S_2(\infty))\nonumber \\
				&=&(S_1(\infty))^2-S_2(\infty).\nonumber
\end{eqnarray}
We see  that
$$
(\lim_{x\rightarrow 1}\H1x)^2 \neq \lim_{x\rightarrow 1}(\H1x)^2.
$$
We can however use multiple harmonic sums to solve this problem, since there the product also holds for sums at infinity. Namely, we can replace a sum at infinity by a sum with upper bound $N$ where $N$ is large but finite. Then we can use the shuffle algebra to cancel possible divergencies and afterwards we send $N$ to infinity. Following this approach we get
$$
(\lim_{x\rightarrow 1}\H1x)^2=(S_1(\infty))^2=2S_{1,1}(\infty)-S_2(\infty)=2\lim_{x\rightarrow 1}\H{1,1}x+\lim_{x\rightarrow 1}\H2x.
$$
This way is consistent and it will allow us to define the Mellin transform properly in the next section. Summarizing, we can again express all divergencies by the basic divergency $S_1(\infty),$ which is equal to $\lim_{x\rightarrow 1}{\H1x}$.

\section{The Mellin-Transform of Harmonic Polylogarithms}

In order to accomplish differentiation of harmonic sums, we look at the so called Mellin-transform of harmonic polylogarithms; compare \cite{Paris2001}.
\begin{definition}[Mellin-Transform]
Let $f(x)$ be a locally integrable function on $(0,1)$ and $n \in \R$. Then the Mellin transform is defined by:
\begin{eqnarray}
\M{f(x)}{n}&=&\int_0^1{x^nf(x)dx}, \ \textnormal{when the integral converges.}\nonumber
\label{abmell}
\end{eqnarray}
\end{definition}

\begin{remark}
A locally integrable function on $(0,1)$ is one that is absolutely integrable on all closed subintervals of $(0,1)$.
\end{remark}

For $f(x)=1/(1-x)$ the Mellin transform is not defined since the integral $\int_0^1\frac{x^n}{1-x}$ does not converge. Like in \cite{Remiddi2000} we extend the definition of the Mellin transform to include functions like $1/(1-x)$ as follows

\begin{definition}
Let $f(x)$ be a locally integrable function on $(0,1)$ and let $g(x)$ be a harmonic polylogarithm which is finite at $x=1,$ and let $p\in \N.$ Then we define  
\begin{eqnarray}
\Mp{f(x)}{n}&=&\M{f(x)}{n}=\int_0^1{x^nf(x)dx},\nonumber\\
\Mp{\frac{g(x)}{1-x}}n&=&\int_0^1{\frac{x^n g(x)-g(1)}{1-x}dx},\nonumber\\
\Mp{\frac{g(x)\log^p(1-x)}{1-x}}n&=&\int_0^1{\frac{(x^n g(x)-g(1))\log^p(1-x)}{1-x}dx}\nonumber
\label{abmellplus}
\end{eqnarray}
when the integrals converge. The function $f$ is supposed to be a harmonic polylogarithm without leading ones when the factor $1/(1-x)$ is present.
\end{definition}

According to the definition, if we want to compute the Mellin-transform of harmonic polylogarithms, we have to take care of harmonic polylogarithms which are not finite at $x=1$. As mentioned in Remark \ref{finitness} only harmonic polylogarithms with a leading one (and not followed only by zeros) are not finite at $x=1$. By using equation (\ref{leadingone}) we can extract all the powers of $\log(1-x)$ which cause the infiniteness at $x=1.$ After this extraction the remaining harmonic polylogarithms are finite at $x=1.$ The following lemma will guarantee the existence of the Mellin transform of harmonic polylogarithms.

\begin{lemma}
For a harmonic polylogarithm $\H{\ve m}x$ the integrals $$\int_0^1{x^n\H{\ve m}x dx} \ \textnormal{and} \ \int_0^1{\frac{x^n\H{\ve m}x dx}{1+x}}$$ converge for $x\in[0,1]$ and $n\in\R$. If $\H{\ve m}x$ does not have leading ones and $p \in \N$ then $$\int_0^1{\frac{x^n \H{\ve m}x-\H{\ve m}1}{1-x}dx} \ \textnormal{and} \ \int_0^1{\frac{(x^n \H{\ve m}x-\H{\ve m}1)\log^p(1-x)}{1-x}dx}$$ converge for $x\in[0,1]$ and $n\in\R$.
\end{lemma}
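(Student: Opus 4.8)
The plan is to localize the problem at the two endpoints. Since harmonic polylogarithms lie in $C^\infty((0,1))$, each of the four integrands is continuous on the open interval $(0,1)$, and the factors $1/(1+x)$, $1/(1-x)$ and $\log^p(1-x)$ are smooth there as well. Hence convergence can only fail at $x=0$ and $x=1$, and I would treat each endpoint separately, feeding in the asymptotic behaviour of $\H{\ve m}{x}$ extracted in the previous section. Throughout I write $w$ for the weight of $\ve m$ and use the decomposition established in the remarks following (\ref{leadingone}), which expresses $\H{\ve m}{x}$ as a bivariate polynomial in $\log(x)$ and $\log(1-x)$ whose coefficients are harmonic polylogarithms without leading ones or trailing zeros, and are therefore continuous, hence bounded, on the compact interval $[0,1]$.

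At the endpoint $x=0$ the factor $\log(1-x)$ tends to $0$ while the coefficients stay bounded, so the decomposition yields $\H{\ve m}{x}=O(|\log x|^{w})$. Then $x^{n}|\log x|^{w}$ is integrable at the origin (for $n>-1$ the logarithmic factor is harmless), and the remaining factors $1/(1\pm x)$ and $\log^p(1-x)$ are bounded near $0$; thus all four integrands are integrable on a neighbourhood of $0$. For the first two integrals at $x=1$ the same decomposition gives $\H{\ve m}{x}=O(|\log(1-x)|^{w})$ (now the powers of $\log x$ vanish), and since every power of $|\log(1-x)|$ is integrable on $[0,1]$ and $1/(1+x)\in[\tfrac12,1]$ is bounded, both $\int_0^1 x^{n}\H{\ve m}{x}\,dx$ and $\int_0^1 x^{n}\H{\ve m}{x}/(1+x)\,dx$ converge.

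The delicate case, and the main obstacle, is the endpoint $x=1$ for the last two integrals, where $1/(1-x)$ is genuinely non-integrable. Here the hypothesis that $\ve m$ has no leading ones enters twice: by Remark \ref{finitness} the value $\H{\ve m}{1}$ is finite, and the outermost weight factor $f(m_w,x)$ equals $1/x$ or $1/(1+x)$, hence is bounded near $x=1$. To exploit this I would split the regularized numerator as
$$
x^{n}\H{\ve m}{x}-\H{\ve m}{1}=x^{n}\bigl(\H{\ve m}{x}-\H{\ve m}{1}\bigr)+(x^{n}-1)\H{\ve m}{1},
$$
where the second term is $O(1-x)$ as $x\to1$.

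For the first term I would write $\H{\ve m}{x}-\H{\ve m}{1}=-\int_x^1 f(m_w,t)\,\H{m_{w-1},\ldots,m_1}{t}\,dt$, bound $f(m_w,t)$ on $[\tfrac12,1]$, and apply the endpoint estimate $\H{m_{w-1},\ldots,m_1}{t}=O(|\log(1-t)|^{w-1})$ to the inner polylogarithm of weight $w-1$; combined with $\int_x^1|\log(1-t)|^{w-1}\,dt=O\bigl((1-x)|\log(1-x)|^{w-1}\bigr)$ this shows $\H{\ve m}{x}-\H{\ve m}{1}=O\bigl((1-x)|\log(1-x)|^{w-1}\bigr)$. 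Dividing by $1-x$ leaves a bound $O(|\log(1-x)|^{w-1})$ for the third integrand and $O(|\log(1-x)|^{w-1+p})$ for the fourth, and since any power of $|\log(1-x)|$ is integrable at $1$, both integrals converge. The only genuine work is this vanishing-rate estimate near $x=1$; it is exactly the no-leading-ones assumption that keeps $f(m_w,\cdot)$ bounded and lets the explicit factor $(1-x)$ cancel the singularity of $1/(1-x)$.
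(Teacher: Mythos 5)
Your proof is correct, and for the two genuinely singular integrals it takes a different route than the paper. For the first two integrals both arguments coincide: extract leading ones and trailing zeros, bound the continuous coefficients by their maximum, and use integrability of powers of $\log(x)$ and $\log(1-x)$. The difference is at the key step, the integrals weighted by $1/(1-x)$. The paper argues softly: it shows via de l'Hospital's rule that $\bigl(x^n\H{\ve m}x-\H{\ve m}1\bigr)/(1-x)^{1/2}\to 0$ as $x\to 1$ (with a sign case distinction, an extraction of inner leading ones inside the limit, and an induction on $p$ for the fourth integral), and then dominates the integrand by the integrable function $(1-x)^{-1/2}$. You instead prove a quantitative modulus-of-continuity estimate: writing $\H{\ve m}x-\H{\ve m}1=-\int_x^1 f(m_w,t)\H{m_{w-1},\ldots,m_1}t\,dt$, using that the no-leading-ones hypothesis makes $f(m_w,\cdot)$ bounded near $1$, and integrating the $O(|\log(1-t)|^{w-1})$ growth of the inner polylogarithm, you get $\H{\ve m}x-\H{\ve m}1=O\bigl((1-x)|\log(1-x)|^{w-1}\bigr)$, so after division by $1-x$ the integrands grow only logarithmically. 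This is sharper than the paper's $o\bigl((1-x)^{-1/2}\bigr)$ bound, it absorbs the extra factor $\log^p(1-x)$ with no induction on $p$, and it avoids both the l'Hospital differentiation chain and the sign case analysis; the paper's comparison argument, on the other hand, needs less bookkeeping about the precise growth order. One shared caveat: both your proof and the paper's implicitly need $n>-1$ for integrability of $x^n$ at the origin (you at least flag this parenthetically), even though the lemma as stated allows all $n\in\R$.
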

\begin{proof}
Let $n \in \R$ and consider $\int_0^1{x^n\H{\ve m}x dx}.$ After extracting trailing zeros and leading ones we get a combination of integrals of the form:
$$
\int_0^1{x^n\log^a(x)\H{\overline{\ve m}}x \log^b(1-x) dx}, \ a,b \in \N_0, 
$$
where $\H{\overline{\ve m}}x$ is continuous on [0,1].
It suffices to prove, that these integrals converge. $\abs{\H{\overline{\ve m}}{x}}$ is continuous on $[0,1]$ and hence has a maximum $C$ on $[0,1]$. Thus
\begin{eqnarray}
\abs{\int_0^1{x^n\log^a(x)\H{\overline{\ve m}}x\log^b(1-x)dx}}&\leq&\int_0^1{x^n\abs{\log^a(x)}
				\underbrace{\abs{\H{\overline{\ve	m}}x}}_{\leq C}\abs{\log^b(1-x)} dx}\nonumber\\
		&\leq& C \int_0^1{x^n\abs{\log^a(x)}\abs{\log^b(1-x)} dx}\nonumber\\
		&=&\lim_{\epsilon\rightarrow0}{C\int_{\epsilon}^{\frac{1}{2}}{\underbrace{x^n}_{\leq1}\abs{\log^a(x)} 																		 \underbrace{\abs{\log^b(1-x)}}_{\leq \abs{\log{\frac{1}{2}}}^b=:C_1} dx}}\nonumber\\
	  &&+	\lim_{\epsilon\rightarrow0}{C\int_{\frac{1}{2}}^{1-\epsilon}{\underbrace{x^n}_{\leq 1}\underbrace{\abs{\log^a(x)}}_{\leq 								\abs{\log{\frac{1}{2}}}^a=:C_2}\abs{\log^b(1-x)} dx}}\nonumber\\
	  &\leq&CC_1\lim_{\epsilon\rightarrow0}{\int_{\epsilon}^{\frac{1}{2}}\abs{\log^a(x)}dx}\nonumber\\
	  		&& +CC_2\lim_{\epsilon\rightarrow0}{\int_{\frac{1}{2}}^{\epsilon}\abs{\log^b(1-x)}dx}.\nonumber
\end{eqnarray}
Since the last two integrals converge, we finished this part of the proof. \\
The convergence of the second integral $\int_0^1{\frac{x^n\H{\ve m}x dx}{1+x}}$ can be proved analogously.
From now on we assume that $\H{\ve m}x$ does not have leading ones, hence $\H{\ve m}1$ is finite.
If we find $x_1\in \R$ with $0<x_1<1$ such that for all $x\in (x_1,1)$ we have
\begin{equation}
\abs{\frac{x^n \H{\ve m}x-\H{\ve m}1}{1-x}}\leq(1-x)^{-\frac{1}{2}},
\label{ungl}
\end{equation}
the convergence of $\int_0^1{\frac{x^n \H{\ve m}x-\H{\ve m}1}{1-x}dx}$ is proved; this follows since the integral $\int_0^1{(1-x)^{-\frac{1}{2}}}$ converges.
Since $x^n \H{\ve m}x-\H{\ve m}1$ is continuous on $[0,1],$ there is $x_2$ with $0<x_2<1$ such that $x^n \H{\ve m}x-\H{\ve m}1$ is either $\leq 0$ or $\geq 0$ on $[x_2,1].$ We will prove the $\geq 0$ case, the other case follows analogously. For $x\in (x_2,1)$ we want to show $(\ref{ungl})$ or equivalently: 
$$
\frac{x^n \H{\ve m}x-\H{\ve m}1}{(1-x)^{\frac{1}{2}}}\leq 1.
$$
Let $\ve m=(m_1,m_2,\ldots)$ with $m_1\in \{0,-1\}$ by assumption. With de l'Hospital's rule, see e.g. \cite[p. 287]{Heuser2003}, we get: 
\begin{eqnarray*}
\lim_{x\rightarrow 1}\frac{x^n \H{\ve m}x-\H{\ve m}1}{(1-x)^{\frac{1}{2}}}&=&\lim_{x\rightarrow 1}\frac{n x^{n-1} \H{\ve m}x-\frac{x^n}{1-m_1}\H{m_2,m_3,\ldots}x}{-\frac{1}{2}(1-x)^{-\frac{1}{2}}}\\&=&\lim_{x\rightarrow 1}\left(-2 n x^{n-1}(1-x)^{\frac{1}{2}}\H{\ve m}x\right)\\
&&+\lim_{x\rightarrow 1}\left(2\frac{x^n}{1-m_1}(1-x)^{\frac{1}{2}}\H{m_2,m_3,\ldots}x\right).
\end{eqnarray*}
Since $\H{m}1$ is finite, the first limit exists and is zero. If $m_2\neq1$ then $\H{m_2,m_3,\ldots}1$ exists and so the second limit is finite and equals zero. If $m_2=1$, we can extract leading ones and split the limit. We will get limits of the form
$$
\lim_{x\rightarrow 1}\left(2\frac{x^n}{x-m_1}(1-x)^{\frac{1}{2}}\log{(1-x)}^k \H{\ve a,\ldots}x\right)
$$
where $\H{\ve a,\ldots}x$ is finite at 1 and $k\in\N$. Since $\lim_{x\rightarrow 1}(1-x)^{\frac{1}{2}}\log{(1-x)}^k=0,$ all these limits vanish. Hence there is $x_1\in [x_2,1]$ such that (\ref{ungl}) holds and so $\int_0^1{\frac{x^n \H{\ve m}x-\H{\ve m}1}{1-x}dx}$ converges.

For the last integral we want to find again $x_1$ such that   
\begin{equation}
\abs{\frac{(x^n \H{\ve m}x-\H{\ve m}1)\log^p(1-x)}{(1-x)^{\frac{1}{2}}}}\leq 1
\label{ungl2}
\end{equation}
holds for all $x$ with $x_1<x<1.$ Again we can find $x_2,$ with $0<x_2<1$ such that $(x^n \H{\ve m}x-\H{\ve m}1)\log^p(1-x)$ is either $\leq 0$ or $\geq 0$ on $(x_2,1).$\\
We will prove the $\geq 0$ case, the other case follows analogously. Using again de l'Hospital's rule we get
\begin{eqnarray*}
&&\lim_{x\rightarrow 1}\frac{(x^n \H{\ve m}x-\H{\ve m}1)\log^p(1-x)}{(1-x)^{\frac{1}{2}}}=\\
\ \ \ \ &&\lim_{x\rightarrow 1}\left(2(1-x)^{\frac{1}{2}}\frac{k(x^n \H{\ve m}x-\H{\ve m}1)\log^{p-1}(1-x)}{1-x}\right)\\
\ \ \ \ &&-\lim_{x\rightarrow 1}\left(2(1-x)^{\frac{1}{2}}\log^k{(1-x)}\left(n x^{n-1} \H{\ve m}x-\frac{x^n}{1-m_1}\H{m_2,m_3,\ldots}x\right)\right).
\end{eqnarray*}
For the second limit we can use the same arguments as we used in the proof of the convergence of the third integral to show that its limit is zero. For the first limit we can use induction on $p$ to see that the limit is zero too. Hence there is $x_1\in [x_2,1]$ such that (\ref{ungl2}) holds and so $\int_0^1{\frac{(x^n \H{\ve m}x-\H{\ve m}1)\log^p(1-x)}{1-x}dx}$ converges.
\end{proof}

\begin{remark}
From now on we will call the extended Mellin transform $M^+$ just Mellin transform and we will write $M$ instead of $M^+.$
\end{remark}

\begin{example} For $x\in [0,1]$ and $n\in \R$  
\begin{eqnarray}
\M{\H{1,0}{x}}{n}&=&\int_0^1{x^n\H{1,0}x dx},\nonumber\\
\M{\frac{\H{0,1}{x}}{1-x}}{n}&=&\int_0^1{\frac{x^n \H{0,1}x-\H{0,1}1}{1-x}dx},\nonumber\\
\M{\frac{\H{1,0}{x}}{1-x}}{n}&=&\M{\frac{\H{0}x\H{1}x}{1-x}-\frac{\H{0,1}x}{1-x}}{n},\nonumber\\
&=&\M{\frac{\H{0}x\H{1}x}{1-x}}{n}-\M{\frac{\H{0,1}x}{1-x}}{n},\nonumber\\
&=&\int_0^1{\frac{(x^n \H{0,1}x-\H{0,1}1)\log{(1-x)}}{1-x}dx}-\int_0^1{\frac{x^n \H{0,1}x-\H{0,1}1}{1-x}dx};\nonumber
\end{eqnarray}
compare also \cite{Bluemlein1999}.
\end{example}

Now we want to study how we can actually calculate the Mellin transform of harmonic polylogarithms. We will proceed recursively. It is quite easy to get the Mellin transforms of harmonic polylogarithms with weight one \cite{Remiddi2000}; see also \cite{Bluemlein1999}.
\begin{lemma} For $n\in \N$ we have
\begin{eqnarray}
\M{\H{0}{x}}n&=& -\frac{1}{(n+1)^2}, \nonumber\\
\M{\H{1}{x}}n&=& \frac{S_1(n+1)}{n+1}, \nonumber\\
\M{\H{-1}{x}}n&=& (-1)^n\frac{S_{-1}(n+1)}{n+1}+\frac{\H{-1}1}{n+1}(1+(-1)^n). \nonumber
\end{eqnarray}
\label{weight1mel}
\end{lemma}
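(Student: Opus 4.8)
The plan is to treat the three weight-one cases separately: integration by parts for the logarithm, and term-by-term integration of the known weight-one power series (Lemma~\ref{powexp1}) for the other two, with the interchange of sum and integral justified by the uniform convergence from Lemma~\ref{ratiotest1}.

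First I would handle $\M{\H{0}{x}}{n}=\int_0^1 x^n\log(x)\,dx$ by a single integration by parts with $u=\log x$ and $dv=x^n\,dx$. The boundary term $x^{n+1}\log(x)/(n+1)$ vanishes at both endpoints (at $x=1$ because $\log 1=0$, at $x=0$ because $n+1>0$), leaving $-\frac{1}{n+1}\int_0^1 x^n\,dx=-\frac{1}{(n+1)^2}$.

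For $\M{\H{1}{x}}{n}$ I would insert $\H{1}{x}=\sum_{i=1}^\infty x^i/i$ from Lemma~\ref{powexp1} and exchange summation and integration; since the summands are nonnegative the interchange is immediate, but it is in any case covered by the uniform convergence in Lemma~\ref{ratiotest1}. This yields $\sum_{i=1}^\infty \frac{1}{i(n+i+1)}$, and writing $m=n+1$ the partial-fraction identity $\frac{1}{i(i+m)}=\frac{1}{m}\left(\frac{1}{i}-\frac{1}{i+m}\right)$ turns the series into a telescoping sum of value $\frac{1}{m}\sum_{i=1}^m \frac1i=\frac{S_1(n+1)}{n+1}$.

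The case $\M{\H{-1}{x}}{n}$ is the genuinely delicate one, and I expect the sign and index bookkeeping there to be the main obstacle. Using $\H{-1}{x}=\sum_{i=1}^\infty (-1)^{i-1}x^i/i$ and again Lemma~\ref{ratiotest1} to justify the interchange, I obtain $\sum_{i=1}^\infty \frac{(-1)^{i-1}}{i(i+m)}$ with $m=n+1$. The same partial-fraction split gives $\frac{1}{m}\bigl(\sum_{i\ge1}\frac{(-1)^{i-1}}{i}-\sum_{i\ge1}\frac{(-1)^{i-1}}{i+m}\bigr)$, where the first sum is $\log 2=\H{-1}{1}$. For the second sum I would shift the index by $k=i+m$, producing $(-1)^{m+1}\sum_{k>m}\frac{(-1)^k}{k}=(-1)^{m+1}\bigl(-\log2-S_{-1}(m)\bigr)$, having split off the partial sum $S_{-1}(m)=\sum_{k=1}^m (-1)^k/k$ from $\sum_{k\ge1}(-1)^k/k=-\log2$. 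The crux is recognising this tail of the alternating harmonic series as $S_{-1}(m)$ and tracking the factor $(-1)^{m+1}=-(-1)^m$. Substituting back and simplifying with $m=n+1$, so that $(-1)^m=-(-1)^n$, collects everything into $(-1)^n\frac{S_{-1}(n+1)}{n+1}+\frac{\H{-1}{1}}{n+1}(1+(-1)^n)$, which is the asserted formula.
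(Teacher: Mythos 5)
Your proposal is correct, and it actually does more work than the paper: the paper proves only the first identity (by exactly your integration by parts, with the same vanishing boundary term) and dismisses the remaining two with ``we just give a proof of the first identity.'' Your treatment of $\M{\H{1}{x}}{n}$ and $\M{\H{-1}{x}}{n}$ via the weight-one expansions of Lemma~\ref{powexp1}, termwise integration, the partial-fraction split $\frac{1}{i(i+m)}=\frac{1}{m}\bigl(\frac{1}{i}-\frac{1}{i+m}\bigr)$, and the index shift identifying the tail of the alternating series as $-\H{-1}{1}-S_{-1}(n+1)$ is sound; I checked the sign bookkeeping, and $(-1)^{m+1}=(-1)^n$ with $m=n+1$ does collect the terms into the asserted formula. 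The implicit alternative, suggested by the paper's pattern and by its later Lemma~\ref{melnotweighted}, is to integrate by parts in all three cases, which for $\H{1}{x}$ forces a regularized limit $\epsilon\to1$ because the boundary term and the integral $\int_0^\epsilon\frac{x^{n+1}}{(n+1)(1-x)}dx$ diverge separately; your series route avoids that cancellation-of-infinities argument entirely, at the cost of justifying an interchange of sum and integral. On that point, one small refinement: Lemma~\ref{ratiotest1} only gives uniform convergence on the open interval $(0,1)$, which by itself does not license termwise integration over $[0,1]$; for $\H{1}{x}$ your observation that the terms are nonnegative (monotone convergence) settles it, and for $\H{-1}{x}$ you should instead note that the alternating series converges uniformly on the closed interval $[0,1]$ (Leibniz bound on the remainder) or that the partial sums are uniformly bounded, so dominated convergence applies. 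This is the same level of rigor the paper itself uses in the proof of Lemma~\ref{grundlemma}, so it is a cosmetic repair, not a gap.
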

\begin{proof}
We just give a proof of the first identity. By using partial integration we get for $n\in \N:$
\begin{eqnarray}
\M{\H{0}{x}}n&=&\int_0^1{x^n\H{0}x dx}=\left.\frac{x^{n+1}}{n+1}\H{0}{x}\right|_0^1-\int_0^1{\frac{x^{n+1}}{n+1}\frac{1}{x}dx}\nonumber\\
&=&-\int_0^1{\frac{x^n}{n+1}dx}=\left.-\frac{x^{n+1}}{(n+1)^2}\right|_0^1=-\frac{1}{(n+1)^2}.\nonumber
\end{eqnarray}
\end{proof}
The higher weight results can now be obtained by recursion; see \cite{Remiddi2000,Bluemlein1999}.
\begin{lemma} For $n\in \N$ and $\ve m\in \left\{0,-1,1 \right\}^k$,
\begin{eqnarray}
\M{\H{0,\ve m}{x}}n &=&\frac{\H{0,\ve m}1}{n+1}-\frac{1}{n+1} \int_0^1 x^n \H{\ve m}{x}dx, \nonumber\\
\M{\H{1,\ve m}{x}}n &=&\frac{\H{1,\ve m}1}{n+1}+\frac{1}{n+1}\left(  \int_0^1 x^n \H{\ve m}{x}dx - \int_0^1 \frac{x^n}{1-x} \H{\ve m}{x}dx\right), \nonumber\\
\M{\H{-1,\ve m}{x}}n &=&\frac{\H{-1,\ve m}1}{n+1}-\frac{1}{n+1}\left( \int_0^1 x^n \H{\ve m}{x}dx - \int_0^1 \frac{x^n}{1+x} \H{\ve m}{x}dx\right). \nonumber																										
\end{eqnarray}
\label{melnotweighted}
\end{lemma}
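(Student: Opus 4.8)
The plan is to derive all three recursions by a single device: integration by parts in the Mellin integral $\M{\H{c,\ve m}{x}}{n}=\int_0^1 x^n\H{c,\ve m}{x}\,dx$ for $c\in\{0,1,-1\}$, exploiting the defining derivative relation $\frac{d}{dx}\H{c,\ve m}{x}=f(c,x)\H{\ve m}{x}$, i.e.\ $\frac{d}{dx}\H{0,\ve m}{x}=\frac1x\H{\ve m}{x}$, $\frac{d}{dx}\H{1,\ve m}{x}=\frac{1}{1-x}\H{\ve m}{x}$ and $\frac{d}{dx}\H{-1,\ve m}{x}=\frac{1}{1+x}\H{\ve m}{x}$. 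Taking $v=\frac{x^{n+1}}{n+1}$ as antiderivative of $x^n$, each integration by parts produces a boundary term $\frac{x^{n+1}}{n+1}\H{c,\ve m}{x}\big|_0^1$ together with $-\frac{1}{n+1}\int_0^1 x^{n+1}f(c,x)\H{\ve m}{x}\,dx$; the convergence of every integral that occurs is guaranteed by the preceding convergence lemma. The boundary contribution at $x=0$ always vanishes, since $\H{c,\ve m}{0}=0$ when $(c,\ve m)\neq\ve 0$ by Remark~\ref{finitness}, and the factor $x^{n+1}$ suppresses the logarithmic growth in the purely-zero case.

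For the cases $c=0$ and $c=-1$ this is essentially immediate. With $c=0$ the boundary value $\frac{1}{n+1}\H{0,\ve m}{1}$ is finite (no leading one) and the remaining integrand simplifies via $\frac{x^{n+1}}{x}=x^{n}$, giving the first formula at once. With $c=-1$ the boundary value $\frac{1}{n+1}\H{-1,\ve m}{1}$ is again finite, and the only work is the algebraic split $\frac{x^{n+1}}{1+x}=x^{n}-\frac{x^{n}}{1+x}$, which turns $-\frac{1}{n+1}\int_0^1\frac{x^{n+1}}{1+x}\H{\ve m}{x}\,dx$ into $-\frac{1}{n+1}\bigl(\int_0^1 x^n\H{\ve m}{x}\,dx-\int_0^1\frac{x^n}{1+x}\H{\ve m}{x}\,dx\bigr)$, which is exactly the third formula.

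The case $c=1$ is the genuine obstacle, and where I would spend the care. Here the naive boundary term $\frac{1}{n+1}\H{1,\ve m}{1}$ is divergent whenever $\H{\ve m}{1}\neq0$, and likewise $\int_0^1\frac{x^n}{1-x}\H{\ve m}{x}\,dx$ diverges like $\int\frac{dx}{1-x}$ near $x=1$; only their difference is finite, and equals the (convergent, by the preceding lemma) left-hand side. The clean way to see this is to integrate by parts with the shifted antiderivative $v=\frac{x^{n+1}-1}{n+1}$ instead. Then every term is individually finite: the boundary term at $x=1$ is controlled by $(x-1)\log^p(1-x)\to0$, the one at $x=0$ vanishes, and one obtains the manifestly convergent expression $\M{\H{1,\ve m}{x}}{n}=\frac{1}{n+1}\int_0^1\frac{1-x^{n+1}}{1-x}\H{\ve m}{x}\,dx=\frac{1}{n+1}\int_0^1(1+x+\cdots+x^{n})\H{\ve m}{x}\,dx$.

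Finally I would match this finite form to the stated recursion by recombining the two divergent pieces on the right. Writing $\H{1,\ve m}{1}=\int_0^1\frac{\H{\ve m}{y}}{1-y}\,dy$ formally, the difference $\frac{1}{n+1}\H{1,\ve m}{1}-\frac{1}{n+1}\int_0^1\frac{x^n}{1-x}\H{\ve m}{x}\,dx$ collapses to the convergent integral $\frac{1}{n+1}\int_0^1\frac{1-x^n}{1-x}\H{\ve m}{x}\,dx$, and adding the remaining $\frac{1}{n+1}\int_0^1 x^n\H{\ve m}{x}\,dx$ reproduces $\frac{1}{n+1}\int_0^1(1+x+\cdots+x^n)\H{\ve m}{x}\,dx$ via $\frac{1-x^n}{1-x}+x^n=\frac{1-x^{n+1}}{1-x}$. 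This confirms the second formula, the identity being read in the regularized sense of the extended transform $\Mp{\cdot}{n}$ in which the leading-one divergence is carried consistently by $S_1(\infty)=\lim_{x\to1}\H{1}{x}$, exactly as set up in the preceding section. An equivalent fully rigorous route avoiding any symbolic divergence is to carry out the integration by parts on $[0,1-\epsilon]$ and let $\epsilon\to0$, observing that the divergent boundary term and the divergent part of the $1/(1-x)$-integral cancel in the limit.
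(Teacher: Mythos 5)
Your proposal is correct and uses essentially the same method as the paper: integration by parts on $\int_0^1 x^n\H{c,\ve m}{x}\,dx$ via $\frac{d}{dx}\H{c,\ve m}{x}=f(c,x)\H{\ve m}{x}$ with antiderivative $\frac{x^{n+1}}{n+1}$, followed by the elementary splits of $x^{n+1}/(1\pm x)$; in particular, your fallback argument for $c=1$ (integrate by parts on $[0,1-\epsilon]$ and let the divergent boundary term and the divergent part of the $1/(1-x)$-integral cancel, reading $\H{1,\ve m}{1}$ as a regularized limit) is precisely what the paper does with its $\lim_{\epsilon\rightarrow 1}\int_0^{\epsilon}$. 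Your primary device for $c=1$ --- the shifted antiderivative $v=\frac{x^{n+1}-1}{n+1}$, which keeps every intermediate quantity finite and produces $\frac{1}{n+1}\int_0^1(1+x+\cdots+x^n)\H{\ve m}{x}\,dx$ before recombining --- is a slightly more careful treatment than the paper's, which splits the limit into individually divergent pieces and relies on the regularized reading of $\H{1,\ve m}{1}$ stated in the remark following the lemma.
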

\begin{proof}
We get the following results by partial integration:
\begin{eqnarray}
\int_0^1 x^n \H{0,\ve m}{x}dx &=& \left. \frac{x^{n+1}}{n+1}\H{0,\ve m}{x}\right|_0^1-\int_0^1 \frac{x^n}{n+1}\H{\ve m}{x}dx\nonumber\\
													&=&\frac{\H{0,\ve m}1}{n+1}-\frac{1}{n+1} \int_0^1 x^n \H{\ve m}{x}dx, \nonumber\\
\int_0^1 x^n \H{1,\ve m}{x}dx &=&\lim_{\epsilon \rightarrow 1} \int_0^{\epsilon} x^n \H{1,\ve m}{x}dx \nonumber\\
									&=&\lim_{\epsilon \rightarrow 1}\left( \left. \frac{x^{n+1}}{n+1}\H{1,\ve m}{x}\right|_0^{\epsilon}-\int_0^{\epsilon} 																					\frac{x^{n+1}}{(n+1)(1-x)}\H{\ve m}{x}dx \right) \nonumber\\
													&=&\frac{\H{1,\ve m}1}{n+1}-\frac{1}{n+1} \int_0^1 x^n (-1+\frac{1}{1-x})\H{\ve m}{x}dx, \nonumber\\
\int_0^1 x^n \H{-1,\ve m}{x}dx&=& \left. \frac{x^{n+1}}{n+1}\H{-1,\ve m}{x}\right|_0^1-\int_0^1 \frac{x^{n+1}}{(n+1)(1+x)}\H{\ve m}{x}dx\nonumber\\
													&=&\frac{\H{-1,\ve m}1}{n+1}-\frac{1}{n+1} \int_0^1 x^n (1-\frac{1}{1+x})\H{\ve m}{x}dx. \nonumber											\end{eqnarray}
\end{proof}

\begin{remark}
We want to point out again that $\H{1,\ve m}{1}:=\lim_{x \rightarrow 1}{\H{1,\ve m}{x}}$ and hence we are dealing with limit processes.
\end{remark}

As we could see in the previous lemma, we will need the results for harmonic polylogarithms weighted by $1/(1 + x)$ or $1/(1-x)$. We start with harmonic polylogarithms with weight one; see \cite{Remiddi2000,Bluemlein1999}.

\begin{lemma}For $n\in\N$ and $x\in(0,1),$
\begin{eqnarray}
\M{\frac{\H{0}{x}}{1-x}}{n} &=& S_2(n)-\zeta_2,\nonumber\\
\M{\frac{\H{0}{x}}{1+x}}{n} &=& (-1)^n(-\frac{\zeta_2}{2}-S_{-2}(n)),\nonumber\\
\M{\frac{\H{-1}{x}}{1-x}}{n} &=&\frac{\log^2(2)-\zeta_2}{2}+\log(2)(S_{-1}(n)-S_{1}(n))+S_{-1,-1}(n),\nonumber\\
\M{\frac{\H{-1}{x}}{1+x}}{n} &=&(-1)^n\left(\frac{\log^2(2)}{2}+\log(2)(S_{-1}(n)-S_{1}(n))-S_{1,-1}(n)\right),\nonumber\\
\M{\frac{\H{1}{x}}{1-x}}{n}&=&-S_{1,1}(n),\nonumber\\
\M{\frac{\H{1}{x}}{1+x}}{n}&=&(-1)^n\left(S_{-1,1}(n)+\frac{\zeta_2-\log^2(2)}{2}\right).\nonumber																		
\end{eqnarray}
\label{grundlemma}
\end{lemma}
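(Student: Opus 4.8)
The plan is to reduce each of the six extended Mellin transforms to an ordinary convergent integral by unwinding the definition of $M^+$, to expand the weight $1/(1\mp x)$ into a (geometric) series, to interchange summation and integration, and finally to evaluate everything in terms of a few elementary ``building-block'' integrals whose values are essentially harmonic sums. The interchange of sum and integral is legitimate by the same dominated-convergence argument already used in the proof of Theorem~\ref{powexp} (compare Lemma~\ref{ratiotest1}).

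First I would record the building blocks. The first identity of Lemma~\ref{weight1mel} already gives $\int_0^1 x^m\H{0}{x}\,dx=-1/(m+1)^2$. Expanding $\H{1}{x}$ and $\H{-1}{x}$ as power series (Lemma~\ref{powexp1}), integrating term by term and resumming by partial fractions yields
\begin{align*}
\int_0^1 x^m \H{1}{x}\,dx &= \frac{S_1(m+1)}{m+1}, \\
\int_0^1 x^m \H{-1}{x}\,dx &= \frac{1}{m+1}\bigl(\log 2+(-1)^m(\log 2+S_{-1}(m+1))\bigr).
\end{align*}
These three integrals drive all six computations.

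The cases with the factor $1/(1+x)$, together with the two $1/(1-x)$ cases, then split into a routine part and a constant part. For $\H{0}{x}$ one has $\H{0}{1}=0$, so the subtraction in $M^+$ is void; writing $1/(1-x)=\sum_{k\ge0}x^k$ (resp.\ $1/(1+x)=\sum_{k\ge0}(-1)^k x^k$) and reindexing $j=n+k+1$ turns the first integral into $S_2(n)-\zeta_2$ (resp.\ $(-1)^n(-\zeta_2/2-S_{-2}(n))$, once $S_{-2}(\infty)=-\zeta_2/2$ is inserted). For $\H{1}{x}/(1-x)$ I would use the third line of the definition with $g\equiv-1$, $p=1$, so that the subtraction produces the \emph{finite} geometric sum $\frac{1-x^n}{1-x}=\sum_{k=0}^{n-1}x^k$ and hence $-\sum_{k=0}^{n-1}\int_0^1 x^k\H{1}{x}\,dx=-\sum_{k=1}^{n}S_1(k)/k=-S_{1,1}(n)$. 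For $\H{-1}{x}/(1-x)$, where $\H{-1}{1}=\log 2\neq0$, I would split $x^n\H{-1}{x}-\log 2=(x^n-1)\H{-1}{x}+(\H{-1}{x}-\log 2)$; the first piece again telescopes to a finite block-integral sum giving $\log 2\,(S_{-1}(n)-S_1(n))+S_{-1,-1}(n)$, and the two $1/(1+x)$ formulas are obtained identically, the infinite tails collapsing into alternating sums at infinity.

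The main obstacle is not the reindexing but the evaluation of the $n$-independent constants, i.e.\ the weight-two alternating sums at infinity $S_{-1,1}(\infty)$, $S_{1,-1}(\infty)$, $S_{-1,-1}(\infty)$, equivalently the single dilogarithm value $\mathrm{Li}_2(1/2)=\tfrac12\zeta_2-\tfrac12\log^2 2$. For instance the constant term of $\H{-1}{x}/(1-x)$ is exactly $\int_0^1\frac{\H{-1}{x}-\log 2}{1-x}\,dx=-\mathrm{Li}_2(1/2)=\tfrac12(\log^2 2-\zeta_2)$, which produces the leading constant in the third formula, and the analogous constants $\tfrac12(\zeta_2-\log^2 2)$ appear in the $1/(1+x)$ cases. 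These are precisely the harmonic sums at infinity treated in the previous section and may be quoted from \cite{Vermaseren1998}; deriving them from scratch (e.g.\ via the reflection formula for the dilogarithm) is the only step that goes beyond bookkeeping of finite harmonic sums.
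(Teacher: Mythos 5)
Your proposal is correct and follows essentially the same route as the paper's own proof: both expand the weight factors $1/(1\mp x)$ into geometric series, interchange summation and integration by dominated convergence, reduce everything to the weight-one integrals of Lemma~\ref{weight1mel}, and fix the $n$-independent constants via weight-two values at infinity (equivalently $\mathrm{Li}_2(1/2)=\tfrac12(\zeta_2-\log^2 2)$, which the paper takes from Vermaseren's tables). The only difference is a simplification in bookkeeping rather than a new idea: where the paper subtracts $x^{i-n}\H{-1}{1}$ (resp.\ $x^{i-n}\H{1}{x}$) term by term inside an infinite series and then cancels divergent partial sums in a limit $N\to\infty$, you telescope $(1-x^n)/(1-x)$ into the finite sum $\sum_{k=0}^{n-1}x^k$ (splitting off an $n$-independent convergent integral in the $\H{-1}{x}$ case), so no conditionally cancelling divergences ever appear.
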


\begin{proof}
We use again partial integration and Lemma \ref{weight1mel}. In addition we use tables by Vermaseren (see \cite{Vermaseren1998} and \cite{Remiddi2000}) where we can find values of multiple harmonic sums at infinity and harmonic polylogarithms at one. To verify the exchange of integrals and summations we used the \textit{theorem of dominated convergence of Lebesgue} (see e.g. \cite{Temme1996}):
\begin{eqnarray}
\M{\frac{\H{0}{x}}{1-x}}{n} &=& \int_0^{1} \frac{x^n \H{0}{x}}{1-x}dx= \int_0^{1} \sum_{i=n}^{\infty}x^i \H{0}{x}dx\nonumber\\
                              &=& \sum_{i=n}^{\infty} \int_0^{1} x^i \H{0}{x}dx=-\sum_{i=n}^{\infty}\frac{1}{(i+1)^2}\nonumber\\
                              &=& -\sum_{i=n+1}^{\infty}\frac{1}{i^2}=S_2(n)-S_2(\infty),\nonumber\\                                
\M{\frac{\H{0}{x}}{1+x}}{n}   &=& \int_0^{1} \frac{x^n \H{0}{x}}{1+x}dx=\int_0^{1}(-1)^n\sum_{i=n}^{\infty}(-x)^i \H{0}{x}dx\nonumber\\
                              &=& (-1)^n \sum_{i=n}^{\infty} \int_0^{1}(-x)^i \H{0}{x}dx=(-1)^n 																																					\sum_{i=n}^{\infty}\frac{(-1)^{i+1}}{(i+1)^2}\nonumber\\
                              &=& \sum_{i=n+1}^{\infty}\frac{(-1)^i}{i^2}=(-1)^n(S_{-2}(\infty)-S_{-2}(n)), \nonumber\\
\M{\frac{\H{-1}{x}}{1-x}}{n}&=& \int_0^{1}\frac{x^n \H{-1}{x}-\H{-1}1}{1-x}dx\nonumber\\
														&=& \int_0^{1}\sum_{i=n}^{\infty}(x^i \H{-1}{x}-x^{i-n}\H{-1}1)dx\nonumber
\end{eqnarray}
\begin{eqnarray} 
                               &=&\sum_{i=n}^{\infty} \int_0^{1}(x^i\H{-1}{x}-x^{i-n}\H{-1}1)dx\nonumber\\
                               &=&\sum_{i=n}^{\infty}\left(\int_0^{1} x^i\H{-1}{x}dx-\int_0^{1} x^{i-n}\H{-1}1dx \right)\nonumber\\
                               &=& \lim_{N\rightarrow \infty}\sum_{i=n}^{N}\left( \frac{S_{-1}(i+1)(-1)^i+\H{-1}1((-1)^i+1)}{i+1}
                                         -\frac{\H{-1}1}{i-n+1}\right)\nonumber\\
                               &=& \lim_{N\rightarrow \infty}\left(\sum_{i=n}^{N} \frac{S_{-1}(i+1)(-1)^i+\H{-1}1((-1)^i+1)}{i+1}
                                         -\sum_{i=1}^{N-n}\frac{\H{-1}1}{i}\right)\nonumber\\
  														 &=& \lim_{N\rightarrow \infty}\left(S_{-1,-1}(n)+\H{-1}1(S_{-1}(n)-S_1(n)) 																															-S_{-1,-1}(N+1)\right.\nonumber\\
  														 &&-\H{-1}1(S_{-1}(N+1)-S_1(N+1))\left.-\H{-1}1S_1(N-n)\right)\nonumber\\
                               &=& S_{-1,-1}(n)+\H{-1}1(S_{-1}(n)-S_1(n))-S_{-1,-1}(\infty)-\H{-1}1S_{-1}(\infty)\nonumber\\
                               &=& \frac{\log^2(2)-\zeta_2}{2}+\log(2)(S_{-1}(n)-S_{1}(n))+S_{-1,-1}(n),\nonumber\\                                                   
\M{\frac{\H{-1}{x}}{1+x}}{n}   &=& \int_0^{1} \frac{x^n \H{-1}{x}}{1+x}dx=\int_0^{1}(-1)^n\sum_{i=n}^{\infty}(-x)^i 																												\H{-1}{x}dx\nonumber\\
                               &=& (-1)^n \sum_{i=n}^{\infty} \int_0^{1} (-x)^i \H{-1}{x}dx\nonumber\\
                               &=& (-1)^n\sum_{i=n}^{\infty} \frac{S_{-1}(i+1)+\H{-1}1((-1)^i+1)}{i+1}\nonumber\\
                               &=& (-1)^n\lim_{N\rightarrow \infty}\left(S_{1,-1}(N)+\H{-1}1(S_1(N)-S_{-2}(N))-
                                    		S_{1,-1}(n)\right.\nonumber\\
                               &&\left.-\H{-1}1(S_1(n)-S_{-2}(n))	\right)\nonumber\\
                               &=&(-1)^n\left(\frac{\log^2(2)}{2}+\log(2)(S_{-1}(n)-S_{1}(n))-S_{1,-1}(n)\right),\nonumber\\                                
\M{\frac{\H{1}{x}}{(1-x)}}{n} &=& \int_0^{1} \frac{x^n \H{1}{x}-\H{1}{x}}{1-x}dx= 
                            		\sum_{i=n}^{\infty} \int_0^{1} x^i(\H{1}{x}-x^{i-n}\H{1}{x})dx\nonumber\\
                              &=& \lim_{N\rightarrow \infty}\sum_{i=n}^{N}\left(\frac{S_{1}(i+1)}{i+1}
                                         -\frac{S_{1}(i-n+1)}{i-n+1}\right)\nonumber\\
                              &=& \lim_{N\rightarrow \infty}\left(S_{1,1}(N+1)-S_{1,1}(n)-S_{1,1}(N-n+1)\right)\nonumber\\
                              &=& -S_{1,1}(n),\nonumber\\                              
\M{\frac{\H{1}{x}}{1+x}}{n}&=&\int_0^1 x^n\frac{\H{1}{x}}{1+x}dx=(-1)^n\sum_{i=n}^{\infty}\int_0^{1}(-x)^i \H{1}{x}dx\nonumber\\
                            &=&(-1)^n\sum_{i=n}^{\infty} \frac{(-1)^i S_1(i+1)}{i+1}
                                  =-(-1)^n\sum_{i=n+1}^{\infty} \frac{(-1)^i S_1(i)}{i}\nonumber\\
                            &=&-(-1)^n(S_{-1,1}(\infty)-S_{-1,1}(n))\nonumber\\
                            &=&(-1)^n\left(S_{-1,1}(n)+\frac{\zeta_2-\log^2(2)}{2}\right).\nonumber                                          
\end{eqnarray}
\end{proof}

The higher weight results can be again obtained by recursion, however we will first show uniform convergence for a series which will pop up.

\begin{remark}
In the following recursions we will use the following definitions: For $n\in\N$ and $x\in (0,1),$
\begin{eqnarray*}                               
S_{()}(n)&:=&S(n):=1,\\
\textnormal{H}_{()}(x)&:=&\textnormal{H}(x):=1.                                     
\end{eqnarray*}
\end{remark}

\begin{lemma}
Let $\ve m$ be a vector with components in $\{0,1,-1\}$, $\ve p$ be the empty vector or a vector with components in the non zero integers and let $k \in \N\cup\{0\}$. Then the series
$$
\sum_{i=1}^\infty{ x^i \H{\ve m}{x} \frac{S_\ve p(i+1)}{(i+1)^k}}
$$
converges uniformly for $x\in (0,1).$
\end{lemma}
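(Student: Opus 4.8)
The plan is to imitate the proof of Lemma~\ref{ratiotest1}, reducing the statement to a pointwise application of the ratio test followed by the Weierstrass criterion for uniformity. The crucial observation is that the factor $\H{\ve m}{x}$ does not depend on the summation index $i$, so it may be pulled out of the sum,
$$
\sum_{i=1}^\infty x^i\,\H{\ve m}{x}\,\frac{S_{\ve p}(i+1)}{(i+1)^k}=\H{\ve m}{x}\sum_{i=1}^\infty x^i\,\frac{S_{\ve p}(i+1)}{(i+1)^k}.
$$
It therefore suffices to control the power series $\sum_{i=1}^\infty x^i\,S_{\ve p}(i+1)/(i+1)^k$ and then to argue that multiplication by the fixed function $\H{\ve m}{x}$ preserves uniform convergence.

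First I would establish pointwise convergence for each fixed $x\in(0,1)$ by the ratio test. Writing $t_i(x)=x^i\,\H{\ve m}{x}\,S_{\ve p}(i+1)/(i+1)^k$, the factor $\H{\ve m}{x}$ cancels in the quotient of consecutive terms, leaving
$$
\left|\frac{t_{i+1}(x)}{t_i(x)}\right|=x\,\frac{(i+1)^k}{(i+2)^k}\left|\frac{S_{\ve p}(i+2)}{S_{\ve p}(i+1)}\right|.
$$
The middle factor tends to $1$, and by the first statement of Lemma~\ref{ratiotest1} one has $|S_{\ve p}(i+2)/S_{\ve p}(i+1)|\to 1$; hence the ratio tends to $x<1$ and the series converges for every fixed $x\in(0,1)$. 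As in Lemma~\ref{ratiotest1}, I would then try to upgrade this to uniform convergence via the Weierstrass criterion \cite{Heuser2003}, bounding $|t_i(x)|$ by a summable majorant independent of $x$.

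The delicate part — and the main obstacle — is exactly this last step, because the unsummed factor $\H{\ve m}{x}$ need not be bounded on the \emph{open} interval $(0,1)$: trailing zeros in $\ve m$ make $\H{\ve m}{x}$ grow like a power of $\log x$ as $x\to 0^+$, and leading ones make it diverge as $x\to 1^-$. The growth at $0$ is harmless, since each power $x^i$ with $i\ge 1$ dominates any power of $\log x$, so $x^i\H{\ve m}{x}$ remains bounded near $0$. The genuine difficulty lies at $x\to 1$: here I would use that in the recursion where this series arises $\ve m$ carries no leading ones (consistent with the definition of $M^+$, where the weight $1/(1-x)$ is only attached to polylogarithms without leading ones), so that $\H{\ve m}{x}$ extends continuously to $x=1$ and is bounded on $[0,1]$. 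The remaining, and most demanding, point is then to bound $\sup_{x\in(0,1)}x^i\,|S_{\ve p}(i+1)|/(i+1)^k$ by a summable sequence: the crude estimate $x^i\le 1$ yields the majorant $|S_{\ve p}(i+1)|/(i+1)^k$, which is summable precisely when the associated harmonic sum converges at infinity (i.e.\ when $k$ is large enough relative to the leading index of $\ve p$). In that regime the $M$-test applies verbatim and the proof closes; in the borderline cases the naive majorant fails and one must exploit the power-series structure more carefully (an Abel-type summation argument), which is where the real work of the proof is concentrated.
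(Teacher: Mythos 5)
Your proposal follows the paper's own route step for step, and in fact goes further than the paper does. The paper's entire proof consists of the two moves you describe: it computes the consecutive-term ratio (in which $\H{\ve m}{x}$ cancels), invokes the first part of Lemma~\ref{ratiotest1} to conclude that the limiting ratio is $x<1$, and then simply states ``hence using the criterion of Weierstrass the series converges uniformly for $x\in(0,1)$,'' citing \cite{Heuser2003}. The step you identify as ``the real work'' --- exhibiting an $x$-independent summable majorant --- is not carried out in the paper at all. So relative to the paper your attempt is missing nothing; it is more scrupulous, since the issues you raise (possible unboundedness of $\H{\ve m}{x}$ at the endpoints, and the failure of the crude majorant $\abs{S_{\ve p}(i+1)}/(i+1)^k$ to be summable for small $k$) are genuine and are silently glossed over in the paper.

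Your worry is in fact well founded: a limiting ratio equal to $x$ yields convergence for each fixed $x$ and, via the M-test applied on compact subintervals, uniform convergence on every $[\delta,1-\epsilon]\subset(0,1)$, but it cannot yield uniform convergence on all of $(0,1)$. Read literally, the statement even fails in the degenerate case $\ve m=()$, $\ve p=()$, $k=0$, which the paper explicitly allows (it sets $\HH{(),x}=1$ and $S_{()}(n)=1$, and this case occurs in the Mellin recursions): the series is then the geometric series $\sum_{i\geq 1} x^i$, which converges pointwise but not uniformly on $(0,1)$. The honest conclusion of both your argument and the paper's is therefore \emph{locally} uniform convergence on $(0,1)$, which is all that the subsequent termwise integrations and limit interchanges actually require; your proposed extra hypotheses (no leading ones in $\ve m$) and Abel-type repair are attempts to salvage a literal reading of the statement that neither proof --- nor the statement itself --- supports.
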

\begin{proof}
We apply the ratio test \cite[p. 205]{Heuser2003} and use Lemma \ref{ratiotest1}:
\begin{eqnarray*}
\lim_{i\rightarrow \infty}{\left|\frac{x^{i+1} \H{\ve m}{x} \frac{S_\ve p(i+2)}{(i+2)^k}}{ x^i \H{\ve m}{x} \frac{S_\ve p(i+1)}{(i+1)^k}}\right|}&=&\lim_{i\rightarrow \infty}{\left|x\frac{S_\ve p(i+2)}{S_\ve p(i+1)}\frac{(i+1)^k}{(i+2)^k}\right|}\\
&=&x\lim_{i\rightarrow \infty}{\left|\frac{S_\ve p(i+2)}{S_\ve p(i+1)}\right|}\lim_{i\rightarrow \infty}{\left|\frac{(i+1)^k}{(i+2)^k}\right|}\\&=&x.
\end{eqnarray*}
Hence using the criterion of Weierstrass \cite[p. 555]{Heuser2003} the series converges uniformly for $x\in (0,1)$.
\end{proof}
In the following three lemmas we will present the recursion (compare \cite{Remiddi2000}), which we will use to compute the Mellin-transform of general harmonic polylogarithms.
 
\begin{lemma}For $x\in(0,1)$, $n\in\N$, $k\in\N\cup\{0\}$, $\ve m$ a vector with components in $\{0,1,-1\}$ or $m=()$ and $\ve p$ a vector with entries in $\N$ or $\ve p=()$ we have:
\begin{eqnarray}
\int_0^1 \sum_{i=n}^\infty{ (-x)^i \H{0,\ve m}{x} \frac{S_\ve p(i+1)}{(i+1)^k}}dx &=& -\H{0,\ve m}1(S_{-k-1,\ve p}(\infty)-S_{-k-1,\ve p}(n))\nonumber\\
							&&- \int_0^1 \sum_{i=n}^\infty (-x)^i \H{m}{x}\frac{S_\ve p(i+1)}{(i+1)^{k+1}}dx,\nonumber
\end{eqnarray}
\begin{eqnarray}							
\int_0^1 \sum_{i=n}^\infty{ (-x)^i \H{-1,\ve m}{x} \frac{S_\ve p(i+1)}{(i+1)^k}}dx &=& -\H{-1,\ve m}1(S_{-k-1,\ve p}(\infty)-S_{-k-1,\ve p}(n))\nonumber\\
							&&- \int_0^1 \sum_{i=n}^\infty x^i \H{\ve m}{x}\Bigl(-(-1)^iS_{k+1,\ve p}(i+1)\Bigr. \nonumber\\																				        &&+ \Bigl.(-1)^i\frac{S_\ve p(i+1)}{(i+1)^{k+1}}+(-1)^iS_{k+1,\ve p}(n)\Bigr)dx.\nonumber													\end{eqnarray}
\label{genrec1}
\end{lemma}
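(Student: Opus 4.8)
The plan is to prove both identities in parallel. After interchanging $\int_0^1$ with $\sum_{i=n}^\infty$ — which is permitted by the uniform convergence of $\sum_i x^i\H{\ve m}{x}\frac{S_\ve p(i+1)}{(i+1)^k}$ on $(0,1)$ shown in the preceding lemma — I would integrate each summand by parts using the defining derivative rule $\frac{d}{dx}\H{\ve m}{x}=f(m_w,x)\H{m_{w-1},\ldots}{x}$, namely $\frac{d}{dx}\H{0,\ve m}{x}=\frac1x\H{\ve m}{x}$ in the first case and $\frac{d}{dx}\H{-1,\ve m}{x}=\frac1{1+x}\H{\ve m}{x}$ in the second. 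The antiderivative of $(-x)^i$ is $(-1)^i\frac{x^{i+1}}{i+1}$, so the same skeleton produces a boundary term and a leftover integral in both cases.

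For the first identity the integration by parts gives, per summand, the boundary $\bigl[(-1)^i\frac{x^{i+1}}{i+1}\H{0,\ve m}{x}\bigr]_0^1$ minus $\frac{(-1)^i}{i+1}\int_0^1 x^{i+1}\frac1x\H{\ve m}{x}\,dx$. The lower endpoint vanishes (for $\ve m\neq\ve 0_w$ because $\H{0,\ve m}{0}=0$ by Remark \ref{finitness}, for $\ve m=\ve 0_w$ because $x^{i+1}$ times a power of $\log x$ tends to $0$), while the upper endpoint gives $(-1)^i\frac{\H{0,\ve m}1}{i+1}$. Multiplying by $\frac{S_\ve p(i+1)}{(i+1)^k}$, summing over $i\geq n$ and shifting $j=i+1$ turns the boundary contribution into
\[
\H{0,\ve m}1\sum_{j>n}\frac{(-1)^{j-1}S_\ve p(j)}{j^{k+1}}=-\H{0,\ve m}1\bigl(S_{-k-1,\ve p}(\infty)-S_{-k-1,\ve p}(n)\bigr),
\]
since $\sign{-k-1}=-1$ and $\abs{-k-1}=k+1$ make this exactly the tail of the convergent sum $S_{-k-1,\ve p}$ (convergent by Lemma \ref{consumlem}, as $-k-1\neq1$). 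The factor $\frac1x$ cancels one power of $x$ in the leftover integral, which therefore reassembles directly into $-\int_0^1\sum_{i\geq n}(-x)^i\H{\ve m}{x}\frac{S_\ve p(i+1)}{(i+1)^{k+1}}\,dx$, completing the first identity.

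The second identity runs identically through the boundary term, now producing $-\H{-1,\ve m}1\bigl(S_{-k-1,\ve p}(\infty)-S_{-k-1,\ve p}(n)\bigr)$ (with $\H{-1,\ve m}1$ finite because the leading index $-1\neq1$, again by Remark \ref{finitness}). The genuine difficulty is that here the leftover integral carries the factor $\frac{x^{i+1}}{1+x}$ instead of a clean power of $x$, so I must establish the pointwise series identity
\[
\frac1{1+x}\sum_{i\geq n}\frac{(-1)^ix^{i+1}S_\ve p(i+1)}{(i+1)^{k+1}}=\sum_{i\geq n}(-x)^i\bigl(S_{k+1,\ve p}(n)-S_{k+1,\ve p}(i)\bigr).
\]
The key algebraic step is the telescoping relation $\frac{S_\ve p(i+1)}{(i+1)^{k+1}}=S_{k+1,\ve p}(i+1)-S_{k+1,\ve p}(i)$, immediate from the recursive definition of harmonic sums since $\sign{k+1}=1$; substituting it collapses the bracket $-(-1)^iS_{k+1,\ve p}(i+1)+(-1)^i\frac{S_\ve p(i+1)}{(i+1)^{k+1}}+(-1)^iS_{k+1,\ve p}(n)$ in the statement to $(-1)^i\bigl(S_{k+1,\ve p}(n)-S_{k+1,\ve p}(i)\bigr)$, matching the right-hand side above.

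To verify that series identity I would write $S_{k+1,\ve p}(n)-S_{k+1,\ve p}(i)=-\sum_{j=n+1}^i\frac{S_\ve p(j)}{j^{k+1}}$, exchange the two summations, and sum the geometric series $\sum_{i\geq j}(-x)^i=\frac{(-x)^j}{1+x}$; reindexing $j=i+1$ then reproduces the left-hand side, so the $\frac1{1+x}$ factor is recovered exactly. The main obstacles are therefore the bookkeeping ones rather than the algebra: justifying each interchange of summation and integration (via the preceding uniform-convergence lemma, and, for the rearranged double sum, absolute convergence on compact subsets of $(0,1)$), and treating the $x\to1$ limit of the boundary term carefully — this is precisely where the hypothesis that $\H{0,\ve m}{x}$ (resp.\ $\H{-1,\ve m}{x}$) has no leading one, hence a finite value at $1$, is indispensable.
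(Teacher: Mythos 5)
Your proof is correct, and its skeleton is the paper's: for the first identity the paper simply exchanges $\int$ and $\sum$ and invokes Lemma \ref{melnotweighted}, whose content is exactly the integration by parts you perform inline, and the boundary terms are resummed into $-\H{0,\ve m}1\bigl(S_{-k-1,\ve p}(\infty)-S_{-k-1,\ve p}(n)\bigr)$ just as you do. The only real divergence is in the second identity, which the paper dismisses with ``the second follows similarly'': the paper's way of converting the leftover $\int_0^1\frac{x^{i+1}}{1+x}\H{\ve m}{x}\,dx$ terms (carried out explicitly only in the proof of Lemma \ref{genrec3}) is to expand $\frac{1}{1+x}$ as a geometric series and take a Cauchy product, recognizing the partial sum $\sum_{j=0}^{i}(-1)^j\frac{S_{\ve p}(j+1)}{(j+1)^{k+1}}$ as $-S_{-k-1,\ve p}(i+1)$; you instead collapse the three-term bracket via the telescoping relation $\frac{S_{\ve p}(i+1)}{(i+1)^{k+1}}=S_{k+1,\ve p}(i+1)-S_{k+1,\ve p}(i)$ and verify the resulting closed-form series identity by exchanging a double sum and summing the geometric tail $\sum_{i\geq j}(-x)^i=\frac{(-x)^j}{1+x}$. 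The two manipulations are equivalent (your summation exchange is the Cauchy-product computation read backwards), but yours has the advantage of explaining directly where each piece of the bracket $-(-1)^iS_{k+1,\ve p}(i+1)+(-1)^i\frac{S_{\ve p}(i+1)}{(i+1)^{k+1}}+(-1)^iS_{k+1,\ve p}(n)$ comes from --- in particular the $S_{k+1,\ve p}(n)$ term produced by the series starting at $i=n$ rather than $i=0$, a point that the paper's auxiliary identity (written for sums from $i=0$) leaves implicit.
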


\begin{proof}
We give a proof of the first identity. The second follows similarly. By Lemma \ref{melnotweighted}:
\begin{eqnarray*}
\int_0^1 \sum_{i=n}^\infty{ (-x)^i \H{0,\ve m}{x} \frac{S_\ve p(i+1)}{(i+1)^k}}dx &=&
\sum_{i=n}^\infty{\frac{(-1)^iS_\ve p(i+1)}{(i+1)^k}\int_0^1 x^i \H{0,\ve m}{x}dx}\\
&=&\sum_{i=n}^\infty \frac{(-1)^iS_\ve p(i+1)}{(i+1)^k}\left(\frac{\H{0,\ve m}{1}}{i+1}\right.\\
		&&-\left.\frac{1}{i+1}\int_0^1 x^i\H{\ve m}{x}dx\right)\\
&=&-\H{0,\ve m}1(S_{-k-1,\ve p}(\infty)-S_{-k-1,\ve p}(n))\\
							&&- \int_0^1 \sum_{i=n}^\infty (-x)^i \H{m}{x}\frac{S_\ve p(i+1)}{(i+1)^{k+1}}dx.
\end{eqnarray*}
\end{proof}

\begin{lemma}For $n\in\N$, $k\in\N\cup\{0\}$, $\ve m$ a vector with components in $\{0,1,-1\}$ or $m=()$ and $\ve p$ a vector with entries in $\N$ or $\ve p=()$ we have in the ring of formal power series:
\begin{eqnarray}
\int_0^1 \sum_{i=n}^\infty{ (-x)^i \H{1,\ve m}{x} \frac{S_\ve p(i+1)}{(i+1)^k}}dx &=& -\H{1,\ve m}1(S_{-k-1,\ve p}(\infty)-S_{-k-1,\ve p}(n))\nonumber\\
							&&- \int_0^1 \sum_{i=n}^\infty x^i \H{\ve m}{x}\Bigl(-S_{-k-1,\ve p}(i+1)\Bigr. \nonumber\\ 																							      &&- \Bigl. (-1)^i\frac{S_\ve p(i+1)}{(i+1)^{k+1}}+S_{-k-1,\ve p}(n)\Bigr)dx.\nonumber
\end{eqnarray}
\label{genrec2}
\end{lemma}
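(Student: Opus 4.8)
The plan is to follow the template of the proof of Lemma~\ref{genrec1}, differing only in the treatment of the extra $1/(1-x)$-weighted integral that arises from a leading index $1$. First I would invoke the uniform-convergence lemma immediately preceding Lemma~\ref{genrec1} (together with the theorem of dominated convergence of Lebesgue) to pull the summation outside the integral, writing
\[
\int_0^1 \sum_{i=n}^\infty (-x)^i \H{1,\ve m}{x} \frac{S_\ve p(i+1)}{(i+1)^k}dx = \sum_{i=n}^\infty \frac{(-1)^i S_\ve p(i+1)}{(i+1)^k}\int_0^1 x^i \H{1,\ve m}{x}dx.
\]
Then I would substitute the value of $\int_0^1 x^i\H{1,\ve m}{x}dx=\M{\H{1,\ve m}{x}}{i}$ supplied by Lemma~\ref{melnotweighted}, which splits the right-hand side into three sums: one carrying the constant $\H{1,\ve m}1$, one involving the unweighted integral $\int_0^1 x^i\H{\ve m}{x}dx$, and one involving the weighted integral $\int_0^1 \frac{x^i}{1-x}\H{\ve m}{x}dx$.

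The first two sums are disposed of exactly as in Lemma~\ref{genrec1}. The key identity is that, since $S_{-k-1,\ve p}(m)=\sum_{\ell=1}^m \frac{(-1)^\ell}{\ell^{k+1}}S_\ve p(\ell)$, reindexing by $\ell=i+1$ gives $\sum_{i=n}^\infty \frac{(-1)^i S_\ve p(i+1)}{(i+1)^{k+1}} = -\bigl(S_{-k-1,\ve p}(\infty)-S_{-k-1,\ve p}(n)\bigr)$. This turns the $\H{1,\ve m}1$-sum into the stated boundary term $-\H{1,\ve m}1(S_{-k-1,\ve p}(\infty)-S_{-k-1,\ve p}(n))$, and, after re-interchanging summation and integration, it identifies the unweighted-integral sum with the contribution $(-1)^i S_\ve p(i+1)/(i+1)^{k+1}$ to the final integrand.

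The genuinely new step, and the main obstacle, is the weighted term $-\sum_{i=n}^\infty \frac{(-1)^i S_\ve p(i+1)}{(i+1)^{k+1}}\int_0^1 \frac{x^i}{1-x}\H{\ve m}{x}dx$. Here I would expand $\frac{x^i}{1-x}=\sum_{j=i}^\infty x^j$, interchange the two summations so that for fixed $j\ge n$ the index $i$ runs from $n$ to $j$, and evaluate the resulting finite inner sum by the same reindexing as above: $\sum_{i=n}^{j}\frac{(-1)^i S_\ve p(i+1)}{(i+1)^{k+1}} = -\bigl(S_{-k-1,\ve p}(j+1)-S_{-k-1,\ve p}(n)\bigr)$. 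This rewrites the weighted term as $\int_0^1 \sum_{j=n}^\infty x^j \H{\ve m}{x}\bigl(S_{-k-1,\ve p}(j+1)-S_{-k-1,\ve p}(n)\bigr)dx$, contributing $S_{-k-1,\ve p}(i+1)-S_{-k-1,\ve p}(n)$ to the integrand. Adding the two integral contributions and factoring out the overall minus sign then yields exactly the integrand $-S_{-k-1,\ve p}(i+1)-(-1)^i S_\ve p(i+1)/(i+1)^{k+1}+S_{-k-1,\ve p}(n)$ of the statement. The delicate point is that the partial sums of type $S_{1,\dots}$ implicit in these manipulations may diverge, so the rearrangement of the double series and the interchange of summation order are legitimate only \emph{formally}; this is precisely why the identity is asserted in the ring of formal power series rather than as an equality of convergent quantities.
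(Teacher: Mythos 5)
Your proof is correct and takes essentially the same route as the paper: the paper only writes out proofs for Lemma~\ref{genrec1} and Lemma~\ref{genrec3} and leaves Lemma~\ref{genrec2} to the analogous argument, which is precisely what you supply --- interchange of sum and integral, substitution of Lemma~\ref{melnotweighted}, the reindexing $\sum_{i=n}^\infty(-1)^i S_{\ve p}(i+1)/(i+1)^{k+1}=-\bigl(S_{-k-1,\ve p}(\infty)-S_{-k-1,\ve p}(n)\bigr)$, and the geometric-series expansion with interchange of summations for the $1/(1-x)$-weighted integral (the Cauchy-product computation at the end of the paper's proof of Lemma~\ref{genrec3} is the same manipulation in different clothing). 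Your closing caveat that the rearrangements are only formally valid, because $\H{1,\ve m}1$ and the weighted integrals are individually divergent, matches the paper's own remark following Lemma~\ref{genrec3}.
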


\begin{lemma}For $n\in\N$, $k\in\N\cup\{0\}$, $\ve m$ a vector with components in $\{0,1,-1\}$ or $m=()$ and $\ve p$ a vector with entries in $\N$ or $\ve p=()$ we have in the ring of formal power series:
\begin{eqnarray}
\int_0^1 \sum_{i=n}^\infty{ x^i \H{0,\ve m}{x} \frac{S_\ve p(i+1)}{(i+1)^k}}dx &=& \H{0,\ve m}1(S_{k+1,\ve p}(\infty)-S_{k+1,\ve p}(n))\nonumber\\
							&&- \int_0^1 \sum_{i=n}^\infty x^i \H{\ve m}{x}\frac{S_\ve p(i+1)}{(i+1)^{k+1}}dx,\nonumber
\end{eqnarray}
\begin{eqnarray}							
\int_0^1 \sum_{i=n}^\infty{ x^i \H{1,\ve m}{x} \frac{S_\ve p(i+1)}{(i+1)^k}}dx &=& \H{1,\ve m}1(S_{k+1,\ve p}(\infty)-S_{k+1,\ve p}(n))\nonumber\\
							&&- \int_0^1 \sum_{i=n}^\infty x^i \H{\ve m}{x}\Bigl(S_{k+1,\ve p}(i+1)\Bigr. \nonumber\\      																								  &&- \Bigl. \frac{S_\ve p(i+1)}{(i+1)^{k+1}}-S_{k+1,\ve p}(n)\Bigr)dx,\nonumber\\
\int_0^1 \sum_{i=n}^\infty{ x^i \H{-1,\ve m}{x} \frac{S_\ve p(i+1)}{(i+1)^k}}dx &=& \H{-1,\ve m}1(S_{k+1,\ve p}(\infty)-S_{k+1,\ve p}(n))\nonumber\\
							&&- \int_0^1 \sum_{i=n}^\infty x^i \H{\ve m}{x}\Bigl((-1)^iS_{-k-1,\ve p}(i+1)\Bigr. \nonumber\\																				        &&+ \Bigl.\frac{S_\ve p(i+1)}{(i+1)^{k+1}}-(-1)^iS_{-k-1,\ve p}(n)\Bigr)dx.\nonumber\\
\end{eqnarray}
\label{genrec3}																					
\end{lemma}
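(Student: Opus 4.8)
The plan is to follow the same scheme as the proof of Lemma \ref{genrec1}: first interchange the summation over $i$ with the integration over $x$, then rewrite each term $\int_0^1 x^i \H{*,\ve m}{x}\,dx$ by means of Lemma \ref{melnotweighted}, and finally reassemble the pieces into the claimed right-hand side. The interchange of $\sum_{i=n}^\infty$ and $\int_0^1$ is legitimate because the preceding uniform-convergence lemma guarantees that $\sum_{i=n}^\infty x^i \H{\ve m}{x}\,S_\ve p(i+1)/(i+1)^k$ converges uniformly on $(0,1)$; for the formal identities it suffices to argue in the ring of formal power series, as the statement allows.

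For the first ($\H{0,\ve m}$) identity the computation is essentially identical to Lemma \ref{genrec1}. After swapping, I substitute the first formula of Lemma \ref{melnotweighted}, namely $\int_0^1 x^i \H{0,\ve m}{x}\,dx=\frac{\H{0,\ve m}1}{i+1}-\frac{1}{i+1}\int_0^1 x^i\H{\ve m}{x}\,dx$. The constant term produces $\H{0,\ve m}1\sum_{i=n}^\infty S_\ve p(i+1)/(i+1)^{k+1}$, and here the key observation is that this tail sum telescopes to $S_{k+1,\ve p}(\infty)-S_{k+1,\ve p}(n)$ by the very definition of the harmonic sum $S_{k+1,\ve p}$ (note $\sign{k+1}=1$). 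Swapping sum and integral back in the remaining term yields exactly $-\int_0^1\sum_{i=n}^\infty x^i\H{\ve m}{x}\,S_\ve p(i+1)/(i+1)^{k+1}\,dx$.

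The substantive work lies in the $\H{1,\ve m}$ and $\H{-1,\ve m}$ identities, where Lemma \ref{melnotweighted} introduces an extra integral weighted by $1/(1-x)$, respectively $1/(1+x)$. I would treat this weighted term by expanding $\frac{1}{1-x}=\sum_{j\ge0}x^j$ (respectively $\frac{1}{1+x}=\sum_{j\ge0}(-x)^j$), so that $\int_0^1\frac{x^i}{1-x}\H{\ve m}{x}\,dx=\sum_{l\ge i}\int_0^1 x^l\H{\ve m}{x}\,dx$ (with an extra $(-1)^{l-i}$ in the $1/(1+x)$ case). Carrying the factor $S_\ve p(i+1)/(i+1)^{k+1}$ through and interchanging the order of the double summation — summing first over $n\le i\le l$ for each fixed $l\ge n$ — is the crux: the inner sum $\sum_{i=n}^l S_\ve p(i+1)/(i+1)^{k+1}$ collapses to $S_{k+1,\ve p}(l+1)-S_{k+1,\ve p}(n)$, while in the $1/(1+x)$ case $\sum_{i=n}^l(-1)^{l-i}S_\ve p(i+1)/(i+1)^{k+1}$ collapses to $-(-1)^l\bigl(S_{-k-1,\ve p}(l+1)-S_{-k-1,\ve p}(n)\bigr)$, using $\sign{-k-1}=-1$ and $\abs{-k-1}=k+1$ in the definition of $S_{-k-1,\ve p}$. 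I expect this rearrangement of the double sum to be the main obstacle, both combinatorially (keeping the index shift $j=i+1$ and the alternating signs straight) and analytically (justifying the interchange, which is precisely where the formal-power-series formulation is convenient).

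Finally, I would merge the two surviving integrals of the form $\int_0^1\sum_{i=n}^\infty x^i\H{\ve m}{x}(\cdots)\,dx$ into a single one and collect the coefficients: for $\H{1,\ve m}$ the bracket becomes $S_{k+1,\ve p}(i+1)-S_\ve p(i+1)/(i+1)^{k+1}-S_{k+1,\ve p}(n)$, and for $\H{-1,\ve m}$ it becomes $(-1)^iS_{-k-1,\ve p}(i+1)+S_\ve p(i+1)/(i+1)^{k+1}-(-1)^iS_{-k-1,\ve p}(n)$, matching the statement. The coefficient of $\H{*,\ve m}1$ is in every case $S_{k+1,\ve p}(\infty)-S_{k+1,\ve p}(n)$, exactly as in the first identity, which closes the argument.
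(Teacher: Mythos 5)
Your proposal is correct and takes essentially the same route as the paper: interchange the sum and the integral, substitute the recursion of Lemma \ref{melnotweighted}, read off the coefficient of the harmonic polylogarithm at one as $S_{k+1,\ve p}(\infty)-S_{k+1,\ve p}(n)$, and treat the $1/(1\pm x)$-weighted integrals by expanding the geometric series and rearranging the double sum so that the inner partial sums collapse to $S_{k+1,\ve p}(i+1)-S_{k+1,\ve p}(n)$ (resp.\ $-(-1)^i\bigl(S_{-k-1,\ve p}(i+1)-S_{-k-1,\ve p}(n)\bigr)$) --- which is exactly the Cauchy-product reindexing the paper performs. The paper writes out only the third identity and leaves the others as ``similar''; your brackets for all three cases agree with the statement, and your appeal to the formal-power-series setting for the interchanges matches the paper's own remark that these identities are used formally.
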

\begin{proof}
We only show the third identity. It is implied by
\begin{eqnarray*}
\int_0^1 \sum_{i=n}^\infty{ x^i \H{-1,\ve m}{x} \frac{S_\ve p(i+1)}{(i+1)^k}}dx &=&
\sum_{i=n}^\infty{\frac{S_\ve p(i+1)}{(i+1)^k}\int_0^1 x^i \H{-1,\ve m}{x}dx}\\
&=&\sum_{i=n}^\infty \frac{S_\ve p(i+1)}{(i+1)^k}\left(\frac{\H{-1,\ve m}{1}}{i+1}-\int_0^1\frac{x^i\H{\ve m}{x}}{i+1}dx\right.\\
		&&+\left.\frac{1}{i+1}\int_0^1 \frac{x^i}{1+x}\H{\ve m}{x}dx\right)\\
&=&\H{-1,\ve m}1(S_{k+1,\ve p}(\infty)-S_{k+1,\ve p}(n))\\
							&&- \int_0^1 \sum_{i=n}^\infty x^i \H{\ve m}{x}\frac{S_\ve p(i+1)}{(i+1)^{k+1}}dx\\
							&&+\int_0^1\sum_{i=n}^\infty \frac{S_\ve p(i+1)}{(i+1)^{k+1}}x^i\sum_{j=0}^\infty{(-x)^j\H{\ve m}{x}}dx\\
							&=&\H{-1,\ve m}1(S_{k+1,\ve p}(\infty)-S_{k+1,\ve p}(n))\\
							&&- \int_0^1 \sum_{i=n}^\infty x^i \H{\ve m}{x}\Bigl((-1)^iS_{-k-1,\ve p}(i+1)\Bigr.\\																				          &&+ \Bigl.\frac{S_\ve p(i+1)}{(i+1)^{k+1}}-(-1)^iS_{-k-1,\ve p}(n)\Bigr)dx;
\end{eqnarray*}
in the last equation we used:
\begin{eqnarray*}
\sum_{i=0}^\infty \frac{S_\ve p(i+1)}{(i+1)^{k+1}}x^i\sum_{j=0}^\infty{(-x)^j\H{\ve m}{x}}&=&
\sum_{i=0}^\infty {\sum_{j=0}^i{ \frac{S_\ve p(j+1)}{(j+1)^{k+1}}x^j(-x)^{i-j}\H{\ve m}{x}}}\\
&=&	\sum_{i=0}^\infty{(-1)^i x^i\H{\ve m}{x}\sum_{j=0}^{i}{(-1)^j\frac{S_\ve p(j+1)}{(j+1)^{k+1}}}}
\end{eqnarray*}
\begin{eqnarray*}
\ \ \ \ \ \ \ \ \ \ \ \ \ \ \ \ &=&	\sum_{i=0}^\infty{(-1)^i x^i\H{\ve m}{x}\sum_{j=1}^{i+1}{-(-1)^j\frac{S_{\ve p}(j)}{(j)^{k+1}}}}\\
&=&	-\sum_{i=0}^\infty{(-1)^i x^i\H{\ve m}{x}S_{-k-1,\ve p}(i+1)}.\\
\end{eqnarray*}
\end{proof}

\begin{remark}
In Lemma \ref{genrec1} the left sides and the right sides converge and hence the equalities hold analytically.  
In Lemma \ref{genrec2} the left side converges, while the convergence of the right hand side is not obvious since $\H{1,\ve m}1$ is present. In Lemma \ref{genrec3} the left sides do not converge and so do the right hand sides. \\
The left hand sides of \ref{genrec1} - \ref{genrec3} can appear in the computation of the Mellin-transform and hence we use the equalities formally. It turns out that the divergencies cancel in the end and all divergent parts disappear. At this point one would have to prove that the Mellin-transforms and the results from these recursions are equal in an analytic point of view, however this is omitted here.
\end{remark}

We are now able to compute Mellin transforms of all harmonic polylogarithms. We can use the previous lemma even if neither $1/(1 + x)$ nor $1/(1-x)$ is present, since
$$
1=\frac{1}{1+x}+\frac{x}{1+x}.
$$
Hence we do not need Lemma \ref{melnotweighted} to compute the Mellin transform of harmonic polylogarithms (but we used it to establish the Lemmas \ref{genrec1} - \ref{genrec3}). In the base case one has to evaluate harmonic polylogarithms at $x=1$ and harmonic sums at infinity; for this task we use tables from \cite{Remiddi2000} which contain expressions up to weight 9. The results of the Mellin transforms should be finite, hence the divergencies introduced in the Lemmas \ref{genrec1}-\ref{genrec3} should cancel. The result of the Mellin transform is an expression with multiple harmonic sums in the argument $n$. The algorithm can be summarized in Algorithm~\ref{MellinAlgorithm}.
We illustrate the usage of the previous lemmas on two examples:
\begin{example} For $n\in \N,$ we perform the following formal calculus. 
\begin{eqnarray*}
\M{\frac{\H{1}{x}}{1+x}}{n}&=&(-1)^n\int_0^1 \sum_{i=n}^\infty{ (-x)^{i} \H{1}{x}}dx =(-1)^n\int_0^1 \sum_{i=n}^\infty{ (-x)^{i} \H{1}{x} \frac{S(i+1)}{(i+1)^0}}dx\\
&=&-(-1)^n\H{1}1(S_{-1}(\infty)-S_{-1}(n))- (-1)^n\int_0^1 \sum_{i=n}^\infty x^i\Bigl(-S_{-1}(i+1)\Bigr.\\
\end{eqnarray*}
\begin{eqnarray*}
&&- \Bigl. (-1)^i\frac{1}{(i+1)^{1}}+S_{-1}(n)\Bigr)dx\\
&=&-(-1)^n\H{1}1(S_{-1}(\infty)-S_{-1}(n))-(-1)^n\int_0^1 \sum_{i=n}^\infty -x^iS_{-1}(i+1)dx\\
&&-(-1)^n\int_0^1 \sum_{i=n}^\infty x^i\frac{(-1)^{i+1}}{i+1}dx-S_{-1}(n)(-1)^n\int_0^1 \sum_{i=n}^\infty x^idx\\
&=&(-1)^n\left(-\H{1}1(S_{-1}(\infty)-S_{-1}(n))+S_{1,-1}(\infty)-S_{1,-1}(n)-S_{-2}(\infty)\right.\\
&&\left.+S_{-2}(n)+S_{-1}(n)(S_1(\infty)-S_1(n))\right),\\
\M{\frac{\H{-1,0,1}{x}}{1-x}}{n}&=&\int_0^1\frac{x^n\H{-1,0,1}{x}-\H{-1,0,1}{1}}{1-x}dx=\int_0^1 \sum_{i=0}^\infty x^{i} \left(x^n\H{-1,0,1}{x}\right.\\
&&\left.-\H{-1,0,1}{1}\right)dx\\
&=&\int_0^1{ \left(\sum_{i=n}^\infty{ x^{i} \H{-1,0,1}{x}} - \H{-1,0,1}{1} \sum_{i=0}^\infty{x^{i}}\right)}dx\\
&=&\int_0^1{ \sum_{i=n}^\infty{ x^{i} \H{-1,0,1}{x}}}dx - \int_0^1{\H{-1,0,1}{1} \sum_{i=0}^\infty{x^{i}}}dx\\
&=&\int_0^1{ \sum_{i=n}^\infty{ x^{i} \H{-1,0,1}{x}}}dx - \H{-1,0,1}{1}\H1x\\
&=&(-1)^n\left(S_{-1,1}(n)+\frac{\zeta_2-\log^2(2)}{2}\right);
\end{eqnarray*}
see Lemma \ref{grundlemma}.
\end{example}

\begin{algorithm}
\label{MellinAlgorithm}
\caption{Mellin Transform}
\begin{algorithmic}
\State \bfseries input:\ \normalfont $f(x)=\H{\ve m}x$, $f(x)=\frac{\H{\ve m}x}{1+x}$ or $f(x)=\frac{\H{\ve m}x}{1-x}$
\Procedure{Mellin}{$f(x),n$}
\State $r=0$
\If{$f(x)=\H{\ve m}x$}
	\State $s=\Call{Mellin}{\frac{f(x)}{1+x},n}+\Call{Mellin}{\frac{f(x)}{1+x},n+1}$
\ElsIf{$f(x)=\frac{\H{\ve m}x}{1+x}$}
	\State $s=\int_0^1 \sum_{i=n}^\infty{ (-x)^i \H{\ve m}{x} \frac{S(i+1)}{(i+1)^0}}dx$\Comment{Use \ref{genrec1}-\ref{genrec3}}
\ElsIf{$f(x)=\frac{\H{\ve m}x}{1-x}$ and $\ve m[1]\neq1$}\Comment{first index $\neq1$}
	\State $s=\int_0^1 \sum_{i=n}^\infty{ x^i \H{\ve m}{x} \frac{S(i+1)}{(i+1)^0}}dx$\Comment{Use \ref{genrec1}-\ref{genrec3}}
	\State $r=\H{\ve m}{1}*\int_0^1 \sum_{i=n}^\infty{ x^i \H{}{x} \frac{S(i+1)}{(i+1)^0}}dx$\Comment{Use \ref{genrec1}-\ref{genrec3}}
\ElsIf{$f(x)=\frac{\H{\ve m}x}{1-x}$ and $\ve m[1]=1$}\Comment{first index $=1$}
	\State $s=\int_0^1 \sum_{i=n}^\infty{ x^i \H{\ve m}{x} \frac{S(i+1)}{(i+1)^0}}dx$\Comment{Use \ref{genrec1}-\ref{genrec3}}
	\State $r=\Call{RemoveLeading1}{\H{\ve m}{x}}$
	\State $nsum=$number of summands in $r$
	\State modify $r$ in the following way:
	\For{$i=1$ to $nsum$}
			\If{the $i$th summand of $r$ equals $c*\H{1,1,\ldots,1}{x}$} \Comment{$c \in \R$}
				\State add 1 to the index set and set $x=1$
			\ElsIf{it equals $c*\H{\ve y}{x}\H{1,1,\ldots,1}{x}$, $\ve y\neq(1,1,\ldots,1)$}
				\State add 1 to the index of $\H{1,1,\ldots,1}{x}$ and set $x=1$
			\ElsIf{it equals $c*\H{\ve y}{x}$, $\ve y\neq(1,1,\ldots,1)$}	
				\State multiply $\H{1}{1}$ to the summand and set $x=1$
			\EndIf
	\EndFor
\EndIf
\State $s=s-r$
\State transform all $\H{\ve y}{1}$ in $s$ to $S_{\overline{\ve y}}(\infty).$\Comment{Use Theorem \ref{powexp}}
\State expand all products in $s$\Comment{Use (\ref{hsumproduct})}
\State \textbf{return} $s$
\EndProcedure
\end{algorithmic}
\end{algorithm}
Finally let us look at the result for a bigger example:
\begin{example}
\begin{eqnarray}
M\left(\frac{\H{-1,1,0,-1}x}{1-x}\right)&=&\frac{63\,\zeta_2^2\,\log(2)}{20} + \frac{\log(2)^5}{30} + 3\,\log(2)\,\textnormal{Li}_4\left(\frac{1}{2}\right)\nonumber\\&+& 
 11\,\textnormal{Li}_5\left(\frac{1}{2}\right) + \frac{33\,\zeta_2^2\,S_{-1}(n)}{20} + \frac{3\,\zeta_2\,\log(2)^2\,S_{-1}(n)}{4} \nonumber\\&-&
 \frac{\log(2)^4\,S_{-1}(n)}{6} - 4\,\textnormal{Li}_4\left(\frac{1}{2}\right)\,S_{-1}(n) \nonumber\\&-&
 \frac{33\,\zeta_2^2\,S_{1}(n)}{20} - \frac{3\,\zeta_2\,\log(2)^2\,S_{1}(n)}{4} \nonumber
\end{eqnarray}
\begin{eqnarray} 
&+&
 \frac{\log(2)^4\,S_{1}(n)}{6} + 4\,\textnormal{Li}_4\left(\frac{1}{2}\right)\,S_{1}(n) \nonumber\\&+&
 \log(2)\,S_{-1, -1, -2}(n) + \frac{\zeta_2\,S_{-1, -1, 1}(n)}{2} \nonumber\\&-&
 \log(2)\,S_{-1, -1, 2}(n) + S_{-1, -1, -2, -1}(n) + \frac{3\,\zeta_2\,\zeta_3}{4} \nonumber\\&-&
 \frac{13\,\log(2)^2\,\zeta_3}{8} - \frac{13\,\log(2)\,S_{-1}(n)\,\zeta_3}{4} \nonumber\\&+&
 \frac{13\,\log(2)\,S_{1}(n)\,\zeta_3}{4} - \frac{845\,\zeta_5}{64}.\nonumber
\end{eqnarray}
\end{example}
The Mellin transform of harmonic polylogarithms is implemented in \ttfamily HarmonicSums\rmfamily: 
\begin{fmma}
\begin{mma}
\In \text{\bf Mellin[H[-1, 1, 0, x]/(1 + x), x, n]}\\
\Out {{\left( -1 \right)}^n\,\left( \text{H}[-1,-1,1,0,1] + \text{H}[-1,1,0,1]\,S[-1,n] - \text{H}[-1,1,0,1]\,S[1,n] + S[1,-1,2,n] 						\right)}\\
\end{mma} 
\begin{mma}
\In \text{\bf $\%$ // ReplaceByKnownFunctions}\\
\Out {{\left( -1 \right) }^n\,\Bigl(\Bigr. 2\,\text{li4half} + \frac{{\text{ln2}}^4}{12} - 
    \frac{{\text{ln2}}^2\,\text{z2}}{4} - \frac{7\,{\text{z2}}^2}{8} + 
    \frac{13\,\text{ln2}\,\text{z3}}{8} + \frac{\text{ln2}\,\text{z2}\,\text{S}[-1,n]}{2} - 
    \text{z3}\,\text{S}[-1,n] - \frac{\text{ln2}\,\text{z2}\,\text{S}[1,n]}{2} + 
    \text{z3}\,\text{S}[1,n] + \text{S}[1,-1,2,n]\Bigl. \Bigr)}\\
\end{mma} 
\begin{mma}
\In \text{\bf Mellin[H[-1, 1, 0, -1, x]/(1 - x), x, n] // ReplaceByKnownFunctions // Expand}\\
\Out {11\,\text{li5half} + 3\,\text{li4half}\,\text{ln2} + \frac{{\text{ln2}}^5}{30} + 
  \frac{63\,\text{ln2}\,{\text{z2}}^2}{20} - \frac{13\,{\text{ln2}}^2\,\text{z3}}{8} + 
  \frac{3\,\text{z2}\,\text{z3}}{4} - \frac{845\,\text{z5}}{64} - 
  4\,\text{li4half}\,\text{S}[-1,n] - \frac{{\text{ln2}}^4\,\text{S}[-1,n]}{6} + 
  \frac{3\,{\text{ln2}}^2\,\text{z2}\,\text{S}[-1,n]}{4} + 
  \frac{33\,{\text{z2}}^2\,\text{S}[-1,n]}{20} - 
  \frac{13\,\text{ln2}\,\text{z3}\,\text{S}[-1,n]}{4} + 4\,\text{li4half}\,\text{S}[1,n] + 
  \frac{{\text{ln2}}^4\,\text{S}[1,n]}{6} - 
  \frac{3\,{\text{ln2}}^2\,\text{z2}\,\text{S}[1,n]}{4} - 
  \frac{33\,{\text{z2}}^2\,\text{S}[1,n]}{20} + 
  \frac{13\,\text{ln2}\,\text{z3}\,\text{S}[1,n]}{4} + \text{ln2}\,\text{S}[-1,-1,-2,n] + 
  \frac{\text{z2}\,\text{S}[-1,-1,1,n]}{2} - \text{ln2}\,\text{S}[-1,-1,2,n] + 
  \text{S}[-1,-1,-2,-1,n]}\\
\end{mma}
\end{fmma}

\begin{remark}
The procedure \textnormal{ReplaceByKnownFunctions} uses tables by Vermaseren to replace harmonic sums at infinity and harmonic polylogarithms in one by known functions. For these known functions we use the notation used by Vermaseren. The tables can be found at www.nikhef.nl/$\sim$form. 
\end{remark}

\section{The Inverse Mellin Transform of Multiple Harmonic Sums}
\label{Inverse Mellin Transform}
We already saw that the Mellin transform of harmonic polylogarithms can be expressed in terms of multiple harmonic sums. Now we want to go the other way round.\\
In order to get the inverse Mellin transform of multiple harmonic sums, we have to consider distributions. In fact we will consider distributions which are either differentiable functions in the class of $C^\infty((0,1))$ or are Dirac-$\delta$-distributions $\delta(1-x) \in D'[0,1]$. We want to point out that the inverse Mellin transform of constants is $\delta(x-1)$ where $\delta(x)$ is the Dirac-$\delta$-distribution and that $\delta(x-1)$ will only appear as the inverse Mellin transform of constants. Since we are mainly interested in the differentiation of multiple harmonics sums and since a constant vanishes if it is differentiated, we will skip further details on the inverse Mellin transform of constants. 
Subsequently, we will write $\textnormal{M}^{-1}(c)$ for the inverse Mellin transform of a constant $c$.\\
To prepare the stage we give some properties of the shuffle product of harmonic sums; here the product among harmonic sums is defined as for harmonic polylogarithms in Definition \ref{polyshuff}.

\subsection{Properties of the Shuffle Product}

\begin{definition}
For $a_k\in \Z/\{0\}$ with $1\leq k\leq l$ and $c \in \Z/\{0\}$, we define 
\begin{eqnarray*}
S_{a_1,a_2,\ldots,a_{k-1},\left\langle a_k, c \right\rangle, a_{k+1},\ldots,a_l}(n)&:=&
 S_{a_1,a_2,\ldots,a_{k-1}, a_k, c , a_{k+1},\ldots,a_l}(n)\\
&&+S_{a_1,a_2,\ldots,a_{k-1},a_k, a_{k+1},c,a_{k+2},\ldots,a_l}(n)\\
&&+\cdots
+S_{a_1,a_2,\ldots,a_{l-2},a_{l-1},c,a_l}(n)\\
&&+S_{a_1,a_2,\ldots,a_{l-1},a_l,c}(n);
\end{eqnarray*}
similarly, we define
\begin{eqnarray*}
(a_1,a_2,\ldots,a_{k-1},\left\langle a_k, c \right\rangle, a_{k+1},\ldots,a_l)&:=&
\{(a_1,a_2,\ldots,a_{k-1}, a_k, c , a_{k+1},\ldots,a_l),\\
&&(a_1,a_2,\ldots,a_{k-1},a_k, a_{k+1},c,a_{k+2},\ldots,a_l)\\
&&,\cdots
,(a_1,a_2,\ldots,a_{l-2},a_{l-1},c,a_l),\\
&&(a_1,a_2,\ldots,a_{l-1},a_l,c)\}.
\end{eqnarray*}
\label{placehold}
\end{definition}

\begin{example}
\begin{eqnarray*}
S_{1,\left\langle 2, 3 \right\rangle,4,5}(n)&=&S_{1,2,3,4,5}(n)+S_{1,2,4,3,5}(n)+S_{1,2,4,5,3}(n).\\
\end{eqnarray*}
\end{example}

\begin{lemma}
Let $a_i,b_j\in \Z/\{0\}$, for $1\leq i\leq k$ and $1\leq j\leq l$ and consider the shuffle product $S_{a_1,a_2,\ldots,a_k}(n) \shuffle S_{b_1,b_2,\ldots,b_l}(n)$. Let $P$ be the representation of 
$$
S_{a_1,a_2,\ldots,a_k}(n) \shuffle S_{b_1,b_2,\ldots,b_l}(n)
$$
such that it is linear in the harmonic sums \footnote{Note that this is always possible by using (\ref{hshuffpro})}. If we replace every occurrence of $b_l$ in $P$ by $\left\langle b_l, b_{l+1} \right\rangle$ and call the result $\overline{P}$, then we get (using Definition \ref{placehold}) that $\overline{P}$ is a representation of
$$
S_{a_1,a_2,\ldots,a_k}(n)\shuffle S_{b_1,b_2,\ldots,b_l,b_{l+1}}(n)
$$
\label{sl1}
\end{lemma}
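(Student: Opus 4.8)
The plan is to prove the identity combinatorially, reading the shuffle product through its standard description as a sum over interleavings. Abbreviate $\ve a=a_1\cdots a_k$ and $\ve b=b_1\cdots b_l$, and treat the indices as formal letters, so that (before collecting equal index words) $S_{\ve a}(n)\shuffle S_{\ve b}(n)$ expands, via repeated application of (\ref{hshuffpro}), into a sum of words $\ve w$, each of length $k+l$ and each with coefficient $1$, where $\ve w$ ranges over all interleavings of $\ve a$ and $\ve b$ that preserve the internal order of the two factors. In every such $\ve w$ the letter $b_l$ occurs exactly once, so the prescription ``replace $b_l$ by $\langle b_l, b_{l+1}\rangle$'' is unambiguous at the level of these words: $P$ is precisely this sum, and $\overline{P}$ is obtained by applying Definition \ref{placehold} at the unique occurrence of $b_l$ in each term.

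First I would record the key structural fact: since $b_l$ is the \emph{last} letter of $\ve b$, in any interleaving $\ve w$ of $\ve a$ with $\ve b$ every letter lying to the right of $b_l$ must come from $\ve a$. Hence the gaps strictly after $b_l$ in $\ve w$ are exactly ``immediately after $b_l$, after the next letter, $\dots$, at the very end'', which is verbatim the list of positions into which the operator $\langle b_l, b_{l+1}\rangle$ of Definition \ref{placehold} inserts $b_{l+1}$. This is the observation that makes the placeholder notation match the shuffle.

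Next I would set up the bijection. Each interleaving $\ve w'$ of $\ve a$ with the longer word $\ve b\,b_{l+1}=b_1\cdots b_l b_{l+1}$ contains $b_{l+1}$ exactly once, necessarily to the right of $b_l$; deleting that letter yields an interleaving $\ve w$ of $\ve a$ with $\ve b$, and deletion is inverse to inserting $b_{l+1}$ into some gap after $b_l$. By the previous paragraph these insertion gaps coincide with those of $\langle b_l, b_{l+1}\rangle$, so deletion is a coefficient-preserving bijection between the terms of $S_{\ve a}(n)\shuffle S_{b_1,\ldots,b_l,b_{l+1}}(n)$ and the terms produced from $S_{\ve a}(n)\shuffle S_{\ve b}(n)$ by the replacement. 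Summing gives $S_{\ve a}(n)\shuffle S_{b_1,\ldots,b_l,b_{l+1}}(n)=\overline{P}$; passing from formal letters back to numerical indices (collecting equal harmonic sums) is linear and commutes with both sides, so the identity descends to the stated form.

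For a fully self-contained argument I would make this rigorous by induction on $k=\ell(\ve a)$, using the ``peel from the back'' form of the shuffle recursion, $\ve a\shuffle\ve b=(\ve a'\shuffle\ve b)a_k+(\ve a\shuffle\ve b')b_l$ for $\ve a=\ve a'a_k$, $\ve b=\ve b'b_l$ (valid by the same interleaving description, or by a short separate induction). The base case $k=0$ is immediate, since then $\langle b_l, b_{l+1}\rangle$ only appends $b_{l+1}$. In the inductive step the one point needing care is that the replacement distributes correctly over this decomposition: in $(\ve a'\shuffle\ve b)a_k$ the letter $b_l$ sits \emph{before} the trailing $a_k$, so inserting $b_{l+1}$ after it splits into ``before $a_k$'' (handled by the induction hypothesis) plus ``at the very end'', whereas in $(\ve a\shuffle\ve b')b_l$ the distinguished $b_l$ is itself the trailing letter, so the replacement merely appends $b_{l+1}$. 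The main obstacle is exactly this bookkeeping --- verifying that ``insert after $b_l$'' is compatible with appending $a_k$ versus appending $b_l$ --- after which the two expansions match term by term and the induction closes.
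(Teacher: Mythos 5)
Your proof is correct and follows essentially the same route as the paper: both arguments identify the terms of the shuffle product with the interleavings of $\ve a$ and $\ve b$ (Remark \ref{remshuff}) and use the fact that, since $b_l$ is the last letter of $\ve b$, the insertion positions of $\left\langle b_l, b_{l+1} \right\rangle$ from Definition \ref{placehold} are exactly the admissible positions for $b_{l+1}$ in an interleaving of $\ve a$ with $(b_1,\ldots,b_l,b_{l+1})$. Your deletion--insertion bijection is just a sharper phrasing of the paper's two set inclusions $A\subseteq B$ and $B\subseteq A$.
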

We give an example which illustrates this lemma.
\begin{example} Consider
$$
S_{a_1,a_2}(n)\shuffle S_{b_1}=S_{a_1,a_2,b_1}(n)+S_{a_1,b_1,a_2}(n)+S_{b_1,a_1,a_2}(n).
$$
If we replace $b_1$ by $\left\langle b_1, b_2 \right\rangle$ we get:
\begin{eqnarray*}
S_{a_1,a_2,\left\langle b_1, b_2 \right\rangle}(n)+S_{a_1,\left\langle b_1, b_2 \right\rangle,a_2}(n)+S_{\left\langle b_1, b_2 \right\rangle,a_1,a_2}(n)&=&
S_{a_1,a_2,b_1,b_2}(n)+S_{a_1,b_1,b_2,a_2}(n)\\
&&+S_{a_1,b_1,a_2,b_2}(n)+S_{b_1,b_2,a_1,a_2}(n)\\
&&+S_{b_1,a_1,b_2,a_2}(n)+S_{b_1,a_1,a_2,b_2}(n)\\
&=&S_{a_1,a_2}(n)\shuffle S_{b_1,b_2}.
\end{eqnarray*}
\end{example}

\begin{remark}
If we define $\ve a \cup \ve b$ for two vectors $\ve a=(a_1,a_2,\ldots,a_k)$ and $\ve b=(b_1,b_2,\ldots,b_l)$ as the set of all mergings of $\ve a$ and $\ve b$ in which the relative orders of the $a_i$ and $b_j$ are preserved (for example $(a_1,a_2,b_1,b_2,\ldots,b_l,a_3,a_4,\ldots,a_k)\in \ve a \cup \ve b$ but $(a_2,a_1,b_1,b_2,\ldots,b_l,a_3,a_4,\ldots,a_k) \notin \ve a \cup \ve b)$ then the shuffle product is nothing else but
$$
S_{\ve a}(n)\shuffle S_{\ve b}(n)=\sum_{\ve c\in \ve a \cup \ve b}{S_{\ve c}(n)}.
$$
\label{remshuff}
\end{remark}

We give a proof of Lemma \ref{sl1}:
\begin{proof}
Let $\ve a=(a_1,a_2,\ldots,a_k)$ and $\ve b=(b_1,b_2,\ldots,b_l)$, let $\overline{A}$ be the set $\ve a \cup \ve b,$ however replace each $b_l$ in $\overline{A}$ by $\left\langle b_l, b_{l+1} \right\rangle$ and let $B=(a_1,a_2,\ldots,a_k) \cup (b_1,b_2,\ldots,b_l,b_{l+1}).$ Let $A$ be the union of the sets which arise after using Definition \ref{placehold} for all elements in $\overline{A}.$ It remains to show that $A=B$.\\
We start by showing $A \subseteq B:$ Take an element $\ve x \in A$. The relative order of the $a_i$ and $b_j$ in $\ve a \cup \ve b$ is preserved and the exchange of $b_l$ by $\left\langle b_l, b_{l+1} \right\rangle$ does not change the order of $a_i$ and $b_j.$ Due to Definition \ref{placehold} $b_l$ is left to $b_{l+1}$ for all harmonic sums in $A.$ Hence $x$ has to be in $B.$\\
Now we show $B \subseteq A:$ Obviously it is always possible to find for a given $x\in B$ an element $y\in \overline{A}$ such that after using Definition \ref{placehold} $x$ occurs in $y$ (for example let $x=(a_1,b_1,b_2,\ldots ,b_{l-1},b_l,a_2,\ldots,a_k,b_{l+1})$ then take $y=(a_1,b_1,b_2,\ldots,b_{l-1},\left\langle b_l, b_{l+1} \right\rangle,a_2,\ldots,a_k)$). Hence $x\in A.$ This completes the proof.
\end{proof}

\subsection{The \upshape Most Complicated \itshape Harmonic Sum}

In order to get the inverse Mellin transform of multiple harmonic sums, we exploit the following order on the set of harmonic sums.

\begin{definition}[Order on harmonic sums]
Let $S_{\ve m_1}(n)$ and $S_{\ve m_2}(n)$ be harmonic sums with weights $w_1$, $w_2$ and depths $d_1$ and $d_2,$ respectively. Then
$$
		  	\begin{array}{ll}
						S_{\ve m_1}(n) \prec S_{\ve m_2}(n), \ \textnormal{if} \ w_1<w_2, &\\
						S_{\ve m_1}(n) \prec S_{\ve m_2}(n), \ \textnormal{if} \ w_1=w_2 \ \textnormal{and} \ d_1<d_2.& 
				\end{array}
$$
For a set of harmonic sums we call a harmonic sum \upshape most complicated \itshape if it is a largest function with respect to $\prec$. 
\label{sord} 
\end{definition}

\begin{example}
We have
$$S_{1,1}(n)\prec S_{1,2}(n)\prec S_{1,1,1}(n)\prec S_{2,1,1}(n).$$
If we consider the set of these 4 sums then $S_{2,1,1}(n)$ is \itshape most complicated\upshape. Note that $\prec$ is not a linear ordering, e.g.,
$$
S_{1,2,3}(n) \nprec S_{2,1,3}(n) \ \textnormal{and} \ S_{2,1,3}(n) \nprec S_{1,2,3}(n).
$$
If we consider the set of these 2 sums then $S_{1,2,3}(n)$ and $S_{2,1,3}(n)$ are \itshape most complicated\upshape.
\end{example}

In the following we show that in the Mellin transform of a harmonic polylogarithm there is only one \itshape most complicated \upshape harmonic sum. 

\begin{prop}
In the Mellin transform of a harmonic polylogarithm $\H{\ve m}x$ weighted by $1/(1-x)$ or $1/(1+x)$ there is only one \upshape most complicated \itshape harmonic sum $S_{\overline{\ve m}}(n),\ie$
$$
\M{\frac{\H{\ve m}{x}}{1\pm x}}{n}=S_{\overline{\ve m}}(n)+t
$$
where all harmonic sums in $t$ occur linearly and are smaller then $S_{\overline{\ve m}}(n)$.
\label{singlmostcomp}
\end{prop}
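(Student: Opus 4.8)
My plan is to argue by induction on the weight $\abs{\ve m}$, using the recursive evaluation of the Mellin transform provided by Lemmas \ref{genrec1}, \ref{genrec2} and \ref{genrec3} (these are the identities underlying Algorithm \ref{MellinAlgorithm}). Since the recursions act not on Mellin transforms but on the auxiliary integrals
$$
I(\sigma;\ve m;k;\ve p):=\int_0^1\sum_{i=n}^\infty(\sigma x)^i\,\H{\ve m}{x}\,\frac{S_{\ve p}(i+1)}{(i+1)^k}\,dx,\qquad\sigma\in\{+1,-1\},
$$
I would carry the induction on all of these, the proposition being the cases $k=0$, $\ve p=()$ (with $\sigma=+1$ for $1/(1-x)$ and $\sigma=-1$ for $1/(1+x)$). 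The first observation, checked termwise on the three recursions, is that each of them is homogeneous: every harmonic sum occurring in $I(\sigma;\ve m;k;\ve p)$ has weight $W=\abs{\ve m}+k+\abs{\ve p}+1$. By Definition \ref{sord} the order $\prec$ between equal-weight sums is decided by depth alone, so ``there is a unique most complicated sum'' becomes ``exactly one summand has maximal depth $D=\mathrm{depth}(\ve p)+(\text{number of nonzero entries of }\ve m)+1$, and it occurs once''.

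The base case $\ve m=()$ is immediate: $I(\sigma;();k;\ve p)=\pm\bigl(S_{\sigma(k+1),\ve p}(\infty)-S_{\sigma(k+1),\ve p}(n)\bigr)$, whose only $n$-dependent sum is $\mp S_{\sigma(k+1),\ve p}(n)$, of depth $D$ and weight $W$; the weight-one entries of Lemma \ref{grundlemma} serve as a check. For the step I write $\ve m=(a_1,\ve m')$ and apply the recursion prescribed by $a_1$ and $\sigma$. The leading-zero branch ($a_1=0$) is clean: it produces only a boundary term $\H{0,\ve m'}{1}\cdot S_{\pm(k+1),\ve p}(n)$ and a single recursive integral $I(\sigma;\ve m';k+1;\ve p)$, whose most complicated sum has the correct depth $D$ by the induction hypothesis, while the boundary sum has depth $\mathrm{depth}(\ve p)+1<D$ unless $(0,\ve m')$ consists of zeros only, in which case $\H{\ve 0}{1}=0$ annihilates it. Thus leading (and, after enough peeling, trailing) zeros only raise the magnitude of the leading index of the emerging harmonic sum, increasing weight but never depth, exactly as required.

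The genuinely delicate branches are $a_1=\pm1$. There the recursion produces, besides the boundary term and a subdominant integral $I(\sigma;\ve m';k+1;\ve p)$ of depth $D-1$, a product term of the form $S_{\pm(k+1),\ve p}(n)\cdot I(\sigma;\ve m';0;())$. After the product has been linearised through the quasi-shuffle identity (\ref{hsumproduct}), its shuffle part contributes sums of the full depth $D$, and the main recursive integral $I(\sigma;\ve m';0;(k+1,\ve p))$ contributes depth-$D$ sums as well; a direct computation (e.g. for $\ve m=(1,1)$ one finds $2\,S_{1,1,1}(n)$ from the main integral and $-3\,S_{1,1,1}(n)$ from the product term, adding to $-S_{1,1,1}(n)$) shows that several maximal-depth contributions appear and must cancel down to a single sum. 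Establishing this cancellation is, in my view, the main obstacle, and the coefficient of the surviving sum need not be a unit at intermediate stages (only at the top level $\ve p=()$).

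I would handle this conceptually rather than by brute force. Iterating Theorem \ref{powexp} (after stripping trailing zeros by Remark \ref{remarktrailing}) shows that the power-series coefficients of $\H{\ve m}{x}$, and hence of $\H{\ve m}{x}/(1\mp x)$, are $\Q$-linear combinations of \emph{single} harmonic sums in $i$ that contain a unique member of maximal depth. Because the generating coefficient of $x^i$ is never a product, the maximal-depth part of the transform cannot split into several sums; tracking this unique maximal-depth coefficient through the regularised Mellin integration (carried out as in the proofs of Lemma \ref{weight1mel} and Lemma \ref{grundlemma}) pins the surviving sum down to a single $S_{\overline{\ve m}}(n)$ and, at the top level, fixes its coefficient as $\pm1$. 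All remaining summands are then single harmonic sums of strictly smaller weight or depth, i.e. strictly $\prec S_{\overline{\ve m}}(n)$, which is the assertion about $t$ and completes the induction.
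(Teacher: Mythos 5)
Your induction framework---carrying the induction on the auxiliary integrals $I(\sigma;\ve m;k;\ve p)$, the weight homogeneity, the clean handling of index $0$ and of the base case---matches the structure underlying the paper's own argument, and you correctly isolate the crux: for $a_1=\pm1$ the recursions of Lemmas \ref{genrec1}--\ref{genrec3} create maximal-depth sums from two distinct sources, namely the ``first'' integral $I(\sigma;\ve m';0;(k+1,\ve p))$ and the product term $S_{\pm(k+1),\ve p}(n)\cdot I(\sigma;\ve m';0;())$, and these several contributions must collapse to a single sum. The gap is that your proposed resolution of this crux is a non sequitur. Uniqueness of the maximal-depth term in the power-series coefficients (Theorem \ref{powexp}) is a statement about harmonic sums in the expansion variable $i$; the Mellin transform is not read off from those coefficients, but is obtained by integrating, which sends $x^i$ to $1/(n+i+1)$, and re-expressing the resulting sums over $i$ as harmonic sums in $n$ is exactly where products of $n$-sums are created---through the terms $S_{k+1,\ve p}(n)$ that the recursions place inside the remaining integrals---even though no product ever occurs among the $i$-coefficients. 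Your own computation for $\ve m=(1,1)$ refutes the claim that ``the maximal-depth part of the transform cannot split into several sums'': it does split, into $2S_{1,1,1}(n)$ and $-3S_{1,1,1}(n)$ coming from different branches, and nothing in the power-series structure forces these to recombine into a single term with coefficient $\pm1$.

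What closes this gap in the paper is a dedicated combinatorial cancellation statement, which your proposal lacks. The maximal-depth contributions are parametrized by $\textnormal{part}(\ve a)$ (Definition \ref{abdefpart}): each branch of the recursion that chooses, at every nonzero index of $\ve m$, either the first or the third integral contributes one product $S_{\ve b_1}(n)\cdots S_{\ve b_d}(n)$ with sign $(-1)^d$, so the total maximal-depth part is $\sum_{x\in\textnormal{part}(\ve a)}(-1)^{\textnormal{Length}(x)}x$. Lemma \ref{partlem} then proves, by induction on the length of $\ve a$ using the recursion (\ref{partrec}) for $\textnormal{part}$ together with the shuffle Lemma \ref{sl1}, that after linearization via (\ref{hsumproduct}) this alternating sum equals $(-1)^{k}\,\textnormal{rev}(\ve a)$ plus sums of strictly smaller depth; this single surviving sum is $S_{\overline{\ve m}}(n)$. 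Without Lemma \ref{partlem}, or an equivalent cancellation argument, your induction step for $a_1=\pm1$ does not close, so the proposal as it stands does not prove the proposition.
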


Let us consider the Mellin transform of the polylogarithm $\H{m_1 \ve m}x$ weighted by $\frac{1}{1-x}$ or $\frac{1}{1+x}$ and let $w$ be the weight of $\H{m_1 \ve m}{x}.$ By looking at the recursions in the Lemmas~\ref{genrec1}, \ref{genrec2} and \ref{genrec3} we can see that the multiple harmonic sums appearing in the Mellin transform have weight less or equal $w+1.$ We consider now those harmonic sums in the Mellin transform which have weight $w+1$ and which have the maximal depth and so are \itshape most complicated\upshape. 

Where do these sums emerge from? Again looking at the recursions in the Lemmas~\ref{genrec1}, \ref{genrec2} and \ref{genrec3} we see that if $m_1=1$ then we will have to do either (depending if there is a factor $(-1)^i$ present) the following 3 integrals 
\begin{eqnarray*}
&&\int_0^1 \sum_{i=n}^\infty x^i \H{\ve m}{x}S_{k+1,\ve p}(i+1)dx,\\
&&\int_0^1 \sum_{i=n}^\infty x^i \H{\ve m}{x}\frac{S_\ve p(i+1)}{(i+1)^{k+1}}dx,\\
&&\int_0^1 \sum_{i=n}^\infty x^i \H{\ve m}{x}S_{k+1,\ve p}(n)dx,
\end{eqnarray*}
or these 3 integrals
\begin{eqnarray*} 
&&\int_0^1 \sum_{i=n}^\infty x^i \H{\ve m}{x}S_{-k-1,\ve p}(i+1)dx,\\
&&\int_0^1 \sum_{i=n}^\infty x^i \H{\ve m}{x}(-1)^i\frac{S_\ve p(i+1)}{(i+1)^{k+1}}dx,\\
&&\int_0^1 \sum_{i=n}^\infty x^i \H{\ve m}{x}S_{-k-1,\ve p}(n)dx
\end{eqnarray*}
in the next step.
Likewise, we have to do three similar integrals if $m_1=-1.$ In this situation ($m_1=1$ or $m_1=-1$) the \itshape most complicated \upshape harmonic sums can only emerge from the first or the third integral; there the weight of the polylogarithm is reduced by one and the depth of the harmonic sum is raised by one; in the second integral however the weight of the harmonic polylogarithm is reduced by one, but the depth of the harmonic sum is not raised. In the situation $m_1=0$ there is just one integral and the depth of the harmonic sum is not changed. So these sums emerge from those recursion branches where in each step either a first or a third integral is used, whenever the entry in $\ve m$ is $1$ or $-1.$

How do these sums look like? Let us start with an example.
\begin{example}
Consider the Mellin transform of $\frac{\H{0,0,1,0,1}x}{1+x}.$
If we take the branch where always the first integral is used, we get the sum $S_{1,2,3}(n).$
If we take the branch where always the third integral is used, we get $S_{1}(n)*S_{2}(n)*S_{3}(n).$ This product will produce sums of highest depth: 
\begin{eqnarray*}
S_{1}(n)*S_{2}(n)*S_{3}(n)&=&
S_{1, 2, 3}(n) + S_{1, 3, 2}(n) + S_{2, 1, 3}(n) + 
  S_{2, 3, 1}(n) \\&&+ S_{3, 1, 2}(n) + S_{3, 2, 1}(n)+t.
\end{eqnarray*}
Here $t$ is a linear expression in sums of weight 6 with depth less then 3.
\end{example}

\begin{notation}
Let $\ve a=(a_1,a_2,\ldots,a_k)$ and $\ve b=(b_1,b_2,\ldots,b_l).$ We denote the concatenation product by $\ve a \cdot \ve b.$ Hence
$\ve a \cdot \ve b=(a_1,a_2,\ldots,a_k,b_1,b_2,\ldots,b_l).$
\end{notation}
\begin{definition}
Let $\ve{a}=(a_1,a_2,\ldots,a_k)$ be a vector of length $k$, $a_i\in \Z \backslash \left\{0\right\}$ and let $d, n \in \N.$  We define 
$$\textnormal{rev}(\ve a):=S_{a_k,a_{k-1},\ldots,a_1}(n)$$
and 
$$
\textnormal{part}(\ve a):=\left\{S_{\ve{b}_1}(n)\cdot S_{\ve{b}_2}(n)\cdots S_{\ve{b}_d}(n) \left| \right. (\ve{b}_1\cdot\ve{b}_2\cdot \ldots \cdot\ve{b}_d)=\ve a\right\}.
$$
Let $x=S_{\ve{b}_1}(n)\cdot S_{\ve{b}_2}(n)\cdots S_{\ve{b}_d}(n)$ be an element of $\textnormal{part}(\ve a)$. We call $d$ the length of $x$.
\label{abdefpart}
\end{definition}

\begin{example}
For $\ve a=(1,2,3)$, 
$$
\textnormal{part}(\ve a)=\left\{S_{1,2,3}(n),S_{1,2}(n)S_{3}(n),S_{1}(n)S_{2,3}(n),S_{1}(n)S_{2}(n)S_{3}(n)\right\}.
$$
\end{example}

In order to get all \itshape most complicated \upshape harmonic sums, we first look at the branch where we only use the first integral and look at the index set $\ve a$ of this multiple harmonic sum (let $\sigma$ be the sign of this sum). All \itshape most complicated \upshape harmonic sums have weight and depth equal to the weight and depth of $S_{\ve a}(n)$. Each element in $\textnormal{part}(\ve a)$ corresponds to exactly one branch where we use in each step either the first or the third integral. The sign of each element depends on $\sigma$ and the length of the element. If the length of the element is $d$ then the sign is $-\sigma\,(-1)^{d}.$ Combining all possible branches we get the following sum 
$$\sum_{x\in\textnormal{part}(\ve a)}{(-1)^{\textnormal{Length}(x)}x}.$$
If we work out the sum and linearize the harmonic sums, a lot of cancellation will take place, in fact we have the following lemma.
\begin{lemma}
Let $\ve{a}=(a_1,a_2,\ldots,a_k)$ be a vector of length $k$, $a_i\in \Z \backslash \left\{0\right\}$, then
$$
\sum_{x\in\textnormal{part}(\ve a)}{(-1)^{\textnormal{Length}(x)}x}=(-1)^{k}\textnormal{rev}(\ve a)+t,
$$
where $t$ is a linear combination of sums of depth less than $k$. 
\label{partlem}
\end{lemma}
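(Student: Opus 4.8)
The plan is to isolate the part of the left-hand side that has maximal depth $k$ and to show that there everything cancels except a single copy of $\textnormal{rev}(\ve a)$ carrying the sign $(-1)^k$; all remaining terms then fall automatically into $t$, and since $t$ is only required to have depth less than $k$ we never have to track it.

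First I would observe that each factor $S_{\ve b_j}(n)$ is a single harmonic sum and that, by (\ref{hsumproduct}), the ordinary product $S_{\ve b_1}(n)\cdots S_{\ve b_d}(n)$ is the quasi-shuffle $S_{\ve b_1 * \cdots * \ve b_d}(n)$. In the defining recursion the two leading terms $a(\ve u*\ldots)+b(\ldots*\ve v)$ preserve the total depth, while the contraction term $-(a\wedge b)(\ve u*\ve v)$ merges two letters and lowers the depth by one. Hence, by induction on the number of letters,
$$
S_{\ve b_1}(n)\cdots S_{\ve b_d}(n)=S_{\ve b_1\shuffle\cdots\shuffle \ve b_d}(n)+(\text{depth}<k),
$$
so modulo depth $<k$ the whole left-hand side equals $\sum_{C}(-1)^{d(C)}S_{\ve b_1\shuffle\cdots\shuffle\ve b_d}(n)$, the sum running over all ways $C$ of cutting the word $\ve a$ into $d=d(C)$ consecutive nonempty blocks $\ve b_1,\dots,\ve b_d$ (these are exactly the elements of $\textnormal{part}(\ve a)$, with $\textnormal{Length}=d$).

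Next I would label the entries $a_1,\dots,a_k$ by their positions $1,\dots,k$ and treat the positions as distinct letters; at the very end, identifying equal indices only sums coefficients, so once every non-reversal coefficient is shown to be $0$ the conclusion survives repetitions. By Remark \ref{remshuff} the depth-$k$ terms of $S_{\ve b_1\shuffle\cdots\shuffle\ve b_d}(n)$ are precisely the interleavings $\sigma$ of the blocks, i.e.\ the permutations of $1,\dots,k$ in which each block stays in increasing order. A cut $C$ is determined by the subset of the $k-1$ gaps between consecutive positions at which we cut, and $\sigma$ interleaves the blocks of $C$ if and only if every \emph{uncut} gap $i$ is an ascent of $\sigma$, meaning $i$ precedes $i+1$ in $\sigma$. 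Thus the coefficient of a fixed permutation $\sigma$ in the depth-$k$ part is a signed sum over all cut sets that contain every descent gap of $\sigma$.

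The combinatorial heart, which I expect to be the only genuine obstacle, is then a short signed-counting argument. Write $b(\sigma)$ for the number of descent gaps and $g(\sigma)$ for the number of ascent gaps, so $b(\sigma)+g(\sigma)=k-1$. A valid cut set consists of all $b(\sigma)$ descent gaps together with an arbitrary subset $S$ of the ascent gaps, and the corresponding composition has $1+b(\sigma)+|S|$ blocks. Therefore the coefficient of $\sigma$ equals
$$
(-1)^{1+b(\sigma)}\sum_{S}(-1)^{|S|}=(-1)^{1+b(\sigma)}(1-1)^{g(\sigma)},
$$
which vanishes unless $g(\sigma)=0$. But $g(\sigma)=0$ forces $i+1$ to precede $i$ for every $i$, i.e.\ $\sigma=(k,k-1,\dots,1)$; there $b(\sigma)=k-1$ and the coefficient is $(-1)^{1+(k-1)}=(-1)^k$. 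Consequently, in the depth-$k$ part only $\textnormal{rev}(\ve a)=S_{a_k,\dots,a_1}(n)$ survives, with coefficient $(-1)^k$, and gathering all lower-depth contributions into $t$ yields the claim.
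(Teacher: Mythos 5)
Your proof is correct, but it takes a genuinely different route from the paper's. The paper proves Lemma \ref{partlem} by induction on $k$: it splits $\textnormal{part}(\overline{\ve a})$ via Remark \ref{reminv} into the elements where $a_{k+1}$ forms its own factor (giving a term A) and those where $a_{k+1}$ is appended to the last block (giving a term B), then uses the induction hypothesis, the product formula (\ref{hsumproduct}), and the insertion Lemma \ref{sl1} to show that the depth-$(k+1)$ parts of A and B cancel except for the single sum $S_{a_{k+1},a_k,\ldots,a_1}(n)$. You instead argue non-inductively at the top level: you first reduce the quasi-shuffle product to the shuffle product modulo terms of lower depth (the contraction term in (\ref{hsumproduct}) always drops the depth, which is exactly right), then pass to distinct position-letters so that, by Remark \ref{remshuff}, the depth-$k$ coefficient of each permutation $\sigma$ becomes an alternating sum over compositions whose cut set contains all descent gaps of $\sigma$, which collapses to $(-1)^{1+b(\sigma)}(1-1)^{g(\sigma)}$. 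This makes the cancellation mechanism completely transparent and identifies the reversal as the unique survivor precisely because it is the unique permutation with all gaps descending; your specialization remark also handles repeated indices cleanly, a point the paper leaves implicit. What the paper's induction buys is that it stays entirely within the harmonic-sum machinery it has already built (Lemma \ref{sl1}, Remark \ref{reminv}) and never needs the free-algebra/labelling device; what your argument buys is a self-contained, conceptually sharper proof whose only combinatorial input is the binomial identity $\sum_{S}(-1)^{|S|}=0$ for a nonempty ground set.
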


\begin{remark}
If we know the set 
$$\textnormal{part}(\ve a)=\left\{S_{\ve{b}_1}(n)\cdot S_{\ve{b}_2}(n)\cdots S_{\ve{b}_d}(n) \left| \right. (\ve{b}_1\cdot\ve{b}_2\cdot  \ldots \cdot \ve{b}_d)=\ve a\right\}$$ for $\ve a=(a_1,\ldots,a_k),$ we can get the set $\textnormal{part}(\overline{\ve a})$ for $\overline{\ve a}=(a_1,\ldots,a_k,a_{k+1})$:
\begin{eqnarray}
\textnormal{part}(\overline{\ve a})&=&
\underbrace{\left\{S_{\ve{b}_1}(n)\cdot S_{\ve{b}_2}(n)\cdots S_{\ve{b}_d}(n)\cdot S_{a_{k+1}}(n) \left| \right. (\ve{b}_1\cdot \ve{b}_2 \cdot \ldots \cdot \ve{b}_d)=\ve a\right\}}_{\textnormal{part}_1(\overline{\ve a}):=} \nonumber \\
&&\stackrel{\bullet}{\bigcup}
\underbrace{\left\{S_{\ve{b}_1}(n)\cdot S_{\ve{b}_2}(n)\cdots S_{\ve{b}_d,a_{k+1}}(n) \left| \right. (\ve{b}_1 \cdot \ve{b}_2 \cdot \ldots \cdot \ve{b}_d)=\ve a\right\}}_{\textnormal{part}_2(\overline{\ve a}):=}.
\label{partrec}
\end{eqnarray}
\label{reminv}
\end{remark}

Now we can give the proof of Lemma \ref{partlem}.
\begin{proof}
We proceed by induction on the length $k$ of $\ve{a}=(a_1,a_2,\ldots,a_k).$ If $k=1,$ then $\textnormal{part}(\ve a) =\left\{S_{a_1}(n)\right\}$ and
$$
\sum_{x\in\textnormal{part}(\ve a)}{(-1)^{\textnormal{Length}(x)}x}=-S_{a_1}(n)=(-1)^1S_{\underbrace{\textnormal{rev}(\ve a)}_{a_1}}(n).
$$
Thus Lemma \ref{partlem} is true for $k=1.$
Let us assume that the lemma holds for $k$ and let $\overline{\ve a}=(a_1,a_2,\ldots,a_{k+1}).$ With the notation given in Remark \ref{reminv},
\begin{eqnarray*}
\sum_{x\in\textnormal{part}(\overline{\ve a})}{(-1)^{\textnormal{Length}(x)}x}&=&
\underbrace{\sum_{x\in\textnormal{part}_1(\overline{\ve a})}{(-1)^{\textnormal{Length}(x)}x}}_{\textnormal{A}:=}
+\underbrace{\sum_{x\in\textnormal{part}_2(\overline{\ve a})}{(-1)^{\textnormal{Length}(x)}x}}_{\textnormal{B}:=}.
\end{eqnarray*}
Let us first look at A:
$$
\textnormal{A}=-S_{a_{k+1}}(n)\sum_{x\in\textnormal{part}(\ve a)}{(-1)^{\textnormal{Length}(x)}x}.
$$
According to the induction hypothesis we get 
$$
\sum_{x\in\textnormal{part}(\ve a)}{(-1)^{\textnormal{Length}(x)}x}=(-1)^{k}\textnormal{rev}(\ve a)+t
$$
where $t$ is a linear combination of sums of depth less than $k$. Therefore the sums of highest depth in A are emerging from $S_{a_{k+1}}(n)S_{a_k,a_{k-1},\ldots,a_1}(n).$ By using (\ref{hsumproduct}) we get
$$
\textnormal{A}=-(-1)^{k}(S_{a_{k+1},a_{k},\ldots,a_1}(n)+S_{a_{k},a_{k+1},\ldots,a_1}(n)+\cdots+S_{a_{k},\ldots,a_1,a_{k+1}}(n))+t_A
$$
where $t_A$ is a linear combination of sums of depth less than $k+1$.\\
Let us now look at B. The sums of highest depth in B have depth $k+1$ and result from the first and second sum in (\ref{hsumproduct}) since in the third sum the depth is reduced. Hence we can neglect the third sum and just look at the first 2 sums. This implies that it suffices to look at the shuffle product and not at the quasi-shuffle product to detect the sums of highest depth. By the induction assumption, the only harmonic sum in $\sum_{x\in\textnormal{part}(\ve a)}{(-1)^{\textnormal{Length}(x)}x}$ of highest depth is $S_{a_k,a_{k-1},\ldots,a_1}(n)$. Using Lemma \ref{sl1} and (\ref{partrec}) we get that the sums of highest depth in B originate from $S_{\left\langle a_k, a_{k+1} \right\rangle ,a_{k-1},\ldots,a_1}(n)$. We obtain
\begin{eqnarray*}
\textnormal{B}=(-1)^{k}S_{\left\langle a_k, a_{k+1} \right\rangle ,a_{k-1},\ldots,a_1}(n)&=&(-1)^{k}(S_{a_{k},a_{k+1},\ldots,a_1}(n)+S_{a_{k},a_{k-1},a_{k+1},a_{k-2}\ldots,a_1}(n)\\
&&+\cdots+S_{a_{k},\ldots,a_1,a_{k+1}}(n))+t_B
\end{eqnarray*}
where $t_B$ is a linear combination of sums of depth less than $k+1$. Summarizing we get
\begin{eqnarray*}
\sum_{x\in\textnormal{part}(\overline{\ve a})}{(-1)^{\textnormal{Length}(x)}x}&=&A+B=-(-1)^{k}S_{a_{k+1},a_{k},\ldots,a_1}(n)+t_A+t_B\\
&=&(-1)^{k+1}S_{a_{k+1},a_{k},\ldots,a_1}(n)+t_A+t_B.
\end{eqnarray*}
This completes the proof of Lemma \ref{partlem}.
\end{proof}

\begin{example}
For $\ve a=(1,2,3)$, 
\begin{eqnarray*}
\sum_{x\in\textnormal{part}(\ve a)}{(-1)^{\textnormal{Length}(x)}x}&=& -S_{1,2,3}(n)+S_{1,2}(n)S_{3}(n)+S_{1}(n)S_{2,3}(n)-S_{1}(n)S_{2}(n)S_{3}(n)\\
&=&-S_{3,2,1}(n)+ S_{5, 1}(n)+ S_{3,3}(n)-S_{6}(n).
\end{eqnarray*}
\end{example}

Using the considerations above we get that Proposition \ref{singlmostcomp} holds with $S_{\overline{\ve m}}(n)=\textnormal{rev}(\ve a).$ The harmonic sum $S_{\overline{\ve m}}(n)$ originates from the branch of the recursion where always the third integral is used. \\
We emphasize once more that we are mainly interested in the computation of a harmonic polylogarithm for which a given harmonic sum is the \itshape most complicated \upshape harmonic sum in the Mellin transform. But first we go the other way round. We have to determine $\textnormal{rev}(\ve a).$

\subsubsection{Finding the \itshape Most Complicated \upshape Harmonic Sum}

\bfseries Given: \normalfont a harmonic polylogarithm $\H{\ve m}{n}.$\\
\bfseries Find: \normalfont the \itshape most complicated \upshape harmonic sum $S_{\ve a}(n)$ and its sign $\sigma$ in $\M{\frac{\H{\ve m }x}{1\pm x}}{n}.$

Of course, we could just compute the Mellin transform and look for the most complicated sum in the result. However, in our considerations it is crucial that this job can be attacked directly. In general, this is possible by applying the Lemmas \ref{genrec1}, \ref{genrec2} and \ref{genrec3} recursively. The \itshape most complicated \upshape harmonic sum originates from the branch of the recursion where always the third integral is used. Looking at the recursions in the Lemmas~\ref{genrec1}, \ref{genrec2} and \ref{genrec3} we see that there are basically two parts of the recursions: the first does not contain the factor $(-1)^i,$ and the second contains this factor. If the harmonic polylogarithm is weighted by $1/(1-x),$ then we are in the first part, otherwise if there is the factor $1/(1+x),$ we are in the second part. We start with the following observation: If we start at the first index and go right, we see that after an index $-1$ we are always in the second part, while we are always in the first part after an index $1.$ An index zero has no effect.\\
First we determine the sign $\sigma$ of the \itshape most complicated \upshape harmonic sum in $\M{\frac{\H{\ve m }x}{1\pm x}}{n}$.

\bfseries Given: \normalfont  a harmonic polylogarithm $\H{\ve m}{n}.$\\
\bfseries Find: \normalfont  the sign $\sigma$ of the \itshape most complicated \upshape harmonic sum $S_{\ve a}(n)$ in $\M{\frac{\H{\ve m }x}{1\pm x}}{n}.$

We start with $\sigma:=-1$ since there is a minus in front of all integrals in the recursions. It depends on which part of the recursion we are how the sign is changed:

\Tree[.{Part 1 ($(-1)^i$ is present)} [.1 {$\sigma \rightarrow \sigma$\\stay} ] [.0 {$\sigma \rightarrow -\sigma$\\stay} ] [.-1 {$\sigma \rightarrow \sigma$\\goto part 2} ] ]
\Tree[.{Part 2 ($(-1)^i$ is not present)} [.1 {$\sigma \rightarrow -\sigma$\\ goto part 1} ] [.0 {$\sigma \rightarrow -\sigma$\\ stay } ] [.-1 {$\sigma \rightarrow -\sigma$\\ stay} ] ]

Analyzing these diagrams we notice that a zero always changes the sign, an other index just changes the sign, if the next nonzero index to the left of it is $-1.$ In addition, the first nonzero index changes the sign if the factor $1/(1+x)$ is present, otherwise if the factor $1/(1-x)$ is present, the sign remains unchanged. Summarizing, it is enough to count the indices less or equal zero. If there are $k$ indices less or equal zero then the sign is:
\begin{eqnarray}
	\sigma &=&\left\{ 
		  	\begin{array}{ll}
						-(-1)^k,\  \textnormal{if } $1/(1-x)$ \textnormal{ is present} & \\
						(-1)^k,\  \textnormal{if } $1/(1+x)$ \textnormal{ is present.} & 
					\end{array} \right. 
\label{polysign}					
\end{eqnarray}
Now we determine the index set of  the most complicated sum:

\bfseries Given: \normalfont a harmonic polylogarithm $\H{\ve m}{n}.$\\
\bfseries Find: \normalfont the index set $\ve a$ of the \itshape most complicated \upshape harmonic sum $S_{\ve a}(n)$ in $\M{\frac{\H{\ve m }x}{1\pm x}}{n}.$

Note that the index set of the harmonic polylogarithm and its \itshape most complicated \upshape harmonic sum are closely related if we use the modified notation (Notation \ref{not1}) of harmonic sums as follows. From the recursions in the Lemmas \ref{genrec1}, \ref{genrec2} and \ref{genrec3} we get that the index set of the harmonic polylogarithm and the index set of the \itshape most complicated \upshape harmonic sum in the Mellin transform of the polylogarithm are almost equal. However, some signs may change and there is an extra index. This extra index arises since the recursions result in a harmonic sum of the form $S_k(n)$, $k=\pm1$  in the base case (just $1/(1-x)$ or $1/(1+x)$ and no harmonic polylogarithm is present). So we append the index 1 to the index set. How are the signs going to change? This depends again on the part of recursion we have to consider. If we are in the first part (of the diagram above) the sign is not changed, if we are in the second part, the sign is changed.\\
So we can proceed as follows: We go from right to the left in $\ve m$. If the next nonzero index to the left is 1, then the actual sign of the index stays the same, otherwise if it is $-1$, the sign changes. The leftmost nonzero index keeps its sign if the factor is $1/(1-x)$, otherwise if the factor $1/(1+x)$ is present, the sign is changed. After this we switch the usual notation of harmonic sums.

\begin{algorithm}
\label{mostcompalg}
\caption{Computation of the Most Complicated Harmonic Sum in the Mellin Transform of a Weighted Harmonic Polylogarithm}
\begin{algorithmic}
\Procedure{MostComplicated}{$\frac{\H{\ve m}x}{1+w*x}$}\Comment{$w=\pm1$}
\State $\ve a=\ve m$
\State Append the index $1$ to $\ve a$
\For{$i=\text{Length}(\ve a)$ to 1}
			\If{$\ve a[i]\neq 0$}
					\State j=i-1;
					\While{$j>0$ and $\ve{a}[j]=0$}
							\State j=j-1;
					\EndWhile
					\If{$j>0$ and $\ve{a}[j]<0$}
							\State $\ve{a}[i]=-\ve{a}[i]$
					\EndIf
					\If{$j=0$ and $w=1$}
							\State $\ve{a}[i]=-\ve{a}[i]$
					\EndIf
			\EndIf		
\EndFor
\State $k=$number of indices $\leq 0$ in $\ve m$
\State \textbf{return} $w (-1)^k S_{\ve a}(n)$	
\EndProcedure
\end{algorithmic}
\end{algorithm}

\begin{example}   
Consider $\H{1,-1,0,1,0,-1}x/(1-x).$ There are $4$ indices less or equal $1$, so the overall sign is $-1$ by (\ref{polysign}).
Start with $\left\{1,-1,0,1,0,-1,1\right\}$ as the index set for the harmonic sum; we added the extra index $1$. Looking at the signs of the indices we arrive at $\left\{1,-1,0,-1,0,-1,-1\right\}.$ So the most complicated harmonic sum in the Mellin transform of $\H{1,-1,0,1,0,-1}x/(1-x)$ is $-S_{1,-1,-2,-2,-1}(n).$
\end{example}

Summarizing, we end up at Algorithm~\ref{mostcompalg} to find the most complicated harmonic sum and its sign in $\M{\frac{\H{\ve m }x}{1\pm x}}{n}.$

\begin{remark}
Due to our considerations we notice that the mapping which maps a weighted polylogarithm onto the \upshape most complicated \itshape harmonic sum in its Mellin transform is injective.
\label{injectiv1}
\end{remark}

\subsubsection{The Reverse Direction}
We are now able to compute the \itshape most complicated \upshape harmonic sum in the Mellin transform of a harmonic polylogarithm. Now we are ready to consider the following problem.

\bfseries Given: \normalfont a harmonic sum $S_{\ve a}(n).$\\
\bfseries Find: \normalfont a harmonic polylogarithm $\H{\ve m}x$ such that $S_{\ve a}(n)$ is the \itshape most complicated \upshape harmonic sum in $\M{\frac{\H{\ve m }x}{1\pm x}}{n}.$

Analyzing Algorithm \ref{mostcompalg}, we notice that the number of negative indices of the \itshape most complicated \upshape harmonic sum in the Mellin transform of  $\H{\ve m}{x}/(1-x)$ is even no matter how $\ve m$ looks like. On the other hand, this number is odd if we consider the Mellin transform of $\H{\ve m}{x}/(1+x)$, again it does not matter how $\ve m$ looks like. 
So we can decide how the polylogarithm has to be weighted in order to get a special harmonic sum, say $S_{\ve a}(n),$ to be the \itshape most complicated \upshape one in the Mellin transform of the weighted polylogarithm. Knowing this, we look at $S_{\ve a}(n)$ in the expanded notation and do the reverse of Algorithm \ref{mostcompalg}: First we drop the last index, then we go from left to right; the first nonzero index changes its sign if the logarithm is weighted by $1/(1+x)$, otherwise if it is weighted by $1/(1-x),$ the first nonzero index holds its sign. The second nonzero index changes its sign if the first nonzero index is negative in the current configuration, otherwise it holds its sign. Proceeding in this manner to the last nonzero index we will get the index set $\ve m$, such that $S_{\ve a}(n)$ is the \itshape most complicated \upshape harmonic sum in the Mellin transform of $\H{\ve m}{x}$ weighted by the already known factor. It remains to decide about the sign of the polylogarithm. In order to do that, we can use (\ref{polysign}) where $k$ is the number of indices less or equal zero in $\ve m.$

Let us summarize our considerations; compare \cite{Remiddi2000}:
Given a multiple harmonic sum $S_{\ve a}(n)$, we want to compute a polylogarithm $\H{\ve m}{x}/(1\pm x)$ such that $S_{\ve a}(n)$ is the \itshape most complicated \upshape harmonic sum in the Mellin transform of $\H{\ve m}{x}/(1\pm x)$. Let us look at the expanded index set of the harmonic sum:
\begin{itemize}
	\item If the number of negative indices is even, set $\sigma=-1$, and there will be a factor $f=1/(1-x);$ otherwise set $\sigma=1,$ there will be the factor $f=1/(1+x)$.
	\item Drop the last index. Take the remaining indices as the indices of the polylogarithm.
	\item The first (leftmost) nonzero index changes its sign if $f=1/(1+x)$, otherwise if $f=1/(1-x),$ the first index holds its sign.
	\item Working from the second nonzero index to the right, each nonzero index will change its sign if the nonzero index to the left is negative, otherwise it holds its sign.
	\item Multiply the term by $\sigma(-1)^k$ where $k$ is the number of indices which are less or equal zero in the current configuration.
\end{itemize}

\begin{remark}
Due to our considerations we notice that the mapping, which maps a harmonic sum $S_{\ve a}(n)$ onto a weighted harmonic polylogarithm $\H{\ve m}x/(1\pm x)$ such that $S_{\ve a}(n)$ is the \upshape most complicated \itshape harmonic sum in the Mellin transform of this $\H{\ve m}x/(1\pm x),$ is injective. Hence, using Remark \ref{injectiv1}, it follows that the mapping which maps a weighted polylogarithm onto the \upshape most complicated \itshape harmonic sum in its Mellin transform is bijective.
\label{injectiv2}
\end{remark}

\subsection{Computation of the Inverse Mellin Transform}
The computation of the inverse Mellin transform of a harmonic sum now is straightforward \cite{Remiddi2000}:
\begin{itemize}
	\item Locate the most complicated harmonic sum.
	\item Construct the corresponding harmonic polylogarithm.
	\item Add it and subtract it.
	\item Perform the Mellin transform to the subtracted version. This will cancel the original harmonic sum.
	\item Repeat the above steps until there are no more harmonic sums.
	\item Let $c$ be the remaining constant term and replace it by $\textnormal{M}^{-1}(c)$, or equivalently, multiply $c$ by $\delta(1-x)$ (see the beginning of this section).
\end{itemize}
A detailed description is given in Algorithm~\ref{invmellalg}.
\begin{algorithm}
\label{invmellalg}
\caption{Inverse Mellin Transform}
\begin{algorithmic}
\Procedure{InvMellin}{$S_{\ve a}(n),x$}\Comment{$x$ is argument of output function}
\State $\ve b=\ve a$
\State $i=$number of negative indices in $\ve b$
\If{$i$ is even}
\State $f=1/(1-x)$
\State $\sigma=-1$
\Else
\State $f=1/(1+x)$
\State $\sigma=1$
\State $\ve{b}[1]=-\ve{b}[1]$\Comment{first index of $\ve b$ changes sign}
\EndIf
\For{$i=2$ to Length($\ve{b}$)}
			\If{$\ve{b}[i-1]<0$}
			\State $\ve{b}[i]=-\ve{b}[i]$
			\EndIf
\EndFor
\State transform $S_{\ve b}(n)$ to its expanded notation\Comment{See \ref{not1}}
\State drop the last index of $\ve b$
\State $k=$number of indices $\leq 0$ in $\ve a$
\State $m=S_{\ve a}(n)-\sigma(-1)^k\Call{Mellin}{f\H{\ve a}{x},n}$
\While{$M$ contains a harmonic sum}
\State $s=$most complicated harmonic sum in $M$\Comment{Use \ref{sord}}
\State $M=M-\Call{InvMellin}{s,x}$
\EndWhile
\State multiply the constant term in $M$ by $\delta(1-x)$
\State \textbf{return} $M$	
\EndProcedure
\end{algorithmic}
\end{algorithm}

\begin{example}
Consider the sum $S_{1,1,2,1}(n)$, which is $S_{1,1,0,1,1}(n)$ in the expanded index set notation (see Notation \ref{not1}). This sum is the most complicated harmonic sum in the Mellin transform of $\H{1,1,0,1}x/(1-x).$ The Mellin transform of $\H{1,1,0,1}x/(1-x)$ yields:
$$
-4 \zeta_5 - \zeta_2 S_{1, 1, 1}(n) + S_{1, 1, 2, 1}(n).
$$
Therefore after adding and subtracting we arrive at:
$$
\frac{\H{1,1,0,1}x}{1-x}+4 \zeta_5 + \zeta_2 S_{1, 1, 1}(n).
$$
Now we have to consider $S_{1, 1, 1}(n)$. This sum is the most complicated harmonic sum in the Mellin transform of $-\H{1,1}x/(1-x).$ The Mellin transform of $-\H{1,1}x/(1-x)$ is $S_{1, 1, 1}(n).$ We get:
$$
\frac{\H{1,1,0,1}x}{1-x}+4 \zeta_5 - \zeta_2 \frac{\H{1,1}x}{1-x}.
$$
There is no harmonic sum left. So we multiply the constant term by $\delta(1-x)$ and arrive at the inverse Mellin transform of $S_{1,1,2,1}(n):$
$$
\frac{\H{1,1,0,1}x}{1-x}- \zeta_2 \frac{\H{1,1}x}{1-x}+4 \zeta_5 \delta(1-x).
$$
\end{example}

With \ttfamily HarmonicSums \rmfamily this can be carried out as follows:
\begin{fmma}
\begin{mma}
\In \text{\bf InvMellin[S[1, 1, 2, 1, n], x]}\\
\Out {4\,\text{z5}\,\text{Delta1x}+\frac{\text{z2}\,\text{H}[1, 1, x]}{-1+x}-\frac{\text{H}[1, 1, 0, 1, x]}{-1 + x}}\\
\end{mma}
\begin{mma}
\In \text{\bf $\%$ // ReplaceByKnownFunctions}\\
\Out {4\,\text{Delta1x}\,\text{z5} + \frac{\text{z2}\,\text{H}[1,1,x]}{-1 + x} - 
  \frac{\text{H}[1,1,0,1,x]}{-1 + x}}\\
\end{mma}
\begin{mma}
\In \text{\bf InvMellin[S[2, -1, -1, n], x]}\\
\Out {\frac{(-1)^n \text{H}[-1, 1]^2 \text{H}[0, x] + (-1)^n \text{H}[-1, 1] \text{H}[0, -1, 1] - (-1)^n \text{H}[-1, 
            1] \text{H}[0, -1, x]}{1 + x} + \frac{1}{1 - x}(\text{H}[-1, 1]^2 \text{H}[0, x] - 
        \text{H}[-1, 1] \text{H}[0, 1, 1] + \text{H}[-1, 1] \text{H}[0, 1, x] + \text{H}[0, 1, -1, 1] - 
        \text{H}[0, 1, -1, x]) + 
  \text{Delta1x} (-\text{H}[-1, 1]^2 \text{H}[-1, 0, 1] - \text{H}[-1, 1]^2 \text{H}[0, -1, 1] + 
        \text{H}[-1, 1] \text{H}[1, 1] \text{H}[0, 1, 1] - \text{H}[-1, 1]^2 \text{H}[1, 0, 1] + 
        \text{H}[-1, 1] \text{H}[-1, 0, -1, 1] - \text{H}[1, 1] \text{H}[0, 1, -1, 1] - 
        \text{H}[-1, 1] \text{H}[1, 0, 1, 1] + \text{H}[1, 0, 1, -1, 1])}\\
\end{mma}
\begin{mma}
\In \text{\bf $\%$ // ReplaceByKnownFunctions}\\
\Out {\text{Delta1x}\,\text{li4half} + \frac{\text{Delta1x}\,{\text{ln2}}^4}{24} + 
  \frac{5\,\text{Delta1x}\,{\text{ln2}}^2\,\text{z2}}{4} + 
  \frac{\text{ln2}\,\text{z2}}{2 - 2\,x} + 
  \frac{{\left( -1 \right) }^n\,\text{ln2}\,\text{z2}}{2\,\left( 1 + x \right) } + 
  \frac{\text{Delta1x}\,{\text{z2}}^2}{40} + \frac{\text{z3}}{-1 + x} + 
  {\text{ln2}}^2\,\left( \frac{1}{1 - x} + \frac{{\left( -1 \right) }^n}{1 + x} \right) \,\text{H}[0,x] - 
  \frac{{\left( -1 \right) }^n\,\text{ln2}\,\text{H}[0,-1,x]}{1 + x} + 
  \frac{\text{ln2}\,\text{H}[0,1,x]}{1 - x} + \frac{\text{H}[0,1,-1,x]}{-1 + x}}\\
\end{mma}
\end{fmma}

\section{Differentiation of Multiple Harmonic Sums}
We are now able to calculate the inverse Mellin transform of harmonic sums. It turned out that they are usually linear combinations of harmonic polylogarithms weighted by the factors $1/(1\pm x)$ and they can be distribution-valued. By computing the Mellin transform of a harmonic sum we find in fact an analytic continuation of the sum to $n\in\R.$ For explicitly given analytic continuations see \cite{Bluemlein1999,Bluemlein2000,Bluemlein2004,Bluemlein2005,Bluemlein2008}. The differentiation of harmonic sums in the physic literature has been considered the first time in \cite{Bluemlein1999}. Worked out in \cite{Bluemlein2008,Bluemlein2009,Bluemlein2009a} this allows us to consider differentiation with respect to $n$, since we can differentiate the analytic continuation. Afterwards we may transform back to harmonic sums with the Mellin transform. Differentiation turns out to be relatively easy if we represent the harmonic sum using its inverse Mellin transform as the following example suggests (see \cite[p.81]{Paris2001}):
\begin{example}
$$\frac{d}{d n}\int_0^1{x^n\H{-1}{x}dx}=\int_0^1{x^n\log{(x)}\H{-1}{x}dx}=\int_0^1{x^n\H{-1,0}{x}dx}+\int_0^1{x^n\H{0,-1}{x}dx}.$$
\end{example}

Based on this example, if we want to differentiate $S_\ve a(n)$ with respect to $n$ we can proceed as follows:
\begin{itemize}
	\item Calculate the inverse Mellin transform of $S_\ve a(n).$
	\item Set the constants to zero and multiply the remaining terms of the inverse Mellin transform by $\H0x$. This is in fact 									differentiation with respect to	$n$. 
	\item Calculate the Mellin transform of the multiplied inverse Mellin transform of $S_\ve a(n).$ 
\end{itemize}

\begin{example}
Let us differentiate $S_{2,1}(n).$ The inverse Mellin transform is:
$$
\frac{\H{0,1}x-\zeta_2}{1-x}+\textnormal{M}^{-1}(2\zeta_3).
$$

Hence we have:
\begin{eqnarray*}
S_{2,1}(n)&=&\int_0^1{x^n\frac{\H{0,1}x-\zeta_2}{1-x}dx}+\int_0^1{x^n\textnormal{M}^{-1}(2\zeta_3)dx}\\
		      &\stackrel{\H{0,1}1=\zeta_2}{=}&\int_0^1{\frac{x^n\H{0,1}x-\H{0,1}1}{1-x}dx}+\int_0^1{\frac{x^n\zeta_2-\zeta_2}{1-x}dx}
		      		+\int_0^1{x^n\textnormal{M}^{-1}(2\zeta_3)dx}\\
					&=&\M{\frac{\H{0,1}x}{1-x}}{n}-\M{\frac{\zeta_2}{1-x}}{n}+2\zeta_3.\nonumber
\end{eqnarray*}
Differentiating the right hand side with respect to $n$ yields:
\begin{eqnarray*}
&& \int_0^1{x^n\log{(x)}\frac{\H{0,1}x-\zeta_2}{1-x}dx}=\int_0^1{x^n\H{0}x\frac{\H{0,1}x-\zeta_2}{1-x}dx}\\
 &&\ \ \ = \int_0^1{2 x^n\frac{\H{0,0,1}x}{1-x}+x^n\frac{\H{0,1,0}x}{1-x}-\zeta_2x^n\frac{\H{0}x}{1-x}dx}\\
 &&\ \ \ = \int_0^1{2 x^n\frac{\H{0,0,1}x-\H{0,0,1}1}{1-x}+x^n\frac{\H{0,1,0}x-\H{0,1,0}1}{1-x}-\zeta_2x^n\frac{\H{0}x-\H{0}1}{1-x}dx}\\
 &&\ \ \ = 2\int_0^1{\frac{x^n\H{0,0,1}x-\H{0,0,1}1}{1-x}dx}+\int_0^1{\frac{x^n\H{0,1,0}x-\H{0,1,0}1}{1-x}dx}\\
 &&\ \ \ \ \ \ -\zeta_2 \int_0^1{\frac{x^n\H{0}x-\H{0}1}{1-x}dx}\\
 &&\ \ \ = 2\M{\frac{\H{0,0,1}x}{1-x}}{n}+\M{\frac{\H{0,1,0}x}{1-x}}{n}-\zeta_2 \M{\frac{\H{0}x}{1-x}}{n}\\
 &&\ \ \ = \frac{7\zeta_2^2}{10}+\zeta_2S_2(n)-S_{2,2}(n)-2S_{3,1}(n).
\end{eqnarray*}
\end{example}

\begin{algorithm}
\caption{Differentiation of harmonic sums}
\begin{algorithmic}
\Procedure{DifferentiateHarmonicSum}{$S_{\ve a}(n)$}
\State $h=\Call{InvMellin}{S_{\ve a}(n),x}$
\State replace $\delta(1-x)$ in $h$ by $0$
\State $h=\H{0}{x}*h$
\State expand all products in $h$\Comment{Use (\ref{hpro})}
\State $nsum=$number of summands in $h$
\For{$i=1$ to $nsum$}
\State replace $i$th summand of $h$ by $\Call{Mellin}{i\text{th summand},n}$
\State \textbf{return} $h$				
\EndFor
\EndProcedure
\end{algorithmic}
\end{algorithm}

The package \ttfamily HarmonicSums \rmfamily provides a tool to differentiate harmonic sums:
\begin{fmma}
\begin{mma}
\In \text{\bf DifferentiateSSum[S[2, 1, n], n] // ReplaceByKnownFunctions}\\
\Out {\frac{7\,{\text{z2}}^2}{10} + \text{z2}\,\text{S}[2,n] - \text{S}[2,2,n] - 
  2\,\text{S}[3,1,n]}\\
\end{mma} 
\begin{mma}
\In \text{\bf DifferentiateSSum[S[5, n], n] // ReplaceByKnownFunctions}\\
\Out {\frac{8\,{\text{z2}}^3}{7} - 5\,\text{S}[6,n]}\\
\end{mma}
\begin{mma}
\In \text{\bf DifferentiateSSum[S[2, -1, 2, n], n] // ReplaceByKnownFunctions // Expand}\\
\Out {4\,\text{s6} + 4\,\text{li4half}\,\text{z2} + \frac{{\text{ln2}}^4\,\text{z2}}{6} - 
  {\text{ln2}}^2\,{\text{z2}}^2 - \frac{121\,{\text{z2}}^3}{120} - \frac{11\,{\text{z3}}^2}{4} - 
  4\,\text{li4half}\,\text{S}[2,n] - \frac{{\text{ln2}}^4\,\text{S}[2,n]}{6} + 
  {\text{ln2}}^2\,\text{z2}\,\text{S}[2,n] + \frac{13\,{\text{z2}}^2\,\text{S}[2,n]}{40} + 
  2\,\text{z3}\,\text{S}[2,-1,n] - \text{S}[2,-2,2,n] - 2\,\text{S}[2,-1,3,n] - 
  2\,\text{S}[3,-1,2,n]}\\
\end{mma}  
\end{fmma}

\subsection{Application of Differentiation}
\label{appdiff}
As shown in \cite{Bluemlein2008,Bluemlein2009,Bluemlein2009a} new relations between multiple harmonic sums arise if we introduce differentiation on harmonic sums. We can now use these relations to reduce the number of basic sums we computed in the first strategy of Section \ref{Application of the Relations} by allowing relations due to differentiation. Unfortunately we cannot adopt the second strategy since we have to know the actual values of the indices of a harmonic sum in order to differentiate it. Let us look at an example.
\begin{example}
Using only algebraic relations we have to take $S_{1}(n)$ and $S_{2}(n)$ into the basis, but using in addition the differential operator $\frac{d}{d n}$ we get the following relation:
$$\frac{d}{d n}S_1(n)=\zeta_2-S_2(n).$$
So we can express $S_2(n)$ with the help of $S_1(n)$ and we can eliminate $S_2(n)$ from the basis. In general we can express all sums of depth one by $\frac{d^i}{dn^i}S_{\pm 1}(n)$ with $i\geq 0;$ see \cite{Bluemlein2008}.
\label{diffrelexp}
\end{example}
As indicated in the previous example for $S_2(n)$ we proceed as follows. First we fix an algebraic basis in which we can represent our expression (see the first strategy in Section~\ref{Application of the Relations}).
For each multiple harmonic sum $S_{\ve a}(n)$ in the algebraic basis we try to find another multiple harmonic sum $S_{\ve b}(n)$ such that $S_{\ve b}(n)$ is less complicated than $S_{\ve a}(n)$ and such that $S_{\ve a}(n)$ is a \itshape most complicated \upshape harmonic sum in $\frac{d}{d n}S_{\ve b}(n)$. We can do this by computing the inverse Mellin transform of $S_{\ve a}(n)=S_{a_1,a_2,\ldots,a_k}(n)$. There we look for the \itshape most complicated \upshape harmonic polylogarithm. 
We call a harmonic polylogarithm \itshape most complicated \upshape if it is the polylogarithm with the largest weight or in the case of identical weights it has the largest number of nonzero indices \cite{Remiddi2000}. 
If at least one of the indices of the \itshape most complicated \upshape harmonic polylogarithm $\H{\ve m}{x}$ is zero (so if $\H{\ve m}{x}=\H{\ve{m}_1,0,\ve{m}_2}{x}$), then we can find a harmonic sum $S_{\ve b}(n)$ with the desired property, since we differentiate by multiplying $\H{0}x$. The harmonic sum $S_{\ve b}(n)$ is just the \itshape most complicated \upshape one in the Mellin transform of $\H{\ve{m}_1,\ve{m}_2}x$ weighted by the same factor as $\H{\ve m}x$ in the inverse Mellin transform of $S_{\ve a}(n).$ Hence, due to the consideration in Section \ref{Inverse Mellin Transform} it is clear that the found harmonic sum satisfies the desired property. In fact, if there is an index $a_l$ of $S_{\ve a}(n)$ which is not equal to $\pm 1,$ we will find $S_{\ve b}(n)$ with $\ve b=(a_{1},\ldots,,a_{l-1},\sign{a_l}(\abs{a_l}-1),a_{l+1},\ldots,a_k).$
\begin{example}
We consider the sum $S_{4,2}(n)$. We can use both $S_{3,2}(n)$ or $S_{4,1}(n)$ to get relations for $S_{4,2}(n)$:
\begin{eqnarray*}
\frac{d}{d n}S_{3,2}(n)&=&2\zeta_3^2+2\zeta_3S_{3}(n)-2S_{3,1}(n)-3S_{4,2}(n)\\
\frac{d}{d n}S_{4,1}(n)&=&\frac{118\zeta_2^3}{105}-\zeta_3^2+\zeta_2S_{4}(n)-S_{4,2}(n)-4S_{5,1}(n)
\end{eqnarray*}
\end{example}

\begin{table}
\begin{tabular}{|| r || r | r | r ||}
\hline	
&  \multicolumn{3}{|c||}{Number of} \\
\cline{2-4}
Weight& Sums& a-basic sums& d-basic sums\\
\hline	
  1 &    2 &   2 &   2 \\
  2 &    6 &   3 &   1 \\
  3 &   18 &   8 &   5 \\
  4 &   54 &  18 &  10 \\
  5 &  162 &  48 &  30 \\ 
  6 &  486 & 116 &  68 \\
\hline
\end{tabular}
\caption{Number of harmonic sums, and the respective numbers of basic sums by which all sums can be expressed using the algebraic (a-basic) or algebraic and differential relations (d-basic) in dependence on their weight; see \cite{Bluemlein2004,Bluemlein2008,Bluemlein2009,Bluemlein2009a}.}
\label{tad}
\end{table}

We use all these new relations to reduce the number of sums in the basis as far as possible. Summarizing, $S_{\ve b}(n)$ is less complicated than $S_{\ve a}(n)$ (which we want to eliminate) and all additionally introduced sums are not more complicated than $S_{\ve a}(n).$ In this way we can successively eliminate the original given algebraic basis to a basis whose sums are algebraically independent and where as much sums as possible can be eliminated by the differential operator. In Table \ref{tad} we can see how the number of basic sums reduces in comparison to the use of algebraic relations only; compare \cite{Bluemlein2009,Bluemlein2009a}. For the package \ttfamily HarmonicSums \rmfamily tables up to weight 6 are available.

\begin{fmma}
\In \text{\bf ReduceToBasis[S[1, 2, -2, n], UseDifferentiation $\rightarrow$ True]}\\
\Out {\frac{7\,\text{z2}\,\text{z3}}{12} - \frac{5\,\text{z5}}{24} - \frac{2\,\text{z2}\,\text{Diff}[\text{S}[-1,n],n,2]}{3} - \frac{\text{Diff}[\text{S}[-1,n],n,4]}{72} + 
  \frac{\text{Diff}[\text{S}[-1,n],n,2]\,\text{Diff}[\text{S}[1,n],n,1]}{3} - \frac{\text{z2}\,\text{Diff}[\text{S}[1,n],n,2]}{12} - 
  \frac{\text{Diff}[\text{S}[-1,n],n,1]\,\text{Diff}[\text{S}[1,n],n,2]}{6}  - \frac{\text{Diff}[\text{S}[-2,1,n],n,2]}{6} - \frac{\text{Diff}[\text{S}[2,-2,n],n,1]}{3} + \frac{4\,\text{S}[-3,2,n]}{3} + 
  \text{S}[1,n]\,\text{S}[2,-2,n] + \frac{\text{z2}\,\text{S}[2,1,n]}{2} + \text{Diff}[\text{S}[-1,n],n,1]\,\text{S}[2,1,n] + \text{S}[-2,2,1,n]}\\
\end{fmma}  
Here $\text{Diff}[f(x),x,i]:=\frac{d^if(x)}{dx^i}.$

\chapter{Half-Integer Relations}
\label{Half-Integer Relations}
\setcounter{section}{1}
\setcounter{thm}{0}
In this chapter we allow besides the upper index $n$, the index $\frac{n}{2}$ (or equivalently the index $2n$) in the multiple harmonic sums. As it turns out, this leads to additional relations by using the operator $\textnormal{Half}[f(x),x]=f(\frac{x}{2}).$ Such relations have been determined in \cite{Bluemlein2008,Bluemlein2009,Bluemlein2009a,Bluemlein1999} by exploiting properties of the Mellin transform. Here we follow a different approach. We start with the following lemma, it has already been used in \cite{Bluemlein1999}. 
\begin{lemma}
Let $n,m,a\in\N$. Then the following relation holds:
\begin{equation}
S_{a}(2n)-S_{-a}(2n)=\frac{1}{2^{a-1}}S_a(n).
\end{equation}
\label{halfintlem}   
\end{lemma}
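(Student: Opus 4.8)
The plan is to establish the identity by turning both sides into explicit one-fold sums and then decomposing the summation range $\{1,2,\ldots,2n\}$ according to the parity of the index. Since $S(i)=1$ for every $i\ge 1$, the single-index specialisation of the defining recursion gives, for $a\in\N$,
$$S_a(2n)=\sum_{i=1}^{2n}\frac{\sign{a}^i}{i^{\abs{a}}}=\sum_{i=1}^{2n}\frac{1}{i^a},\qquad S_{-a}(2n)=\sum_{i=1}^{2n}\frac{\sign{-a}^i}{i^{\abs{-a}}}=\sum_{i=1}^{2n}\frac{\sign{-a}^i}{i^a}.$$
Subtracting, the coefficient of $i^{-a}$ in $S_a(2n)-S_{-a}(2n)$ is $1-\sign{-a}^i$, and the entire argument rests on how this factor behaves with the parity of $i$.

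The key step is the parity split. Because $\sign{-a}^i$ takes opposite values on even and odd indices, the factor $1-\sign{-a}^i$ is $0$ on one parity class and $2$ on the other; to reach the claimed right-hand side it must vanish on the odd indices and equal $2$ on the even indices. Granting this, only the terms with $i=2j$, $1\le j\le n$, survive, each with coefficient $2$, so that
$$S_a(2n)-S_{-a}(2n)=2\sum_{j=1}^{n}\frac{1}{(2j)^a}=\frac{2}{2^a}\sum_{j=1}^{n}\frac{1}{j^a}=\frac{1}{2^{a-1}}\,S_a(n),$$
which is precisely the asserted relation; the auxiliary variable $m$ in the hypothesis plays no role and can be ignored.

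The step I expect to be the main obstacle is precisely the sign bookkeeping in this parity split: one has to pin down $\sign{-a}^i$ so that the \emph{difference} $S_a-S_{-a}$ (and not their sum) is what annihilates the odd-indexed contributions and doubles the even-indexed ones. An off-by-one in the exponent of $\sign{-a}$ interchanges the two parity classes and would instead isolate the odd indices, producing a different and in general inequivalent statement; so the convergence of the argument to the stated right-hand side is entirely controlled by this sign. As a cross-check and a fallback I would run an induction on $n$: the increment $\bigl(S_a(2n)-S_{-a}(2n)\bigr)-\bigl(S_a(2(n-1))-S_{-a}(2(n-1))\bigr)$ reduces to the two fresh summands at $i=2n-1$ and $i=2n$, which by the same parity mechanism should contribute exactly $2\,(2n)^{-a}=2^{1-a}n^{-a}=2^{1-a}\bigl(S_a(n)-S_a(n-1)\bigr)$, matching the increment of the right-hand side and closing the induction.
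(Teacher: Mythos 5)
There is a genuine gap, and it sits exactly at the step you flagged and then ``granted'' rather than verified. For $a\in\N$ we have $\sign{-a}=-1$, so $\sign{-a}^i=(-1)^i$, and the factor $1-(-1)^i$ equals $2$ for \emph{odd} $i$ and $0$ for \emph{even} $i$ --- the opposite of what your argument requires. Hence the difference isolates the odd-indexed terms,
\[
S_a(2n)-S_{-a}(2n)=\sum_{i=1}^{2n}\frac{1-(-1)^i}{i^a}=2\sum_{j=1}^{n}\frac{1}{(2j-1)^a},
\]
which is not $\frac{1}{2^{a-1}}S_a(n)$: already for $a=1$, $n=1$ one gets $S_1(2)-S_{-1}(2)=\frac32-\left(-\frac12\right)=2$, whereas $S_1(1)=1$. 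Your induction fallback would have exposed this immediately had you executed it, since the fresh summands at $i=2n-1$ and $i=2n$ contribute $\frac{2}{(2n-1)^a}+0$, not $\frac{2}{(2n)^a}$ as you asserted.

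Carrying out the sign bookkeeping honestly reveals that the statement as printed contains a sign typo: it is the \emph{sum}, not the difference, that collapses onto the even indices,
\[
S_a(2n)+S_{-a}(2n)=\sum_{i=1}^{2n}\frac{1+(-1)^i}{i^a}=2\sum_{j=1}^{n}\frac{1}{(2j)^a}=\frac{1}{2^{a-1}}S_a(n),
\]
and this plus-version is what the paper's own proof actually establishes: it computes $S_{-a}(2n)$ directly via the even/odd split and lands at $S_{-a}(2n)=\frac{1}{2^{a-1}}S_a(n)-S_a(2n)$. It is also the only version consistent with the $m=1$ base case of the subsequent Theorem (see (\ref{halfint})), whose left-hand side sums over \emph{both} sign choices, and with the numerical check $S_1(2)+S_{-1}(2)=1=S_1(1)$. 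So your overall mechanism --- reduce both sides to one-fold sums and split $\{1,\ldots,2n\}$ by parity --- is precisely the paper's, but by declining to pin down which parity class survives you proved neither the stated (false) identity nor the corrected one. With the parity roles fixed and the minus replaced by a plus, your argument becomes a complete and correct proof.
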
 
\begin{proof}For $n\in \N,$
\begin{eqnarray}
S_{-a}(2n)&=&\sum_{i=1}^{2n}{\frac{(-1)^i}{i^a}}=\sum_{i=1}^{n}{\left(\frac{1}{(2i)^a}-\frac{1}{(2i-1)^a}\right)} 																																	=\frac{1}{2^a}S_a(n)-\sum_{i=1}^n{\frac{1}{(2i-1)^a}}\nonumber\\
					&=&\frac{1}{2^a}S_a(n)-\sum_{i=1}^n{\left(\frac{1}{(2i)^a}+\frac{1}{(2i-1)^a}-\frac{1}{(2i)^a}\right)}\nonumber\\ 																											&=&\frac{1}{2^a}S_a(n)-S_a(2n)+\frac{1}{2^k}S_a(n)\nonumber\\
					&=&\frac{1}{2^{a-1}}S_a(n)-S_a(2n).\nonumber
\end{eqnarray}
\end{proof}

\begin{thm}(see \cite{Vermaseren1998})
Let $n,m\in\N$ and $a_i \in \N$ for $i \in \N.$ Then we have the following relation:
\begin{eqnarray}
\sum{S_{\pm a_m, \pm a_{m-1},\ldots,\pm a_1}(2n)}=\frac{1}{2^{\sum_{i=1}^m a_i-m}}S_{a_m,a_{m-1},\ldots,a_1}(n)
\label{halfint}
\end{eqnarray}
where we sum on the left hand side over the $2^m$ possible combinations concerning $\pm$.
\end{thm}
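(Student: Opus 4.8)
The plan is to prove the identity by a direct computation: expand the left-hand side into its nested-sum form and carry out the sum over the $2^m$ sign patterns \emph{first}, before touching the index summation. By the nested-sum description (\ref{erstedef}), for a fixed sign vector $\epsilon=(\epsilon_m,\ldots,\epsilon_1)\in\{\pm1\}^m$ the leftmost index $\epsilon_m a_m$ governs the outermost variable, so
\[
S_{\epsilon_m a_m,\ldots,\epsilon_1 a_1}(2n)=\sum_{2n\geq j_1\geq\cdots\geq j_m\geq 1}\prod_{k=1}^{m}\frac{\epsilon_{m-k+1}^{\,j_k}}{j_k^{a_{m-k+1}}},
\]
using $\sign{\epsilon_{m-k+1}a_{m-k+1}}=\epsilon_{m-k+1}$ because every $a_i\in\N$. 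Summing over all $\epsilon$ and interchanging the finite sign-sum with the index-sum isolates the factor $\sum_{\epsilon}\prod_k \epsilon_{m-k+1}^{j_k}=\prod_{k=1}^m\bigl(1+(-1)^{j_k}\bigr)$.

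First I would observe that this product equals $2^m$ when every $j_k$ is even and vanishes otherwise, so only the all-even contributions survive. Writing $j_k=2\ell_k$, the ordering $2n\geq j_1\geq\cdots\geq j_m\geq1$ together with the parity constraint translates exactly into $n\geq \ell_1\geq\cdots\geq\ell_m\geq1$, and each factor $1/j_k^{a_{m-k+1}}$ becomes $2^{-a_{m-k+1}}\ell_k^{-a_{m-k+1}}$. Pulling the powers of two out front gives
\[
\sum_{\epsilon}S_{\epsilon_m a_m,\ldots,\epsilon_1 a_1}(2n)=\frac{2^m}{2^{\sum_{i=1}^m a_i}}\sum_{n\geq \ell_1\geq\cdots\geq\ell_m\geq1}\prod_{k=1}^m\frac{1}{\ell_k^{a_{m-k+1}}},
\]
and the remaining sum is precisely $S_{a_m,a_{m-1},\ldots,a_1}(n)$ by (\ref{erstedef}). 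Combining the exponents into $m-\sum_i a_i$ yields the claimed prefactor $1/2^{\sum_i a_i-m}$, which is (\ref{halfint}).

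I expect the only delicate point to be bookkeeping: matching the reversed labelling $(a_m,\ldots,a_1)$ to the summation variables and verifying that the substitution $j_k=2\ell_k$ respects the monotonicity constraints (that $j_k\geq1$ even forces $\ell_k\geq1$ and $j_1\leq 2n$ forces $\ell_1\leq n$). An alternative, equally viable route is induction on the depth $m$: peeling off the outermost index and summing over its sign collapses the odd terms and, via the same $j=2\ell$ substitution, reduces a $2n$-sum to an $n$-sum, with Lemma \ref{halfintlem} (in the form $S_a(2n)+S_{-a}(2n)=2^{1-a}S_a(n)$) as the base case and the induction hypothesis applied at each upper limit $2\ell$, $\ell=1,\ldots,n$. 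In that version the subtlety is that the inner depth-$(m-1)$ sum is evaluated at the \emph{varying} argument $2\ell$ rather than at $2n$, so one must confirm the hypothesis is being used uniformly over all $\ell\leq n$; the direct computation above sidesteps this entirely and is therefore my preferred plan.
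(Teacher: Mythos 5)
Your proof is correct, but it takes a genuinely different route from the paper's. The paper proves (\ref{halfint}) by induction on the depth $m$: it peels off the outermost index, sums over its two signs so that the odd terms of the outer summation cancel and the even terms double, and then applies the induction hypothesis to the inner sums at the varying upper limits $2i$, with Lemma \ref{halfintlem} as the base case --- exactly the alternative you sketch and set aside at the end. Your main argument instead expands everything via (\ref{erstedef}), performs the finite sign sum first to produce the parity filter $\prod_{k=1}^m\bigl(1+(-1)^{j_k}\bigr)$, and substitutes $j_k=2\ell_k$; this is self-contained (it does not even need Lemma \ref{halfintlem}) and it sidesteps precisely the bookkeeping hazard you identify in the inductive route --- indeed the paper's own write-up stumbles on it, writing the inner sums with argument $(2n)$ where $(2i)$ and $(2i-1)$ are meant, and $a_{n+1}$ for $a_{m+1}$. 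A further point in your favour: you invoke the depth-one result in the corrected form $S_a(2n)+S_{-a}(2n)=2^{1-a}S_a(n)$, which is what the paper's proof of Lemma \ref{halfintlem} actually establishes (the printed statement of that lemma carries a sign typo). What each approach buys: the induction matches the recursive definition of harmonic sums and recycles the depth-one lemma, while your direct computation is more transparent, treats all $2^m$ sign patterns symmetrically in one stroke, and involves only finite interchanges that need no justification.
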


\begin{proof}
We proceed by induction on the depth $m$ of the multiple harmonic sums. In Lemma \ref{halfintlem} we have already proven the case $m=1.$ Let's assume (\ref{halfint})  holds for $m$:
\begin{eqnarray}
\sum{S_{\pm a_{m+1}, \pm a_m, \ldots,\pm a_1}(2n)}
&=&
\sum_{i=1}^n\biggl(\frac{1}{(2i)^{a_{n+1}}}\sum{S{\pm a_m, \pm a_{m-1},\ldots,\pm a_1}(2n)}\biggr.\nonumber\\
&&+\biggl.\frac{1}{^(2i-1)^{a_{n+1}}}\sum{S_{\pm a_m, \pm a_{m-1},\ldots,\pm a_1}(2n)}\biggr)\nonumber\\
&&+\sum_{i=1}^n\biggl(\frac{(-1)^{2i}}{(2i)^{a_{n+1}}}\sum{S{\pm a_m, \pm a_{m-1},\ldots,\pm a_1}(2n)}\biggr.\nonumber\\
&&+\biggl.\frac{(-1)^{2i+1}}{^(2i-1)^{a_{n+1}}}\sum{S_{\pm a_m, \pm a_{m-1},\ldots,\pm a_1}(2n)}\biggr)\nonumber\\
&=&
2\sum_{i=1}^n\frac{1}{(2i)^{a_{n+1}}}\sum{S{\pm a_m, \pm a_{m-1},\ldots,\pm a_1}(2n)}\nonumber\\
&=&
2\sum_{i=1}^n\frac{1}{(2i)^{a_{n+1}}}\frac{1}{2^{\sum_{i=1}^m a_i-m}}S_{a_m,a_{m-1},\ldots,a_1}(n)\nonumber\\
&=&
\frac{1}{2^{\sum_{i=1}^m a_i-m}}\frac{1}{2^{a_{m+1}-1}}S_{a_{m+1},a_m,\ldots,a_1}(n)\nonumber\\
&=&
\frac{1}{2^{\sum_{i=1}^m a_i-(m+1)}}S_{a_{m+1},a_m,\ldots,a_1}(n).\nonumber
\end{eqnarray}
Hence (\ref{halfint}) holds for $m+1.$ 
\end{proof}

\begin{example} For $n\in \N,$
$$
\frac{1}{8}S_{2,3}(n)=S_{2,3}(2n)+S_{2,-3}(2n)+S_{-2,3}(2n)+S_{-2,-3}(2n).
$$
\end{example}

Similar to Section \ref{appdiff} we use all these new relations to reduce the number of sums in the basis as far as possible. In the Tables \ref{tadh1} and \ref{tadh2} we can see how the number of basic sums reduces if we use algebraic relations and half-integer relations in comparison to the use of algebraic relations only. If we use all three types of relations, namely algebraic, differential and half-integer relations, the number of basic sums reduces further as we can see in the Tables \ref{tadh1} and \ref{tadh2}.

\begin{example}
From Example \ref{diffrelexp} we know that all harmonic sums of depth one can be expressed by the two sums $S_{1}(n)$ and $S_{-1}(n)$ by using differentiation. From Theorem \ref{halfint} we get:
$$
S_{-1}(2n) = S_{1}(2n) - S_1(n).
$$
Hence we can express $S_{-1}(n)$ by $S_1(n)$ and $S_{1}(n/2),$ and we can represent all harmonic sums of depth one by $S_1(n)$ using differentiation and half-integer relations.  
\end{example}

\begin{table}
\begin{tabular}{|| r || r | r | r | r | r ||}
\hline	
&  \multicolumn{5}{|c||}{Number of} \\
\cline{2-6}
Weight& Sums& a-basic sums& d-basic sums& h-basic sums& dh-basic sums\\
\hline	
  1 &    2 &   2 &   2 &   1  &  1 \\
  2 &    6 &   3 &   1 &   2  &  1 \\
  3 &   18 &   8 &   5 &   6  &  4 \\
  4 &   54 &  18 &  10 &  15  &  9 \\
  5 &  162 &  48 &  30 &  42  & 27 \\ 
  6 &  486 & 116 &  68 & 107  & 65 \\
\hline
\end{tabular}
\caption{Number of harmonic sums, and the respective numbers of basic sums by which all sums can be expressed using algebraic (a-basic sums), differential (d-basic sums) and/or half-integer relations (h-basic/ dh-sums); see \cite{Bluemlein2004,Bluemlein2008,Bluemlein2009,Bluemlein2009a}.}
\label{tadh1}
\end{table}

\begin{table}
\begin{tabular}{|| r || r | r | r | r ||}
\hline	
&  \multicolumn{4}{|c||}{Number of} \\
\cline{2-5}
Weight& Sums $\neg \left\{-1\right\}$ & a-basic sums& d-basic sums&dh-basic sums\\
\hline	
  1 &    1 &   1 &   1 &      1 \\
  2 &    3 &   2 &   1 &      0 \\
  3 &    7 &   4 &   2 &      2 \\
  4 &   17 &   7 &   3 &      3 \\
  5 &   41 &  16 &  10 &      9 \\ 
  6 &   99 &  30 &  17 &     17 \\
\hline
\end{tabular}
\caption{Number of sums, which do not contain the index $\left\{-1\right\}$ and the respective numbers of basic sums by which all sums can be expressed; see \cite{Bluemlein2004,Bluemlein2008,Bluemlein2009,Bluemlein2009a}.}
\label{tadh2}
\end{table}

For the package \ttfamily HarmonicSums \rmfamily tables using algebraic and half-integer relations up to weight 6 are available. We remark that all sums (d-, h-, dh-basic sums) in the Tables \ref{tadh1} and \ref{tadh2} are algebraically independent by construction. In \cite{Bluemlein2009} the representations of the dh-basic sums (up to weight 5) have been established with a different approach.

\begin{fmma}
\In \text{\bf ReduceToBasis[S[1, 2, -2, n], UseHalfInteger $\rightarrow$ True]}\\
\Out {\text{S}[1,n]\,\left( \frac{\text{Half}[\text{S}[2,n],n]}{2} - \text{S}[2,n] \right) \,\text{S}[2,n] + 
  \text{S}[1,n]\,\left( \frac{\text{Half}[\text{S}[4,n],n]}{8} - \text{S}[4,n] \right)  - \text{S}[-4,1,n] - 
  \text{S}[1,n]\,\left( \frac{\text{Half}[\text{S}[4,n],n]}{8} + \left( \frac{\text{Half}[\text{S}[2,n],n]}{2} - \text{S}[2,n] \right) \,\text{S}[2,n] - \text{S}[4,n] - 
     \text{S}[2,-2,n] \right)  - \left( \frac{\text{Half}[\text{S}[2,n],n]}{2} - \text{S}[2,n] \right) \,\text{S}[2,1,n] + \text{S}[3,-2,n] + \text{S}[-2,2,1,n]}\\
\end{fmma} 

Moreover tables using all three types of relations up to weight 6 are available.
\begin{fmma}
\In \text{\bf ReduceToBasis[S[1, 2, -2, n], UseDifferentiation $\rightarrow$ True, UseHalfInteger $\rightarrow$ True]}\\
\Out {\frac{7\,\text{z2}\,\text{z3}}{12} - \frac{5\,\text{z5}}{24} - 
  \frac{2\,\text{z2}\,\text{Diff}[\text{Half}[\text{S}[1,n],n] - \text{S}[1,n],n,2]}{3} - 
  \frac{\text{Diff}[\text{Half}[\text{S}[1,n],n] - \text{S}[1,n],n,4)}{72} + 
  \frac{\text{Diff}[\text{Half}[\text{S}[1,n],n] - \text{S}[1,n],n,2]\,\text{Diff}[\text{S}[1,n],n,1]}{3} - 
  \frac{\text{z2}\,\text{Diff}[\text{S}[1,n],n,2]}{12} - \\
  \frac{\text{Diff}[\text{Half}[\text{S}[1,n],n] - \text{S}[1,n],n,1]\,\text{Diff}[\text{S}[1,n],n,2]}{6} - 
  \frac{\text{Diff}[\text{S}[-2,1,n],n,2]}{6} - \frac{\text{Diff}[\text{S}[2,-2,n],n,1]}{3} + \frac{4\,\text{S}[-3,2,n]}{3} + \text{S}[1,n]\,\text{S}[2,-2,n] + 
  \frac{\text{z2}\,\text{S}[2,1,n]}{2} + \text{Diff}[\text{Half}[\text{S}[1,n],n] - \text{S}[1,n],n,1]\,\text{S}[2,1,n] + \text{S}[-2,2,1,n]}\\
\end{fmma}   
Further investigations in this direction in connection with Chapter \ref{An Example from Particle Physics} are in progress~\cite{Ablinger2009}.

\chapter{Summation of Multiple Harmonic Sums}
\label{Summation of Multiple Harmonic Sums}

With the package \ttfamily Sigma \rmfamily \cite{Schneider2007} we can simplify nested sums such that the nested depth is optimal and the degree of the denominator is minimal. This is possible due to a refined summation theory \cite{Schneider2007a,Schneider2008a} of $\Pi \Sigma$-fields \cite{Karr1981}. In the following we want to find sum representations of such simplified nested sums in terms of harmonic sums as much as it is possible. Inspired by \cite{Savio} for harmonic numbers and \cite{Moch2002,Vermaseren1998} we consider sums of the form
\begin{equation}
	\sum_{i=1}^n{s\,r(i)}
	\label{sumsum0}
\end{equation}
where $s\in \mathcal{S}(i)$ or $s\in \mathcal{S}(i)[(-1)^i]$ and $r(i)$ is a rational function in $i.$ We can use the quasi-shuffle algebra property of the multiple harmonic sums to split such sums into sums of the form
\begin{equation}
	\sum_{i=1}^n{r(i)S_{\ve a}(i)}
	\label{sumsum1}
\end{equation}
or
\begin{equation}
	\sum_{i=1}^n{(-1)^ir(i)S_{\ve a}(i)}
	\label{sumsum2}
\end{equation}
where $r(i)$ is a rational function in $i.$\\

In this chapter we will present new formulas to rewrite sums of the form (\ref{sumsum1}) and (\ref{sumsum2}) and we will show how we can use these formulas in combination with \ttfamily Sigma\rmfamily. In the following we consider the problem:\\
\bfseries Given: \normalfont a sum $\sigma$ of the form (\ref{sumsum0}).\\
\bfseries Find: \normalfont as much as possible a representation of $\sigma$ in terms of harmonic sums.

\section{Polynomials in the Summand}
\label{polyinthesum}
First we consider the case that $r(i)$ is a polynomial in $\Z[i],$ \ie $r(i)=p_mi^m+\ldots+p_1i+p_0$ with $p_k \in \Z.$ Here we generalize the harmonic sum case from depth 1 \cite{Savio} to arbitrary depth. If we are able to work out the sum for any power of $i$ times a multiple harmonic sum (\ie $\sum_{i=1}^n i^m S_\ve a(i)$ or $\sum_{i=1}^n (-1)^ii^m S_\ve a(i)$, $m \in \N$), we can work out the sum for each polynomial $r(i)$. Let us start with $m=0$ where the factor $(-1)^i$ is present.
 
\begin{thm}
\label{sumtheo1}
Let $n \in \N$ and $a_k \in \Z/\{0\}.$ Then
\begin{equation}
\sum_{i=1}^n{(-1)^iS_{a_1,a_2,\ldots}(i)}=\frac{1}{2}\left(S_{-a_1,a_2,\ldots}(n)+(-1)^nS_{a_1,a_2,\ldots}(n)\right).
\end{equation}
\end{thm}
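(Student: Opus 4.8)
The plan is to avoid expanding the inner sums $S_{a_2,\ldots}$ at all, and instead to exploit a single structural observation: flipping the sign of the \emph{leading} index turns the increment of a harmonic sum into the same increment weighted by $(-1)^i$. Concretely, from the recursive Definition of $S$ the first difference is
\begin{equation}
\Delta_i:=S_{a_1,a_2,\ldots}(i)-S_{a_1,a_2,\ldots}(i-1)=\frac{\sign{a_1}^i}{i^{\abs{a_1}}}S_{a_2,\ldots}(i),\nonumber
\end{equation}
and since $\sign{-a_1}=-\sign{a_1}$ while $\abs{-a_1}=\abs{a_1}$ and $S_{a_2,\ldots}$ is untouched, the increment of $S_{-a_1,a_2,\ldots}$ is precisely $(-1)^i\Delta_i$. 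Summing telescopically then gives the two identities $S_{a_1,a_2,\ldots}(n)=\sum_{j=1}^n\Delta_j$ and $S_{-a_1,a_2,\ldots}(n)=\sum_{j=1}^n(-1)^j\Delta_j$, which are the two expressions appearing on the right-hand side.

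\textbf{Key steps.} First I would write the left-hand side as an iterated sum $\sum_{i=1}^n(-1)^i S_{a_1,a_2,\ldots}(i)=\sum_{i=1}^n(-1)^i\sum_{j=1}^i\Delta_j$ and interchange the order of summation (a finite Fubini step, so no convergence issues) to obtain $\sum_{j=1}^n\Delta_j\sum_{i=j}^n(-1)^i$. Next I would evaluate the inner geometric sum, namely
\begin{equation}
\sum_{i=j}^n(-1)^i=(-1)^j\,\frac{1-(-1)^{n-j+1}}{2}=\frac{(-1)^j+(-1)^n}{2}.\nonumber
\end{equation}
Substituting this back and separating the two terms yields $\frac12\sum_{j=1}^n(-1)^j\Delta_j+\frac12(-1)^n\sum_{j=1}^n\Delta_j$, which by the two telescoping identities above is exactly $\tfrac12\bigl(S_{-a_1,a_2,\ldots}(n)+(-1)^nS_{a_1,a_2,\ldots}(n)\bigr)$.

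\textbf{Alternative and main obstacle.} An equivalent route is induction on $n$: the base case $n=1$ reduces to $-S_{a_1,\ldots}(1)=\frac12(S_{-a_1,\ldots}(1)-S_{a_1,\ldots}(1))$, which holds because $S_{\pm a_1,\ldots}(1)=\pm\sign{a_1}S_{a_2,\ldots}(1)$; in the inductive step one adds the term $(-1)^{n+1}S_{a_1,\ldots}(n+1)$ and uses the recursion to split off the top summand, after which the identity $(-\sign{a_1})^{n+1}=(-1)^{n+1}\sign{a_1}^{n+1}$ makes the new contributions line up. I expect the direct interchange argument to be cleaner, and the only genuinely delicate point in either version is the sign bookkeeping in the geometric sum (and remembering that the sub-sum $S_{a_2,\ldots}$ is invariant under the sign flip of $a_1$); everything else is routine telescoping.
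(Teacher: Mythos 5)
Your proposal is correct and follows essentially the same route as the paper's own proof: expand (via your $\Delta_j$, which is just the summand $\sign{a_1}^j S_{a_2,\ldots}(j)/j^{\abs{a_1}}$ written as a first difference), interchange the order of summation, evaluate $\sum_{i=j}^n(-1)^i=\tfrac{(-1)^j+(-1)^n}{2}$, and absorb the factor $(-1)^j$ into the leading index to recognize $S_{-a_1,a_2,\ldots}(n)$. The telescoping/increment packaging is only a notational variation on the paper's direct computation.
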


\begin{proof}For $n \in \N$
\begin{eqnarray*}
\sum_{i=1}^n{(-1)^iS_{a_1,a_2,\ldots}(i)}&=&\sum_{i=1}^n{(-1)^i\sum_{j=1}^i{\frac{\sign{a_1}^j}{j^{\abs{a_1}}} S_{a_2,a_3\ldots}(j)}}\\
&=&\sum_{j=1}^n{\frac{\sign{a_1}^j}{j^{\abs{a_1}}} S_{a_2,a_3\ldots}(j) \sum_{i=j}^n{(-1)^i}}\\
&=&\sum_{j=1}^n{\frac{\sign{a_1}^j}{j^{\abs{a_1}}} S_{a_2,a_3\ldots}(j) \frac{(-1)^j+(-1)^n}{2}}\\
&=&\frac{1}{2}\left(S_{-a_1,a_2,\ldots}(n)+(-1)^nS_{a_1,a_2,\ldots}(n)\right).
\end{eqnarray*}
\end{proof}

When $m=0$ and the factor $(-1)^i$ is not present we have to distinguish between several cases. In order to find this identities C. Schneider's \ttfamily Sigma \rmfamily package played a decisive role.

\begin{thm}
\label{sumtheo2}
Let $k, n, l \in \N$ and $a_j \in \Z/\{0\}.$ Then $\sum_{i=1}^n{S_{a_1,a_2,\ldots,a_l}(i)}$ can be simplified as follows.
\begin{description}
	\item [$\abs{a_1}\geq 2:$]
				\begin{equation}
			\sum_{i=1}^n{S_{a_1,a_2,\ldots,a_l}(i)}=(n+1)S_{a_1,a_2,\ldots,a_l}(n)-S_{\sign{a_1}(\abs{a_1}-1),a_2,\ldots,a_l}(n);
				\nonumber
				\end{equation}
	\item [$a_1=-1:$]
				\begin{equation}
			\sum_{i=1}^n{S_{-1,a_2,a_3,\ldots,a_l}(i)}= 																																																(n+1)S_{-1,a_2,a_3,\ldots,a_l}(n)-\frac{1}{2}\left(S_{-a_2,a_3,\ldots,a_l}(n)+(-1)^nS_{a_2,a_3,\ldots,a_l}(n)\right);\nonumber
				\end{equation}
		\item [$a_j=1, \ 1\leq j\leq l:$]
				\begin{eqnarray}
					\sum_{i=1}^n{S_{\scriptsize{{\underbrace{1,1,\ldots,1}_{l \textnormal{ times}}}}}(i)}&=& 																															(n+1)\Bigl(S_{\scriptsize{\underbrace{1,1,\ldots,1}_{l \textnormal{ times}}}}(n) 																												-S_{\scriptsize{\underbrace{1,1,\ldots,1}_{l-1\textnormal{ times}}}}(n) 																																+\cdots+(-1)^{l-1}S_{1}(n)+(-1)^{l}\Bigr)\nonumber\\
												&&+(-1)^{l+1};\nonumber
				\end{eqnarray}
		\item [$a_j=1, \ 1\leq j\leq l, a_{l+1}=-1:$]
				\begin{eqnarray}
					\sum_{i=1}^n{S_{\scriptsize{\underbrace{1,1,\ldots,1}_{l \textnormal{ times}}},-1}(i)}&=& 																														(n+1)\Bigl(S_{\scriptsize{\underbrace{1,1,\ldots,1}_{l \textnormal{ times}}},-1}(n) 																										-S_{\scriptsize{\underbrace{1,1,\ldots,1}_{l-1\textnormal{ times}}},-1}(n) 																															+\cdots+(-1)^{l}S_{-1}(n)\Bigr)\nonumber\\
											&&+\frac{(-1)^{l}}{2}\left(1-(-1)^n\right);\nonumber
				\end{eqnarray}				
		\item [$a_j=1, \ 1\leq j\leq l, a_{l+1}=-1:$]
				\begin{eqnarray}
					\sum_{i=1}^n{S_{\scriptsize{\underbrace{1,1,\ldots,1}_{l \textnormal{ times}}},-1,a_{l+2},\ldots}(i)}&=& 																						(n+1)\Bigl(S_{\scriptsize{\underbrace{1,1,\ldots,1}_{l \textnormal{ times}}},-1,a_{l+2},\ldots}(n) 																			-S_{\scriptsize{\underbrace{1,1,\ldots,1}_{l-1\textnormal{ times}}},-1,a_{l+2},\ldots}(n)\Bigr. \nonumber\\															&&\Bigl.+\cdots+(-1)^{l}S_{-1,a_{l+2},\ldots}(n)\Bigr)\nonumber\\
					&&+\frac{(-1)^{l+1}}{2}\Bigl(S_{-a_{l+2},a_{l+3},\ldots}(n)+(-1)^n S_{a_{l+2},a_{l+3},\ldots}(n)\Bigr);\nonumber
				\end{eqnarray}
		\item[$a_j=1, \ 1\leq j\leq l, \abs{a_{l+1}}\geq 2:$]
				\begin{eqnarray}
					\sum_{i=1}^n{S_{\scriptsize{\underbrace{1,1,\ldots,1}_{l \textnormal{ times}}},a_{l+1},a_{l+2},\ldots}(i)}&=& 																			(n+1)\Bigl(S_{\scriptsize{\underbrace{1,1,\ldots,1}_{l \textnormal{ times}}},,a_{l+1},a_{l+2},\ldots}(n) 																-S_{\scriptsize{\underbrace{1,1,\ldots,1}_{l-1\textnormal{ times}}},a_{l+1},a_{l+2},\ldots}(n)\Bigr. \nonumber\\	 																						&&\Bigl.+\cdots+(-1)^{l}S_{,a_{l+1},a_{l+2},\ldots}(n)\Bigr)\nonumber\\
					&&+(-1)^{l+1}S_{\sign{a_{l+1}}(\abs{a_{l+1}}-1),a_{l+2},\ldots}(n).\nonumber
				\end{eqnarray}													
\end{description}
\end{thm}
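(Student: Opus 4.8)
The common engine for all six cases is the interchange of the order of summation already used in the proof of Theorem \ref{sumtheo1}. Writing $S_{a_1,\ldots,a_l}(i)=\sum_{j=1}^i \frac{\sign{a_1}^j}{j^{\abs{a_1}}}S_{a_2,\ldots,a_l}(j)$ and summing over $i$, I would swap the two summations to obtain
$$\sum_{i=1}^n S_{a_1,\ldots,a_l}(i)=\sum_{j=1}^n \frac{\sign{a_1}^j}{j^{\abs{a_1}}}S_{a_2,\ldots,a_l}(j)\,(n-j+1),$$
since $\sum_{i=j}^n 1=n-j+1$. Splitting $n-j+1=(n+1)-j$ then reproduces the term $(n+1)S_{a_1,\ldots,a_l}(n)$ together with a correction sum in which the exponent of $j$ in the denominator is lowered by one. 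The behaviour of this correction sum is exactly what distinguishes the six cases.

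For $\abs{a_1}\ge 2$ the lowered factor is $\frac{\sign{a_1}^j}{j^{\abs{a_1}-1}}=\frac{\sign{b}^j}{j^{\abs{b}}}$ with $b=\sign{a_1}(\abs{a_1}-1)$, a legitimate index of the same sign, so the correction sum is $S_{\sign{a_1}(\abs{a_1}-1),a_2,\ldots,a_l}(n)$ and the first formula follows immediately. For $a_1=-1$ the lowered exponent is $0$ and the correction sum becomes $\sum_{j=1}^n(-1)^jS_{a_2,\ldots,a_l}(j)$, which I would rewrite by Theorem \ref{sumtheo1}; this yields the second formula.

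The remaining four cases all begin with exactly $l$ leading ones. The key observation is that stripping a single leading $1$ yields the recursion $\sum_{i=1}^n S_{1,\ve c}(i)=(n+1)S_{1,\ve c}(n)-\sum_{i=1}^n S_{\ve c}(i)$: for $a_1=1$ the lowered factor is simply $1$, so the correction sum is $\sum_{j=1}^n S_{\ve c}(j)$, a sum of the same shape carrying one fewer leading one. I would then induct on $l$. Each application of the recursion multiplies the current sum by $(n+1)$ and flips the sign, producing exactly the alternating telescoping pattern $(n+1)\bigl(S_{1_l,\ve c}(n)-S_{1_{l-1},\ve c}(n)+\cdots\bigr)$ shown in the statement, and the descent bottoms out once no leading one remains. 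The base case depends on what follows the block of ones and is in each case an already-established identity: for the pure all-ones sum the tail is empty and the base is the trivial $\sum_{i=1}^n S(i)=n$ (its contribution $(-1)^l n$ is what the statement repackages as $(n+1)(-1)^l+(-1)^{l+1}$); when the first index past the ones is $-1$ with empty tail one uses the elementary geometric sum $\sum_{j=1}^n(-1)^j=\tfrac{(-1)^n-1}{2}$; when it is $-1$ with a non-empty tail one uses Theorem \ref{sumtheo1}; and when it has absolute value at least $2$ one uses the first formula of this theorem. Carrying each base case through the $l$-fold descent then matches the displayed boundary terms $\tfrac{(-1)^l}{2}(1-(-1)^n)$, $\tfrac{(-1)^{l+1}}{2}(S_{-a_{l+2},\ldots}(n)+(-1)^nS_{a_{l+2},\ldots}(n))$ and $(-1)^{l+1}S_{\sign{a_{l+1}}(\abs{a_{l+1}}-1),a_{l+2},\ldots}(n)$ respectively, closing the four cases.

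The routine algebra — the interchange of summations and the evaluation $\sum_{i=j}^n 1=n-j+1$ — is immediate, so I expect the only delicate point to be the sign and parity bookkeeping in the four inductive cases. One must verify that the signs generated by the repeated recursion land on the correct powers $(-1)^{l}$ and $(-1)^{l+1}$ in the boundary terms, and in particular that the empty-tail situation genuinely requires the elementary geometric sum rather than Theorem \ref{sumtheo1}, since the latter presupposes a non-empty index and would otherwise give a contribution off by $1$. Once the single recursion and its four possible base cases are set up carefully, every displayed identity follows by a finite descent on the number of leading ones.
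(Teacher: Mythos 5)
Your proposal is correct and follows essentially the same route as the paper's proof: interchange the two summations, evaluate $\sum_{i=j}^n 1 = n+1-j$, split off the $(n+1)$ term, and handle the leading-ones cases by induction on the number of ones (the paper proves the first and third identities explicitly this way and declares the rest analogous). Your additional observation that the empty-tail base case needs the raw geometric sum $\sum_{j=1}^n(-1)^j$ rather than a formal application of Theorem \ref{sumtheo1} (which would be off by $1$) is a correct and worthwhile refinement of the "follows similarly" cases the paper leaves to the reader.
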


\begin{proof}
We give a proof of the first and the third identity all the other identities follow similarly. First let $\abs{a_1}\geq 2$:
\begin{eqnarray*}
\sum_{i=1}^n{S_{a_1,a_2,\ldots}(i)}&=&\sum_{i=1}^n{\sum_{j=1}^i{\frac{\sign{a_1}^j}{j^{\abs{a_1}}} S_{a_2,a_3\ldots}(j)}}\\
&=&\sum_{j=1}^n{\frac{\sign{a_1}^j}{j^{\abs{a_1}}} S_{a_2,a_3\ldots}(j) \sum_{i=j}^n{1}}\\
&=&\sum_{j=1}^n{\frac{\sign{a_1}^j}{j^{\abs{a_1}}} S_{a_2,a_3\ldots}(j) (n+1-j)}\\
&=&(n+1)S_{a_1,a_2,\ldots}(n)-S_{\sign{a_1}(\abs{a_1}-1),a_2,\ldots}(n).
\end{eqnarray*}
Consider the third identity, \ie $a_j=1, \ 1\leq j\leq l$:
We proceed by induction on the depth $l.$ For $l=1$, the right hand side is $(n+1)S_{1}(n)-n$ and the left hand side gives:
\begin{eqnarray*}
\sum_{i=1}^n{S_{1}(i)}&=&\sum_{i=1}^n{\sum_{j=1}^i{\frac{1}{j}}}=\sum_{j=1}^n{\frac{1}{j}\sum_{i=j}^n{1}}=\sum_{j=1}^n{\frac{1}{j}(n+1-j)}=(n+1)S_{1}(n)-\sum_{j=1}^n{1}\\
&=&(n+1)S_{1}(n)-n.
\end{eqnarray*}
Now assume that the third identity holds for depth $\leq l-1.$ Then 
\begin{eqnarray*}
\sum_{i=1}^n{S_{\scriptsize{{\underbrace{1,1,\ldots,1}_{l \textnormal{ times}}}}}(i)}&=& \sum_{i=1}^n{\sum_{j=1}^i{\frac{1}{j} S_{\scriptsize{{\underbrace{1,1,\ldots,1}_{l-1 \textnormal{ times}}}}}(j)}}
=\sum_{j=1}^n{\frac{1}{j} S_{\scriptsize{{\underbrace{1,1,\ldots,1}_{l-1 \textnormal{ times}}}}}(j)\sum_{i=j}^n{1}}\\
&=&\sum_{j=1}^n{\frac{1}{j} S_{\scriptsize{{\underbrace{1,1,\ldots,1}_{l-1 \textnormal{ times}}}}}(j)(n+1-j)}
=(n+1)S_{\scriptsize{{\underbrace{1,1,\ldots,1}_{l \textnormal{ times}}}}}(n)-\sum_{i=1}^n{S_{\scriptsize{{\underbrace{1,1,\ldots,1}_{l-1 \textnormal{ times}}}}}(i)}\\
&=&(n+1)S_{\scriptsize{{\underbrace{1,1,\ldots,1}_{l \textnormal{ times}}}}}(n)-(n+1)\Bigl(S_{\scriptsize{\underbrace{1,1,\ldots,1}_{l-1 \textnormal{ times}}}}(n) 																												-S_{\scriptsize{\underbrace{1,1,\ldots,1}_{l-2\textnormal{ times}}}}(n) 																																+\cdots \Bigr.\\
&&\Bigl.+(-1)^{l-2}S_{1}(n)+(-1)^{l-1}\Bigr)-(-1)^{l}\\
&=&(n+1)\Bigl(S_{\scriptsize{\underbrace{1,1,\ldots,1}_{l \textnormal{ times}}}}(n) 																												-S_{\scriptsize{\underbrace{1,1,\ldots,1}_{l-1\textnormal{ times}}}}(n) 																																+\cdots+(-1)^{l-1}S_{1}(n)+(-1)^{l}\Bigr)\\
&&+(-1)^{l+1}.
\end{eqnarray*}
Therefore the identity holds for $l$ and we finished the proof of the third identity.
\end{proof}

In general, for $m \geq 1$ we can use the following theorem. 

\begin{thm}
\label{sumtheo3}
Let $k, \ n \in \N$ and $a_i \in \Z/\{0\}.$ Then
\begin{eqnarray}
\sum_{i=1}^n{i^kS_{a_1,a_2,\ldots}(i)}&=&
S_{a_1,a_2,\ldots}(n)\sum_{i=1}^n{i^k}-\sum_{j=1}^n{\frac{\sign{a_1}^jS_{a_2,a_3,\ldots}(i)}{j^{\abs{a_1}}}\sum_{i=1}^{j-1}{i^k}},\nonumber \\
\sum_{i=1}^n{(-1)^ii^kS_{a_1,a_2,\ldots}(i)}&=&
S_{a_1,a_2,\ldots}(n)\sum_{i=1}^n{(-1)^i i^k}-\sum_{j=1}^n{\frac{S_{a_2,a_3,\ldots}(i)}{j^{a_1}}\sum_{i=1}^{j-1}{(-1)^i i^k}}.\nonumber
\end{eqnarray}
\end{thm}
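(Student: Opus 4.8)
The plan is to prove both identities by the same device already used for Theorems~\ref{sumtheo1} and~\ref{sumtheo2}: unfold the outermost layer of the harmonic sum via its recursive definition in Definition~\ref{defHsum} and then interchange the order of the resulting finite double sum. No deep input is required here; this is essentially Abel summation (summation by parts) for finite sums, and $\sum_{i=1}^n i^k$ is left as a symbolic weight throughout.

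For the first identity I would first write, using the recursion $S_{a_1,a_2,\ldots}(i)=\sum_{j=1}^i \frac{\sign{a_1}^j}{j^{\abs{a_1}}}S_{a_2,a_3,\ldots}(j)$,
\[
\sum_{i=1}^n i^k S_{a_1,a_2,\ldots}(i)=\sum_{i=1}^n i^k \sum_{j=1}^i \frac{\sign{a_1}^j}{j^{\abs{a_1}}}S_{a_2,a_3,\ldots}(j).
\]
The summation region is $1\le j\le i\le n$, so swapping the two sums gives
\[
\sum_{j=1}^n \frac{\sign{a_1}^j}{j^{\abs{a_1}}}S_{a_2,a_3,\ldots}(j)\sum_{i=j}^n i^k.
\]
Then I would split the inner sum as $\sum_{i=j}^n i^k=\sum_{i=1}^n i^k-\sum_{i=1}^{j-1}i^k$. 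The first piece factors out of the $j$-sum and, reading the recursion of Definition~\ref{defHsum} backwards, $\sum_{j=1}^n \frac{\sign{a_1}^j}{j^{\abs{a_1}}}S_{a_2,a_3,\ldots}(j)=S_{a_1,a_2,\ldots}(n)$, yielding exactly $S_{a_1,a_2,\ldots}(n)\sum_{i=1}^n i^k$ minus the claimed double sum.

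The second identity is proved verbatim by the same steps after inserting the extra factor $(-1)^i$, which rides along unchanged through the interchange: one obtains $\sum_{i=j}^n(-1)^i i^k=\sum_{i=1}^n(-1)^i i^k-\sum_{i=1}^{j-1}(-1)^i i^k$ and concludes as before. The only points that require care — and the nearest thing to an obstacle — are the index bookkeeping (ensuring that after the swap the inner sum genuinely starts at $i=j$, so that the lower boundary term $\sum_{i=1}^{j-1}$ rather than $\sum_{i=1}^{j}$ appears) and the evident typographical slips in the stated right-hand sides: the residual factor should read $\sign{a_1}^j S_{a_2,a_3,\ldots}(j)/j^{\abs{a_1}}$, with running index $j$ and exponent $\abs{a_1}$, which is precisely what the interchange produces. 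I would not expand $\sum_{i=1}^n i^k$ into Faulhaber form, since leaving it symbolic already reduces any polynomial weight $r(i)$ to the depth-lowered cases treated earlier.
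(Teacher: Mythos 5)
Your proof is correct and follows essentially the same route as the paper's own proof: unfold the outermost layer of the harmonic sum via the recursive definition, interchange the finite double sum over the region $1\le j\le i\le n$, split $\sum_{i=j}^n i^k=\sum_{i=1}^n i^k-\sum_{i=1}^{j-1}i^k$, and recognize the factored piece as $S_{a_1,a_2,\ldots}(n)$, with the $(-1)^i$ case handled identically. Your remark about the typographical slips in the stated right-hand sides (running index $j$, exponent $\abs{a_1}$, and the sign factor) is also accurate.
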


\begin{proof}
We just give a proof for the first equality. The second follows analogously.
\begin{eqnarray}
\sum_{i=1}^n{i^kS_{a_1,a_2,\ldots}(i)}
																	&=&\sum_{i=1}^n{i^k \sum_{j=1}^i{\frac{\sign{a_1}^jS_{a_2,a_3,\ldots}(i)}{j^{\abs{a_1}}}}}\nonumber\\
																	&=&\sum_{j=1}^n{\frac{\sign{a_1}^jS_{a_2,a_3,\ldots}(i)}{j^{\abs{a_1}}}\sum_{i=j}^n{i^k}}\nonumber
\end{eqnarray}
\begin{eqnarray}																	
\ \ \ \ \ \												&=&\sum_{j=1}^n{\frac{\sign{a_1}^jS_{a_2,a_3,\ldots}(i)}{j^{\abs{a_1}}}
																			\left(\sum_{i=1}^n{i^k}-\sum_{i=1}^{j-1}{i^k}\right)}\nonumber\\
																	&=&S_{a_1,a_2,\ldots}(n)\sum_{i=1}^n{i^k}-\sum_{j=1}^n{\frac{\sign{a_1}^j 																																	S_{a_2,a_3,\ldots}(i)}{j^{\abs{a_1}}}\sum_{i=1}^{j-1}{i^k}}.\nonumber		  
\end{eqnarray}
\end{proof}
Since the sums $\sum_{i=1}^{n}{(-1)^i i^k}$ and $\sum_{i=1}^{n}{i^k}$ can be expressed as a polynomial in $n$, together with a factor $(-1)^n$, we can apply Theorem \ref{sumtheo3} recursively and we end up at sums over harmonic sums that can be handled in Theorem \ref{sumtheo1} or Theorem \ref{sumtheo2}. Summarizing, we can always work out the sums $\sum_{i=1}^n{r(i)S_{\ve a}(i)}$ and $\sum_{i=1}^n{(-1)^ir(i)S_{\ve a}(i)}$ where $r(i)$ is a polynomial; the result will be a combination of multiple harmonic sums in the upper summation index $n$, rational functions in $n$ and the factor $(-1)^n.$

\begin{example} For $n \in \N$
\begin{eqnarray*}
\sum_{i = 1}^{n}i^2\,\text{S}_{2,1}(i)&=&\frac{-7\,n + n^2}{12} + 
  \frac{\left( 3 + 2\,n - n^2 \right) \, \text{S}_{1}(n)}{6} - \frac{\text{S}_{1,1}(n)}{6} \\
  &&+ \frac{n\,\left( 1 + n \right) \, \left( 1 + 2\,n \right) \, \text{S}_{2,1}(n)}{6}.
\end{eqnarray*}     
\end{example}
Our method is built in in\ttfamily HarmonicSums \rmfamily with th function call TransformToSSums. 
\begin{fmma}
\In {\text{\bf TransformToSSums[}\text{\bf $\sum_{\textnormal{i}=1}^{\textnormal{n}}$}{{\textnormal{i}}^4\,\text{S}[2,1,\textnormal{i}]}\text{\bf ]}}\\
\Out {\frac{1}{720}(-44 n + 81 n^2 - 46 n^3 + 9 n^4) + \frac{1}{60}(-5 n + 2 n^3 + 4 n^3 - 3 n^4) \text{S}[1, n] + 														\frac{1}{30}\text{S}[1, 1, n] + \frac{1}{30}n (1 + n) (1 + 2 n) (-1 + 3 n + 3 n^2) \text{S}[2, 1, n]}\\
\end{fmma} 

\section{Special Rational Functions in the Summand}

Next we consider the case $r(i)=1/{(i+c)}^p$, where $c \in \N$ is a fixed number. First we look at sums of the form $S_{\ve a}(n+c),$ compare \cite{Moch2002}.
\begin{lemma}
Let $c, n, k \in \N$, $a_j \in \Z/\{0\}$ for $j \in \left\{1,2,\ldots,k\right\}.$ Then for $n\geq 0$
\begin{eqnarray}	S_{a_1,a_2,\ldots,a_k}(n+c)&=&S_{a_1,a_2,\ldots,a_k}(n)+\sum_{j=1}^c{\frac{{\sign{a_1}}^{j+n}}{(j+n)^{\abs{a_1}}}S_{a_2,\ldots,a_k}(n+j)},\label{sumlem1}
\end{eqnarray}
and $n\geq c,$
\begin{eqnarray}
S_{a_1,a_2,\ldots,a_k}(n-c)&=&S_{a_1,a_2,\ldots,a_k}(n)+\sum_{j=1}^c{\frac{{\sign{a_1}}^{j+n-c}}{(j+n-c)^{\abs{a_1}}}S_{a_2,\ldots,a_k}(n-c+j)}.
\label{sumlem2}	
\end{eqnarray}
\end{lemma}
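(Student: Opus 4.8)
The plan is to read both identities as statements about how the outer summation index of a multiple harmonic sum responds to a shift of its upper limit, and to prove them directly from the defining recursion in Definition~\ref{defHsum}. That recursion,
$$S_{a_1,\ldots,a_k}(m)=\sum_{i=1}^m \frac{\sign{a_1}^i}{i^{\abs{a_1}}}S_{a_2,\ldots,a_k}(i),$$
exhibits $S_{a_1,\ldots,a_k}$ as a partial sum of a fixed sequence, so changing the upper limit only adds or removes the boundary terms lying between the two limits. First I would record the single-step version, obtained by peeling off the top term of the recursion: for every $m\in\N$,
$$S_{a_1,\ldots,a_k}(m)=S_{a_1,\ldots,a_k}(m-1)+\frac{\sign{a_1}^{m}}{m^{\abs{a_1}}}S_{a_2,\ldots,a_k}(m).$$
Everything else is an iteration of this relation.

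For the first identity (\ref{sumlem1}) I would fix $n\ge 0$ and induct on $c$. The base case $c=1$ is exactly the single-step relation with $m=n+1$. For the inductive step I apply the single-step relation with $m=n+c+1$ to pass from $S_{a_1,\ldots,a_k}(n+c)$ to $S_{a_1,\ldots,a_k}(n+c+1)$, and feed in the induction hypothesis for $S_{a_1,\ldots,a_k}(n+c)$; the new boundary term is precisely the $j=c+1$ summand $\frac{\sign{a_1}^{n+c+1}}{(n+c+1)^{\abs{a_1}}}S_{a_2,\ldots,a_k}(n+c+1)$, so the sum over $j$ extends from $c$ to $c+1$ as required. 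The hypothesis $n\ge 0$ is what guarantees that all inner arguments $n+j$ are positive, so no degenerate ($\le 0$) values of the inner sums intrude. Equivalently, one may simply write $S_{a_1,\ldots,a_k}(n+c)-S_{a_1,\ldots,a_k}(n)=\sum_{i=n+1}^{n+c}\frac{\sign{a_1}^{i}}{i^{\abs{a_1}}}S_{a_2,\ldots,a_k}(i)$ and reindex $i=n+j$.

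For the second identity (\ref{sumlem2}) the cleanest route is to reuse the first one under the substitution $n\mapsto n-c$, which is legitimate because the hypothesis $n\ge c$ makes $n-c\ge 0$; since $(n-c)+c=n$, the first identity then becomes a relation tying $S_{a_1,\ldots,a_k}(n)$ to $S_{a_1,\ldots,a_k}(n-c)$ together with a boundary sum whose $j$-th summand carries the factor $\frac{\sign{a_1}^{\,j+n-c}}{(j+n-c)^{\abs{a_1}}}S_{a_2,\ldots,a_k}(n-c+j)$, i.e. exactly the terms written on the right-hand side. I expect the main obstacle to lie entirely in the sign-and-index bookkeeping at this last step: one must solve the resulting relation for $S_{a_1,\ldots,a_k}(n-c)$ and settle with which sign the boundary sum enters, while simultaneously confirming the sign exponent $j+n-c$ and the inner argument $n-c+j$. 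I would pin this orientation down by checking the $c=1$ instance explicitly against the single-step relation (now read downward, from $m$ to $m-1$) and then iterating that single step in the same inductive fashion as for (\ref{sumlem1}).
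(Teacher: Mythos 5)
Your treatment of (\ref{sumlem1}) is correct, and in the ``equivalent'' one-line form you mention (split the defining sum at $n$ and reindex $i=n+j$) it coincides exactly with the paper's proof; the induction on $c$ is merely a repackaging of the same single-step peeling. The paper proves only the first identity and says nothing about the second, and your plan of substituting $n\mapsto n-c$ into (\ref{sumlem1}) --- legitimate since $n\ge c$ gives $n-c\ge 0$ --- is the right route to it.

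However, the one point you explicitly deferred, namely ``with which sign the boundary sum enters,'' is not routine bookkeeping here: it is precisely where the printed statement goes wrong. The substitution turns (\ref{sumlem1}) into
\begin{equation*}
S_{a_1,\ldots,a_k}(n)=S_{a_1,\ldots,a_k}(n-c)+\sum_{j=1}^c\frac{\sign{a_1}^{\,j+n-c}}{(j+n-c)^{\abs{a_1}}}S_{a_2,\ldots,a_k}(n-c+j),
\end{equation*}
and solving for $S_{a_1,\ldots,a_k}(n-c)$ puts the boundary sum on the right-hand side with a \emph{minus} sign, whereas (\ref{sumlem2}) as stated carries a plus. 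So your claim that one obtains ``exactly the terms written on the right-hand side'' is true only up to this sign, and the lemma as printed is false. The instance $k=1$, $a_1=1$, $c=1$, $n=1$ already shows it: the left side is $S_1(0)=0$, while the right side is $S_1(1)+S(1)=2$ with the printed ``$+$'' and $0$ with ``$-$''. The paper itself uses the minus-sign version in the example immediately following the lemma, where $S_{2,3}(i-1)=S_{2,3}(i)-S_3(i)/i^2$. So you should carry out the $c=1$ check you proposed rather than leave it open: it settles the orientation, completes your proof of the corrected identity, and exposes the sign typo in the statement instead of leaving you trying to match an unprovable plus.
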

\begin{proof}
We only proof the first identity:
\begin{eqnarray*}
  S_{a_1,a_2,\ldots,a_k}(n+c)
  					&=&\sum_{j=1}^{n+c}{\frac{{\sign{a_1}}^{j}}{j^{\abs{a_1}}}S_{a_2,\ldots,a_k}(j)}\\
  					&=&S_{a_1,a_2,\ldots,a_k}(n)+\sum_{j=n+1}^{n+c}{\frac{{\sign{a_1}}^{j}}{j^{\abs{a_1}}}S_{a_2,\ldots,a_k}(j)}\\
  					&=&S_{a_1,a_2,\ldots,a_k}(n)+\sum_{j=1}^c{\frac{{\sign{a_1}}^{j+n}}{(j+n)^{\abs{a_1}}}S_{a_2,\ldots,a_k}(n+j)}.	
\end{eqnarray*}
\end{proof}
Since the multiple harmonic sum on the right sides of (\ref{sumlem1}) and (\ref{sumlem2}) have reduced depth, we can relate a sum of the form $S_{\ve a}(n\pm c)$ to sums of the form $S_{\ve a}(n)$ by recursive application of (\ref{sumlem1}) or (\ref{sumlem2}). Namely, we can use this lemma to work out sums of the form 
$$
\sum_{i=1}^n{\frac{S_{\ve a}(i)}{(i+c)^m}} \ \textnormal{or} \ \sum_{i=1}^n{(-1)^i\frac{S_{\ve a}(i)}{(i+c)^m}}.
$$
We shift the summation index:
$$
\sum_{i=c+1}^{n+c}{\frac{S_{\ve a}(i-c)}{i^m}} \ \textnormal{or} \ \sum_{i=c+1}^{n+c}{(-1)^i\frac{S_{\ve a}(i-c)}{i^m}}.
$$
and afterwards we use (\ref{sumlem2}) to relate the sum $S_{\ve a}(i-c)$ to $S_{\ve a}(i)$. Similarly we can work out sums with $`-`$ in the denominator \ie 
$$
\sum_{i=1+c}^n{\frac{S_{\ve a}(i)}{(i-c)^m}} \ \textnormal{or} \ \sum_{i=1+c}^n{(-1)^i\frac{S_{\ve a}(i)}{(i-c)^m}}.
$$

\begin{example} For $\in in \N,$
\begin{eqnarray*}
\sum_{i=1}^n{(-1)^i\frac{S_{2,3}(i)}{i+1}}&=&
 		\sum_{i=2}^{n+1}{(-1)^{i+1}\frac{S_{2,3}(i-1)}{i}}=-\sum_{i=2}^{n+1}{(-1)^{i}\frac{-\frac{S_{3}(i)}{i^2}+S_{2,3}(i)}{i}}\\
&=&\sum_{i=2}^{n+1}{\left(\frac{{\left( -1 \right) }^i\,S_3(i)}{i^3}-\frac{{\left( -1 \right) }^i\,S_{2,3}(i)}{i}\right)}
\end{eqnarray*}
\begin{eqnarray*}
\ \ \ \ \ \ \ \ \ \ \ \ \ &=&\sum_{i=2}^{n+1}{\frac{{\left( -1 \right) }^i\,S_3(i)}{i^3}}-\sum_{i=2}^{n+1}{\frac{{\left( -1 \right) }^i\,S_{2,3}(i)}{i}}\\
&=&S_{-3,3}(n+1)+1-S_{-3,2,3}(n+1)-1\\
&=&S_{-3,3}(n) + \frac{{\left( -1 \right) }^n\,S_{2,3}(n)}{1 + n} - S_{-1,2,3}(n).
\end{eqnarray*} 
\end{example}

\section{General Rational Functions in the Summand}
If we want to work out sums of the form (\ref{sumsum1}) or (\ref{sumsum2}) for a general rational function $r(n)=\frac{p(n)}{q(n)},$ where $p(n)$ and $q(n)$ are polynomials, we can combine these strategies. We start as follows:\\

If the degree of $p$ is greater than the degree of $q$ we compute polynomials $\overline{r}(n)$ and $\overline{p}(n)$ such 							that $r(n)=\overline{r}(n)+\frac{\overline{p}(n)}{q(n)}$ and the degree of $\overline{p}(n)$ is smaller than the degree of $q.$
We split the sum into two parts, $\ie$ into 
$$\sum_{i=1}^n{r(i)S_{\ve a}(i)}=\sum_{i=1}^n{\overline{r}(i)S_{\ve a}(i)}+\sum_{i=1}^n{\frac{\overline{p}(i)}{q(i)}S_{\ve a}(i)}$$
or
$$\sum_{i=1}^n{(-1)^ir(i)S_{\ve a}(i)}=\sum_{i=1}^n{(-1)^i\overline{r}(i)S_{\ve a}(i)}+\sum_{i=1}^n{(-1)^i\frac{\overline{p}(i)}{q(i)}S_{\ve a}(i)}.$$
The first sum can be done using Theorems \ref{sumtheo1}, \ref{sumtheo2} and 	\ref{sumtheo3}. For the second sum we proceed as follows:
\begin{enumerate}				
	\item Factorize the denominator $q(n)$ over $\Q$.
	\item Let $A$ be the product of all factors of the form $(i+c)^m$ with $m\in \N$ and $c \in \Z,$ and let $B$ be the product of all 						remaining factors such that $A(n)B(n)=q(n).$
	\item For example with the extended Euclidean algorithm (see Remark \ref{exteuc}) we compute polynomials $s(n)$ and $t(n)$ such that 					$s(n)A(n)+t(n)B(n)=\overline{p}(n)$. This is always possible since $A$ and $B$ are relatively prime. Hence we get 
				$$\frac{\overline{p}}{q}=\frac{t}{A}+\frac{s}{B}.$$
	\item Split the sum into two sums, each sum over one fraction, $\ie$			
				$$\sum_{i=1}^n{\frac{\overline{p}(i)}{q(i)}S_{\ve a}(i)}=\sum_{i=1}^n{\frac{t}{A}S_{\ve a}(i)}+\sum_{i=1}^n{\frac{s}{B}S_{\ve 					a}(i)}$$
				or
				$$\sum_{i=1}^n{(-1)^i\frac{\overline{p}(i)}{q(i)}S_{\ve a}(i)}=\sum_{i=1}^n{(-1)^i\frac{t}{A}S_{\ve 																		a}(i)}+\sum_{i=1}^n{(-1)^i\frac{s}{B}S_{\ve a}(i)}.$$											
	\item The sum with the denominator $B$ remains untouched (for details see Remark \ref{denomB}). We can now do a complete partial 							fraction decomposition to the first summand and split the sum such that we sum over each fraction separately.
	\item Each of these new sums can be expressed in terms of harmonic sums following Subsection \ref{polyinthesum}.
	\item We end up in a combination of rational functions in $n$, harmonic sums with upper index $n$, the factor $(-1)^n$ and perhaps a 					sum over the fraction with denominator $B.$
\end{enumerate}

\begin{remark}
In our implementation the sum $\sum_{i=1}^n{\frac{s}{B}S_{\ve a}(i)}$ is passed further to Schneider's \ttfamily Sigma \rmfamily package. The underlying difference field and difference ring algorithms \cite{Schneider2007a,Schneider2008,Schneider2008a} can simplify those sums further to sum expressions where the denominator has minimal degree. The result of \ttfamily Sigma \rmfamily is again passed to the package \ttfamily HarmonicSums\rmfamily. If possible, it finds a closed form in terms of harmonic sums.
\label{denomB}
\end{remark}

\begin{remark}
In our implementation we do not use the extended Euclidean algorithm since it turned out that it is faster to construct a partial fraction decomposition with unknown coefficients. After clearing the denominators we have to solve the linear system of equations which we get by coefficient comparison.  
\label{exteuc}
\end{remark}

\begin{example}
Let us give several examples using the package \ttfamily HarmonicSums\rmfamily; the first sum is without a `bad` part:
\end{example}
\begin{fmma}
\begin{mma}
\In {\text{\bf TransformToSSums[}\text{\bf $\sum_{\textnormal{i}=1}^{\textnormal{n}}$}\frac{\textnormal{i}\,\text{S}[2,3,\textnormal{i}]}{6 + 5\,\textnormal{i} + \textnormal{i}^2}\text{\bf ]}}\\
\Out {\frac{3\,n}{4\,\left( 1 + n \right)}  - 
  \frac{3\,\text{S}[2,n]}{4} + 
  \frac{\left( -11 - n + 3\,n^2 \right) \,
     \text{S}[3,n]}{4\,
     \left( 1 + n \right) \,\left( 2 + n \right) } + 
  \frac{7\,\text{S}[4,n]}{4} + 
  \frac{\left( -19\,n - 20\,n^2 - 5\,n^3 \right) \,
     \text{S}[2,3,n]}{2\,
     \left( 1 + n \right) \,\left( 2 + n \right) \,
     \left( 3 + n \right) } - 
  \text{S}[3,3,n] + 
  \text{S}[1,2,3,n]}\\
\end{mma} 
\begin{mma}
\In {\text{\bf TransformToSSums[}\text{\bf $\sum_{\textnormal{i}=1}^{\textnormal{n}}$}\frac{\textnormal{i}\,\text{S}[2,3,\textnormal{i}]}{3 + 5\,\textnormal{i} + 2 \textnormal{i}^2}\text{\bf]}}\\
\Out {-3\, \sum_{i_1 = 1}^{n}\frac{\left( \sum_{i_2 = 1}^{i_1}\frac{1}{3 + 2\,i_2} \right) \,\text{S}[3,i_1]}{{i_1}^2}   + 
  \frac{\left( 4 + 4\,n \right) \,\left( \sum_{i_1 = 1}^{n}\frac{\text{S}[3,i_1]}{3 + 2\,i_1} \right)  - 
     2\,\text{S}[1,3,n] - 2\,n\,\text{S}[1,3,n]}{3\,
     \left( 1 + n \right)}+\frac{3\,n\,\text{S}[2,3,n] + 
     \left( \sum_{i_1 = 1}^{n}\frac{1}{3 + 2\,i_1} \right) \,\left( 9\,\text{S}[2,3,n] + 9\,n\,\text{S}[2,3,n] \right)  + 
     3\,\text{S}[3,3,n] + 3\,n\,\text{S}[3,3,n]}{3\,
     \left( 1 + n \right)}-\frac{3\,\text{S}[1,2,3,n]+3\,n\,\text{S}[1,2,3,n]}{3\,
     \left( 1 + n \right)}}\\
\end{mma}
\noindent Of course we can also work out more difficult nested sums:
\begin{mma}
\In {\text{\bf TransformToSSums[}\text{\bf{$\sum_{i=1}^n{\frac{\sum_{j=1}^i{j\frac{\sum_{k=1}^j{\frac{\sum_{l=1}^k{\frac{l}{2+l}}}{k}}}{j+j^2}}}{i(i+1)^3}}$}}\text{\bf]}}\\
\Out {\frac{45\,n + 67\,n^2 + 66\,n^3 + 48\,n^4 + 19\,n^5 + 3\,n^6}{4\,{\left( 1 + n \right) }^5\,\left( 2 + n \right) } + 
  \frac{3\,\left( 1 + 6\,n + 4\,n^2 + n^3 \right) \,\text{S}[1,n]}{2\,{\left( 1 + n \right) }^4} - 2\,\text{S}[2,n] + 
  \frac{\left( -1 - 6\,n - 4\,n^2 - n^3 \right) \,\text{S}[1,1,n]}{{\left( 1 + n \right) }^4} + 
  \frac{\left( -7 + 3\,n + 11\,n^2 + 5\,n^3 \right) \,\text{S}[2,1,n]}{2\,{\left( 1 + n \right) }^3} + 
  \frac{9\,\text{S}[3,1,n]}{2} + \frac{2\,\left( 3 + 3\,n + n^2 \right) \,\text{S}[1,1,1,n]}{{\left( 1 + n \right) }^3} - 
  3\,\text{S}[2,1,1,n] - 2\,\text{S}[2,2,1,n] - 3\,\text{S}[3,1,1,n] - 2\,\text{S}[3,2,1,n] + 
  2\,\text{S}[2,1,1,1,n] + 2\,\text{S}[3,1,1,1,n]}\\
\end{mma}
\noindent We can also work out S-sums:
\begin{mma}
\In {\text{\bf TransformToSSums[}\text{\bf{$\sum_{i=1}^n{\frac{3^{i+2}\sum_{j=1}^i{\frac{2^j}{j}}}{(i+2)^2}}$}}\text{\bf]}}\\
\Out {\frac{-3\,\left( 2 - 2^{1 + n}\,3^n + 4\,n - 6^n\,n + 2\,n^2 \right) }{{\left( 1 + n \right) }^2} + 
  \frac{3^{1 + n}\,\left( 7 + 10\,n + 4\,n^2 \right) \,\text{S}[1,\{ 2\} ,n]}{{\left( 2 + 3\,n + n^2 \right) }^2} - \frac{5\,\text{S}[1,\{ 6\} ,n]}{2} + \frac{\text{S}[2,\{ 6\} ,n]}{2} - 
  \text{S}[3,\{ 6\} ,n] + \text{S}[2,1,\{ 3,2\} ,n]}\\
\end{mma}
\noindent Of course not all sums can be rewritten in terms of harmonic sums:
\begin{mma}
\In {\text{\bf TransformToSSums[}\text{\bf{$\sum_{i=1}^n{\frac{\sum_{j=1}^i{\frac{1}{j+1}}}{i\,i!}}$}}\text{\bf]}}\\
\Out {\sum_{i=1}^n\left(-\frac{1}{i!(i+1)}+\frac{\text{S}[1,i]}{i!\,i}\right)}\\
\end{mma}
\end{fmma}
\chapter{An Example from Particle Physics}
\label{An Example from Particle Physics}
Single scale quantities as anomalous dimensions and hard scattering cross section in renormalizable quantum field theories are found to obey difference equations of finite order in Mellin space. In \cite{Bluemlein,Bluemlein2009b} a formalism is established to construct general solutions for these quantities from a sufficiently large number of fixed moments. From the given moments a recurrence relation is established and solved in terms of the Mellin parameter $N$, finite harmonic sums and their generalizations at finite values of $N$ and in the limit $N\rightarrow\infty.$
In these computation our package \ttfamily HarmonicSums \rmfamily contributed as follows. E.g., for the $C_AC_FN_F$-term of the 3-loop non-singlet splitting function $P^{-}_{NS,2}$, see \cite{Bluemlein}, \ttfamily Sigma \rmfamily solved a recurrence of order 7 and found the following closed form evaluation \cite[Exp.~5]{Bluemlein}.
\begin{eqnarray}
&&P:=P_{NS,2,C_A C_FN_F}^{-}=\nonumber \\ 
&&\ -\frac{2 \left(1086 N^7+3258 N^6+2129 N^5-288 N^4-67
   N^3-206 N^2-156 N+144\right)}{27 N^4 (N+1)^3}\nonumber \\
&&\ \frac{32 \left(8 N^4+33 N^3+53 N^2+25 N+3\right)}{9 N
   (N+1)^4}(-1)^N
   +\frac{16}{3}
   \sum_{i=1}^N\frac{1}{i^4}+\frac{32}{3}
   \sum_{i=1}^N\frac{(-1)^i}{i^4}\nonumber \\
&&\   -\frac{16\left(10 N^2+10 N+3\right)}{9 N(N+1)}\sum_{i=1}^N\frac{(-1)^i}{i^3}+\frac{1336}{27}
   \sum_{i=1}^N\frac{1}{i^2}-\frac{64
   \left(8 N^2+8 N+3\right)}{9 N(N+1)}\sum_{i=1}^N\frac{(-1)^i}{i}\nonumber \\
&&\   +\frac{16 \left(4
   N^6+88 N^5+314 N^4+412 N^3+201 N^2+16 N-12\right)}{9 N^2
   (N+1)^2 (N+2)^2}\sum_{i=1}^N\frac{(-1)^i}{i^2}\nonumber \\
&&\   +
   \left(-\frac{8 \left(14 N^2+14 N+3\right)}{3 N
   (N+1)}-\frac{16}{3}\sum_{i=1}^N\frac{1}{i}\right)\sum_{i=1}^N\frac{1}{i^3}+\frac{64}{3}
   \sum_{i=1}^N\frac{\sum_{j=1}^i\frac{1}{j^3}}{i}
   +32\sum_{i=1}^N\frac{\sum_{j=1}^i\frac{(-1)^j}{j^2}}{(i+2)^2}\nonumber \\
&&\   -\frac{32 \left(22 N^2+22
   N-3\right)}{9 N
   (N+1)}\sum_{i=1}^N\frac{\sum_{j=1}^i\frac{(-1)^j}{j^2}}{i+2}+\left(\sum_{i=1}^N\frac{1}{i}\right)\Bigg(\frac{32\left(2 N^2+4 N+1\right)}{3
   (N+1)^3}(-1)^N\nonumber
\end{eqnarray}
\begin{eqnarray}     
&&\   -\frac{4 \left(65 N^6+195 N^5+195 N^4+137 N^3+36 N^2+36
   N+18\right)}{27 N^3 (N+1)^3}+\frac{32}{3}\sum_{i=1}^N
   \frac{(-1)^i}{i^3}\nonumber \\
&&\ +\frac{128}{3}
   \sum_{i=1}^N\frac{(-1)^i}{i}
   +\frac{32
   \left(2 N^3+2 N^2-3 N-2\right)}{3 N (N+1)(N+2)}\sum_{i=1}^N\frac{(-1)^i}{i^2}-\frac{64}{3}
   \sum_{i=1}^N\frac{\sum_{j=1}^i\frac{(-1)^j}{j^2}}{i+2}\Bigg)\nonumber \\
&&\   -\frac{256}{3}
   \sum_{i=1}^N\frac{(-1)^i
   \sum_{j=1}^i\frac{1}{j}}{i}
   +\frac{128}{3}
   \sum_{i=1}^N\frac{\left(\sum_{j=1}^i\frac{(-1)^j}{j^2}\right)\left(
   \sum_{i=1}^N\frac{1}{j}\right)}{i+2}.
\label{P}   
\end{eqnarray}

Given such sum representations, the package \ttfamily HarmonicSums \rmfamily provides several sophisticated functions to simplify such expressions. First we rewrite (\ref{P}) into an expression in terms of harmonic sums (harmonic sums are defined in Chapter \ref{Algebraic Relations between Multiple Harmonic Sums}) by using the function call $S:=$TransformToSSums[$P$] (for details on this function see Chapter \ref{Summation of Multiple Harmonic Sums}):
\begin{eqnarray}
&&S=P_{NS,2,C_A C_FN_F}^{-}=\nonumber \\ 
&&\ \frac{-2\,\left( 144 - 12\,n - 362\,n^2 - 417\,n^3 - 1603\,n^4 - 96\,{\left( -1 \right) }^n\,n^4 - 1855\,n^5 - 384\,{\left( -1 \right) }^n\,n^5 \right) }{27\,n^4\,{\left( 1 + n \right) }^4}\nonumber \\
&&\ \frac{-2\,\left( 347\,n^6 + 1080\,n^7 + 270\,n^8 \right) }{27\,n^4\,{\left( 1 + n \right) }^4}+ \frac{64\,S_{-4}(n)}{3}\nonumber \\
&&\  + \frac{16\,S_{-3}(n)\,\left( -18 - 13\,n + 54\,n^2 + 34\,n^3 + 36\,n\,S_{1}(n) + 54\,n^2\,S_{1}(n) + 18\,n^3\,S_{1}(n) \right) }
   {9\,n\,\left( 1 + n \right) \,\left( 2 + n \right) }\nonumber \\
&&\  + \frac{16\,S_{-2}(n)\,
     \left( -6 + 17\,n + 42\,n^2 + 16\,n^3 - 12\,n\,S_{1}(n) - 42\,n^2\,S_{1}(n) - 6\,n^3\,S_{1}(n) + 4\right) }{9\,
     n^2\,{\left( 1 + n \right) }^2\,\left( 2 + n \right) }\nonumber \\
&&\  + \frac{16\,S_{-2}(n)\,
     \left(48\,n^4\,S_{1}(n) + 24\,n^5\,S_{1}(n) \right) }{9\,n^2\,{\left( 1 + n \right) }^2\,\left( 2 + n \right) }     
 + \frac{1336\,S_{2}(n)}{27} - 
  \frac{8\,\left( 3 + 14\,n + 14\,n^2 \right) \,S_{3}(n)}{3\,n\,\left( 1 + n \right) }\nonumber  \\
&&\  + \frac{16\,S_{4}(n)}{3} - \frac{128\,S_{-3,1}(n)}{3} -   \frac{128\,\left( -1 + n + n^2 \right) \,S_{-2,1}(n)}{3\,\left( 1 + n \right) \,\left( 2 + n \right) } - \frac{128\,S_{1,-3}(n)}{3} \nonumber  \\
&&\ -  \frac{32\,\left( -6 + 5\,n + 42\,n^2 + 22\,n^3 \right) \,S_{1,-2}(n)}{9\,n\,\left( 1 + n \right) \,\left( 2 + n \right) }- 
  \frac{4\,S_{1}(n)\,\left( 18 + 36\,n + 36\,n^2 + 281\,n^3  \right) }{27\,n^3\,{\left( 1 + n \right) }^3}\nonumber \\
&&\ -  
  \frac{4\,S_{1}(n)\,\left( 72\,{\left( -1 \right) }^n\,n^3 + 627\,n^4 + 627\,n^5 + 209\,n^6 + 36\,n^3\,S_{3}(n) + 
       108\,n^4\,S_{3}(n) \right) }{27\,n^3\,{\left( 1 + n \right) }^3}\nonumber \\
&&\ -  
  \frac{4\,S_{1}(n)\,\left( 108\,n^5\,S_{3}(n) + 36\,n^6\,S_{3}(n) + 144\,n^3\,S_{1,-2}(n) + 432\,n^4\,S_{1,-2}(n)  \right) }{27\,n^3\,{\left( 1 + n \right) }^3}\nonumber \\
&&\ -  
  \frac{4\,S_{1}(n)\,\left(  432\,n^5\,S_{1,-2}(n) + 
       144\,n^6\,S_{1,-2}(n) \right) }{27\,n^3\,{\left( 1 + n \right) }^3} + \frac{64\,S_{1,3}(n)}{3} - \frac{32\,S_{2,-2}(n)}{3}\nonumber  \\
&&\ + 
  \frac{128\,S_{1,-2,1}(n)}{3} + \frac{128\,S_{1,1,-2}(n)}{3}.
\label{P1}   
\end{eqnarray}

In (\ref{P1}) 15 different harmonic sums appear:
\begin{eqnarray*}
&&S_{-4}(n), S_{-3}(n), S_{-2}(n), S_{1}(n), S_{2}(n), S_{3}(n), S_{4}(n), S_{-3, 1}(n), S_{-2, 1}(n), S_{1, -3}(n),\\
&&S_{1, -2}(n), S_{1, 3}(n), S_{2, -2}(n),S_{1, -2, 1}(n), S_{1, 1, -2}(n).
\end{eqnarray*}
We can use the algebraic relations between harmonic sums derived in Chapter \ref{Algebraic Relations between Multiple Harmonic Sums} to reduce the number of different harmonic sums appearing in (\ref{P1}). Namely, activating the call ReduceToBasis[$S$] we get:

\begin{eqnarray}
&&P_{NS,2,C_A C_FN_F}^{-}=\nonumber \nonumber \\
&&\		\frac{2}{27\,n^4\,{\left( 1 + n \right) }^4} 
			\left( -144 + 12\,n + 362\,n^2 + 417\,n^3 + 1603\,n^4 + 96\,{\left( -1 \right) }^n\,n^4 + 1855\,n^5 \right. \nonumber \\
&&\	 + 384\,{\left( -1 \right) }^n\,n^5 - 347\,n^6 - 1080\,n^7 - 270\,n^8 + \left( 288\,n^4 + 1152\,n^5 + 1728\,n^6 + 1152\,n^7 \right.\nonumber \\
&&\  \left.+ 288\,n^8 \right) \,S_{-4}(n) + S_{-3}(n)\,\left( -72\,n^3 - 456\,n^4 - 1176\,n^5 - 1512\,n^6 - 960\,n^7 - 240\,n^8\right.\nonumber \\
&&\  \left.+\left( 144\,n^4 + 576\,n^5 + 864\,n^6 + 576\,n^7 + 144\,n^8 \right) \,S_{1}(n) \right)  + \left( 668\,n^4 + 2672\,n^5\right.\nonumber \\
&&\  \left.+ 4008\,n^6 + 2672\,n^7 + 668\,n^8 \right) \,S_{2}(n) + S_{-2}(n)\, \left( -72\,n^2 + 96\,n^3 + 792\,n^4 + 1008\,n^5\right.\nonumber \\
&&\  + 384\,n^6 + \left( -480\,n^4 - 1920\,n^5 - 2880\,n^6 - 1920\,n^7 - 480\,n^8 \right) \,S_{1}(n) +  \left( 288\,n^4 + 1152\,n^5\right.\nonumber \\
&&\  \left. \left.+ 1728\,n^6 + 1152\,n^7 + 288\,n^8 \right) \,S_{2}(n) \right)  + \left( -108\,n^3 - 828\,n^4 - 2340\,n^5 - 3132\,n^6 - 2016\,n^7\right.\nonumber \\
&&\  \left.- 504\,n^8 \right) \,S_{3}(n) + \left( 360\,n^4 + 1440\,n^5 + 2160\,n^6 + 1440\,n^7 + 360\,n^8 \right) \,S_{4}(n)\nonumber \\
&&\  + \left( -144\,n^3 + 48\,n^4 + 1488\,n^5 + 2736\,n^6 + 1920\,n^7 + 480\,n^8 \right) \,S_{-2,1}(n) + S_{1}(n)\,\left( -36\,n\right.\nonumber \\
&&\  - 108\,n^2 - 144\,n^3 - 634\,n^4 - 144\,{\left( -1 \right) }^n\,n^4 - 1816\,n^5 - 144\,{\left( -1 \right) }^n\,n^5 - 2508\,n^6\nonumber \\
&&\  - 1672\,n^7 - 418\,n^8 + \left( 216\,n^4 + 864\,n^5 + 1296\,n^6 + 864\,n^7 + 216\,n^8 \right) \,S_{3}(n) + \left( 288\,n^4\right.\nonumber \\
&&\  \left. \left. + 1152\,n^5 + 1728\,n^6 + 1152\,n^7 + 288\,n^8 \right) \,S_{-2,1}(n) \right)  + \left( -144\,n^4 - 576\,n^5 - 864\,n^6\right.\nonumber \\
&&\  \left.- 576\,n^7 - 144\,n^8 \right) \,S_{2,-2}(n) + \left( -288\,n^4 - 1152\,n^5 - 1728\,n^6 - 1152\,n^7 - 288\,n^8 \right) \,S_{3,1}(n)\nonumber \\
&&\  \left.+ \left( -576\,n^4 - 2304\,n^5 - 3456\,n^6 - 2304\,n^7 - 576\,n^8 \right) \,S_{-2,1,1}(n) \right)      
\label{P2}   
\end{eqnarray}

In (\ref{P2}) only the following 11 different harmonic sums appear:
$$
S_{-4}(n), S_{-3}(n), S_{-2}(n), S_{1}(n), S_{2}(n), S_{3}(n), S_{4}(n), S_{-2, 1}(n), S_{2, -2}(n), S_{3, 1}(n), S_{-2, 1, 1}(n).
$$
We want to emphasize that these sums are algebraic independent. Exactly these representations were used in \cite{Bluemlein} to establish compactified representations originally computed by \cite{Moch2004} and \cite{Vermaseren2005}. Using relations originating from differentiation (for details see Chapter~\ref{Harmonic Polylogarithms}) we can go further. We use the function call ReduceToBasis[$S$, UseDifferentiation $\rightarrow$ True]: 

\begin{eqnarray}
&&P_{NS,2,C_A C_FN_F}^{-}=\nonumber \\
&&\			\frac{2}{135\,n^4\,{\left( 1 + n \right) }^4} \left( 720 - 60\,n - 1810\,n^2 - 2085\,n^3 - 8015\,n^4 - 480\,{\left( -1 \right) }^n\,n^4 - 9275\,n^5\right.\nonumber \\
&&\	 - 1920\,{\left( -1 \right) }^n\,n^5 + 1735\,n^6 + 5400\,n^7 + 1350\,n^8 - 180\,n^2\,\zeta_2 + 240\,n^3\,\zeta_2 - 1360\,n^4\,\zeta_2 \nonumber 
\end{eqnarray}
\begin{eqnarray}
&&\	 - 10840\,n^5\,\zeta_2 - 19080\,n^6\,\zeta_2 - 13360\,n^7\,\zeta_2 - 3340\,n^8\,\zeta_2 + 1224\,n^4\,{\zeta_2}^2 + 4896\,n^5\,{\zeta_2}^2 \nonumber\\
&&\	 + 7344\,n^6\,{\zeta_2}^2 + 4896\,n^7\,{\zeta_2}^2 + 1224\,n^8\,{\zeta_2}^2 + 270\,n^3\,\zeta_3 + 2430\,n^4\,\zeta_3 + 7290\,n^5\,\zeta_3\nonumber \\
&&\	 + 9990\,n^6\,\zeta_3 + 6480\,n^7\,\zeta_3 + 1620\,n^8\,\zeta_3 +  \left( 180\,n^3 + 1140\,n^4 + 2940\,n^5 + 3780\,n^6 + 2400\,n^7\right.\nonumber \\ 
&&\   + \left( 3340\,n^4 + 13360\,n^5 + 20040\,n^6 + 13360\,n^7 + 3340\,n^8 - 720\,n^4\,\zeta_2 - 2880\,n^5\,\zeta_2 - 4320\,n^6\,\zeta_2\right.\nonumber \\
&&\  \left. - 2880\,n^7\,\zeta_2 - 720\,n^8\,\zeta_2 \right) \,\text{D}[S_{1}(n),n,1] + \left( -360\,n^4 - 1440\,n^5 - 2160\,n^6\right.\nonumber\\  
&&\  \left.- 1440\,n^7 - 360\,n^8 \right) \,{\text{D}[S_{1}(n),n,1]}^2 + \text{D}[S_{-1}(n),n,1]\,\left( -360\,n^2 + 480\,n^3 + 3960\,n^4 \right.\nonumber \\
&&\  + 5040\,n^5 + 1920\,n^6 + 1440\,n^4\,\zeta_2 + 5760\,n^5\,\zeta_2 + 8640\,n^6\,\zeta_2 + 5760\,n^7\,\zeta_2 + 1440\,n^8\,\zeta_2 \nonumber \\
&&\  \left.+ \left( -1440\,n^4 - 5760\,n^5 - 8640\,n^6 - 5760\,n^7 - 1440\,n^8 \right) \,\text{D}[S_{1}(n),n,1] \right)+ \left( 270\,n^3\right.\nonumber \\
&&\  \left. + 2070\,n^4 + 5850\,n^5 + 7830\,n^6 + 5040\,n^7 + 1260\,n^8 \right) \,\text{D}[S_{1}(n),n,2] + \left( 360\,n^4 + 1440\,n^5\right.\nonumber \\
&&\  \left. + 2160\,n^6 + 1440\,n^7 + 360\,n^8 \right) \,\text{D}[S_{1}(n),n,3] + \left( -720\,n^4 - 2880\,n^5 - 4320\,n^6 - 2880\,n^7 \right.\nonumber \\
&&\  \left.- 720\,n^8 \right) \,\text{D}[S_{2,1}(n),n,1]+ \left( 720\,n^3 - 240\,n^4 - 7440\,n^5 - 13680\,n^6 - 9600\,n^7 - 2400\,n^8 \right)\nonumber \\
&&\  S_{-2,1}(n) + S_{1}(n)\,\left( 180\,n + 540\,n^2 + 720\,n^3 + 3170\,n^4 + 720\,{\left( -1 \right) }^n\,n^4 + 9080\,n^5 \right.\nonumber \\
&&\  + 720\,{\left( -1 \right) }^n\,n^5 + 12540\,n^6 + 8360\,n^7 + 2090\,n^8 - 1200\,n^4\,\zeta_2 - 4800\,n^5\,\zeta_2 - 7200\,n^6\,\zeta_2\nonumber \\
&&\  - 4800\,n^7\,\zeta_2 - 1200\,n^8\,\zeta_2 - 540\,n^4\,\zeta_3 - 2160\,n^5\,\zeta_3 - 3240\,n^6\,\zeta_3 - 2160\,n^7\,\zeta_3\nonumber \\
&&\  - 540\,n^8\,\zeta_3 + \left( -2400\,n^4 - 9600\,n^5 - 14400\,n^6 - 9600\,n^7 - 2400\,n^8 \right) \,\text{D}[S_{-1}(n),n,1]\nonumber \\
&&\	 + \left( -360\,n^4 - 1440\,n^5 - 2160\,n^6 - 1440\,n^7 - 360\,n^8 \right) \,\text{D}[S_{-1}(n),n,2] + \left( -540\,n^4 - 2160\,n^5\right.\nonumber \\
&&\  - 3240\,n^6 - 2160\,n^7 \left.- 540\,n^8 \right) \,\text{D}[S_{1}(n),n,2] + \left( -1440\,n^4 - 5760\,n^5 - 8640\,n^6 - 5760\,n^7 \right.\nonumber \\
&&\	 \left. \left.- 1440\,n^8 \right) \,S_{-2,1}(n) \right) + \left( 720\,n^4 + 2880\,n^5 + 4320\,n^6 + 2880\,n^7 + 720\,n^8 \right) \,S_{2,-2}(n)\nonumber \\
&&\  \left.+ \left( 2880\,n^4 + 11520\,n^5 + 17280\,n^6 + 11520\,n^7  + 2880\,n^8 \right) \,S_{-2,1,1}(n) \right) 
\label{P3}   
\end{eqnarray}

Note that we reduced the number of different harmonic sums to 6:
$$
S_{-1}(n), S_{1}(n), S_{-2, 1}(n), S_{2, -2}(n), S_{2, 1}(n), S_{-2, 1, 1}(n)
$$
by using in addition the differential operator $\text{D}=\frac{d}{dn}.$
Finally we can apply half-integer relation (see Chapter \ref{Half-Integer Relations}) with the function call ReduceToBasis[$S$, UseDifferentiation $\rightarrow$ True, UseHalfInteger $\rightarrow$ True]:
\begin{eqnarray}
&&P_{NS,2,C_A C_FN_F}^{-}=\nonumber \\
&&\ \frac{-2}{135\,n^4\,{\left( 1 + n \right) }^4} \left( 720 - 60\,n - 1810\,n^2 - 2085\,n^3 - 8015\,n^4 - 480\,{\left( -1 \right) }^n\,n^4 - 9275\,n^5\right.\nonumber \\
&&\ - 1920\,{\left( -1 \right) }^n\,n^5 + 1735\,n^6 + 5400\,n^7 + 1350\,n^8 - 180\,n^2\,\zeta_2 +  240\,n^3\,\zeta_2 - 1360\,n^4\,\zeta_2\nonumber \\
&&\ - 10840\,n^5\,\zeta_2 - 19080\,n^6\,\zeta_2 - 13360\,n^7\,\zeta_2 - 3340\,n^8\,\zeta_2 + 1224\,n^4\,{\zeta_2}^2 +  4896\,n^5\,{\zeta_2}^2\nonumber \\
&&\ + 7344\,n^6\,{\zeta_2}^2 + 4896\,n^7\,{\zeta_2}^2 + 1224\,n^8\,{\zeta_2}^2 + 270\,n^3\,\zeta_3 + 2430\,n^4\,\zeta_3 +  7290\,n^5\,\zeta_3\nonumber \\
&&\ + 9990\,n^6\,\zeta_3 + 6480\,n^7\,\zeta_3 + 1620\,n^8\,\zeta_3 + \left( 180\,n^3 + 1140\,n^4 + 2940\,n^5 + 3780\,n^6 + 2400\,n^7\right.\nonumber \\
&&\ \left. + 600\,n^8 \right) \,\text{D}[\text{Half}[S_{1}(n),n] - S_{1}(n),n,2] + \left( 240\,n^4 + 960\,n^5 + 1440\,n^6 + 960\,n^7\right.\nonumber \\
&&\ \left. + 240\,n^8 \right) \,\text{D}[\text{Half}[S_{1}(n),n] - S_{1}(n),n,3] +  \left( 3340\,n^4 + 13360\,n^5 + 20040\,n^6\right.\nonumber
\end{eqnarray}
\begin{eqnarray}
&&\ \left.+ 13360\,n^7 + 3340\,n^8 - 720\,n^4\,\zeta_2 - 2880\,n^5\,\zeta_2 - 4320\,n^6\,\zeta_2 - 2880\,n^7\,\zeta_2 - 720\,n^8\,\zeta_2 \right)\nonumber\\
&&\ \,\text{D}[S_{1}(n),n,1] + \left( -360\,n^4 - 1440\,n^5 - 2160\,n^6 - 1440\,n^7 - 360\,n^8 \right) \,{\text{D}[S_{1}(n),n,1]}^2\nonumber\\
&&\ +  \text{D}[\text{Half}[S_{1}(n),n] - S_{1}(n),n,1]\,\left( -360\,n^2 + 480\,n^3 + 3960\,n^4 + 5040\,n^5 + 1920\,n^6 + 1440\,n^4\,\zeta_2\right.\nonumber \\
&&\ + 5760\,n^5\,\zeta_2 +  8640\,n^6\,\zeta_2 + 5760\,n^7\,\zeta_2 + 1440\,n^8\,\zeta_2 + \left( -1440\,n^4 - 5760\,n^5 - 8640\,n^6\right.\nonumber \\
&&\ \left. \left. - 5760\,n^7 - 1440\,n^8 \right) \,\text{D}[S_{1}(n),n,1] \right)  + \left( 270\,n^3 + 2070\,n^4 + 5850\,n^5 + 7830\,n^6\right.\nonumber \\
&&\ \left.+ 5040\,n^7 + 1260\,n^8 \right) \,\text{D}[S_{1}(n),n,2] +  \left( 360\,n^4 + 1440\,n^5 + 2160\,n^6 + 1440\,n^7 + 360\,n^8 \right)\nonumber \\
&&\ \,\text{D}[S_{1}(n),n,3] + \left( -720\,n^4 - 2880\,n^5 - 4320\,n^6 - 2880\,n^7 - 720\,n^8 \right) \,\text{D}[S_{2,1}(n),n,1]\nonumber \\
&&\ + \left( 720\,n^3 - 240\,n^4 - 7440\,n^5 - 13680\,n^6 - 9600\,n^7 - 2400\,n^8 \right) \,S_{-2,1}(n) + S_{1}(n)\,\left( 180\,n\right.\nonumber \\
&&\ + 540\,n^2 + 720\,n^3 + 3170\,n^4 + 720\,{\left( -1 \right) }^n\,n^4 + 9080\,n^5 + 720\,{\left( -1 \right) }^n\,n^5 + 12540\,n^6\nonumber \\
&&\ + 8360\,n^7 + 2090\,n^8 - 1200\,n^4\,\zeta_2 - 4800\,n^5\,\zeta_2 - 7200\,n^6\,\zeta_2 - 4800\,n^7\,\zeta_2 - 1200\,n^8\,\zeta_2\nonumber \\
&&\ - 540\,n^4\,\zeta_3 - 2160\,n^5\,\zeta_3 - 3240\,n^6\,\zeta_3 -  2160\,n^7\,\zeta_3 - 540\,n^8\,\zeta_3 + \left( -2400\,n^4 - 9600\,n^5\right.\nonumber \\
&&\ \left. - 14400\,n^6 - 9600\,n^7 - 2400\,n^8 \right) \,\text{D}[\text{Half}[S_{1}(n),n] - S_{1}(n),n,1] + \left( -360\,n^4 - 1440\,n^5\right.\nonumber \\
&&\ \left. - 2160\,n^6 - 1440\,n^7 - 360\,n^8 \right) \,\text{D}[\text{Half}[S_{1}(n),n] - S_{1}(n),n,2] +  \left( -540\,n^4 - 2160\,n^5\right.\nonumber \\
&&\ \left. - 3240\,n^6 - 2160\,n^7 - 540\,n^8 \right) \,\text{D}[S_{1}(n),n,2] + \left( -1440\,n^4 - 5760\,n^5 - 8640\,n^6 - 5760\,n^7\right.\nonumber \\
&&\ \left.\left. - 1440\,n^8 \right) \,S_{-2,1}(n) \right)  + \left( 720\,n^4 + 2880\,n^5 + 4320\,n^6 + 2880\,n^7 + 720\,n^8 \right) \,S_{2,-2}(n)\nonumber \\
&&\ \left.+ \left( 2880\,n^4 + 11520\,n^5 + 17280\,n^6 + 11520\,n^7 + 2880\,n^8 \right) \,S_{-2,1,1}(n) \right)
\label{P4}   
\end{eqnarray}

In this way only the 5 multiple harmonic sums
$$
S_{1}(n), S_{-2, 1}(n), S_{2, -2}(n), S_{2, 1}(n) \textnormal{ and } S_{-2, 1, 1}(n)
$$
together with D and the half-integer function Half remain. Hence we were able to reduce the number of contributing sums from 15 in (\ref{P1}) to 5 in (\ref{P4}).

\appendix

\bibliographystyle{plain}

\end{document}